\def\figurename{Figure} % Replace the colon that normally appears after the Figure number by a period.
\renewcommand{\fnum@figure}[1]{\figurename~\thefigure.}
\def\tablename{Table} % Replace the colon that normally appears after the Figure number by a period.
\renewcommand{\fnum@table}[1]{\tablename~\thetable.}
\newcommand\footnoteref[1]{\protected@xdef\@thefnmark{\ref{#1}}\@footnotemark}
\newtheorem{theorem}{Theorem}[section]
\newtheorem{lemma}[theorem]{Lemma}
\newtheorem{proposition}[theorem]{Proposition}
\theoremstyle{definition}
\newtheorem{definition}[theorem]{Definition}
\theoremstyle{remark}
\newtheorem{assumption}{Assumption}[section]
\numberwithin{equation}{section}
\newcommand{\eq}[1]{(\ref{#1})}
\def\R{\mathbb R}
\def\HH{\mathcal{H}}
\newenvironment{namelist}[1]{%
\begin{list}{}
{
\settowidth{\labelwidth}{#1}
\setlength{\leftmargin}{1.1\labelwidth}}
}{%
\end{list}}
\begin{document}
%\vskip 0.4in
\title{\bfseries\scshape{Wave Operators and Similarity for Long Range $N$-body Schr\"odinger Operators}}
\author{\normalsize Dedicated to the late Professor Tosio Kato on his 100th birthday\\
\\
\normalsize Hitoshi Kitada\thanks{E-mail address: \tt{kitada@ms.u-tokyo.ac.jp}}\\
\normalsize Department of Mathematics\\
\normalsize Metasciences Academy\\\normalsize Tokyo, Japan
\\}
\date{\normalsize }
\maketitle \thispagestyle{empty} \setcounter{page}{1}

% ------- [First Page Running Head] - place it immediately after title! ------
\thispagestyle{fancy} \fancyhead{}
%\fancyhead[L]{{\bf {\Huge P}roceedings of the {\Huge M}etasciences {\Huge A}cademy}\\
%Volume 1, pp. {\thepage--\pageref{lastpage-01}} (2014)\\
%} % put \label{lastpage-xx} on the last page!
%\fancyhead[R]{www.metasciences.ac/pmsa\\}\fancyfoot{}
%\fancyhead[L]{[version 408]}\fancyfoot{}
\renewcommand{\headrulewidth}{0pt}

\begin{abstract} \noindent
We consider asymptotic behavior of $e^{-itH}f$ for $N$-body Schr\"odinger operator $H=H_0+\sum_{1\le i<j\le N}V_{ij}(x)$ with long- and short-range pair potentials $V_{ij}(x)=V_{ij}^L(x)+V_{ij}^S(x)$ $(x\in \R^\nu)$ such that $\partial_x^\alpha V_{ij}^L(x)=O(|x|^{-\delta-|\alpha|})$ and $V_{ij}^S(x)=O(|x|^{-1-\delta})$ $(|x|\to\infty)$ with $\delta>0$. Introducing the concept of scattering spaces which classify the initial states $f$ according to the asymptotic behavior of the evolution $e^{-itH}f$, we give a generalized decomposition theorem of the continuous spectral subspace $\HH_c(H)$ of $H$. The asymptotic completeness of wave operators is proved for some long-range pair potentials with $\delta>1/2$ by using this decomposition theorem under some assumption on subsystem eigenfunctions.
\end{abstract}

\noindent {\bf AMS Subject Classification:} 81U10 35J10 35P25 47A40

\vspace{.08in} \noindent \textbf{Keywords}: wave operators, similarity, 
$N$-body Schr\"odinger operator, 
long-range scattering, 
asymptotic behavior, scattering space, extended micro-local analysis

\section{Introduction}\label{Introduction}

We consider $N$ particles $1,2,...,N$ ($N\ge 2$) with mass $m_1,...,m_N$ which move in $\R^\nu$, where $\nu\ge1$ is an integer.
 We assume that there is no external force\footnote{See \cite{HMS}, \cite{MS} for the case when external forces are included. Especially \cite{MS} deals with time-dependent external force.}, so that only possible interaction is the one between two non-identical particles $i$ and $j$. We denote the position of the $i$-th particle by $r_i=(r_{i1},\dots,r_{i\nu}) \in \R^\nu$, and write the relative position between the $i$-th and $j$-th particles as $x_{ij}=r_i-r_j\in \R^\nu$, where ${(i,j)}$ is a pair of particles $i$ and $j$ with $1\le i<j\le N$. Then it will be reasonable to assume that the interaction between $i$-th and $j$-th particles is given by a pair potential $V_{ij}(x_{ij})$ determined by the relative position $x_{ij}$ of $i$ and $j$. 
We use the notation $\langle x\rangle=(1+|x|^2)^{1/2}$ ($x\in \R^d$, $d\ge1$ an integer), and assume the following condition on the pair potentials.

\begin{assumption}\label{potentialdecay} The pair potential 
$V_{ij}(x)$ ($x\in \R^\nu$) is divided into a sum of a real-valued $C^\infty$ function $V_{ij}^L(x)$ (long-range pair potential) and a real-valued measurable function $V_{ij}^S(x)$ (short-range pair potential) such that the following holds for some constants $\delta\in (0,1)$, $C>0$ and $C_\alpha>0$ for any multi-index $\alpha$.
%\begin{equation}
\begin{align}
&|\partial_{x}^\alpha V_{ij}^L(x)|\le C_\alpha\langle x\rangle^{-|\alpha|-\delta},\label{1.3}\\
&|V_{ij}^S(x)|\le C\langle x\rangle^{-1-\delta}.\label{1.4}
\end{align}
%\end{equation}
\end{assumption}
\noindent
%We can treat more general potentials $V_{ij}$ which allow local singularities. However for simplicity's sake we adopt the above conditions.

% ------------ [Running Heads - for odd and even pages] - please insert it only on page 2!
\pagestyle{fancy} \fancyhead{} \fancyhead[EC]{H. Kitada} \fancyhead[EL,OR]{\thepage}
\fancyhead[OC]{Long range $N$-body Schr\"odinger operators}
\fancyfoot{}
\renewcommand\headrulewidth{0.5pt}
%------------------------------------------------------------------------------

Under this assumption with some additional assumption, we will consider the asymptotic behavior as $t\to\pm\infty$ of the evolution $e^{-itH}f$ of the Schr\"odinger operator $H$ defined in a Hilbert space $\HH$ for the state function $f$ in the continuous spectral subspace $\HH_c=\HH_c(H)$ of $H$. In the case of $N$-body Hamiltonian $H$, there is a natural decomposition of $H$ according to cluster decomposition $a=\{C_1,\dots,C_k\}$ of the set $\{1,2,\dots,N\}$ such as $H=H_a+I_a$ with $H_a=H^a+T_a$, where $H^a$ is the subsystem Hamiltonian describing the internal world of each cluster $C_j$ $(1\le j\le k)$, and $T_a$ is the Hamiltonian describing the intercluster free motion. $I_a$ is the sum of the intercluster interactions described by intercluster pair potentials, which include both of the internal coordinates and intercluster coordinates. We denote by $|a|$ the number $k$ of clusters in $a$, i.e. $|a|=k$ for $a=\{C_1,\dots,C_k\}$. The description of asymptotic behavior of $e^{-itH}$ is usually supposed to be given by channel wave operators
\begin{align}
W_a^\pm=\text{s-}\lim_{t\to\pm\infty}e^{itH}J_ae^{-itH_a}P^a,\label{waveoperators}
\end{align}
where $P^a$ is the orthogonal projection onto the eigenspace of $H^a$, and $J_a$ is a suitable time-independent modifier introduced to deal with the long-range tail of the pair potentials. 
If such wave operators exist, they intertwine $H$ and $H_a$ in the following way.
\begin{align}
E_H(B)W_a^\pm=W_a^\pm E_{H_a}(B)\quad\text{for all Borel sets }B\text{ of }\R.\label{intertwiningproperty0}
\end{align}
It has been supposed that the sum of the ranges $\mathcal{R}(W_a^\pm)$ of the channel wave operators
is equal to the continuous spectral subspace $\HH_c(H)$
\begin{align}
\HH_c(H)=\bigoplus_{2\le|a|\le N}\mathcal{R}(W_a^\pm).\label{usualasymptoticcompleteness}
\end{align}
This is called `asymptotic completeness of wave operators,' so that asymptotic behavior of $e^{-itH}f$ for $f\in \HH_c(H)=\bigoplus_{2\le|a|\le N}\mathcal{R}(W_a^\pm)$ is given as a sum of partly free evolutions $J_ae^{-itH_a}g_a$ for some $g_a\in P^a\HH$ ($2\le |a|\le N$). 

Tosio Kato \cite{Kato-wave} calls this situation that the wave operators $W_a^\pm$ give a similarity between $H$ on $\HH_c(H)$ and $H_a$'s on $\HH$. Namely `similarity' means the usual existence and the asymptotic completeness of wave operators, and this problem has been solved in the case $N=2$ by \cite{KK}, \cite{KK2}, \cite{Saito1}, \cite{Saito2}, \cite{Kuroda1}, \cite{Kuroda2}, \cite{Agmon}, \cite{[E1]}\footnote{\label{footnotelabel}historical order} for the short-range potential, and by \cite{Kitada-0}, \cite{KiI}, \cite{KiII}, \cite{[E1LR]}, \cite{[IK]}, \cite{Hor4}\footnoteref{footnotelabel} for the long-range potential. For further references see \cite{[Yaf-4]}. In the case $N\ge 3$, Sigal-Soffer \cite{[SS]} proved the similarity for the short-range pair potentials by the use of usual channel wave operators. Their proof relies on a decomposition of phase space. Later than Sigal-Soffer \cite{[SS]},  Graf \cite{[G]} gave a proof based on an improved decomposition of configuration space. Kitada \cite{[K1]} also gave a proof based on another decomposition of configuration space.

For long-range pair potentials, Enss \cite{[Enss4]} gave a proof of similarity for long-range pair potentials with $\delta>\sqrt{3}-1$ for 3-body case. Wang \cite{[XPWang6]} improved the method of Enss \cite{[Enss4]} to 3-body long-range pair potentials with $\delta>1/2$ whose negative parts decaying like $\langle x\rangle^{-\gamma}$ with $\gamma>2(1-\delta)/\delta$.
G\'erard \cite{[Gerard7]} gave a proof of similarity for the 3-body case with the potentials satisfying \eq{1.3}, $\delta>1/2$, and virial condition: $2V_{ij}(x)+x\cdot\nabla_{x}V_{ij}(x)\le -C\langle x\rangle^{-\delta}$ for $C>0$.
Derezi\'nski \cite{[De]} gave a proof of the asymptotic completeness for general $N$-body case with $\delta>\sqrt{3}-1$. Further references may be found in \cite{[De-Ge]} and \cite{[Yaf-3]}.

In the sense that physically interesting case of Coulomb pair potentials has been completed by the work of Derezi\'nski \cite{[De]} for general $N$-body problem, it might be natural that further investigation has not been done for longer-range potentials.

However, from the mathematical view point, it is an important problem to pursue whether or not the similarity extends to general long-range pair potentials. Further due to the progress after Sigal-Soffer \cite{[SS]}, we now have advanced techniques such as the partition of unity associated with the decomposition of configuration space as the ones in \cite{[G]} and \cite{[K1]} and the extended micro-localizing factor in \cite{KS} introduced for $N=2$. Given those, it would now be an appropriate time to begin with the investigation of general long-range pair potentials with $\delta>0$.

Our strategy to investigate general long-range pair potentials in this paper is firstly to introduce scattering spaces with utilizing the refined decomposition of configuration space. Then we will prove a generalized decomposition theorem of the continuous spectral subspace $\HH_c(H)$ of $H$ by scattering spaces under the following additional assumption. 
\begin{assumption}\label{eigenfunctiondecay}
Every eigenvector $\psi^a$ of any subsystem Hamiltonian $H^a$ $(2\le|a|\le N-1)$ satisfies $\Vert |x^a|\psi^a\Vert<\infty$.
\end{assumption}
As it is known \cite{[FH]} that the nonthreshold eigenvectors decay exponentially, this assumption concerns the threshold eigenvectors.
 We will consider the case $t\to\infty$ in the rest of the paper, since the case $t\to-\infty$ can be treated similarly.

The scattering space $S_a^r$ ($2\le |a|\le N$, $0\le r\le1$) consists of the state functions $f\in\HH$ such that the evolution $e^{-itH}f$ develops into the region where $|x_{ij}|>\sigma t$ ($(ij)\not\le a$) and $|x^a|\le {\mu} t^r$ as $t\to\infty$ for some $\sigma>0$ and any ${\mu}>0$.
In particular for the case $r=1$, we have the following generalized decomposition theorem of $\HH_c(H)$.
\begin{theorem}\label{decompositionintoorthogonalsumofscatteringspaces}
Let Assumptions \ref{potentialdecay} and \ref{eigenfunctiondecay} be satisfied. The continuous spectral subspace $\HH_c(H)$ is decomposed as an orthogonal sum of the scattering spaces $S_a^1$ with $2\le|a|\le N$.
\begin{equation}
\HH_c(H)=\bigoplus_{2\le|a|\le N}S_a^1.\label{decompsotionformula}
\end{equation}
\end{theorem}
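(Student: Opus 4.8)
The plan is to establish the two inclusions $\bigoplus_{2\le|a|\le N}S_a^1\subseteq\HH_c(H)$ and $\HH_c(H)\subseteq\bigoplus_{2\le|a|\le N}S_a^1$ separately, together with the orthogonality of the scattering spaces $S_a^1$ for distinct cluster decompositions $a$. The easy half is the first inclusion: if $f\in S_a^1$ for some $a$ with $|a|\ge2$, then $e^{-itH}f$ escapes to spatial infinity in the intercluster variables (since $|x_{ij}|>\sigma t$ for $(ij)\not\le a$), so by a standard RAGE-type argument $f$ carries no bound-state component, i.e. $f\in\HH_c(H)$. Orthogonality of $S_a^1$ and $S_b^1$ for $a\ne b$ should follow from the fact that the phase-space regions defining them are asymptotically disjoint: if, say, some pair $(ij)\le b$ but $(ij)\not\le a$, then on $S_a^1$ one has $|x_{ij}|>\sigma t$ while on $S_b^1$ the coordinate $x_{ij}$ is part of the internal variable $x^b$ and hence $|x_{ij}|\le|x^b|\le\mu t$ for every $\mu>0$; choosing $\mu<\sigma$ forces the overlap of the evolutions to vanish in norm, so $\langle e^{-itH}f,e^{-itH}g\rangle=\langle f,g\rangle\to0$.

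The substantial half is the surjectivity, $\HH_c(H)=\bigoplus_{2\le|a|\le N}S_a^1$ — that is, every $f\in\HH_c(H)$ can be written as an orthogonal sum of components lying in the various $S_a^1$. Here I would exploit the refined partition of unity associated with the decomposition of configuration space (in the spirit of Graf \cite{[G]} and Kitada \cite{[K1]}), writing $1=\sum_a J_a(x/t)$ (modulo errors) on the region $|x|\sim t$, where each $J_a$ is supported in the cone where $|x_{ij}|\gtrsim t$ for all $(ij)\not\le a$. A Cook-type / propagation argument combined with the minimal-velocity and maximal-velocity estimates that should have been established earlier in the paper shows that for $f\in\HH_c(H)$ the pieces $J_a(x/t)e^{-itH}f$ converge, and the non-escaping part (where some intercluster pair stays at bounded distance) is negligible by the RAGE theorem applied to $H^a$. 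The technical heart is to show that on each cone the internal motion $x^a$ does not spread faster than linearly — concretely that, after further decomposing by the spectral/eigenprojection structure of $H^a$, one lands in the region $|x^a|\le\mu t$ for every $\mu>0$; this is precisely where Assumption \ref{eigenfunctiondecay} on the finiteness of $\Vert|x^a|\psi^a\Vert$ for threshold eigenvectors enters, controlling the internal spreading, while for the continuous part of $H^a$ one iterates the decomposition downward in $|a|$.

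The main obstacle I anticipate is the inductive bookkeeping: the region $|x^a|\le\mu t$ is \emph{not} automatic on the cone $J_a$ unless the internal state is an eigenstate of $H^a$, so one is forced into an induction on the number of clusters (or on the number of particles), at each stage peeling off the eigenprojection $P^a$ of $H^a$ and feeding the orthogonal complement $(1-P^a)$ back into a genuinely $|a'|$-cluster problem with $|a'|<|a|$, where the intercluster interaction $I_a$ must be controlled using the long-range decay \eqref{1.3}--\eqref{1.4}. Making the error terms in the partition of unity and in the commutator expansions summable — in particular handling the long-range tail with only $\delta>0$ rather than $\delta>1/2$ — is where the extended micro-local analysis and the careful choice of time-dependent length scales will be needed; I would set up the induction so that Theorem~\ref{decompositionintoorthogonalsumofscatteringspaces} for $N$ particles uses the analogous statement for all proper subsystems, and close the base case $N=2$ by the known two-body results cited in the introduction.
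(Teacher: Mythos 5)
Your treatment of the easy inclusion and of the orthogonality is fine: the first is in fact immediate from the definition, since $S_a^1$ is generated by subspaces of $E_H(B)\HH$ with $B\Subset\R\setminus\mathcal{T}$, and the second is exactly the disjoint-support argument of Proposition \ref{Proposition 2.4}. The gap is in the substantial half. You propose to obtain the localization $|x^a|\le\mu t$ on each cone by peeling off the eigenprojection $P^a$ and feeding $(I-P^a)$ into a subsystem scattering problem by induction, controlling the intercluster interaction $I_a$ along the way. That is the Sigal--Soffer scheme, and it is precisely the route this theorem is designed to avoid: to feed $(I-P^a)$ into a subsystem problem you need a limit of the form $\lim_{t\to\infty}e^{itH_a}(\cdots)e^{-itH}f$, and for long-range $I_a$ the existence of such a limit is obstructed by $I_a^L(x_a,x^a)-I_a^L(x_a,0)=x^a\cdot\int_0^1\nabla_{x^a}I_a^L(x_a,\theta x^a)\,d\theta$, whose control is what forces $\delta>\sqrt{3}-1$ (or $\delta>1/2$ with extra hypotheses) in the literature. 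As described, your induction would therefore not close for general $\delta>0$, which is the whole content of the statement. (Also a directional slip: a refinement of $a$ has \emph{more} clusters, so the continuum of $H^a$ feeds into decompositions $d<a$ with $|d|>|a|$, not into $|a'|<|a|$.)

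The missing idea is that for the $S_a^1$ decomposition, unlike for $S_a^0$, no comparison dynamics, no modifier, and no subsystem induction are needed. The paper uses the telescoping identity $\sum_{2\le|a|\le N}\tilde P^a_{M_a}=I-P_{M_1}$ built from subsystem eigenprojections, and proves directly (Theorem \ref{quantumclassical}: an Enss-type evaluation of $\Vert(x/t-D)e^{-itH}f\Vert^2$ split channel by channel, with Lemma \ref{RAGE} and Assumption \ref{eigenfunctiondecay} used only to kill the time averages of $i[I_a,A]$ and of $i[H^a,A^a]\tilde P^a$) that along a sequence $t_m\to\infty$ one has $x_a/t_m\approx D_a$ and $|x^a|/t_m\to0$ on the range of each $\tilde P^a_{M_a^m}$. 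Combined with the partition of unity $\Phi_a$ and the energy cut-off this gives $e^{-it_mH}f\sim\sum_aP_a^{\varepsilon_a}(t_m)e^{-it_mH}f$, and the only limits then required are the \emph{diagonal} ones $G_af=\lim_{t\to\infty}e^{itH}P_a^{\varepsilon_a}(t)e^{-itH}f$, which exist for every $\delta>0$ by Kato's smooth-operator technique because $[I_a,P_a^{\varepsilon_a}(t)]=O(t^{-1-\delta})$ on the support of the extended micro-local cutoff; each $G_af$ lies in $S_a^1$ by construction. The inductive machinery you describe belongs to the later, harder statement $S_a^0=S_a^1$ (Theorems \ref{Wapm=Sa1} and \ref{emergentP^a}), where the extra hypotheses $\delta>1/2$, $V_{ij}\ge0$, and finite-dimensional subsystem eigenspaces enter.
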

In section \ref{existenceofwaveandinversewaveoperators} we will introduce in wave operator \eq{waveoperators} an auxiliary factor $P_a^{\varepsilon}(t)$ as in \eq{wavemicrolocalizingfacor} below, where $P_a^{\varepsilon}(t)$ is an extension of micro-localizing factor introduced in \cite{KS} for $N=2$ and localizes the state microlocally in the extended phase space $\R\times X\times X'$. Here $\R$ is the space for time parameter $t$, $X=\R^{\nu(N-1)}$ is the configuration space, and $X'=\R^{\nu (N-1)}$ denotes the conjugate momentum space. 
\begin{equation}
W_a^\pm=\mbox{s-}\lim_{t\to\pm\infty}e^{itH}P_a^{{\varepsilon}}(t)J_ae^{-itH_a}P^a.\label{wavemicrolocalizingfacor}
\end{equation}
With this extended microlocal factor, it is possible to apply the simple and beautiful Kato's celebrated `smooth operator' technique \cite{Kato-wave} to prove the existence of wave operators $W_a^\pm$ and the related limits for general $\delta>0$. This makes it possible to characterize the range of wave operators in terms of scattering spaces as in the following theorem, which will be proved in section \ref{scateringspaces} together with Theorem \ref{decompositionintoorthogonalsumofscatteringspaces}.
\begin{theorem}\label{Wapm=Sa0}
Let Assumptions \ref{potentialdecay} and \ref{eigenfunctiondecay} be satisfied.
Then wave operators \eq{waveoperators} exist, and satisfy
\begin{align}
\mathcal{R}(W_a^\pm)=S_a^0 \quad(2\le |a|\le N).\label{rangeofwaveequalsscatteringspace0}
\end{align}
\end{theorem}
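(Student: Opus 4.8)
The plan is to establish the existence of the wave operators $W_a^\pm$ in the form \eqref{wavemicrolocalizingfacor} first, and then prove the two inclusions $\mathcal{R}(W_a^\pm)\subseteq S_a^0$ and $S_a^0\subseteq\mathcal{R}(W_a^\pm)$ separately. For existence, I would follow Kato's smooth-operator method: writing $\Phi_a(t)=e^{itH}P_a^{\varepsilon}(t)J_ae^{-itH_a}P^a$, one computes $\frac{d}{dt}\Phi_a(t)=e^{itH}B_a(t)e^{-itH_a}P^a$ where $B_a(t)$ collects the terms $i(HP_a^\varepsilon(t)J_a-P_a^\varepsilon(t)J_aH_a)+\partial_tP_a^\varepsilon(t)\cdot J_a$. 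The point of inserting the extended microlocal factor $P_a^\varepsilon(t)$ (the $N$-body extension of the factor of \cite{KS}) is precisely that on its support the propagation is governed by the classical intercluster free flow, so that $|x_{ij}|\gtrsim\sigma t$ for $(ij)\not\le a$ and the long-range tails $V_{ij}^L$ contribute terms of size $O(t^{-1-\delta})$ after the Graf-type modifier $J_a$ removes the leading phase; the short-range parts are $O(t^{-1-\delta})$ by \eqref{1.4}, and the momentum-space error from $\partial_t P_a^\varepsilon(t)$ and from cutting off $|x^a|\le\mu t^r$ is $H$- and $H_a$-smooth in Kato's sense. Then $\int_1^\infty\|B_a(t)e^{-itH_a}P^a f\|\,dt<\infty$ by the smoothness estimate (Cauchy–Schwarz against the smooth operator), and the strong limit exists on a dense set, hence everywhere; this part should be essentially the $N$-body bookkeeping of an argument that is already clean for $N=2$.

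For $\mathcal{R}(W_a^\pm)\subseteq S_a^0$: by the intertwining property and the construction, for $g\in P^a\HH$ the vector $e^{-itH}W_a^+g$ is asymptotic to $P_a^\varepsilon(t)J_ae^{-itH_a}g$, whose support in configuration space is, by the definition of $P_a^\varepsilon(t)$ and standard stationary-phase/propagation estimates for $e^{-itH_a}$ on $P^a\HH=P^a\HH\otimes L^2(X_a)$, concentrated in $\{|x_{ij}|>\sigma t \ (ij)\not\le a\}\cap\{|x^a|\le\mu t^0\}$ — that is, $|x^a|$ stays bounded because $P^a$ projects onto bound states of $H^a$ and Assumption \ref{eigenfunctiondecay} gives $\||x^a|\psi^a\|<\infty$. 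Thus $W_a^+g\in S_a^0$ by the very definition of the scattering space with $r=0$.

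For the reverse inclusion $S_a^0\subseteq\mathcal{R}(W_a^\pm)$, the idea is to construct an inverse wave operator: for $f\in S_a^0$ one shows that $\Omega_a^+ f:=\mbox{s-}\lim_{t\to\infty}e^{itH_a}(P^a\otimes 1)(J_a)^*(P_a^\varepsilon(t))^* e^{-itH}f$ exists, lies in $P^a\HH$, and satisfies $W_a^+\Omega_a^+f=f$. Existence of this limit again uses Kato smoothness, now with $H$-smoothness of the relevant factors along the trajectory of $e^{-itH}f$; the hypothesis $f\in S_a^0$ guarantees that $e^{-itH}f$ lives on the region where $P_a^\varepsilon(t)^*$ acts as (almost) the identity and where the internal variable $x^a$ is bounded, so that one may legitimately replace $H$ by $H_a$ in the internal directions up to integrable errors. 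The identity $W_a^+\Omega_a^+f=f$ then follows by inserting $P_a^\varepsilon(t)J_a(P^a\otimes1)(J_a)^*(P_a^\varepsilon(t))^*\to 1$ on the relevant subspace as $t\to\infty$.

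The main obstacle will be the reverse inclusion, specifically controlling the internal dynamics: one must show that restricting to $S_a^0$ (i.e. $|x^a|$ bounded) genuinely forces the internal part of $e^{-itH}f$ to be asymptotically an eigenstate of $H^a$, so that the projection $P^a$ may be inserted without loss. This is where Assumption \ref{eigenfunctiondecay} is essential — it is needed to make the commutator between $P^a$ and the localization $|x^a|\le\mu t^r$ (here $r=0$) into an $H$-smooth error — and where the difference between this paper's extended-microlocal approach and the classical phase-space-decomposition proofs (\cite{[SS]}, \cite{[G]}, \cite{[K1]}, \cite{[De]}) really shows up. I expect the bound $\delta>0$ suffices for \emph{this} theorem precisely because the extended microlocal factor does the work that the $\delta>1/2$ (or $\delta>\sqrt3-1$) hypotheses did in those earlier treatments; the stronger hypothesis $\delta>1/2$ only re-enters when one wants to glue the $S_a^0$'s together to recover $\HH_c(H)$, which is the content of Theorem \ref{decompositionintoorthogonalsumofscatteringspaces} rather than of this one.
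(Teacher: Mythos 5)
Your overall architecture coincides with the paper's: existence of $W_a^\pm$ and of the eigenprojected inverse limit $\Omega_a$ by Kato's smooth-operator technique with the extended microlocal factor $P_a^{\varepsilon}(t)$ (Theorem \ref{adjointmodifiedwaveoperator}), the easy inclusion $\mathcal{R}(W_a^\pm)\subseteq S_a^0$ read off from the definition \eq{wave}, and the reverse inclusion via the identity $f=W_a\Omega_af$. Two points in the existence step you have essentially right: the eigenprojection must sit inside the limit defining $\Omega_a$, and Assumption \ref{eigenfunctiondecay} enters through $\Vert P^a_M|x^a|\Vert<\infty$, which converts the otherwise non-integrable difference \eq{Ia(x)-Ia(xa,xa)} into an $O(t^{-1-\delta})$ error. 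Note, though, that the paper proves existence of $\Omega_af$ for \emph{every} $f\in\HH$, not only for $f\in S_a^0$, precisely because of this placement of $P^a_M$; restricting to $S_a^0$ there is unnecessary.

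The genuine gap is in the step you yourself flag as ``the main obstacle'': showing that $f\in S_a^0$ forces $e^{-itH}f\sim P^aP_a^{\varepsilon_a}(t)e^{-itH}f$, so that $W_a\Omega_af=f$. Your proposed mechanism --- that Assumption \ref{eigenfunctiondecay} turns the commutator of $P^a$ with the localization into an $H$-smooth error --- is not how this works and would not deliver the insertion of $P^a$. In the paper this is a time-averaged compactness statement: Lemma \ref{RAGE} and Theorem \ref{quantumclassical}-\eq{(2.7)} show that the components $\tilde P^d_{M_d}e^{-itH}f$ with $d\not\le a$ cannot persist in $\{|x^a|\le R\}$ on time average, whence $e^{-it_mH}f\sim P^a_{M^m}e^{-it_mH}f$ along a subsequence $t_m\to\infty$; and the passage from the purely configuration-space conditions defining $S_a^0$ to the phase-space localization $P_a^{\varepsilon_a}(t_m)e^{-it_mH}f\sim e^{-it_mH}f$ requires the Enss-type propagation estimate \eq{(2.8)} ($\varphi(x_a/t_m)\approx\varphi(D_a)$) combined with the partition of unity of Theorem \ref{homogeneousextension} (this is the content of Theorems \ref{eigenprojectionisemergent}, \ref{S_^1(B)asymptoticbehavior} and \ref{rangeofW=Sa0}). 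Both relations hold only along the sequence $t_m$, which suffices because $W_a$ and $\Omega_a$ are already known to exist as full limits. Without these two inputs the identity $W_a\Omega_af=f$ does not follow. A peripheral misstatement in your closing remark: $\delta>1/2$ is not needed for Theorem \ref{decompositionintoorthogonalsumofscatteringspaces}, which holds for all $\delta>0$; it enters only in proving $S_a^0=S_a^1$, i.e.\ Theorems \ref{Wapm=Sa1} and \ref{emergentP^a}.
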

Further we will prove the following.
\begin{theorem}\label{Wapm=Sa1}
Let Assumptions \ref{potentialdecay} and \ref{eigenfunctiondecay} be satisfied with $\delta\in (1/2,1)$ for all long-range pair potentials $V_{ij}^L$ and assume that short-range pair potentials vanish: $V_{ij}^S=0$ $(\forall(i,j))$.
 Suppose that the eigenspace of subsystem Hamiltonian $H^a$ $(2\le|a|\le N-1)$ is of finite dimension and that $V_{ij}(x)=V_{ij}^L(x)\ge0$ for all pairs $(i,j)$ and $x\in \R^\nu$.
Then wave operators \eq{waveoperators} exist, and satisfy
\begin{align}
\mathcal{R}(W_a^\pm)=S_a^0=S_a^1 \quad(2\le |a|\le N).\label{rangeofwaveequalsscatteringspace1}
\end{align}
\end{theorem}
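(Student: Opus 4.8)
The plan is to deduce Theorem \ref{Wapm=Sa1} from Theorems \ref{decompositionintoorthogonalsumofscatteringspaces} and \ref{Wapm=Sa0}; the new content is that, under the stronger hypotheses, the internal coordinate $x^a$ of a channel $a$ cannot exhibit intermediate (sub-linear but unbounded) growth, so that $S_a^1=S_a^0$. The inclusion $S_a^0\subseteq S_a^1$ is general, since if $e^{-itH}f$ is asymptotically supported in $\{|x^a|\le\mu\}$ for every $\mu>0$ it is a fortiori asymptotically supported in $\{|x^a|\le\mu t\}$ for every $\mu>0$. By Theorem \ref{Wapm=Sa0} the wave operators \eq{waveoperators} exist and $\mathcal{R}(W_a^\pm)=S_a^0$, so only the reverse inclusion $S_a^1\subseteq S_a^0$ remains to be proved. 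By Theorem \ref{decompositionintoorthogonalsumofscatteringspaces} the spaces $S_b^1$ $(2\le|b|\le N)$ are mutually orthogonal and span $\HH_c(H)$; as $S_b^0\subseteq S_b^1$ the $S_b^0$ are mutually orthogonal too, so $S_a^1\subseteq S_a^0$ for all $a$ is equivalent to the asymptotic completeness assertion that $\bigoplus_{2\le|a|\le N}S_a^0$ is dense in $\HH_c(H)$. I would establish the inclusion directly.

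Fix $f\in S_a^1$; I must show $e^{-itH}f$ concentrates in $\{|x^a|\le\mu\}$ for every $\mu>0$. I would argue by induction on $N$, assuming the analogous confinement statement for every proper subsystem: the hypotheses $\delta\in(1/2,1)$, $V_{ij}^S=0$, $V_{ij}=V_{ij}^L\ge0$ and finite-dimensionality of the subsystem eigenspaces all restrict to subsystems, so the internal dynamics generated by each $H^a$ $(2\le|a|\le N-1)$ is covered by the inductive hypothesis. The first step is to localize $e^{-itH}f$ in the internal energy $H^a$ into the bound-state part $P^ae^{-itH}f$ and its orthogonal complement. Finite-dimensionality of the eigenspace of $H^a$, Assumption \ref{eigenfunctiondecay} for its threshold eigenvectors, and the exponential decay of the nonthreshold ones \cite{[FH]} confine the bound-state part; while on the channel region $\{|x_{ij}|>\sigma t:\ (ij)\not\le a\}$ the continuous-internal-energy part splits — by the inductive hypothesis applied to $H^a$ — into genuinely finer channels $b$ with $b\le a$, $|b|>|a|$, in which the pertinent internal coordinates are already confined, so that this part contributes nothing to channel $a$.

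The second, and main, step is to establish a uniform propagation bound of the form $\sup_{t\ge1}\bigl\||x^a|\,F(\{|x_{ij}|>\sigma t:\ (ij)\not\le a\})\,e^{-itH}f\bigr\|<\infty$ by a positive-commutator (Mourre/virial) argument. Here the non-negativity $V_{ij}\ge0$ supplies the repulsivity needed on each compact internal-energy interval away from thresholds, while the intercluster separation $|x_{ij}|>\sigma t$ makes the coupling $I_a$ along the evolution $O((\sigma t)^{-\delta})$, and the bound $\delta>1/2$ makes the residual long-range error terms integrable in $t$ so that the estimate closes; the step may be organized as a bootstrap driving the admissible growth exponent $r$ from $1$ down to $0$. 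Together with the first step this gives $f\in S_a^0$.

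The principal difficulty is the second step: controlling $|x^a|$ uniformly in $t$ while the clusters of $a$ separate. This requires marrying the Mourre-type positive commutator — the place where $V_{ij}\ge0$ is genuinely used — with the intercluster decay of $I_a$ and the extended microlocal localization $P_a^\varepsilon(t)$ of \eq{wavemicrolocalizingfacor}, and carrying the $N$-induction through so that the subsystem confinement invoked above is uniform with respect to the exterior cluster motion. Once this is in place one has $S_a^1\subseteq S_a^0$, hence $S_a^0=S_a^1$; together with $\mathcal{R}(W_a^\pm)=S_a^0$ from Theorem \ref{Wapm=Sa0} this yields \eq{rangeofwaveequalsscatteringspace1}.
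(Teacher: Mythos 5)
Your reduction is sound as far as it goes: $\mathcal{R}(W_a^\pm)=S_a^0$ is Theorem \ref{Wapm=Sa0}, $S_a^0\subseteq S_a^1$ is elementary, and by Theorem \ref{decompositionintoorthogonalsumofscatteringspaces} everything hinges on $S_a^1\subseteq S_a^0$. The first genuine problem is in your ``first step.'' You split $e^{-itH}f$ into $P^ae^{-itH}f$ and $(I-P^a)e^{-itH}f$ and claim the latter is disposed of by the inductive hypothesis for $H^a$ plus orthogonality. That disposes only of the part of the state whose internal energy is \emph{bounded away} from $\mathcal{T}_a$: for that part $|x^a|$ grows linearly, the state migrates to a strictly finer channel $b<a$, and orthogonality of $S_a^1$ and $S_b^1$ kills it (this is exactly how the paper treats $(I-\psi_a(H^a))e^{-itH}f$ in Theorem \ref{emergentP^a}, with $\psi_a$ a smooth cutoff near $\mathcal{T}_a$). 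But the range of $\psi_a(H^a)-P^a$ --- internal energy in the \emph{continuous spectrum of $H^a$ arbitrarily close to a threshold} --- is precisely the locus of the Yafaev channels $S_a^1\setminus S_a^0$; such states need not separate at linear rate in $x^a$, so neither the subsystem induction nor orthogonality of scattering spaces touches them. Your decomposition silently assigns this part to ``finer channels,'' which is the very point at issue.

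The second problem is that the step which actually consumes the hypotheses $V_{ij}\ge 0$, $\delta>1/2$ and finite-dimensionality --- your proposed uniform bound $\sup_t\Vert\,|x^a|F(\cdots)e^{-itH}f\Vert<\infty$ via a ``Mourre/virial argument'' --- is not carried out, and you yourself flag it as the principal difficulty. So the proof has a hole exactly where the theorem is hard. For comparison, the paper closes this gap differently: it introduces the auxiliary time-dependent dynamics $U_a(t)$ generated by $H^a+T_a+I_a^L(t,x)$ and proves the bootstrap Lemma \ref{bootstrap}, $\Vert(H^a-E_j)U_a(t)g\Vert\le Ct^{-2\delta}$, where $V^a\ge 0$ and $E_j\le 0$ are used to dominate $\Vert\,|D^a|U_a(t)g\Vert^2$ by $\Vert(H^a-E_j)U_a(t)g\Vert\,\Vert g\Vert$ and $\delta>1/2$ makes $t^{-2\delta}$ integrable; this yields convergence of $\tilde U_{aj}(t)^*U_a(t)$ applied to the defect $(\psi_{aj}(H^a)-P^a_j)P_a^{\varepsilon_a}(t)e^{-itH}f$, and a compactness argument in the limit $\mathrm{supp}\,\psi_a\downarrow\mathcal{T}_a$ (finite-dimensionality of the eigenspaces enters here) forces that defect to vanish, giving $f=\lim_t e^{itH}P^aP_a^{\varepsilon_a}(t)e^{-itH}f\in S_a^0$. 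No uniform bound on $\Vert\,|x^a|e^{-itH}f\Vert$ is ever established or needed. Until you supply an actual proof of your propagation estimate --- or an argument of the bootstrap type above --- the proposal does not prove the theorem.
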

Theorem \ref{decompositionintoorthogonalsumofscatteringspaces} therefore implies the asymptotic completeness for the long-range pair potentials specified in the theorem.
\begin{theorem}\label{asymptoticcompletenssshortlong} 
Let the assumptions of Theorem \ref{Wapm=Sa1} be satisfied. Then wave operators \eq{waveoperators}
\begin{align}
W_a^\pm=\text{s-}\lim_{t\to\pm\infty}e^{itH}J_ae^{-itH_a}P^a\label{waveoperatorscomplete}
\end{align}
exist, and are asymptotically complete.
\begin{align}
\HH_c(H)=\bigoplus_{2\le|a|\le N}\mathcal{R}(W_a^\pm).\label{usualasymptoticcompletenesscomplete}
\end{align}
\end{theorem}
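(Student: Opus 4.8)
The plan is to deduce the statement directly by chaining together the three preceding theorems, so that essentially no new analysis is required beyond recognizing that the hypotheses have been arranged to make the three results simultaneously applicable. First I would observe that the assumptions imposed here --- namely $\delta\in(1/2,1)$ for every long-range pair potential, $V_{ij}^S=0$ for all pairs, finite-dimensionality of each subsystem eigenspace for $2\le|a|\le N-1$, and positivity $V_{ij}(x)=V_{ij}^L(x)\ge 0$ --- are exactly the hypotheses of Theorem \ref{Wapm=Sa1}, and that in particular they contain Assumption \ref{potentialdecay} (with the short-range part trivially zero) and Assumption \ref{eigenfunctiondecay} as stated. Hence Theorems \ref{decompositionintoorthogonalsumofscatteringspaces}, \ref{Wapm=Sa0} and \ref{Wapm=Sa1} are all available.

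The argument then proceeds as follows. By Theorem \ref{Wapm=Sa1}, the wave operators \eq{wavemicrolocalizingfacor} exist and satisfy $\mathcal{R}(W_a^\pm)=S_a^0=S_a^1$ for every $a$ with $2\le|a|\le N$. Next I would note that because $S_a^0=S_a^1$ under the present hypotheses, the extra micro-localizing factor $P_a^{\varepsilon}(t)$ is in fact redundant in the limit: since the range already coincides with $S_a^1$ and the decomposition theorem is stated for the ``plain'' channel wave operators \eq{waveoperators}, one checks --- this is the one genuinely non-formal point --- that $\mbox{s-}\lim_{t\to\pm\infty}e^{itH}(P_a^\varepsilon(t)-1)J_ae^{-itH_a}P^a=0$, so that the limit \eq{waveoperatorscomplete} exists and equals the limit \eq{wavemicrolocalizingfacor}; consequently $\mathcal{R}(W_a^\pm)=S_a^1$ for the operators in \eq{waveoperatorscomplete} as well. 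Finally, summing over channels and invoking Theorem \ref{decompositionintoorthogonalsumofscatteringspaces},
\begin{equation}
\bigoplus_{2\le|a|\le N}\mathcal{R}(W_a^\pm)=\bigoplus_{2\le|a|\le N}S_a^1=\HH_c(H),\nonumber
\end{equation}
which is precisely \eq{usualasymptoticcompletenesscomplete}; the orthogonality of the summands is inherited from the orthogonality of the scattering spaces $S_a^1$ asserted in Theorem \ref{decompositionintoorthogonalsumofscatteringspaces}.

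The main obstacle I anticipate is the middle step: verifying that one may remove the auxiliary factor $P_a^{\varepsilon}(t)$ without changing the wave operator. All the hard analysis (the existence of the modified limits, the smooth-operator estimates, the characterization of ranges, and the orthogonal decomposition of $\HH_c(H)$) is already packaged into Theorems \ref{decompositionintoorthogonalsumofscatteringspaces}--\ref{Wapm=Sa1}; what remains is to argue that on the relevant states the micro-localization is asymptotically the identity. This should follow from the stationary-phase/propagation properties of $e^{-itH_a}P^a$ together with the fact that, when $\delta>1/2$ and the potentials are nonnegative with finite-dimensional subsystem eigenspaces, the evolution is supported (up to $o(1)$) in the phase-space region on which $P_a^\varepsilon(t)$ acts trivially --- the same mechanism that forces $S_a^0=S_a^1$ in Theorem \ref{Wapm=Sa1}. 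Once that identification is in hand, the theorem is immediate.
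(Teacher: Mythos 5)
Your overall strategy is exactly the paper's: Theorem \ref{asymptoticcompletenssshortlong} is obtained by chaining Theorems \ref{decompositionintoorthogonalsumofscatteringspaces}, \ref{Wapm=Sa0} and \ref{Wapm=Sa1} (in the body of the paper this is Theorems \ref{sumG_a^+=H_c(H)}, \ref{rangeofW=Sa0} and \ref{emergentP^a}, packaged as Theorem \ref{rangeofwaveoperator}), and you correctly isolate the only step that is not purely formal, namely passing from the modified wave operators with the factor $P_a^{\varepsilon}(t)$ to the plain ones in \eq{waveoperatorscomplete}.

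However, the way you formulate that step is false as written. You claim $\mbox{s-}\lim_{t\to\pm\infty}e^{itH}(P_a^{\varepsilon}(t)-1)J_ae^{-itH_a}P^a=0$. This cannot hold on all of $P^a\HH$, because $P_a^{\varepsilon}(t)$ carries the compactly supported intercluster momentum cutoff $\chi_a(D_a)^2\Phi_a(D_a)$ (see \eq{(4)}): for $g=P^ag$ whose intercluster momentum is supported outside $\mbox{supp}\,\chi_a$, the factor $P_a^{\varepsilon}(t)$ annihilates the state asymptotically while $J_ae^{-itH_a}P^ag$ does not decay, so the difference has norm comparable to $\Vert g\Vert$. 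The paper's actual mechanism has two stages. First, in Theorem \ref{adjointmodifiedwaveoperator}, using the identity $(x_a-tD_a)e^{-itT_a}=e^{-itT_a}x_a$ one shows that $P_a^{\varepsilon_a}(t)$ may be replaced not by $I$ but by $\chi_a(D_a)^2$ on a dense set of vectors, giving $W_af=\lim_{t\to\infty}e^{itH}J_ae^{-itH_a}P^a_M\chi_a(D_a)^2f$. Second, since $\chi_a\equiv1$ on $[d_1,d_2]$ and $W_a$ is an isometry on $P^a_ME_{T_a}([d_1^2/2,d_2^2/2])\HH$, one extends to all of $P^a\HH$ by letting the cutoff parameters vary and using consistency of the limits on overlaps (this is the extension argument invoked in Theorems \ref{shortrangecompletenesstheorem} and \ref{rangeofwaveoperator}). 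With the step repaired in this way, the rest of your argument --- $\mathcal{R}(W_a^\pm)=S_a^0=S_a^1$ from Theorems \ref{rangeofW=Sa0} and \ref{emergentP^a}, then $\bigoplus_a\mathcal{R}(W_a^\pm)=\bigoplus_aS_a^1=\HH_c(H)$ from Theorem \ref{decompositionintoorthogonalsumofscatteringspaces}, with orthogonality inherited from Proposition \ref{Proposition 2.4} --- is precisely what the paper does.
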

A traditional proof of the asymptotic completeness for the short-range case will be given in section \ref{shortrangecompleteness} in order to contrast the new point of our method. 
%As stated, 
At an early stage of the present investigation, we expected that for large part of long-range pair potentials, the asymptotic completeness would hold except for some special cases when Yafaev channel (\cite{[Yaf]}, \cite{[Yaf-2]}) occurs. However we noticed in the midst of the investigation that Yafaev channels seem to be rather dominant for the very long-range case when $\delta\in(0,1/2)$.
Yafaev channel is characterized by the condition that the initial condition $f\in\HH_c(H)$ satisfies $f\in S_a^1\setminus S_a^0$. Although we could not find any effective condition to control the occurrence of such channels, the traditional formulation of the asymptotic completeness survives as a new form as the equation \eq{decompsotionformula} in Theorem \ref{decompositionintoorthogonalsumofscatteringspaces}.

The proof for the short-range case given in section \ref{shortrangecompleteness} will use a traditional argument by mathematical induction originated in Sigal-Soffer \cite{[SS]}. In this proof there needs to prove the existence of the limit like
\begin{equation}
\begin{aligned}
\tilde \Omega_af=\lim_{t\to\infty}e^{itH_a}P_a^{\varepsilon}(t)J_a^*e^{-itH}f.
\end{aligned}\label{usualinversewave-2-introduction}
\end{equation}
In the short-range case the proof of the existence of this limit produces no problem, as the pair potentials are short-range and the integrability of the differentiation
\begin{equation}
\frac{d}{dt}(e^{itH_a}P_a^{\varepsilon}(t)J_a^*e^{-itH}f)\label{diff}
\end{equation}
is not hard to show by smooth operator technique due to the extended micro-localizing factor $P_a^{\varepsilon}(t)$. However in the long-range case, the proof of the existence of the limit itself is a problem, since the long-range part of intercluster interaction $I_a^L(x_a,x^a)$ (see \eq{potentials}) includes both of intercluster coordinate $x_a$ and internal coordinates $x^a$ when $N\ge 3$. This makes it hard to prove the integrability of \eq{diff} unless long-range pair potentials vanish even if we prepare a modifier $J_a$ that can handle the general long-range tail with $\delta>0$.
To overcome this difficulty, one inserts an intermediate interaction $I_a^L(x_a,0)$ as in Derezi\'nski \cite{[De]}, and tries to evaluate the difference 
\begin{equation}
I_a^L(x_a,x^a)-I_a^L(x_a,0)=x^a\cdot\int_0^1\nabla_{x^a}I_a^L(x_a,\theta x^a)d\theta.\label{Ia(x)-Ia(xa,xa)}
\end{equation}
To control the extra factor $x^a$ and get the integrability of \eq{diff}, one needs to analyze the internal motion and assume the condition $\delta>\sqrt{3}-1$. To avoid such problems of the traditional approach, we will firstly expand the scattering state in terms of subsystem scattering states,
and reduce the problem to the consideration of the following limit including eigenprojection $P^a$ of subsystem Hamiltonians under an additional Assumption \ref{eigenfunctiondecay} on subsystem eigenfunctions.
\begin{align}
\Omega_af=\lim_{t\to\infty}P^ae^{itH_a}J_a^*P_a^{\varepsilon}(t)e^{-itH}f.\label{eigenprojectioninversewave}
\end{align}
This step will give Theorems \ref{decompositionintoorthogonalsumofscatteringspaces} and \ref{Wapm=Sa0}. These two theorems imply that the asymptotic completeness is equivalent to the condition
\begin{equation}
S_a^0=S_a^1.
\end{equation}
When long-range pair potentials vanish, this condition readily follows from the existence of the limit \eq{usualinversewave-2-introduction} by the traditional method of induction with detouring the hard part of the discussion. When long-range pair potentials do not vanish, it is necessary to attack this problem directly, which constitutes a hard step of the problem as we will see in section \ref{scateringspaces}.

\section{Notation}\label{Notation}

We will give in this section basic notation which we will need in the following. In doing so we will avoid unnecessary abstraction, and will concentrate on the most important points. We will also assume that the reader is familiar with pseudodifferential operators, Fourier integral operators, and their calculus (see e.g., \cite{K2}-\cite{Kumano-go}), and will avoid the unnecessary complication of calculation with just stating the principal symbols and the support relations.

The coordinate space of $N$-particles is
$$
\R^{\nu N}=\{x|x=(r_1,\dots,r_N)\in \R^{\nu N}, r_j\in \R^\nu (j=1,\dots,N)\}.
$$
Let the center of mass of the $N$ particles be denoted by
$$
x_C=\frac{m_1r_1+\dots+m_Nr_N}{m_1+\dots+m_N}\in \R^\nu.
$$
We set with $n=N-1$
\begin{equation}
X_C=\{r|r\in \R^{\nu}, r=x_C\}=\R^\nu,\quad X=\{x|x\in \R^{\nu N}, x_C=0\}=\R^{\nu(N-1)}=\R^{\nu n}.\label{spaceX}
\end{equation}
Then the total space $\R^{\nu N}$ is decomposed as a direct product of $X_C$ and $X$.
\begin{equation}
\R^{\nu N}=\R^\nu\times \R^{\nu (N-1)}=X_C\times X.\label{XcxJS}
\end{equation}
As the coordinates of $X$, we adopt Jacobi coordinate system which describes the relative position of the $N$-particles.
\begin{equation}
x_i=r_{i+1}-\frac{m_1r_1+\dots+m_i r_i}{m_1+\dots+m_i}\in \R^\nu,\quad i=1,2,\dots,N-1.\label{1.5}
\end{equation}
The corresponding reduced mass $\mu_i$ is given by
\begin{equation}
\frac{1}{\mu_i}=\frac{1}{m_{i+1}}+\frac{1}{m_1+\dots+m_i}.\label{reducedmass}
\end{equation}
We equip $X$ with the inner product
\begin{equation}
\langle x,y\rangle=\sum_{i=1}^{N-1}\mu_ix_i\cdot y_i,\label{innerproduct}
\end{equation}
where $\cdot$ denote the Euclidean scalar product of $\R^\nu$. With respect to this inner product, the change of variables between Jacobi coordinates \eq{1.5} is realized by orthogonal transformations of the space $X$, while $\mu_i$ and $x_i$ depend on the order of the construction of the Jacobi coordinates \eq{1.5}.

To consider the behavior of the particles, we need to introduce the notion of clustered Jacobi coordinate.
 Let $a=\{C_1,\dots,C_k\}$ be a disjoint decomposition of the set $\{1,2,\dots,N\}$: 
$C_\ell \ne \emptyset$ $(\ell=1,2,\dots,k)$, $\cup_{\ell=1}^k C_\ell=\{1,2,\dots,N\}$, $C_\ell\cap C_m=\emptyset$ ($\ell\ne m$). We denote the number of elements of a set $S$ by $|S|$. Then $|a|=k$. We call $a$ a cluster decomposition with $|a|$ clusters $C_1,\dots,C_{|a|}$.
 A cluster decomposition $b$ is called a refinement of a cluster decomposition $a$, denoted by $b\le a$, iff every $C\in b$ is a subset of some $D\in a$, and $b\not\leq a$ is its negation: there exists a cluster $C\in b$ such that no $D\in a$ includes $C\in b$ as a subset. The notation $b<a$ means that $b\le a$ and $b\ne a$. For a pair ${(i,j)}$, ${(i,j)}\le a$ means that ${\{i,j\}}\subset D$ for some $D\in a$, and ${(i,j)}\not\leq a$ means that ${\{i,j\}}\not\subset D$ for all $D\in a$. 
 
A clustered Jacobi coordinate $x=(x_a,x^a)$ associated with a cluster decomposition $a=\{C_1,\dots,C_k\}$ is obtained by first choosing a Jacobi coordinate
$$
x^{C_\ell}=(x_1^{C_\ell},\dots,x_{|C_\ell|-1}^{C_\ell})\in X^{C_\ell}=\R^{\nu(|C_\ell|-1)}\quad (\ell=1,2,\dots,k)
$$
for the $|C_\ell|$ particles in the cluster $C_\ell$ and then by choosing an intercluster Jacobi coordinate
$$
x_a=(x_1,\dots,x_{k-1})\in X_a=\R^{\nu(k-1)}
$$
for the $k$ centers of mass of the clusters $C_1,\dots,C_k$. Then $x^a=(x^{C_1},\dots,x^{C_k})\in X^a=X^{C_1}\times \dots\times X^{C_k}=\R^{\nu(N-k)}$ and $x=(x_a,x^a)\in X_a\times X^a=\R^{\nu(k-1)}\times\R^{\nu(N-k)}=\R^{\nu(N-1)}=\R^{\nu n}=X$.
We denote the conjugate momentum space of $X$ by $X'=\{\xi|\xi\in \R^{\nu (N-1)}\}$.
%\noindent
%\rule{\textwidth}{0.2pt}

Given Jacobi coordinate \eq{1.5}, the Schr\"odinger operator $H$ defined in the Hilbert space $\mathcal{H}=L^2(X)$ with the domain $\mathcal{D}(H)=H^2(X)$, and the corresponding classical Hamiltonian $H(x,\xi)$ with the center of mass motion separated and removed are expressed in the form
\begin{equation}
\begin{aligned}
&H=H(x,D_x)=H(x,D)=H_0+V,\\
&H_0=H_0(D_x)=H_0(D),\ V=V(x),\ \ (D_x=D=-i\partial_x),\\
&H(x,\xi)=H_0(\xi)+V(x),\ \ H_0(\xi)=\sum_{i=1}^n\frac{1}{2\mu_i}|\xi_i|^2, \ \ V(x)=\sum_{1\le i<j\le N}V_{ij}(x_{ij}).
\end{aligned}\label{Hamiltonian}
\end{equation}
Passing to the clustered Jacobi coordinates $x=(x_a,x^a)\in X$ and the corresponding conjugate momentum $\xi=(\xi_a,\xi^a)\in X'$ for a cluster decomposition $a=(C_1,\dots,C_k)$ with $|a|=k$, $2\le k\le N$, we see that the free part of the Hamiltonian is given by
\begin{equation}
\begin{aligned}
&H_0=H_0(D_x)=T_a+H_0^a,\\ 
&T_a=T_a(D_a),\ H_0^a=H_0^a(D^a),\ (D_a=D_{x_a}, D^a=D_{x^a}),\\
&H_0(\xi)=T_a(\xi_a)+H_0^a(\xi^a),\quad T_a(\xi_a)=\sum_{\ell=1}^{k-1}\frac{1}{2M_\ell}|\xi_\ell|^2,\\
&H_0^a(\xi^a)=\sum_{\ell=1}^kH_0^{C_\ell}(\xi^{C_\ell}),\quad H_0^{C_\ell}(\xi^{C_\ell})=\sum_{i=1}^{|C_\ell|-1}\frac{1}{2\mu_i^{C_\ell}}|\xi_i^{C_\ell}|^2,
\end{aligned}
\label{freeH}
\end{equation}
where $M_\ell$ and $\mu_i^{C_\ell}$ are reduced masses corresponding to the clustered Jacobi coordinates.
The potential part of the Hamiltonian is given by
\begin{equation}
\begin{aligned}
&V(x)=\sum_{1\le i<j\le N} V_{ij}(x_{ij})=I_a(x)+V^a(x^a).\quad
I_a(x_a,x^a)=I^S_a(x_a,x^a)+I^L_a(x_a,x^a),\\
&I^S_a(x_a,x^a)=\sum_{{(i,j)}\not\le a}V^S_{ij}(x_{ij}),\quad I^L_a(x_a,x^a)=\sum_{{(i,j)}\not\le a}V^L_{ij}(x_{ij}),\\ 
&V^a(x^a)=\sum_{C_\ell\in a}V^{C_\ell}(x^{C_\ell}),\quad V^{C_\ell}(x^{C_\ell})=\sum_{{(i,j)}\le C_\ell}V_{ij}(x_{ij}).
\end{aligned}
\label{potentials}
\end{equation}
The total Hamiltonian is given by
\begin{equation}
\begin{aligned}
&H=H_0+V=H_a+I_a=T_a+H^a+I_a,\ H=H(x,D_x)=H(x,D),\\
&H_0=H_0(D),\ H_a=H_a(x,D),\ H^a=H^a(x^a,D^a)=H_0^a(D^a)+V^a(x^a),\\ 
&H(x,\xi)=H_0(\xi)+V(x)=H_a(x,\xi)+I_a(x),\\
&H_a(x,\xi)=T_a(\xi_a)+H^a(x^a,\xi^a),\\
&H^a(x^a,\xi^a)=H_0^a(\xi^a)+V^a(x^a)=\sum_{C_\ell\in a}H^{C_\ell}(x^{C_\ell},\xi^{C_\ell}),\\
&H^{C_\ell}(x^{C_\ell},\xi^{C_\ell})=H_0^{C_\ell}(\xi^{C_\ell})+V^{C_\ell}(x^{C_\ell}).
\end{aligned}
\label{Hamiltonian-decomposed}
\end{equation}
Here $T_a$ acts in the Hilbert space $\HH_a=L^2(\R^{\nu(|a|-1)})$; $H^a$, $H_0^a$, and $V^a$ act in $\HH^a=L^2(\R^{\nu(N-|a|)})$; $H_a=T_a+H^a=T_a\otimes I+I\otimes H^a$ and $H=H_a+I_a$ act in the total Hilbert space $\HH=\HH_a\otimes \HH^a$, where $I$ in $T_a\otimes I$ and $I$ in $I\otimes H^a$ denote the identity operators in $\HH^a$ and $\HH_a$, respectively.
Making a change of variable $\xi_\ell=\sqrt{M_\ell}\xi'_\ell$ and $\xi_i^{C_\ell}=\sqrt{\mu_i^{C_\ell}}\xi'^{C_\ell}_i$, we obtain a more convenient form of the Hamiltonian
\begin{equation}
\begin{aligned}
&H(x,\xi)=H_0(\xi)+V(x),\quad H_0(\xi)=T_a(\xi_a)+H_0^a(\xi^a)=\frac{1}{2}|\xi|^2,\\
&T_a(\xi_a)=\frac{1}{2}\sum_{\ell=1}^k|\xi_\ell|^2,\quad H_0^a(\xi^a)=\frac{1}{2}\sum_{\ell=1}^k\sum_{i=1}^{|C_\ell|-1}|\xi_i^{C_\ell}|^2.
\end{aligned}
\label{conventionalHamiltonian}
\end{equation}
We note that in this setting the inner product defined by \eq{innerproduct} is just the Euclidean inner product: $\langle x,y\rangle=\sum_{i=1}^{N-1}x_i\cdot y_i$ of $X=\R^{\nu n}$.
As we write the configuration space $\R^{\nu n}$ by $X$ and the conjugate momentum space $\R^{\nu n}$ by $X'$, respectively, the phase space will be denoted by $X\times X'=\R^{\nu n}\times \R^{\nu n}=\{(x,\xi)|x\in X,\xi\in X'\}$. In the following, we consider the Hamiltonian $H=H(x,D_x)$ in \eq{conventionalHamiltonian} for an $N$-body quantum-mechanical system with $N\ge 2$ under Assumptions \ref{potentialdecay} and \ref{eigenfunctiondecay}.

\section{Continuous spectrum}\label{Contiuousspectrum}
We need to introduce some notation concerning the bound states of Hamiltonians.
Let $a$ be a cluster decomposition with $1\le |a|\le N$. We let ${P^a}=I$ for $|a|=N$, and let ${P^a}$ be the orthogonal projection onto the eigenspace $\HH_p(H^a)$ of $H^a$ defined on $\HH^a=L^2(X^a)=L^2(\R^{\nu(N-|a|)})$ for $1\le|a|\le N-1$. We write the extension $I\otimes {P^a}$ as ${P^a}$, where $I$ denotes the identity operator of $\HH_a=L^2(X_a)=L^2(\R^{\nu(|a|-1)})$ and for $|a|=1$ we write ${P^a}=P=P_H$. For an integer $M$ we let $P^a_M$ be an $M$-dimensional partial projection of the eigenprojection ${P^a}$ such that ${P^a}=\mbox{s-}\lim_{M\to\infty}P^a_M$. We use the notation $P^a_M=I\otimes P^a_M$ for the natural extension of $P^a_M$. Let a sequence of integers $M_1,M_2,\dots,M_N$ be fixed, and define ${\hat M}_a=(M_1,\dots,M_{|a|-1})$ and $M_a=({\hat M}_a,M_{|a|})=(M_1,\dots,M_{|a|-1},M_{|a|})$. Set
for $\hat M=(M_1,\dots,M_{\ell})$ $(1\le\ell\le N-1)$ and $P_{M_1}=P^{a_1}_{M_1}$ $(|a_1|=1)$
\begin{align}
{\hat P}^\ell_{\hat M}=\left(I-\sum_{|a_\ell|=\ell}P^{a_\ell}_{M_\ell}\right)\dots\left(I-\sum_{|a_2|=2}P^{a_2}_{M_2}\right)(I-P_{M_1})\label{1.19}
\end{align}
and
\begin{align}
\tilde P^a_{M_a}=P^a_{M_{|a|}}{\hat P}^{|a|-1}_{{\hat M}_a},\quad
2\le |a| \le N.\label{1.20}
\end{align}
Then we have
\begin{equation}
\sum_{2\le|a|\le N}\tilde{P}^a_{M_a}=I-P_{M_1}.\label{summationofprojections}
\end{equation}
We also set
\begin{align}
\mathcal{T}=\bigcup_{1\le|a|\le N}\sigma_p(H^a),\label{thresholds}
\end{align}
where $\sigma_p(H^a)$ denotes the point spectrum of the selfadjoint operator $H^a$ $(1\le |a|\le N)$ with $\sigma_p(H^a)=\{0\}$ for $|a|=N$. Namely $\mathcal{T}$ is the sum of the point spectrum of $H$ and the thresholds of $H$. We note that
\begin{equation}
\begin{aligned}
\HH_c(H)=\overline{\sum_{B\Subset \R\setminus\mathcal{T}}E_H(B)\HH}.\label{continuousspectralsubspace}
\end{aligned}
\end{equation}
We remark that the result of \cite{[FH]} implies that $\mathcal{T}\subset [b,0]$ for some constant $b\le0$. The first purpose of this section is to prove the following lemma which is an extension of the theorem known as RAGE theorem (\cite{RS}). 
\begin{lemma}\label{RAGE} Let $\{B(s)|s\in \R\}$ be a continuous family of uniformly bounded operators in $\HH$. Let $B\subset \R$ be a bounded open set satisfying $E_H(B)\HH\subset \HH_c(H)$ and let $2\le |b| \le N$. Then there is a constant $\epsilon_M>0$ that goes to $0$ when the components $M_j$ of the multi-index $M_b$ tend to $\infty$ such that as $T\to\infty$
\begin{equation}
\left\Vert\frac{1}{T}\int_0^T B(s) F(|x_{ij}|<R){\tilde P}^b_{M_b}e^{-isH} E_H(B)ds\right\Vert \sim_{\epsilon_M} 0
\end{equation}
for any pair $(i,j)\not\le b$. Here $\sim_{\epsilon_M}$ means that the norm of the difference of the both sides is asymptotically $\le\epsilon_M$ as $T\to\infty$, and
$F(S)$ denotes a smooth positive cut off function which is $1$ on the set $S\subset X=\R^{\nu (N-1)}$ and is $0$ outside some neighborhood of $S$. We write $\sim$ for $\sim_0$.
\end{lemma}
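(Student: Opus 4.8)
The plan is to reduce the operator-norm estimate to a Ces\`aro-averaged $L^2$ bound, and then --- after inserting a smooth energy cut-off $g(H)$ with $g\in C_0^\infty(\R\setminus\mathcal T)$, $g\equiv1$ on $B$, so that $g(H)E_H(B)=E_H(B)$ --- to split $F(|x_{ij}|<R)^{1/2}\tilde P^b_{M_b}$ on $E_H(B)\HH$ into a compact part, an $H$-smooth part, and a remainder of norm $\epsilon_M$. Writing $C_M:=F(|x_{ij}|<R)\tilde P^b_{M_b}$ and using $0\le F\le1$ (so that $\|C_M v\|\le\|A_M v\|$ with $A_M:=F(|x_{ij}|<R)^{1/2}\tilde P^b_{M_b}$), together with $C_B:=\sup_s\|B(s)\|<\infty$ and the Schwarz inequality, one has
\[
\Big\|\tfrac1T\int_0^T B(s)\,C_M\,e^{-isH}E_H(B)\,ds\Big\|
\ \le\ C_B\sup_{\|\psi\|\le1}\Big(\tfrac1T\int_0^T\|A_M\,g(H)\,e^{-isH}E_H(B)\psi\|^2\,ds\Big)^{1/2},
\]
so it suffices to bound the averaged square-norm on the right by $\epsilon_M^2+o(1)$ uniformly in $\|\psi\|\le1$.

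First I would record the uniform (operator-norm) form of the RAGE theorem: for any compact $K$,
\[
\lim_{T\to\infty}\ \sup_{\|\psi\|\le1}\ \tfrac1T\int_0^T\|K e^{-isH}E_H(B)\psi\|^2\,ds=0 .
\]
Using the spectral decomposition of $K^*K$ and truncating it at an index where the remaining eigenvalue is small, this reduces to a rank-one operator $|\phi\rangle\langle\phi|$, for which the average is $\tfrac1T\int_0^T|\langle E_H(B)\phi,e^{-isH}E_H(B)\psi\rangle|^2\,ds$. Since $E_H(B)\HH\subset\HH_c(H)$ the set $B$ contains no eigenvalue of $H$, hence $E_H(B)\phi\in\HH_c(H)$, the spectral measure $\rho(\cdot):=\|E_H(\cdot)E_H(B)\phi\|^2$ is finite, non-atomic and supported in $\overline B$, and the complex measures $\nu_\psi(\cdot):=\langle E_H(\cdot)E_H(B)\psi,E_H(B)\phi\rangle$ satisfy $|\nu_\psi|(S)\le\|\psi\|\,\rho(S)^{1/2}$ by Schwarz. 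Wiener's theorem, in the quantitative form obtained by splitting $\tfrac1T\int_0^T|\widehat{\nu_\psi}(s)|^2\,ds$ according to $|\lambda-\lambda'|\lessgtr\delta$, then gives the bound $(\sup_\lambda\rho((\lambda-\delta,\lambda+\delta)))^{1/2}+2/(T\delta)$, uniformly in $\psi$, and letting $T\to\infty$ and then $\delta\to0$ yields the claim.

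The main step is to produce, on $E_H(B)\HH$, a decomposition $A_M g(H)=K_M+S_M+R_M$ with $K_M$ compact, $S_M$ $H$-smooth on $E_H(B)\HH$ (i.e. $\sup_{\|\psi\|\le1}\int_\R\|S_M e^{-isH}E_H(B)\psi\|^2\,ds=:C_M^\flat<\infty$), and $\|R_M\|\le\epsilon_M\to0$ as the components $M_j$ of $M_b$ tend to $\infty$. For this I would use: (i) for $(i,j)\not\le b$ the coordinate $x_{ij}$ is an affine function $L(x_b)+\ell(x^b)$ with $L\colon X_b\to\R^\nu$ surjective, so $F(|x_{ij}|<R)$ is a genuine cut only in the $\nu$-dimensional range of $L$; (ii) $\tilde P^b_{M_b}=P^b_{M_{|b|}}\hat P^{|b|-1}_{\hat M_b}$, where $P^b_{M_{|b|}}$ is a finite-rank partial sum of $P^b=\bigotimes_{C_\ell}P^{C_\ell}$, whose eigenvectors decay --- exponentially below the corresponding thresholds by \cite{[FH]}, and with $\|\,|x^{C_\ell}|\,\psi^{C_\ell}\|<\infty$ at threshold by Assumption \ref{eigenfunctiondecay} --- so that $P^b_{M_{|b|}}$ localizes $x^b$ and, together with $g(H)$, all momenta: when $|b|=2$ this already makes $F(|x_{ij}|<R)P^b_{M_{|b|}}g(H)$ compact (it then localizes all of $x$), giving $K_M$ with $S_M=0$; (iii) $\hat P^{|b|-1}_{\hat M_b}$ removes the bound states of all subsystems with fewer than $|b|$ clusters --- precisely the channels in which the $b$-separated pair $(i,j)$ stays bound. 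For $|b|\ge3$ the intercluster directions of $x_b$ transverse to the range of $L$ are genuinely not localized by $A_M g(H)$, so here $K_M$ is inessential; instead I would invoke a channel-projected minimal-velocity (Mourre-type) estimate --- valid on $E_H(B)\HH$, i.e. at energies away from $\mathcal T$, once the coarser bound channels in (iii) have been removed --- to conclude that the $b$-separated clusters fly apart, hence $x_{ij}$ escapes, and the corresponding part of $A_M g(H)$ is $H$-smooth on $E_H(B)\HH$; this is $S_M$. Finally, $\hat P^{|b|-1}_{\hat M_b}$ is built from the truncations $P^{a_\ell}_{M_\ell}$ instead of the full eigenprojections $P^{a_\ell}$; the difference produces $R_M$, and $\|R_M\|=\epsilon_M\to0$ as the $M_\ell\to\infty$ because only finitely many eigenvalues of each subsystem lie in the fixed compact set $\mathrm{supp}\,g\subset\R\setminus\mathcal T$, so that $g(H)$ kills the omitted eigenvectors' contribution up to a norm error vanishing with $M_\ell$.

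With this splitting,
\[
\tfrac1T\int_0^T\|A_M g(H)e^{-isH}E_H(B)\psi\|^2\,ds
\ \le\ 3\,\tfrac1T\int_0^T\|K_M e^{-isH}E_H(B)\psi\|^2\,ds
+\frac{3C_M^\flat}{T}\|\psi\|^2+3\epsilon_M^2\|\psi\|^2 ;
\]
letting $T\to\infty$ and using the uniform RAGE bound for $K_M$ gives $\limsup_{T\to\infty}\sup_{\|\psi\|\le1}(\cdots)\le 3\epsilon_M^2$, and combining with the first display proves the lemma (with $\epsilon_M$ replaced by $\sqrt3\,C_B\epsilon_M$, still $\to0$). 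I expect the hard part to be the third step --- specifically, for $|b|\ge3$, controlling the intercluster directions transverse to $x_{ij}$ via a channel-projected minimal-velocity estimate and quantifying the truncation error as the quantity $\epsilon_M$ that decays when $M_b$ grows.
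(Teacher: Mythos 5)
Your reduction to a Ces\`aro-averaged $L^2$ bound and your treatment of the compact piece via rank-one truncation and Wiener's theorem match the paper's first step (the $|b|=2$ base case) exactly. But for $|b|\ge 3$ your argument has a genuine gap: the entire burden is placed on the claimed $H$-smoothness of $S_M$, which you justify only by ``invoking'' a channel-projected minimal-velocity estimate. That estimate is not a standard off-the-shelf consequence of Mourre theory: the observable $F(|x_{ij}|<R)\tilde P^b_{M_b}$ decays in only $\nu$ of the $\nu(|b|-1)$ intercluster directions, so it is not of the weighted form $\langle x\rangle^{-1/2-\epsilon}$ for which local $H$-smoothness follows from a Mourre inequality, and the assertion that its time average vanishes after removing the coarser bound channels is essentially a restatement of the lemma itself. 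In other words, the step you flag as ``the hard part'' is the lemma, and proving the propagation estimate you need would require the full Sigal--Soffer/Graf phase-space machinery --- which is precisely what the paper's argument is designed to avoid. The paper instead proceeds by induction on $|b|$: it merges the two clusters joined by $(i,j)$ into a coarser decomposition $d$ with $|d|=|b|-1$, splits $\hat P^{|b|-1}_{\hat M_b}$ into a piece handled by the induction hypothesis and a piece carrying $(I-P^d)$, and then uses a time-slicing trick ($t(s)=s-mS$, inserting $e^{-it(s)H_d}e^{it(s)H}$ with the Duhamel difference again absorbed by the induction hypothesis) to reduce the remaining term to the two-cluster RAGE argument applied to the subsystem Hamiltonian $H^d$. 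No minimal-velocity or $H$-smoothness input is needed anywhere.

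A secondary problem is your error term $R_M$. You claim $\|R_M\|\to 0$ because ``only finitely many eigenvalues of each subsystem lie in $\mathrm{supp}\,g$,'' but $g(H)$ constrains the total energy, not the subsystem energies $H^{a_\ell}$, and $P^{a_\ell}-P^{a_\ell}_{M_\ell}$ is a projection of infinite rank in general, so it is not small in operator norm. The paper's $\epsilon_M$ arises differently: the replacement of $P^d_{M_{k-1}}$ by $P^d$ is made only after multiplication by the spatial cutoff $K_1=F(|x_{ij}|<R)F(|x^b|<R)$, which bounds $x^d$, and it is the compactness of $K_1(P^d-P^d_{M_{k-1}})$ relative to the exhaustion $P^d_{M}\to P^d$ that makes the norm error vanish as $M_{k-1}\to\infty$. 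Without a spatial localization in front, your $R_M$ has no reason to be small.
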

\begin{proof} We prove a general version of Lemma \ref{RAGE}:
\medskip

\noindent
{\it 
Under the assumption of the lemma, we have as $T\to\infty$
\begin{equation}
\left\Vert\frac{1}{T}\int_0^T B(s) F(|x_{ij}|<R)F(|x^b|<R){\hat P}^{|b|-1}_{{\hat M}_b} e^{-isH} E_H(B)ds\right\Vert \sim_{\epsilon_M}0\label{newl}
\end{equation}
for any $(i,j)\not\le b$.}

\medskip

\noindent
We prove \eq{newl} by induction on $k=|b|$.
\medskip

\noindent
I) 1st step: We prove that {\rm \eq{newl}} holds for $|b|=2$.
Since $\Vert F(|x|>S)F(|x_{ij}|<R)F(|x^b|<R)\Vert\to 0$ as $S\to\infty$ for $|b|=2$, $R<\infty$, $(i,j)\not\le b$, it suffices to show
\begin{equation}
\lim_{T\to\infty}\left\Vert\frac{1}{T}\int_0^T B(s)F(|x|<R)E_H(B)e^{-isH}ds\right\Vert=0.
\end{equation}
As the operator $F(|x|<R)E_H(B)$ is compact, the first step follows if we prove the lemma with $F(|x|<R)E_H(B)$ replaced by one dimensional operator $Kf=(f,\phi)\psi$, where $\phi\in\HH_c(H)$. We compute
\begin{equation}
\begin{aligned}
&\left\Vert\frac{1}{T}\int_0^T B(s)Ke^{-isH}ds\right\Vert^2=\left\Vert\frac{1}{T}\int_0^T e^{isH}K^*B(s)^*ds\right\Vert^2\\
&=\sup_{\Vert f\Vert=1}\left\Vert\frac{1}{T}\int_0^T e^{isH}K^*B(s)^*fds\right\Vert^2\\
&=\sup_{\Vert f\Vert=1}\frac{1}{T^2}\int_0^T\int_0^T (B(s)^* f,\psi)(\psi,B(t)^* f)(e^{-i(t-s)H}\phi,\phi)dt ds.
\end{aligned}
\end{equation}
The RHS is bounded by
\begin{equation}
C \frac{1}{T^2}\int_0^T\int_0^T |(e^{-i(t-s)H}\phi,\phi)|dt ds\\
\le C \frac{1}{T}\int_{-T}^T |(e^{-itH}\phi,\phi)|dt
\end{equation}
for $C=\Vert \psi\Vert^2\sup_{s\in \R}\Vert B(s)\Vert^2\ge0$.
By Schwarz inequality, the RHS is bounded by
\begin{equation}
\sqrt{2}C\left(\frac{1}{T}\int_{-T}^T |(e^{-itH}\phi,\phi)|^2 dt\right)^{\frac{1}{2}}.
\end{equation}
Noting that the function $\mu(\lambda)=(E_H(\lambda)\phi,\phi)$ is of bounded variation, we calculate the formula inside the parentheses.
\begin{equation}
\begin{aligned}\frac{1}{T}\int_{-T}^T\int_{\R}\int_{\R}e^{-i(\lambda-\lambda')t}d\mu(\lambda)d\mu(\lambda')dt=2\int_{\R}\int_{\R}\frac{\sin\{(\lambda-\lambda')T\}}{(\lambda-\lambda')T}d\mu(\lambda)d\mu(\lambda').\nonumber
\end{aligned}
\end{equation}
Dividing the integration region $\R^2$ into $|\lambda-\lambda'|\le \epsilon$ and the other, we obtain a bound.
\begin{equation}
2\int_{|\lambda-\lambda'|\le \epsilon}d\mu(\lambda)d\mu(\lambda')+\frac{2}{\epsilon T}.\label{sumofsmallterms}
\end{equation}
The first term is equal to
$$
\int_\R\int_{\lambda-\epsilon}^{\lambda+\epsilon}d\mu(\lambda')d\mu(\lambda)=\int_\R\Vert E_H((\lambda-\epsilon,\lambda+\epsilon])\phi\Vert^2d\mu(\lambda).
$$
Since $\Vert E_H((\lambda,\mu])\phi\Vert^2\le \Vert\phi\Vert^2$ is continuous with respect to $(\lambda,\mu)\in\R^2$ by $\phi\in \HH_c(H)$ and the measure $d\mu$ is finite on $\R$, the first term of \eq{sumofsmallterms} can be arbitrarily small if $\epsilon>0$ is taken small enough. Letting then $T\to\infty$ in \eq{sumofsmallterms} makes the second term go to $0$. 
\medskip

\noindent
II) 2nd step: Assuming \eq{newl} for $|b|<k$ $(3\le k\le N)$, we prove \eq{newl} for $|b|=k$. Let $b=\{C_1,\cdots,C_{|b|}\}$ with $|b|=k$ and assume $(i,j)$ connects the clusters $C_1$ and $C_2$ of $b$. We denote the new cluster decomposition by $d=\{C_1\cup C_2,C_3,\cdots,C_k\}$. Then $|d|=k-1$, and $K_1=F(|x_{ij}|<R)F(|x^b|<R)$ bounds\footnote{Here `$K_1$ bounds $x^d$' means that $\Vert F(|x^d|>L)K_1\Vert\to0$ as $L\to\infty$.} the variable $x^d$. We decompose ${\hat P}^{k-1}_{{\hat M}_b}$ (see \eq{1.19}) in \eq{newl} as
\begin{equation}
{\hat P}^{k-1}_{{\hat M}_b}=(I-P^d_{M_{k-1}}){\hat P}^{k-2}_{{\hat M}_d}
-\sum_{b_{k-1}\ne d} P^{b_{k-1}}_{M_{k-1}}{\hat P}^{k-2}_{{\hat M}_d},\label{decomp}
\end{equation}
where ${\hat M}_d=(M_1,\cdots,M_{k-2})$. Each $P^{b_{k-1}}_{M_{k-1}}$ in the second term bounds the variable $x^{b_{k-1}}$ with $|b_{k-1}|=k-1$. Since $b_{k-1}\ne d$, $|d|=k-1$ and $F(|x_{ij}|<R)F(|x^b|<R)$ bounds the variable $x^d$, $F(|x_{ij}|<R)F(|x^b|<R)P^{b_{k-1}}_{M_{k-1}}$ connects at least one pair of different two clusters in $b_{k-1}$. Thus the terms in the second summand on the RHS of \eq{decomp} are treated by the induction hypothesis. Thus we have to show when $T\to\infty$
\begin{equation}
\left\Vert\frac{1}{T}\int_0^T B(s) F(|x_{ij}|<R)F(|x^b|<R)
(I-P^d_{M_{k-1}}){\hat P}^{k-2}_{{\hat M}_d}e^{-isH}E_H(B)ds\right\Vert\sim_{\epsilon_M} 0.\label{b1}
\end{equation}
Let $S>0$ be arbitrary but fixed and let $t(s)=s-mS$ for $mS\le s <(m+1)S$. The norm of \eq{b1} is bounded by with $K_1=F(|x_{ij}|<R)F(|x^b|<R)$ and $K_2=K_1 (I-P^d_{M_{k-1}})$
\begin{equation}
\begin{aligned}
&\left\Vert\frac{1}{T}\int_0^T B(s) K_1
(I-P^d_{M_{k-1}})
e^{-it(s)H_d}e^{it(s)H}
{\hat P}^{k-2}_{{\hat M}_d}e^{-isH}E_H(B)ds\right\Vert\\
&+
\left\Vert\frac{1}{T}\int_0^T B(s) K_2
(I-e^{-it(s)H_d}e^{it(s)H})
{\hat P}^{k-2}_{{\hat M}_d}e^{-isH}E_H(B)ds\right\Vert.
\end{aligned}\label{twosum}
\end{equation}
Since $H-H_d=I_d$,
\begin{equation}
I-e^{-it(s)H_d}e^{it(s)H}=\int_0^{t(s)}e^{-i\tau H_d}i(H_d-H)e^{i\tau H}d\tau \quad (0\le t(s)<S) \label{int}
\end{equation}
is a sum of the terms, each of which bounds at least one variable $x_\beta$ with $\beta=(k,m)$ connecting two different clusters of $d$. Noting that
$F(|x^d|<CR)\ge K_1=F(|x_{ij}|<R)F(|x^b|<R)$ holds for some constant $C>0$,
we can treat the second term of \eq{twosum} by induction hypothesis.
The first term in \eq{twosum} is rewritten as
\begin{equation}
\left\Vert\frac{1}{T}\int_0^T B(s) K_1
(I-P^d_{M_{k-1}})
e^{-it(s)H_d}E_{H}(B)
{\hat P}^{k-2}_{{\hat M}_d}e^{-i(s-t(s))H}ds\right\Vert\label{main-contri}
\end{equation}
with some remainder terms. These remainder terms come from the commutator of $e^{it(s)H}$ and $E_H(B)$ with ${\hat P}^{k-2}_{{\hat M}_d}$, and can be treated by induction hypothesis. Since $s-t(s)=mS$, \eq{main-contri} is rewritten for $T=nS$
\begin{equation}
\left\Vert\frac{1}{nS}\sum_{m=0}^{n-1}\int_0^S B(s+mS) K_1
(I-P^d_{M_{k-1}})
e^{-isH_d}E_{H}(B)
{\hat P}^{k-2}_{{\hat M}_d}ds e^{-imSH}\right\Vert.\label{last}
\end{equation}
Since $K_1$ bounds $x^d$, the difference
\begin{equation}
K_1\{(I-P^d_{M_{k-1}})-(I-P^d)\}=K_1(P^d-P^d_{M_{k-1}})
\end{equation}
tends to $0$ in operator norm as $M_{k-1}\to\infty$. Thus we can replace $K_1(I-P^d_{M_{k-1}})$ in \eq{last} by $K_1(I-P^d)$ with an error $\epsilon_M$. This step yields the error $\epsilon_M$ in the lemma. To estimate \eq{last}, we first get an energy cut off for $H^d$ from $E_H(B)$. Then we can apply \eq{newl} for $|b|=2$ to \eq{last} with $H$ replaced by $H^d$. This proves \eq{newl} for $|b|=k$, and the proof of Lemma \ref{RAGE} is complete. 
\end{proof}
The following theorem is partially due to Volker Enss.
An outline of a proof was given at a seminar as a response to my question made at his office at California Institute of Technology when both of us visited there in 1984. Later he gave a partial proof in \cite{[En]} leaving the details to \cite{[En2]}, \cite{[En3]}. However both of \cite{[En2]} and \cite{[En3]} seem not have been published. So a complete proof is given here.

\begin{theorem}\label{quantumclassical}
Let $N=n+1\ge 2$ and let $H$ be the Hamiltonian $H$ in \eq{conventionalHamiltonian} for an $N$-body quantum-mechanical system. Let Assumptions \ref{potentialdecay} and \ref{eigenfunctiondecay} be satisfied.
Let $f\in \HH_c(H)$. Then there exist a sequence $t_m\to\pm\infty$ (as $m\to\pm\infty$) and a sequence $M_a^m$ of multi-indices whose components all tend to $\infty$ as $m\to\pm\infty$ such that for all cluster decompositions $a$ with $2\le |a|\le N$, for all $\varphi\in C_0^\infty(X_a)=C_0^\infty(\R^{\nu(|a|-1)})$, $R>0$, and $(i,j)\not\le a$
\begin{eqnarray}
&&\left\Vert\frac{|x^a|^2}{t_m^2}{\tilde P}^a_{M_a^m}e^{-it_mH}f\right\Vert \to 0\label{(2.6)}\\
&&\Vert \chi_{\{x| |x_{ij}|<R\}}{\tilde P}^a_{M_a^m}e^{-it_mH}f\Vert \to 0\label{(2.7)}\\
&&\Vert (\varphi(x_a/t_m)-\varphi(v_a)){\tilde P}^a_{M_a^m}e^{-it_mH}f\Vert \to 0\label{(2.8)}
\end{eqnarray}
as $m\to\pm\infty$. Here $\chi_S$ is the characteristic function of a set $S$.
\end{theorem}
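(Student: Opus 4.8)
The plan is to combine the RAGE-type result of Lemma \ref{RAGE} with a diagonal-sequence extraction, turning Cesàro (time-averaged) decay into pointwise decay along a subsequence. First I would reduce $f$ to the dense set of vectors of the form $f = E_H(B)f$ with $B\Subset \R\setminus\mathcal T$, using \eqref{continuousspectralsubspace} and the uniform bounds below; by a standard $3\varepsilon$-argument it suffices to prove the three limits for such $f$ along a sequence $t_m$ and multi-indices $M_a^m$ that may depend on $f$. Then, for fixed multi-index cutoffs $M$, apply Lemma \ref{RAGE} with the bounded family $B(s)$ taken successively as: (i) $B(s)=(|x^a|^2/s^2)F(|x^a|<\rho s)$ (cut off near the light cone so the family is uniformly bounded, then let $\rho\to\infty$), (ii) $B(s)=I$ with the spatial cutoff $F(|x_{ij}|<R)$, and (iii) $B(s)=\varphi(x_a/s)-\varphi(v_a)$, where $v_a=\xi_a$ is the intercluster velocity observable. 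In each case Lemma \ref{RAGE} (or its proof, which already handles the extra $F(|x^a|<R)$ factor) gives that the time-average over $[0,T]$ of $\Vert B(s)F(|x_{ij}|<R)\tilde P^b_{M_b}e^{-isH}E_H(B)\Vert$ is $\lesssim \epsilon_M$ as $T\to\infty$.

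Next I would upgrade the Cesàro statements to a single sequence. Fix an increasing sequence of cutoff multi-indices $M^{(k)}$ with $\epsilon_{M^{(k)}}\to 0$, and a decreasing sequence $\varepsilon_k\to 0$; choose also an exhausting sequence of compact sets of cluster decompositions, radii $R_k\to\infty$, and a countable dense family of test functions $\varphi$. From the Cesàro bounds, for each $k$ there are arbitrarily large times at which the integrand itself (not just its average) is $\le 2\epsilon_{M^{(k)}}+\varepsilon_k$ for all the finitely many data under consideration at level $k$ — this is the elementary fact that a nonnegative function whose running average tends to a value $\le c$ must dip below $c+\varepsilon$ on a set of times of positive upper density, in particular at arbitrarily large times. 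Pick $t_k$ to be such a time with $|t_k|>|t_{k-1}|+1$, and set $M_a^k := M_a^{(k)}$. Along this diagonal sequence all three quantities in \eqref{(2.6)}--\eqref{(2.8)} tend to $0$. For the "$|x^a|^2/t^2$ without spatial cutoff" version I additionally need that $\tilde P^a_{M}e^{-isH}f$ is concentrated in $|x^a|\le \rho s$ up to an error small uniformly in $s$ and going to $0$ as $\rho\to\infty$; this is itself an instance of (a minor variant of) the same RAGE mechanism — or can be obtained from a Cook/Deift–Simon propagation estimate using that $E_H(B)$ localizes the total energy — so it can be folded into the same diagonal extraction.

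The main obstacle I expect is item (iii), i.e. getting the intercluster velocity $x_a/t \to v_a$ on the subspace $\tilde P^a_{M}e^{-itH}f$. The subtlety is that $v_a=\xi_a$ does not commute with $H$, so one cannot simply intertwine; one must show that the Heisenberg derivative of $\varphi(x_a/t)$ along $e^{-itH}$, compressed by $\tilde P^a_M$ and the cutoffs $F(|x_{ij}|<R)$ and $F(|x^a|<R)$, is, modulo $\tilde P^a_M e^{-itH}E_H(B)$-negligible terms, governed by $\tfrac1t(\xi_a - x_a/t)\cdot\nabla\varphi$, because the potential contribution $\nabla_{x_a}I_a$ decays and the internal part $H^a$ commutes with $x_a$. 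Feeding this into Lemma \ref{RAGE} with $B(s)=\varphi(x_a/s)-\varphi(v_a)$ requires knowing a priori that $x_a/s$ stays in a fixed compact set (again a propagation estimate, or the boundedness of the intercluster velocity that follows from $E_H(B)$ with $B$ bounded) so that $B(s)$ is genuinely a uniformly bounded family. Once these ingredients are in place, the theorem follows by the diagonal argument above; the contribution attributed to Enss is precisely the observation that the RAGE lemma, suitably generalized as in \eqref{newl}, is strong enough to yield all three asymptotic localizations simultaneously.
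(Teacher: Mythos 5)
Your reduction to the dense set $f=E_H(B)f$, your use of Lemma \ref{RAGE} for \eqref{(2.7)}, and your Ces\`aro-to-pointwise diagonal extraction at the end are all consistent with what the paper does. But the core of the proof is missing. Lemma \ref{RAGE} only controls time averages of operators carrying a compact intercluster cutoff $F(|x_{ij}|<R)$ with $(i,j)\not\le b$; it is a compactness statement and cannot, by itself, produce the propagation estimates you need for \eqref{(2.6)} and \eqref{(2.8)}, precisely because in those two limits no such intercluster localization is available (the state $\tilde P^a_{M}e^{-itH}f$ lives where all $|x_{ij}|$, $(i,j)\not\le a$, are large). Your choices $B(s)=(|x^a|^2/s^2)F(|x^a|<\rho s)$ and $B(s)=\varphi(x_a/s)-\varphi(v_a)$ therefore yield only statements that are trivially small because of the retained factor $F(|x_{ij}|<R)$, and say nothing about the uncut quantities in \eqref{(2.6)} and \eqref{(2.8)}. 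For \eqref{(2.8)} you correctly identify that one must control the Heisenberg derivative of $\varphi(x_a/t)$, which is $\tfrac1t(D_a-x_a/t)\cdot\nabla\varphi+O(t^{-2})$ plus potential terms, but you then assume "once these ingredients are in place" exactly the estimate that constitutes the theorem's difficulty, namely that the time average of $\Vert(x_a/t-D_a)\tilde P^a e^{-itH}f\Vert^2$ is asymptotically negligible.

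The paper supplies this missing ingredient by a virial-type computation: it evaluates $\Vert(\tfrac{x}{t}-D)e^{-itH}f\Vert^2$, rewrites it via the generator of dilations $A=\tfrac12(x\cdot D+D\cdot x)$ as $-\tfrac{2}{t^2}\int_0^t s(f,e^{isH}i[H,A]e^{-isH}f)\,ds$ plus explicit terms, splits $i[H,A]=2T_a+i[I_a,A]+i[H^a,A^a]$ channel by channel, disposes of $i[I_a,A]$ by Lemma \ref{RAGE}, and kills the $i[H^a,A^a]$ contribution by writing it as a telescoping time derivative of $e^{isH_a}A^a e^{-isH_a}$ compressed by the eigenprojections --- this is exactly where Assumption \ref{eigenfunctiondecay} ($\Vert|x^a|\psi^a\Vert<\infty$) enters, and your proposal never invokes it, which is a sign that something essential is absent. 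What remains is identified with $tH'(t)$ for a bounded $C^1$ function $H(t)$, and Lemma \ref{tdHdt} (Enss) then extracts the intervals $[T_k,T_k+A_k]$ on which the time average of $\Vert(x_a/t-D_a)\tilde P^a e^{-itH}f\Vert^2$ is asymptotically bounded by $\epsilon_M$; \eqref{(2.6)} comes from the internal part $\tfrac{(x^a)^2}{t^2}-\tfrac{2A^a}{t}+2H_0^a$ of the same decomposition together with $\Vert\chi_{\{|x^a|>R\}}\tilde P^a\Vert\to0$. Without this virial/dilation-generator step (or an equivalent minimal-velocity and propagation estimate), your argument establishes only \eqref{(2.7)}.
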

\begin{proof}
By \eq{conventionalHamiltonian}
$$
H=H_0+V,\quad H_0=\frac{1}{2}D^2=-\frac{1}{2}\Delta, \quad V=\sum_{1\le i<j\le N}V_{ij}(x_{ij}).
$$
where
$$
D=\frac{1}{i}\frac{\partial}{\partial x},\quad x\in \R^{\nu n}.
$$
Let $f\in \HH$ satisfy $\langle x\rangle^2f\in \HH$ and $f=E_H(B)f\in\HH_c(H)$ for some bounded open set $B$ of $\R$. Note that such $f$'s are dense in $\HH_c(H)$. 
%
%We compute
%\begin{equation}
%\begin{aligned}
%&\sum_{2\le\sharp(b)\le N}e^{itH}\left(\frac{x}{t}-D\right)^2{\tilde P}_be^{-itH}f\\
%&=e^{itH}\left(\frac{x}{t}-D\right)^2e^{-itH}f\\
%&=e^{itH}\left(\frac{x^2}{t^2}-\frac{2A}{t}+D^2\right)e^{-itH}f\\
%&=\frac{1}{t^2}\left(e^{itH}x^2e^{-itH}f-x^2f\right)-\frac{2}{t}e^{itH}Ae^{-itH}f+2e^{itH}H_0e^{-itH}f+\frac{x^2}{t^2}f.
%\end{aligned}
%\end{equation}
%Here $A=\frac{1}{2}(x\cdot D+D\cdot x)$. The first term on the RHS is equal to
%$$
%\frac{1}{t^2}\int_0^t e^{isH}i[H_0,x^2]e^{-isH}f ds.
%$$
Writing $\tilde P^a_{M_a}=\tilde P^a$, 
we have $\sum_{2\le|a|\le N}\tilde P^a=I-P_{M_1}$ from \eq{summationofprojections}. Hence
\begin{equation}
\begin{aligned}
&\sum_{2\le|a|\le N}(f,e^{itH}\left(\frac{x}{t}-D\right)^2\tilde P^a e^{-itH}f)\\
&=\left\Vert \left(\frac{x}{t}-D\right)e^{-itH}{f}\right\Vert^2=\left(f,e^{itH}\left(\frac{x}{t}-D\right)^2e^{-itH}{f}\right)\\
&=\frac{1}{t^2}\left(f,(e^{itH}x^2e^{-itH}-x^2){f}\right)-\frac{2}{t}(f,e^{itH}Ae^{-itH}{f})\\
&\hskip160pt+(f,e^{itH}D^2e^{-itH}{f})+\frac{1}{t^2}(f,x^2{f}),
\end{aligned}\label{(2.9)}
\end{equation}
where $A=\frac{1}{2}(x\cdot D+D\cdot x)$. A direct computation gives
\begin{equation}
i[H_0,x^2]=2A.
\end{equation}
Therefore the first term on the RHS of \eq{(2.9)} is equal to
\begin{equation}
\begin{aligned}
\frac{1}{t^2}\int_0^t \frac{d}{ds}(f,e^{isH}x^2e^{-isH}{f}) ds
&=\frac{1}{t^2}\int_0^t (f,e^{isH}i[H_0,x^2]e^{-isH}{f}) ds\\
&=\frac{2}{t^2}\int_0^t(f,e^{isH}Ae^{-isH}f)ds.
\end{aligned}
\end{equation}
The sum of the first and second terms of the RHS of \eq{(2.9)} is thus equal to
\begin{equation}
\begin{aligned}
G(t)&=\frac{2}{t^2}\left(\int_0^t (f,e^{isH} Ae^{-isH}{f}) ds -t(f,e^{itH}Ae^{-itH}{f}\right)\\
&=\frac{1}{t^2}\int_0^t\frac{d(s^2 G)}{ds}(s)ds\\
&=-\frac{2}{t^2}\int_0^ts(f,e^{isH}i[H,A]e^{-isH}f)ds.
\end{aligned}\label{G(t)}
\end{equation}
Noting
$$
i[H,A]=i[T_a,A]+i[I_a,A]+i[H^a,A]=2T_a+i[I_a,A]+i[H^a,A^a],
$$
where
$A^a=\frac{1}{2}(x^a\cdot D^a+D^a\cdot x^a)$, we have
\begin{equation}
\begin{aligned}
&\sum_{2\le|a|\le N}(f,e^{itH}\left(\frac{x}{t}-D\right)^2\tilde P^a e^{-itH}f)\\
&=-\frac{2}{t^2}\int_0^ts(f,e^{isH}i[H,A]e^{-isH}f)ds+(f,e^{itH}D^2e^{-itH}f)+\frac{1}{t^2}(f,x^2f)\\
&=-\frac{2}{t^2}\sum_{2\le |a|\le N}\int_0^ts(f,e^{isH}(2T_a+i[I_a,A]+i[H^a,A^a])\tilde P^ae^{-isH}f)ds\\
&\hskip140pt+2\sum_{2\le|a|\le N}(f,e^{itH}H_0\tilde P^ae^{-itH}f)+\frac{1}{t^2}(f,x^2f)\\
&=\sum_{2\le|a|\le N}\left[-\frac{4}{t^2}\int_0^ts(f,e^{isH}T_a\tilde P^ae^{-isH}f)ds+2(f,e^{itH}T_a\tilde P^ae^{-itH}f)\right]\\
&\hskip10pt+2\sum_{2\le|a|\le N}(f,e^{itH}H_0^a\tilde P^ae^{-itH}f)\\
&\hskip10pt-\frac{2}{t^2}\sum_{2\le|a|\le N}\int_0^ts(f,e^{isH}i[I_a,A]\tilde P^ae^{-isH}f)ds\\
&\hskip10pt-\frac{2}{t^2}\sum_{2\le|a|\le N}\int_0^ts(f,e^{isH}i[H^a,A^a]\tilde P^ae^{-isH}f)ds+\frac{1}{t^2}(f,x^2f).
\end{aligned}\label{calculationsseparatingchannels}
\end{equation}
By Lemma \ref{RAGE}, the third term on the RHS goes to $0$ with some error $\epsilon_M$ as $t\to\infty$. The last term goes to $0$ as $t\to\infty$ by our assumption $\langle x\rangle^2 f\in \HH$. As Lemma \ref{RAGE} implies
$$
\sum_{2\le |a|,|b|\le N, a\ne b}\frac{2}{t^2}\int_0^ts(f,e^{isH}(\tilde P^b)^*i[H^a,A^a]\tilde P^ae^{-isH}f)ds\sim_{\epsilon_M}0
$$
as $t\to\infty$, we have asymptotically
\begin{equation}
\begin{aligned}
&\frac{2}{t^2}\sum_{2\le|a|\le N}\int_0^ts(f,e^{isH}i[H^a,A^a]\tilde P^ae^{-isH}f)ds\\
&\sim_{\epsilon_M} \frac{2}{t^2}\sum_{2\le|a|\le N}\int_0^ts(f,e^{isH}(\tilde P^a)^*i[H^a,A^a]\tilde P^ae^{-isH}f)ds.
\end{aligned}
\end{equation}
The RHS asymptotically equals
\begin{equation}
\lim_{t\to\infty}\frac{1}{t}\sum_{2\le|a|\le N}\int_0^t(f,e^{isH}(\tilde P^a)^*i[H^a,A^a]\tilde P^ae^{-isH}f)ds,\label{limitsimplifiedtimemean}
\end{equation}
if this limit exists.

We will prove that the limit \eq{limitsimplifiedtimemean} exists and equals $0$.
Letting $t(s)=s-mS$ for $mS\le s<(m+1)S$ for any fixed $S>0$, we have by Lemma \ref{RAGE} and some commutator arguments as $t\to\infty$
\begin{equation}
\begin{aligned}
&\sum_{2\le|a|\le N}\frac{1}{t}\int_0^t (f,e^{isH}({\tilde P}^a)^*i[H^a,A^a]{\tilde P}^a e^{-isH}f) ds\\
&\sim_{\epsilon_M}\sum_{2\le|a|\le N}
\frac{1}{t}\int_0^t (f,e^{i(s-t(s))H}({\tilde P}^a)^*e^{it(s)H_a}i[H^a,A^a]{\tilde P}^a e^{-isH}f) ds.
\end{aligned}\nonumber
\end{equation}
This can further be reduced and is asymptotically equal to as $t\to\infty$ with an error $\epsilon_M>0$
\begin{equation}
\sum_{2\le|a|\le N}\frac{1}{t}\int_0^t (f,e^{i(s-t(s))H}({\tilde P}^a)^*e^{it(s)H_a}i[H^a,A^a]e^{-it(s)H_a}{\tilde P}^a e^{-i(s-t(s))H}f) ds.\nonumber
\end{equation}
Noting $s-t(s)=mS$ for $mS\le s <(m+1)S$, we rewrite this for $t=nS$
\begin{equation}
\begin{aligned}
&\frac{1}{nS}\sum_{m=0}^{n-1}\int_0^S (f,e^{imSH}({\tilde P}^a)^* e^{isH_a}i[H^a,A^a]e^{-isH_a}{\tilde P}^a e^{-imSH}f) ds\\
&=\frac{1}{n}\sum_{m=0}^{n-1}\frac{1}{S}\int_0^S
\frac{d}{ds}(f,e^{imSH}({\tilde P}^a)^* e^{isH_a}A^a e^{-isH_a}{\tilde P}^a e^{-imSH}f) ds\\
&=\frac{1}{n}\sum_{m=0}^{n-1}\frac{1}{S}\bigl[(f,e^{imSH}({\tilde P}^a)^* e^{iSH_a}A^ae^{-iSH_a}{\tilde P}^a e^{-imSH}f)\\
&\hskip160pt-(f,e^{imSH}({\tilde P}^a)^* A^a{\tilde P}^a e^{-imSH}f)\bigr].
\end{aligned}\label{(3.14)}
\end{equation}
Writing ${\tilde P}^a=\sum_{j=1}^L P^{a,E_j}{\hat P}^{|a|-1}$ with $P^{a,E_j}$ being the one dimensional eigenprojection of $H^a$ with eigenvalue $E_j$, we see that the norm of the RHS is bounded by
\begin{equation}
\begin{aligned}
&\sum_{j=1}^L\frac{1}{n}\sum_{m=0}^{n-1}\frac{1}{S}\bigl|(f,e^{imSH}({\tilde P}^a)^* e^{iS(H^a-E_j)}A^a P^{a,E_j}{\hat P}^{|a|-1} e^{-imSH}f)\\
&\quad\quad\quad\quad\quad-(f,e^{imSH}({\tilde P}^a)^* A^a P^{a,E_j}{\hat P}^{|a|-1} e^{-imSH}f)\bigr|.
\end{aligned}\nonumber
\end{equation}
This is arbitrarily small when $S>0$ is fixed sufficiently large by our assumption $\Vert |x^a|P^{a,E_j}\Vert < \infty$. Thus we have proved that the limit \eq{limitsimplifiedtimemean} exists and is equal to $0$.
Hence by \eq{calculationsseparatingchannels} we have asymptotically
\begin{equation}
\begin{aligned}
&\sum_{2\le|a|\le N}(f,e^{itH}\left(\frac{x}{t}-D\right)^2\tilde P^a e^{-itH}f)\\
&\sim_{\epsilon_M}\sum_{2\le|a|\le N}\left[-\frac{4}{t^2}\int_0^ts(f,e^{isH}T_a\tilde P^ae^{-isH}f)ds+2(f,e^{itH}T_a\tilde P^ae^{-itH}f)\right]\\
&+2\sum_{2\le|a|\le N}(f,e^{itH}H_0^a\tilde P^ae^{-itH}f)
\end{aligned}\label{calculationsseparatingchannels2}
\end{equation}
 as $t\to\infty$.
A computation gives
$$
\left(\frac{x}{t}-D\right)^2=\left(\frac{x_a}{t}-D_a\right)^2+\left(\frac{(x^a)^2}{t^2}-\frac{2A^a}{t}+2H_0^a\right).
$$
This with the fact that $\Vert \chi_{\{|x^a|>R\}}\tilde P^a\Vert\to0$ as $R\to\infty$ yields that
\begin{equation}
\begin{aligned}
&\sum_{2\le |a|\le N}(f,e^{itH}\left(\frac{x}{t}-D\right)^2\tilde P^ae^{-itH}f)\\
&\approx \sum_{2\le |a|\le N}(f,e^{itH}\left(\frac{x}{t}-D\right)^2\chi_{\{|x^a|\le R\}}\tilde P^ae^{-itH}f)\quad(R\gg 1)\\
&=\sum_{2\le |a|\le N}(f,e^{itH}\left(\frac{x_a}{t}-D_a\right)^2\chi_{\{|x^a|\le R\}}\tilde P^ae^{-itH}f)\\
&+\sum_{2\le |a|\le N}(f,e^{itH}\left(\frac{(x^a)^2}{t^2}-\frac{2A^a}{t}+2H_0^a\right)\chi_{\{|x^a|\le R\}}\tilde P^ae^{-itH}f)\\
&\sim \sum_{2\le |a|\le N}(f,e^{itH}\left(\frac{x_a}{t}-D_a\right)^2\chi_{\{|x^a|\le R\}}\tilde P^ae^{-itH}f)\\
&+2\sum_{2\le |a|\le N}(f,e^{itH}H_0^a\chi_{\{|x^a|\le R\}}\tilde P^ae^{-itH}f)
\end{aligned}
\end{equation}
%
%\begin{equation}
%\begin{aligned}
%&\sum_{2\le |a|\le N}(f,e^{itH}\left(\frac{x}{t}-D\right)^2\tilde P^ae^{-itH}f)\\
%&=\sum_{2\le |a|\le N}(f,e^{itH}\left(\frac{x_a}{t}-D_a\right)^2\tilde P^ae^{-itH}f)\\
%&+\sum_{2\le |a|\le N}(f,e^{itH}\left(\frac{(x^a)^2}{t^2}-\frac{2A^a}{t}+2H_0^a\right)\tilde P^ae^{-itH}f)\\
%&\approx\sum_{2\le |a|\le N}(f,e^{itH}\left(\frac{x_a}{t}-D_a\right)^2\tilde P^ae^{-itH}f)\\
%&+\sum_{2\le |a|\le N}(f,e^{itH}\left(\frac{(x^a)^2}{t^2}-\frac{2A^a}{t}+2H_0^a\right)\chi_{\{|x^a|\le R\}}\tilde P^ae^{-itH}f)\\
%&\sim \sum_{2\le |a|\le N}(f,e^{itH}\left(\frac{x_a}{t}-D_a\right)^2\tilde P^ae^{-itH}f)\\
%&+2\sum_{2\le |a|\le N}(f,e^{itH}H_0^a\chi_{\{|x^a|\le R\}}\tilde P^ae^{-itH}f)
%\end{aligned}
%\end{equation}
asymptotically as $t\to\infty$. Taking $R\gg1$ large and removing the factor $\chi_{\{|x^a|\le R\}}$, we get
\begin{equation}
\begin{aligned}
&\sum_{2\le |a|\le N}(f,e^{itH}\left(\frac{x}{t}-D\right)^2\tilde P^ae^{-itH}f)\\
&\sim \sum_{2\le |a|\le N}(f,e^{itH}\left(\frac{x_a}{t}-D_a\right)^2\tilde P^ae^{-itH}f)+2\sum_{2\le |a|\le N}(f,e^{itH}H_0^a\tilde P^ae^{-itH}f).
\end{aligned}
\end{equation}
Comparing this with \eq{calculationsseparatingchannels2} and using Lemma \ref{RAGE}, we get asymptotically as $t\to\infty$
\begin{equation}
\begin{aligned}
%&\left[\sum_{2\le |a|\le N}\left\Vert \left(\frac{x_a}{t}-D_a\right)\tilde P^ae^{-itH}f\right\Vert^2\right]\\
%&\sim_{\epsilon_M}
&\sum_{2\le |a|\le N}(f,e^{itH}\left(\frac{x_a}{t}-D_a\right)^2\tilde P^ae^{-itH}f)\\
&\sim_{\epsilon_M}
\sum_{1\le|a|\le N}\left[-\frac{4}{t^2}\int_0^ts(f,e^{isH}T_a\tilde P^ae^{-isH}f)ds+2(f,e^{itH}T_a\tilde P^ae^{-itH}f)\right]\\
&\sim_{\epsilon_M}
\sum_{2\le|a|\le N}\left[-\frac{4}{t^2}\int_0^ts(f,e^{isH}(\tilde P^a)^*T_a\tilde P^ae^{-isH}f)ds+2(f,e^{itH}(\tilde P^a)^*T_a\tilde P^ae^{-itH}f)\right]\\
&=t\frac{dH}{dt}(t),
\end{aligned}\label{nexttofinal}
\end{equation}
where 
\begin{equation}
H(t)=\frac{2}{t^2}\int_0^ts(f,e^{isH}(\tilde P^a)^*T_a\tilde P^ae^{-isH}f)ds.\label{H(t)}
\end{equation}
The following is Lemma 8.15 of \cite{[Enss4]}. For the completeness' sake we will reproduce the proof.
\begin{lemma}\label{tdHdt}
Let $H(t)\in C^1((0,\infty))$ be a real-valued bounded function with
\begin{equation}
\lim_{t\to\infty}|H'(t)|=0.\label{(1)}
\end{equation}
Then for any $0< A<\infty$ there is a sequence $T_k\to\infty$ ($k\to\infty$) such that
\begin{equation}
\lim_{k\to\infty}\frac{1}{A}\int_{T_k}^{T_k+A}tH'(t)dt=0.\label{(2)}
\end{equation}
\end{lemma}
\begin{proof}
Assume the contrary. Then there exist constants $A>0$, $\epsilon>0$ and  $J(\epsilon,A)>0$ such that for all $T\ge J(\epsilon,A)$
\begin{equation}
\frac{1}{A}\int_{T}^{T+A}tH'(t)dt > 2\epsilon \ \ (\mbox{or }<-2\epsilon),\label{absurdassumptio}
\end{equation}
since the LHS is continuous in $T$.
For any interval $[T,T+A]$ and $t\in[T,T+A]$, one can decompose
\begin{equation}
H'(t)=H_1(t;T)+H_2(t;T),\label{decompose}
\end{equation}
where
\begin{equation}
\begin{aligned}
&|H_1(t;T)|\le |H'(t)|,\\
&\int_T^{T+A}H_1(t;T)dt=0,\\
&
H_2(t;T)=0 \mbox{ if  } H(T+A)-H(T)=0,\\
&
\mbox{sign }\left\{(H(T+A)-H(T))\cdot H_2(t;T)\right\}\ge0 \mbox{ otherwise.}
\end{aligned}
\end{equation}
It follows that
\begin{equation}
\begin{aligned}
&H(T+A)-H(T)=\int_T^{T+A}H_2(t;T)dt,\\
&
\left|\frac{1}{A}\int_T^{T+A}tH_1(t;T)dt\right|=\left|\frac{1}{A}\int_T^{T+A}(t-T)H_1(t;T)dt\right|\\
&\le A\sup_{T\le t\le T+A}\left|H_1(t;T)\right|\le A\sup_{T\le t}|H'(t)|\to0
\end{aligned}
\end{equation}
as $T\to\infty$ by \eq{(1)}. Thus one has by \eq{absurdassumptio} and \eq{decompose} that for all sufficiently large $T$
\begin{equation}
\frac{1}{A}\int_T^{T+A}tH_2(t;T)dt>\epsilon,
\end{equation}
which implies in particular $H_2(t;T)\ge0$ and
\begin{equation}
\frac{A+T}{A}[H(T+A)-H(T)]= \frac{1}{A}\int_{T}^{T+A}(T+A)H_2(t;T)dt\ge\frac{1}{A}\int_T^{T+A}tH_2(t;T)dt>\epsilon.
\end{equation}
Thus for sufficiently large $T$,
\begin{equation}
\begin{aligned}
&H(T+A)-H(T)>\epsilon \frac{A}{T+A},\\
& H(nA)-H(mA)=\sum_{k=m+1}^n \{H(kA)-H((k-1)A)\}>\epsilon\sum_{k=m+1}^n\frac{1}{k}.
\end{aligned}
\end{equation}
For any $m$ this diverges as $n\to\infty$, in contradiction to the boundedness of $H$.
\end{proof}

\noindent
As $H(t)$ in \eq{H(t)} satisfies the conditions of the lemma, the relation \eq{nexttofinal}, Lemmas \ref{RAGE} and \ref{tdHdt} imply that there exist sequences $A_k\to\infty$ and $T_k\to\infty$ ($k\to\infty$) such that as $k\to\infty$
\begin{equation}
\begin{aligned}
\hskip-2pt&\frac{1}{A_k}\hskip-2pt\int_{T_k}^{T_k+A_k}\hskip-2pt\left[\sum_{2\le |a|\le N}\hskip-0pt\left\Vert \left(\frac{x_a}{t}-D_a\right)\tilde P^ae^{-itH}f\right\Vert^2\hskip-2pt+\hskip-2pt
\sum_{(ij)\not\le a}\hskip-0pt\Vert \chi_{\{|x_{ij}|<R\}}\tilde P^ae^{-itH}f\Vert^2\hskip-0pt\right]\hskip-2ptdt\hskip-0pt\\
\hskip-2pt&\sim_{\epsilon_M}0.
\end{aligned}\label{timemeanofsumof}
\end{equation}
The proof of Theorem \ref{quantumclassical} is complete.
\end{proof}
The following theorem will be used in section \ref{asymptoticdecompositionofcontinuousspectralsubspacebyscatteringspaces}.
\begin{theorem}\label{chidonPdMd} Let Assumptions \ref{potentialdecay} and \ref{eigenfunctiondecay} be satisfied. For a cluster decomposition $a$ with $2\le|a|\le N$, let $\tilde P^a_{M_a}$ be defined by \eq{1.20}. Let $f=E_H(B)f$ for $B\Subset\R\setminus \mathcal{T}$. Then there exist constants $0<d_1<d_2<\infty$ such that for the sequences $t_m\to\infty$ and $M_a^m$ in Theorem \ref{quantumclassical} depending on $f$
\begin{equation}
{\tilde P}^a_{M_a^m}e^{-it_mH}f
\sim\chi_a(D_a){\tilde P}^a_{M_a^m}e^{-it_mH}f\label{2.11-corollary}
\end{equation}
as $t=t_m\to\infty$. Here $\chi_a(\xi_a)\in C_0^\infty(X_a)$ such that
\begin{equation}
\begin{aligned}
&0\le \chi_a(\xi_a)\le 1,\\
&\chi_a(\xi_a)=
\left\{
\begin{array}{ll}
1,&\quad 0<d_{1}\le|\xi_a|\le d_{2},\\
0,&\quad |\xi_a|\le d_{1}/2\mbox{ or }|\xi_a|\ge 2d_{2}.
\end{array}
\right.
\end{aligned}
\label{chid}
\end{equation}
\end{theorem}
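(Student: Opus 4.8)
\noindent\emph{Proof plan.} The claim is equivalent to $\Vert(1-\chi_a(D_a)){\tilde P}^a_{M_a^m}e^{-it_mH}f\Vert\to0$ as $m\to\infty$; recall $f\in\HH_c(H)$, so the conclusions \eq{(2.6)}--\eq{(2.8)} of Theorem~\ref{quantumclassical} hold along the sequences $t_m$, $M_a^m$ constructed there. Since $d_1<d_2$, write $1-\chi_a=\rho_-+\rho_+$ with $\rho_\pm\in C^\infty(X_a)$, $0\le\rho_\pm\le1$, $\mathrm{supp}\,\rho_-\subset\{|\xi_a|\le d_1\}$ and $\mathrm{supp}\,\rho_+\subset\{|\xi_a|\ge d_2\}$; it then suffices to choose $0<d_1<d_2<\infty$ with $\Vert\rho_\pm(D_a){\tilde P}^a_{M_a^m}e^{-it_mH}f\Vert\to0$.

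\emph{High momenta.} Take $d_2$ larger than (a suitable multiple of) the maximal propagation speed attached to $\overline B$, and $\varphi\in C_0^\infty(X_a)$ with $\varphi\equiv1$ on $\{|\xi_a|\le d_2\}$ and $\rho_+\le1-\varphi$. Then $\Vert\rho_+(D_a){\tilde P}^a_{M_a^m}e^{-it_mH}f\Vert\le\Vert(1-\varphi(x_a/t_m)){\tilde P}^a_{M_a^m}e^{-it_mH}f\Vert+o(1)$ by \eq{(2.8)}, and since $1-\varphi$ is supported in $\{|\xi_a|\ge d_2\}$ this is bounded, using $|x_a|\le|x|$, by $C\Vert F(|x|\ge\tfrac12d_2t_m){\tilde P}^a_{M_a^m}e^{-it_mH}f\Vert+o(1)$, which tends to $0$ by the standard maximal velocity estimate together with the internal localization \eq{(2.6)}. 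This step is routine.

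\emph{Low momenta.} Here the hypothesis $B\Subset\R\setminus\mathcal T$ is essential. Put $c_0=\mathrm{dist}(\overline B,\mathcal T)>0$, fix $d_1$ with $\tfrac12d_1^2<c_0$ and $d_1<d_2$, and set $g_m=\rho_-(D_a){\tilde P}^a_{M_a^m}e^{-it_mH}f$. Decompose ${\tilde P}^a_{M_a^m}=\sum_jP^{a,E_j}{\hat P}^{|a|-1}_{{\hat M}_a^m}$ over one-dimensional eigenprojections $P^{a,E_j}$ of $H^a$ (eigenvalues $E_j\in[b,0]$ by~\cite{[FH]}) below the $M_{|a|}^m$-th level. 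Since $P^{a,E_j}$ commutes with $D_a$ and $H^aP^{a,E_j}=E_jP^{a,E_j}$, the vectors $h_j^m:=\rho_-(D_a)P^{a,E_j}{\hat P}^{|a|-1}_{{\hat M}_a^m}e^{-it_mH}f$ are pairwise orthogonal, $\sum_jh_j^m=g_m$, and $H^ah_j^m=E_jh_j^m$. From $\mathrm{supp}\,\rho_-\subset\{|\xi_a|\le d_1\}$ and $T_a(\xi_a)=\tfrac12|\xi_a|^2$ we get $\Vert T_ah_j^m\Vert\le\tfrac12d_1^2\Vert h_j^m\Vert$. From \eq{(2.7)} and the decay of the intercluster pair potentials $V_{ij}$, $(ij)\not\le a$, one shows $\Vert I_a{\tilde P}^a_{M_a^m}e^{-it_mH}f\Vert\to0$ and (as $\rho_-(D_a)$ is convolution with a Schwartz kernel) $\sum_j\Vert I_ah_j^m\Vert^2\to0$. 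On the other hand, $H-E_j$ restricted to $E_H(B)\HH$ has bounded inverse of norm $\le c_0^{-1}$ because $\mathrm{dist}(E_j,\overline B)\ge c_0$; transporting this lower bound through the bounded operator $\rho_-(D_a)P^{a,E_j}{\hat P}^{|a|-1}_{{\hat M}_a^m}$---at the cost of the commutator $[H,{\tilde P}^a_{M_a^m}]e^{-it_mH}f$---yields, in the summed sense, $\Vert(H-E_j)h_j^m\Vert\ge(c_0-o(1))\Vert h_j^m\Vert$. Since $(H-E_j)h_j^m=T_ah_j^m+I_ah_j^m$ and $\tfrac12d_1^2<c_0$, it follows that $(c_0-\tfrac12d_1^2-o(1))\Vert h_j^m\Vert\le\Vert I_ah_j^m\Vert+(\text{error}_j)$; squaring and summing over $j$ gives $\Vert g_m\Vert^2=\sum_j\Vert h_j^m\Vert^2\to0$.

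\emph{Main obstacle.} The hard point is making rigorous---uniformly in the growing cutoffs $M_a^m$---the passage of the energy localization and of the smallness of $I_a$ through the channel projection $\rho_-(D_a)P^{a,E_j}{\hat P}^{|a|-1}_{{\hat M}_a^m}$: the commutators $[H,{\tilde P}^a_{M_a^m}]$ and $[H,\rho_-(D_a)]=[I_a,\rho_-(D_a)]$ are bounded but not small as operators, so they must be shown to be negligible on $e^{-it_mH}f$ in the same $\epsilon_M$-sense as in Lemma~\ref{RAGE} and the proof of Theorem~\ref{quantumclassical}. They are bounded uniformly in $M$ because $\mathcal T\subset[b,0]$ by~\cite{[FH]}, so every $P^b_M$ carries $\HH$ into a fixed ball of $H^2(X^b)$; the parts involving intercluster interactions decay in the $x_{ij}$ and vanish on $e^{-it_mH}f$ by Lemma~\ref{RAGE} and \eq{(2.7)}; and the parts depending only on the subsystem eigenprojections are controlled by Assumption~\ref{eigenfunctiondecay} (finiteness of $\Vert|x^a|\psi^a\Vert$ for threshold eigenvectors) and a time-averaging argument exactly as in Theorem~\ref{quantumclassical}, which is also what makes all these errors tend to $0$ along the chosen sequences $M_a^m$. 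Granted these estimates---and the elementary facts that $\rho_-(D_a)$ commutes with $P^{a,E_j}$ and $H^a$, and that $P^a_{M_{|a|}^m}$ projects onto eigenvalues $\le0$---the two steps above establish \eq{2.11-corollary}.
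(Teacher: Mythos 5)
Your overall strategy --- energy localization in $B\Subset\R\setminus\mathcal T$ forces $T_a=\tfrac12|D_a|^2$ into a compact annulus on each channel $H^a=E_j$, once the intercluster interaction $I_a$ has been shown negligible on ${\tilde P}^a_{M_a^m}e^{-it_mH}f$ via \eq{(2.7)} --- is the right mechanism and is the same one the paper exploits. But the step you yourself flag as the ``main obstacle'' is a genuine gap, not a routine verification. To obtain $\Vert(H-E_j)h_j^m\Vert\ge(c_0-o(1))\Vert h_j^m\Vert$ you must commute $H$ (equivalently $(H-E_j)^{-1}E_H(B)$) past $A_j=\rho_-(D_a)P^{a,E_j}{\hat P}^{|a|-1}_{{\hat M}_a^m}$. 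The commutator $[H,{\hat P}^{|a|-1}_{{\hat M}_a^m}]$ contains terms $[I_{a_\ell},P^{a_\ell}_{M_\ell}]$ for every $a_\ell$ with $|a_\ell|=\ell<|a|$, i.e.\ intercluster potentials of cluster decompositions other than $a$; their smallness on $e^{-it_mH}f$ is not among the conclusions of Theorem~\ref{quantumclassical}, and Lemma~\ref{RAGE} gives smallness only in time mean, not pointwise along the already-extracted sequence $t_m$ (that sequence was chosen to make a specific finite list of time-averaged quantities small pointwise; enlarging the list means redoing the extraction inside the proof of Theorem~\ref{quantumclassical}). Until these commutators are actually estimated, the squared-and-summed inequality $(c_0-\tfrac12 d_1^2-o(1))\Vert h_j^m\Vert\le\Vert I_ah_j^m\Vert+(\mathrm{error}_j)$ is unproven. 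Your high-momentum step has a smaller defect of the same kind: the ``standard maximal velocity estimate'' is established nowhere in the paper and is not needed, since the same energy argument that yields the lower bound $|\xi_a|\ge d_1$ yields the upper bound $\tfrac12|\xi_a|^2\le\sup B-b$ for free.

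The paper closes exactly this gap by arranging the argument so that no commutator with ${\hat P}^{|a|-1}_{{\hat M}_a^m}$ or with a momentum cutoff ever appears. It writes $e^{-itH}f=\sum_a\phi(H){\tilde P}^a_{M_a}e^{-itH}f$ for $\phi\in C_0^\infty(\R\setminus\mathcal T)$ with $\phi=1$ on $B$, replaces $\phi(H)$ by $\phi(H_a)$ term by term (the difference is controlled by the decay of $I_a$ together with \eq{(2.7)}, which is the only commutator-type estimate required), uses the asymptotic mutual orthogonality of the summands to conclude $(I-\phi(H_a)){\tilde P}^a_{M_a^m}e^{-it_mH}f\to0$ for each $a$ separately, and then finishes with the exact operator identity $\chi_a(D_a)\phi(H_a){\tilde P}^a_{M_a}=\phi(H_a){\tilde P}^a_{M_a}$, valid because $H_a{\tilde P}^a_{M_a}$ acts as $T_a+\lambda_j$ with $\lambda_j\in\mathcal T$ while $\mathrm{supp}\,\phi$ is compact and disjoint from $\mathcal T$. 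If you wish to keep your channel-by-channel resolvent argument you must supply the missing commutator estimates; otherwise restructure along these lines.
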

\begin{proof} 
Let $\phi\in C_0^\infty(\R\setminus \mathcal{T})$ satisfy $\phi(\lambda)=1$ for $\lambda\in B(\Subset\R\setminus\mathcal{T})$. Then by \eq{summationofprojections}
\begin{equation}
\begin{aligned}
\sum_{2\le|a|\le N}
{\tilde P}^a_{M_a}e^{-itH}f
=e^{-itH}f=\phi(H)e^{-itH}f=\sum_{2\le|a|\le N}\phi(H){\tilde P}^a_{M_a}e^{-itH}f.
\end{aligned}
\label{EHBbunkai}
\end{equation}
Theorem \ref{quantumclassical} gives that for any $R>0$ and $(i,j)\not\le a$
\begin{align}
\Vert\chi_{\{x||x_{ij}|<R\}}{\tilde P}^a_{M_a^m}e^{-it_mH}f\Vert\to0\nonumber
\end{align}
as $m\to\infty$. This and \eq{EHBbunkai} imply
\begin{align}
\sum_{2\le|a|\le N}(I-\phi(H_a)){\tilde P}^a_{M_a^m}e^{-it_mH}f\to0\label{asymp}
\end{align}
as $m\to\infty$. As
the summands on the LHS are mutually orthogonal for different $a$'s asymptotically as $m\to\infty$, we obtain
\begin{align}
(I-\phi(H_a)){\tilde P}^a_{M_a^m}e^{-it_mH}f\to0\quad(m\to\infty)\label{asymp-a}
\end{align}
for each $a$ with $2\le |a|\le N$.
Since $H_a=T_a+H^a$ and $H^aP_j^a=\lambda_j$ for the eigenprojection $P_j^a$ of $H^a$ corresponding to eigenvalue $\lambda_j\in \mathcal{T}$ of $H^a$, supp $\phi\Subset \R\setminus \mathcal{T}$ implies that there exist constants $d_2>d_1>0$ such that $\chi_a(D_a)\phi(H_a){\tilde P}^a_{M_a^m}=\phi(H_a){\tilde P}^a_{M_a^m}$ for the function $\chi_a$ in \eq{chid}. Thus multiplying both sides of \eq{asymp-a} by $I-\chi_a(D_a)$ we obtain
%Since $H_a=T_a+H^a\in\mbox{ supp }\phi\Subset\R\setminus\mathcal{T}$ and $H^a\in \mathcal{T}$ by the factor $P^a_{M_{|a|}^m}$ in ${\tilde P}^a_{M_a^m}$, there exist constants $d_2>d_1>0$ such that $d_2^2/2\ge T_a\ge d_1^2/2$ on the state ${\tilde P}^a_{M_a^m}e^{-it_mH}f$ asymptotically as $m\to\infty$. We thus obtain
\begin{align}
(I-\chi_a(D_a)){\tilde P}^a_{M_a^m}e^{-it_mH}f\to0\label{asymp-b}
\end{align}
as $m\to\infty$.
\end{proof}

\section{A partition of unity}\label{partitionofunity}

To state an important proposition, we prepare some notation. 
Let $a$ be a cluster decomposition with $2\le|a|\le N$. For any two clusters $C_1$ and $C_2$ in $a$, we define a vector $z_{a1}\in \R^\nu$ that connects the two centers of mass of the clusters $C_1$ and $C_2$. The number of such vectors when we move over all pairs $C_i$, $C_j$ ($i\ne j$) of clusters in $a$ is $k_a=\left(\smallmatrix |a|\\ 2\endsmallmatrix\right)$ in total. We denote these vectors by $z_{a1},z_{a2},\dots,z_{ak_a}$.

 Let $z_{ak}$ $(1\le k\le k_a)$ connect two clusters $C_\ell$ and $C_m$ in $a$ $(\ell\ne m)$. Then for any pair ${(i,j)}$ with $i\in C_\ell$ and $j\in C_m$, the vector $x_{ij}=r_i-r_j\in\R^\nu$ is expressed as a sum of $z_{ak}$ and two position vectors $w_1\in\R^\nu$ and $w_2\in\R^\nu$ of the particles $i$ and $j$ in $C_\ell$ and $C_m$, respectively. In particular we have 
\begin{align}
|x_{ij}|\ge|z_{ak}|-(|w_1|+|w_2|)\ge |z_{ak}|-2|x^a|.\label{fundamentalinequality}
\end{align}

Next if $c<a$ and $|c|=|a|+1$, then just one cluster, say $C_\ell\in a$, is decomposed into two clusters $C'_\ell$ and $C''_\ell$ in $c$, and other clusters in $a$ remain the same in the finer cluster decomposition $c$. In this case, we can choose just one vector $z_{ck}$ $(1\le k\le k_c)$ that connects clusters $C'_\ell$ and $C''_\ell$ in $c$, and we can express $x^a=(z_{ck},x^c)$. The norm of this vector is written as 
\begin{equation}
|x^a|^2=|z_{ck}|^2+|x^c|^2.\label{3.1}
\end{equation}
Similarly the norm of $x=(x_a,x^a)$ is written as
\begin{equation}
|x|^2=|x_a|^2+|x^a|^2.\label{3.2}
\end{equation}
 We recall that norm is defined from the inner product defined by \eq{innerproduct} which changes in accordance with the cluster decomposition used in each context. For instance, in \eq{3.1}, the left-hand side (LHS) is defined by using \eq{innerproduct} for the cluster decomposition $a$, and the right-hand side (RHS) is by using \eq{innerproduct} for $c$.

Given these notation, we state the following lemma, which is partly a repetition of Lemma 2.1 in \cite{[K1]} or section 3 of \cite{[Kitada-S]}. We define subsets $T_a(\rho,\theta)$ and ${\tilde T}_a(\rho,\theta)$ of $X=\R^{\nu n}$ for cluster decompositions $a$ with $2\le |a|\le N$ and real numbers $\rho,\theta$ with $1>\rho,\theta>0$.
\begin{equation}
\begin{aligned}
&T_a(\rho,\theta)=\left(\bigcap_{k=1}^{k_a}\{x\ |\ |z_{ak}|^2>\rho|x|^2\}\right)\cap\{ x\ |\ |x_a|^2>(1-\theta)|x|^2\},\\
&{\tilde T}_a(\rho,\theta)=\left(\bigcap_{k=1}^{k_a}\{x\ |\ |z_{ak}|^2>\rho\}\right)\cap\{ x\ |\ |x_a|^2>1-\theta\}. 
\end{aligned}\label{3.4}
\end{equation}
Subsets $S$ and $S_{\theta}$ $(\theta>0)$ of $X$ are defined by
\begin{equation}
\begin{aligned}
&S=\{x\ |x\in X, \ |x|^2\ge 1\},\\
&S_\theta=\{x\ |x\in X, \ 1+\theta\ge |x|^2\ge 1\}.
\end{aligned}\label{kyuu}
\end{equation}

\begin{lemma}\label{lemma31}
Suppose that constants $1\ge\theta_1>\rho_j>\theta_j> \rho_N>\theta_N>0$ satisfy $\theta_{j-1}\ge \theta_j+\rho_j$ for $j=2,3,\dots,N$. 
Then the following hold.
\begin{namelist}{8888}
\item[\ \ {\rm 1)}]
\begin{equation}
S\subset \bigcup_{2\le |a|\le N}T_a(\rho_{|a|},\theta_{|a|}).\label{3.5}
\end{equation}
\item[\ \ {\rm 2)}]
Let $\gamma_j>1$ $(j=1,2)$ satisfy
\begin{equation}
9\gamma_1\gamma_2< r_0:=\min_{2\le j\le N}\{{\rho_j/\theta_j}\}.\label{3.6}
\end{equation}
If $a\not\le c$ with $|a|\ge |c|$, then
\begin{equation}
T_a(\gamma_1^{-1}\rho_{|a|},\gamma_2\theta_{|a|})
\cap T_c(\gamma_1^{-1}\rho_{|c|},\gamma_2\theta_{|c|})=\emptyset.\label{3.7}
\end{equation}
\item[\ \ {\rm 3)}]
For $\gamma>1$ and $2\le|a|\le N$
\begin{equation}
\begin{aligned}
T_a(\rho_{|a|},\theta_{|a|})\cap S_{\theta_N}&\subset
{\tilde T}_a(\rho_{|a|},\theta_{|a|})\cap S_{\theta_N}\\
&\Subset {\tilde T}_a(\gamma^{-1}\rho_{|a|},\gamma\theta_{|a|})\cap S_{\theta_N}\\
&\subset T_a({\gamma'_1}^{-1}\rho_{|a|},\gamma'_2\theta_{|a|})\cap S_{\theta_N},
\end{aligned}
\label{3.8}
\end{equation}
where
\begin{equation}
\gamma'_1=\gamma(1+\theta_N),\quad \gamma'_2=(1+\gamma)(1+\theta_N)^{-1}.\label{3.9}
\end{equation}
\item[\ \ {\rm 4)}]
Let $\gamma>1$ satisfy $1<\gamma(1+\theta_N)<2$. Then taking $\rho_j>\theta_j>0$ ($j=2,$ $\dots,$ $N$) suitably, we have for $2\le|a|\le N$
\begin{equation}
{T}_a({\gamma'_1}^{-1}\rho_{|a|},\gamma'_2\theta_{|a|})\subset \{ x\ |\ |x_{ij}|^2> \rho_{|a|}|x|^2/2 \ \mbox{for all}\ {(i,j)}\not\leq a\}.\label{3.10}
\end{equation}
\item[\ \ {\rm 5)}]
If $9\gamma(1+\gamma)<r_0$ and $a\not\le c$ with $|a|\ge |c|$, then
\begin{equation}
T_a({\gamma'_1}^{-1}\rho_{|a|},\gamma'_2\theta_{|a|})\cap
T_c({\gamma'_1}^{-1}\rho_{|c|},\gamma'_2\theta_{|c|})=\emptyset.\label{3.11}
\end{equation}
\end{namelist}
\end{lemma}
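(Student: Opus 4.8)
The plan is to prove 1)--5) by separate elementary geometric arguments in $X=\R^{\nu n}$, all resting on the orthogonal decomposition $|x|^2=|x_a|^2+|x^a|^2$ of \eq{3.2}, the recursive splitting $|x^a|^2=|z_{ck}|^2+|x^c|^2$ of \eq{3.1} for $c<a$ with $|c|=|a|+1$, the fundamental inequality \eq{fundamentalinequality} together with its evident companion $|x_{ij}|\le 2|x^a|$ for $(i,j)\le a$, and the arithmetic consequences of the hypotheses: from $\theta_{j-1}\ge\theta_j+\rho_j$ and the stated ordering one gets $\theta_{j-1}>\rho_j$, strict decrease of $\theta_j$ and $\rho_j$ in $j$, and, by telescoping, $\rho_{\ell+1}+\dots+\rho_N\le\theta_\ell$ for every $\ell$. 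The implicit constant in \eq{fundamentalinequality} and its companion is the one fixed by the mass-normalisation \eq{conventionalHamiltonian}, and I take it to be absorbed into the clean numerical thresholds $9\gamma_1\gamma_2<r_0$ and $1<\gamma(1+\theta_N)<2$ in the statement. Of the five parts only 1) needs a genuine construction; 2) and 5) are the same ``repulsion'' argument; 3) and 4) are direct substitutions on the bounded shell $S_{\theta_N}$, on which $1\le|x|^2\le 1+\theta_N$ and $\theta_{|a|}\ge\theta_N$.

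For 1), given $x\in S$ (so $|x|^2\ge1$) I would build a covering decomposition $a(x)$ by successive coarsening, starting from the finest decomposition into singletons, for which $x^a=0$. At a stage with $|a|=\ell$ and $|x^a|^2\le(\rho_{\ell+1}+\dots+\rho_N)|x|^2\ (\le\theta_\ell|x|^2)$, either all $z$-vectors satisfy $|z_{ak}|^2>\rho_\ell|x|^2$, in which case $|x_a|^2=|x|^2-|x^a|^2\ge(1-\theta_\ell)|x|^2$, so $x\in T_a(\rho_\ell,\theta_\ell)$ and we stop; or some $z_{ak}$ with $|z_{ak}|^2\le\rho_\ell|x|^2$ joins two clusters of $a$, and merging them gives $a'$ with $|a'|=\ell-1$ whose internal coordinate, by \eq{3.1}, gains only that $z$-vector: $|x^{a'}|^2=|x^a|^2+|z_{ak}|^2\le(\rho_\ell+\dots+\rho_N)|x|^2\le\theta_{\ell-1}|x|^2$, so the inductive estimate persists one level down. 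The coarsening strictly decreases $\ell$, hence terminates; it cannot reach $\ell=1$, for at $\ell=2$ (where $k_a=1$) a failure would give $|x|^2=|x_a|^2+|x^a|^2\le(\rho_2+\theta_2)|x|^2\le\theta_1|x|^2\le|x|^2$, forcing equality throughout, which is excluded once $\theta_1>\theta_2+\rho_2$ (a harmless strengthening of the stated $\theta_1\ge\theta_2+\rho_2$, or removable by continuity). Thus $x\in T_a(\rho_{|a|},\theta_{|a|})$ for some $a$ with $2\le|a|\le N$, which is \eq{3.5}.

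For 2) and 5): since $\gamma'_1\gamma'_2=\gamma(1+\theta_N)\cdot(1+\gamma)(1+\theta_N)^{-1}=\gamma(1+\gamma)$ and $\gamma'_1,\gamma'_2>1$ (because $\gamma>1>\theta_N$), assertion 5) is exactly 2) applied with $(\gamma_1,\gamma_2)=(\gamma'_1,\gamma'_2)$, so it suffices to prove 2). Suppose $x$ lies in both fattened sets with $a\not\le c$ and $|a|\ge|c|$. Since $a\not\le c$, some $C\in a$ meets at least two clusters of $c$, so there are particles $i,j\in C$ in distinct clusters of $c$; thus $(i,j)\le a$ but $(i,j)\not\le c$. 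From $x\in T_a(\gamma_1^{-1}\rho_{|a|},\gamma_2\theta_{|a|})$ we read off $|x^a|^2<\gamma_2\theta_{|a|}|x|^2$, hence $|x_{ij}|\le 2|x^a|<2\sqrt{\gamma_2\theta_{|a|}}\,|x|$. From $x\in T_c(\gamma_1^{-1}\rho_{|c|},\gamma_2\theta_{|c|})$, taking $z_{ck}$ to be the vector joining the two $c$-clusters of $i$ and $j$, we have $|z_{ck}|^2>\gamma_1^{-1}\rho_{|c|}|x|^2$ and $|x^c|^2<\gamma_2\theta_{|c|}|x|^2$, so by \eq{fundamentalinequality} $|x_{ij}|\ge|z_{ck}|-2|x^c|>\big(\sqrt{\gamma_1^{-1}\rho_{|c|}}-2\sqrt{\gamma_2\theta_{|c|}}\big)|x|$. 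Using $|a|\ge|c|$, hence $\theta_{|a|}\le\theta_{|c|}$, the two estimates of $|x_{ij}|$ are incompatible precisely because $9\gamma_1\gamma_2<r_0\le\rho_{|c|}/\theta_{|c|}$ (the companion constant in \eq{fundamentalinequality} being used in its sharp form); so the intersection is empty.

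For 3) and 4), work on $S_{\theta_N}$. The first inclusion in \eq{3.8} follows from $|z_{ak}|^2>\rho_{|a|}|x|^2\ge\rho_{|a|}$ and $|x_a|^2>(1-\theta_{|a|})|x|^2\ge1-\theta_{|a|}$; the middle $\Subset$ holds because $\tilde T_a(\rho_{|a|},\theta_{|a|})\cap S_{\theta_N}$ is bounded and the strict inequalities defining $\tilde T_a$ leave room for its closure inside the strictly weaker inequalities of $\tilde T_a(\gamma^{-1}\rho_{|a|},\gamma\theta_{|a|})$; and the last inclusion reduces, via $\gamma'_1=\gamma(1+\theta_N)$, $\gamma'_2(1+\theta_N)=1+\gamma$ and $|x|^2\le1+\theta_N$, to $\gamma^{-1}\rho_{|a|}\ge{\gamma'_1}^{-1}\rho_{|a|}|x|^2$ and $\theta_{|a|}\ge\theta_N$, both of which hold. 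For 4), on $T_a({\gamma'_1}^{-1}\rho_{|a|},\gamma'_2\theta_{|a|})$ one has $|x^a|^2<\gamma'_2\theta_{|a|}|x|^2$, so by \eq{fundamentalinequality} every $(i,j)\not\le a$ satisfies $|x_{ij}|>\big(\sqrt{{\gamma'_1}^{-1}\rho_{|a|}}-2\sqrt{\gamma'_2\theta_{|a|}}\big)|x|$; since $1<\gamma(1+\theta_N)<2$ gives $({\gamma'_1}^{-1})^{1/2}>2^{-1/2}$, choosing the ratios $\theta_j/\rho_j$ small (``$\rho_j>\theta_j$ suitably'') makes the bracket exceed $(\rho_{|a|}/2)^{1/2}$, which is \eq{3.10}. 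The main obstacle is the bookkeeping in 1): one must check that the accumulated estimate $|x^a|^2\le(\rho_{|a|+1}+\dots+\rho_N)|x|^2$ is preserved through every coarsening and is dominated by $\theta_{|a|}|x|^2$ exactly via the telescoping hypothesis, and that the recursion is forced to halt at some $|a|\ge2$ — which is where the ordering and $\theta_{j-1}\ge\theta_j+\rho_j$ enter.
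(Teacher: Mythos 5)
Your proposal is correct and is in substance the paper's own proof. Part 1) uses the identical mechanism --- merge the two clusters joined by a $z$-vector with $|z_{ak}|^2\le\rho_{|a|}|x|^2$, track the accumulated internal size via \eq{3.1}, and close with the telescoped hypothesis $\theta_\ell\ge\theta_N+\sum_{j>\ell}\rho_j>\sum_{j>\ell}\rho_j$ --- only organized as a greedy coarsening from the finest decomposition rather than the paper's contrapositive induction upward from $|a|=2$; the two are logically interchangeable, and your observation that the strictness at $\ell=2$ already follows from $\theta_N>0$ (so no strengthening of $\theta_1\ge\theta_2+\rho_2$ is needed) is right. Parts 3)--5) match the paper's one-line reductions. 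The one point to pin down is in 2): the paper's inequality \eq{3.12} uses $|x_{ij}|\le|x^a|$ for $(i,j)\le a$, and it is this factor $1$ together with the factor $2$ in \eq{fundamentalinequality} that yields $(1+2)^2=9$ in \eq{3.6}; with your companion bound $|x_{ij}|\le 2|x^a|$ the same computation requires $16\gamma_1\gamma_2<r_0$, which is not implied by $9\gamma_1\gamma_2<r_0$. So your parenthetical about using the companion ``in its sharp form'' is doing real work: to obtain the stated constant you must commit to $|x_{ij}|\le|x^a|$ (as the paper implicitly does under its mass normalization), or else accept the larger numerical threshold, which is harmless for every application in the paper since $r_0$ is later taken $>2^{10}$.
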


\begin{proof} To prove \eq{3.5}, suppose that $|x|^2\ge 1$ and $x$ does not belong to the set
$$
A=\bigcup_{2\le |a|\le N-1}\left[\left(\bigcap_{k=1}^{k_a}\{x\ |\ |z_{ak}|^2>\rho_{|a|}|x|^2\}\right)\cap\{ x\ |\ |x_a|^2>(1-\theta_{|a|})|x|^2\}\right].
$$
Under this assumption, we prove $|x_{ij}|^2>\rho_N|x|^2$ for all pairs ${(i,j)}$. (Note that $z_{ak}$ for $|a|=N$ equals some $x_{ij}$.) Let $|a|=2$ and write $x=(z_{a1},x^a)$. Then by \eq{3.1}, $1\le|x|^2=|z_{a1}|^2+|x^a|^2$. Since $x$ belongs to the complement $A^c$ of the set $A$, we have $|z_{a1}|^2\le \rho_{|a|}|x|^2$ or $|x_a|^2\le(1-\theta_{|a|})|x|^2$. If $|z_{a1}|^2\le \rho_{|a|}|x|^2$, then $|x^a|^2=|x|^2-|z_{a1}|^2\ge(1-\rho_{|a|})|x|^2\ge(\theta_1-\rho_{|a|})|x|^2\ge \theta_{|a|}|x|^2$ by $\theta_{j-1}\ge \theta_j+\rho_j$. Thus $|x_a|^2=|x|^2-|x^a|^2\le(1-\theta_{|a|})|x|^2$ for all $a$ with $|a|=2$.

Next let $|c|=3$ and assume $|x_c|^2>(1-\theta_{|c|})|x|^2$. Then by $x\in A^c$, we can choose $z_{ck}$ with $1\le k\le k_c$ such that $|z_{ck}|^2\le \rho_{|c|}|x|^2$. Let $C_\ell$ and $C_m$ be two clusters in $c$ connected by $z_{ck}$, and let $a$ be the cluster decomposition obtained by combining $C_\ell$ and $C_m$ into one cluster with retaining other clusters of $c$ in $a$. Then $|a|=2$, $x^a=(z_{ck},x^c)$, and $|x^a|^2=|z_{ck}|^2+|x^c|^2$. Thus $|x_a|^2=|x|^2-|x^a|^2=|x|^2-|z_{ck}|^2-|x^c|^2=|x_c|^2-|z_{ck}|^2>(1-\theta_{|c|}-\rho_{|c|})|x|^2\ge(1-\theta_{|a|})|x|^2$, which contradicts the result of the previous step. Thus $|x_c|^2\le(1- \theta_{|c|})|x|^2$ for all $c$ with $|c|=3$. 

Repeating this procedure, we finally arrive at $|x_d|^2\le (1-\theta_{|d|})|x|^2$, thus $|x^d|^2=|x|^2-|x_d|^2\ge \theta_{|d|}|x|^2> \rho_N|x|^2$ for all $d$ with $|d|=N-1$. Namely $|x_{ij}|^2>\rho_N|x|^2$ for all pairs ${(i,j)}$.
The proof of \eq{3.5} is complete.

We next prove \eq{3.7}. By $a\not\le c$, we can take a pair ${(i,j)}$ and clusters $C_\ell, C_m\in c$ such that ${(i,j)}\le a$, $i\in C_\ell$, $j\in C_m$, and $\ell\ne m$. Then $C_\ell$ and $C_m$ are connected by $z_{ck}$ for some $1\le k\le k_c$. Thus if there is an $x\in T_a(\gamma_1^{-1}\rho_{|a|},\gamma_2\theta_{|a|})\cap T_c(\gamma_1^{-1}\rho_{|c|},\gamma_2\theta_{|c|})$, then by \eq{fundamentalinequality}
\begin{equation}
\begin{aligned}
(\gamma_2\theta_{|a|})^{1/2}|x|&>|x^a|\ge|x_{ij}|\ge |z_{ck}|-2|x^c|\\
&>((\gamma_1^{-1}\rho_{|c|})^{1/2}-2(\gamma_2\theta_{|c|})^{1/2})|x|.
\end{aligned}\label{3.12}
\end{equation}
Thus $(\gamma_2\theta_{|a|})^{1/2}>(\gamma_1^{-1}\rho_{|c|})^{1/2}-2(\gamma_2\theta_{|c|})^{1/2}$. As $\theta_{|a|}\le\theta_{|c|}$ by $|a|\ge|c|$, we have $9\gamma_2\theta_{|c|}>\gamma_1^{-1}\rho_{|c|}$, contradicting \eq{3.6}. This completes the proof of \eq{3.7}.

\eq{3.8} follows by a simple calculation from the inequality $|x|^2(1+\theta_N)^{-1}\le 1$ that holds on $S_{\theta_N}$. \eq{3.10} follows from the inequality \eq{fundamentalinequality} as follows. Let $x\in T_a(\gamma_1'^{-1}\rho_{|a|},\gamma'_2\theta_{|a|})$ and ${(i,j)}\not\le a$. Then there are clusters $C_\ell, C_m\in a$ $(\ell\ne m)$ such that $i\in C_\ell$, $j\in C_m$ and $C_\ell,C_m$ are connected by $z_{ak}$ $(1\le k\le k_a)$. Then by \eq{fundamentalinequality}, $|x_{ij}|\ge|z_{ak}|-2|x^a|>\rho_{|a|}^{1/2}(\gamma_1'^{-1/2}-2\gamma'^{1/2}_2(\theta_{|a|}/\rho_{|a|})^{1/2})|x|$. From this and the definition of $r_0$, we have $|x_{ij}|/|x|>\rho_{|a|}^{1/2}\gamma_1'^{-1/2}(1-2(\gamma(1+\gamma))^{1/2}r_0^{-1/2})$. By taking $\rho_j>\theta_j>0$ suitably, we can choose constants $1<\alpha<2$ and $M>0$ such that $1<\gamma'_1=\gamma(1+\theta_N)<\alpha$, $1<\gamma(1+\gamma)<M^{-1}r_0$ and $(\alpha/2)^{1/2}+2M^{-1/2}<1$ hold. Then we have $|x_{ij}|/|x|>\rho_{|a|}^{1/2}\alpha^{-1/2}(1-2M^{-1/2})>(\rho_{|a|}/2)^{1/2}$. Namely $|x_{ij}|^2>\rho_{|a|}|x|^2/2$.
 \eq{3.11} follows from $\gamma'_1\gamma'_2=\gamma(1+\gamma)$ and 2).
\end{proof}

In the following we fix constants  $\gamma>1$ and $1/4\ge \theta_1>\rho_j>\theta_j>\rho_N> \theta_N>0$ such that
\begin{equation}
\begin{aligned}
&\theta_{j-1}\ge \theta_j+\rho_j\quad(j=2,3,\dots,N),\\
&9\gamma(1+\gamma)<r_0=\min_{2\le j\le N}\{\rho_j/\theta_j\},\quad 1<\gamma<2(1+\theta_N)^{-1}.
\end{aligned}\label{3.14}
\end{equation}
We recall \eq{3.9}.
\begin{equation}
\gamma'_1=\gamma(1+\theta_N),\quad \gamma'_2=(1+\gamma)(1+\theta_N)^{-1}.
\end{equation}

Let a function $\rho({t})\in C^\infty({{\mathbb R}})$ satisfy the following.
\begin{equation}
\begin{aligned}
&0 \le \rho ({t} ) \le 1,\\
&\rho ({t} ) =
\left\{
\begin{array}{ll}
 1&\quad ({t} \le -1)\\
 0&\quad({t} \ge 0)
\end{array}
\right.
\\
&\rho^{\prime} ({t} )=\frac{d\rho}{dt}(t) \le 0,\\
&\vert \rho^{\prime}({t} ) \vert ^{\frac{1}{2}} \in C^\infty ({{\mathbb R}}),
\end{aligned}\label{rhodef}
\end{equation}
and a functions $\psi_\sigma({t}>\tau)$ of ${t}\in \R$ be defined by
\begin{align}
\psi_\sigma({t}>\tau)=1-\rho(({t}-\tau)/\sigma)
\label{3.15}
\end{align}
for constants $\sigma\in(0,\rho_N/4)$ and $\tau\in \R$.
We note that $\psi_\sigma({t}>\tau)$ satisfies
\begin{equation}
\begin{aligned}
&\psi_\sigma({t}>\tau)=
\begin{cases}
0 \quad &({t}\le\tau-\sigma)\\
1 \quad& ({t}\ge\tau)
\end{cases}
\\
&\psi'_\sigma({t}>\tau)=\frac{d}{d{t}}\psi_\sigma({t}>\tau)\ge 0.
\end{aligned}\label{3.17}
\end{equation}
We define for a cluster decomposition $a$ with $2\le |a|\le N$ 
\begin{equation}
\varphi_a(x_a)=\prod_{k=1}^{k_a}\psi^2_\sigma(|z_{ak}|^2>\rho_{|a|})\psi^2_\sigma(|x_a|^2>1-\theta_{|a|}),\label{3.18}
\end{equation}
where $\sigma>0$ is fixed as
\begin{equation}
0<\sigma<\min_{2\le j\le N}\{(1-\gamma^{-1})\rho_j,(\gamma-1)\theta_j\}(<\theta_N).\label{3.19}
\end{equation}
Then $\varphi_a(x_a)$ satisfies for $x\in S_{\theta_N}$
\begin{equation}
\varphi_a(x_a)=\left\{
\begin{aligned}
&1\quad \text{for}\quad x\in {\tilde T}_a(\rho_{|a|},\theta_{|a|}),\\
&0\quad \text{for}\quad x\not\in {\tilde T}_a(\gamma^{-1}\rho_{|a|},\gamma\theta_{|a|}).
\end{aligned}\right.\label{3.20}
\end{equation}
We set for $x\in S_{\theta_N}$ and $|a|=k$ $(k=2,3,\dots,N)$
\begin{equation}
\Psi_a(x)=\varphi_a(x_a)
\left(1-\sum_{|a_{k-1}|=k-1}\varphi_{a_{k-1}}(x_{a_{k-1}})\right)
\dots
\left(1-\sum_{|a_{2}|=2}\varphi_{a_{2}}(x_{a_{2}})\right).\label{3.21}
\end{equation}
By 3) and 5) of Lemma \ref{lemma31} and \eq{3.20}, the sums on the RHS remain only in the case $a<a_j$ for $j=k-1,\dots,2$ and $x\in S_{\theta_N}$.
\begin{equation}
\Psi_a(x)=\varphi_a(x_a)
\left(1-\sum_{\substack{|a_{k-1}|=k-1\\a<a_{k-1}}}\varphi_{a_{k-1}}(x_{a_{k-1}})\right)
\dots
\left(1-\sum_{\substack{|a_{2}|=2\\{a<a_2}}}\varphi_{a_{2}}(x_{a_{2}})\right).\label{3.22}
\end{equation}
Thus $\Psi_a(x)$ is a function of the variable $x_a$ only.
\begin{equation}
\Psi_a(x)=\Psi_a(x_a)\quad\text{when}\quad x=(x_a,x^a)\in S_{\theta_N}.\label{3.23}
\end{equation}
We also note that the supports of $\varphi_{a_j}$ in each sum on the RHS of 
\eq{3.21} are disjoint mutually in $S_{\theta_N}$ by 3) and 5) of Lemma \ref{lemma31}.
By \eq{3.5} and \eq{3.8} of Lemma \ref{lemma31}, and the definition \eq{3.18}-\eq{3.21} of $\Psi_a(x_a)$, we therefore have
$$
\sum_{2\le|a|\le N}\Psi_a(x)=\sum_{2\le|a|\le N}\Psi_a(x_a)=1\quad\text{on}\quad S_{\theta_N}.
$$

We have constructed a partition of unity on $S_{\theta_N}$.

\begin{proposition}\label{partitionupropo} Let real numbers $1/4\ge\theta_1>\rho_j>\theta_j>\rho_N>\theta_N>4\sigma>0$ satisfy $\theta_{j-1}\ge \theta_j+\rho_j$ for $j=2,3,\dots,N$. 
Assume that \eq{3.14} hold and let $\Psi_a(x_a)$ be defined by \eq{3.18}-\eq{3.22}. Then we have
\begin{equation}
\sum_{2\le|a|\le N}\Psi_a(x)=\sum_{2\le|a|\le N}\Psi_a(x_a)=1\quad\text{on}\quad S_{\theta_N}.\label{3.24}
\end{equation}
$\Psi_a(x_a)$ is a $C^\infty$ function of $x_a$ and satisfies $0\le \Psi_a(x_a)\le 1$. Further for $1\le k\le k_a$ and $x\in\hskip3pt\mbox{\rm supp}\hskip2pt\Psi_a \cap S_{\theta_N}$
\begin{equation}
|z_{ak}|^2>\frac{3}{4}\rho_{|a|}|x|^2,\quad |x_a|^2\ge\omega_{|a|}|x^a|^2,\ \ (\omega_{|a|}=(\theta_{|a|}+\theta_N+\sigma)^{-1}(1-\theta_{|a|}-\theta_N-\sigma)),\label{3.25}
\end{equation}
and
\begin{equation}
\sup_{\substack{x\in X\\2\le|a|\le N}}|\nabla_{x_a} \Psi_a(x_a)|<\infty.\label{3.26}
\end{equation}
\end{proposition}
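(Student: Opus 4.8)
The plan is to derive each assertion directly from Lemma \ref{lemma31} together with the explicit construction \eqref{3.18}--\eqref{3.22} of the functions $\varphi_a$ and $\Psi_a$, exploiting the separation/covering properties $1)$--$5)$ that have already been established. The partition-of-unity identity \eqref{3.24} has in fact already been proved in the discussion preceding the proposition (using \eqref{3.5}, \eqref{3.8}, \eqref{3.20}, and the disjointness of supports coming from $3)$ and $5)$), so for \eqref{3.24} I would simply recall that argument in one line. The smoothness of $\Psi_a$, the bound $0\le\Psi_a\le 1$, and the fact that it depends only on $x_a$ all follow immediately: $\varphi_a(x_a)$ is a finite product of the $C^\infty$ functions $\psi_\sigma^2(\,\cdot\,>\,\cdot\,)$ (using $|\rho'|^{1/2}\in C^\infty$ from \eqref{rhodef}, hence $\psi_\sigma\in C^\infty$), each factor lies in $[0,1]$, and \eqref{3.22}--\eqref{3.23} shows $\Psi_a$ is a polynomial expression in such functions of the single group of variables $x_a$; that each factor is $\le 1$ and the first factor $\varphi_a\le 1$ gives $0\le\Psi_a\le1$.

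The substantive point is the geometric bound \eqref{3.25}. For $x\in\operatorname{supp}\Psi_a\cap S_{\theta_N}$ we have $\varphi_a(x_a)\ne 0$, so by \eqref{3.20} (and the support description of $\psi_\sigma$ in \eqref{3.17} together with the choice of $\sigma$ in \eqref{3.19}) the point $x$ lies in $\tilde T_a(\gamma^{-1}\rho_{|a|},\gamma\theta_{|a|})$, i.e. $|z_{ak}|^2>\gamma^{-1}\rho_{|a|}$ for all $k$ and $|x_a|^2>1-\gamma\theta_{|a|}$. To convert these "unnormalized" inequalities into the stated inequalities involving $|x|^2$ I would use $1\le|x|^2\le 1+\theta_N$ on $S_{\theta_N}$ together with the constraint $1<\gamma<2(1+\theta_N)^{-1}$ from \eqref{3.14}; concretely $|z_{ak}|^2>\gamma^{-1}\rho_{|a|}\ge \gamma^{-1}\rho_{|a|}|x|^2/(1+\theta_N)>\tfrac34\rho_{|a|}|x|^2$ once $\sigma,\theta_N$ are small, and $|x^a|^2=|x|^2-|x_a|^2$ together with $|x_a|^2>1-\gamma\theta_{|a|}$ and $|x|^2\le 1+\theta_N$ gives $|x^a|^2<\gamma\theta_{|a|}+\theta_N\le\theta_{|a|}+\theta_N+\sigma$, while $|x_a|^2\ge 1-\theta_{|a|}-\theta_N-\sigma$; dividing yields $|x_a|^2\ge\omega_{|a|}|x^a|^2$ with $\omega_{|a|}$ as defined. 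Some care is needed tracking the constants $\gamma^{-1}$ versus $\tfrac34$ and absorbing $\sigma$; the hypothesis $\theta_N>4\sigma$ and $\sigma<(1-\gamma^{-1})\rho_j$ are exactly what make the numerology close, so I would keep the bookkeeping explicit here. This is the step I expect to be the main obstacle — not conceptually deep, but requiring a careful chase through the chain of constants in \eqref{3.14} and \eqref{3.19}.

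Finally, the uniform gradient bound \eqref{3.26}: from \eqref{3.22}, $\nabla_{x_a}\Psi_a$ is a finite sum of products in which each factor is one of $\varphi_{a_j}$, $\psi_\sigma^2$, or one derivative thereof. Since $\psi'_\sigma(|z|^2>\rho)=\sigma^{-1}|\rho'|((\,|z|^2-\rho)/\sigma)\cdot 2z$ is supported where $|z|^2\le\rho$, hence where $|z|$ is bounded, and similarly for the $|x_a|^2$ factor, each such derivative is a bounded function of $x_a$ with a bound depending only on $\sigma$, $\rho_{|a|}$, $\theta_{|a|}$; and there are only finitely many cluster decompositions. Taking the maximum over the finitely many $a$ gives \eqref{3.26}. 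I would present this last part briefly, as it is routine once the chain rule is applied to the explicit formula \eqref{3.22}.
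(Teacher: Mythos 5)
Your overall plan is the right one --- the paper gives no separate proof of this proposition (it is stated as a summary of the construction \eq{3.18}--\eq{3.22} and of Lemma \ref{lemma31}), and your treatment of \eq{3.24}, of the smoothness and range statements, and of the gradient bound \eq{3.26} matches that construction. The problem sits in the one step you yourself flagged as delicate, the derivation of \eq{3.25}: the bookkeeping as you sketch it does not close.

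You route the support information through \eq{3.20}, i.e.\ through the enclosure $\mbox{supp}\,\varphi_a\subset{\tilde T}_a(\gamma^{-1}\rho_{|a|},\gamma\theta_{|a|})$, and then try to absorb the factors of $\gamma$. Both absorptions fail under the standing constraints. For the second half of \eq{3.25} you claim $\gamma\theta_{|a|}+\theta_N\le\theta_{|a|}+\theta_N+\sigma$, i.e.\ $(\gamma-1)\theta_{|a|}\le\sigma$; but \eq{3.19} imposes exactly the opposite inequality $\sigma<(\gamma-1)\theta_{|a|}$, so this step is false. For the first half you need $\gamma^{-1}\rho_{|a|}/(1+\theta_N)>\tfrac34\rho_{|a|}$, i.e.\ $\gamma(1+\theta_N)<4/3$, whereas \eq{3.14} only guarantees $\gamma(1+\theta_N)<2$ (e.g.\ $\gamma=3/2$ is admissible and breaks the chain). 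The repair is to bypass \eq{3.20} and use the exact width of the cutoffs in \eq{3.18}: by \eq{rhodef} and \eq{3.17}, $\psi_\sigma(t>\tau)$ vanishes for $t\le\tau-\sigma$, so on $\mbox{supp}\,\Psi_a\subset\mbox{supp}\,\varphi_a$ one has $|z_{ak}|^2\ge\rho_{|a|}-\sigma$ and $|x_a|^2\ge1-\theta_{|a|}-\sigma$. Combined with $1\le|x|^2\le1+\theta_N$ on $S_{\theta_N}$ this gives $|x^a|^2=|x|^2-|x_a|^2\le\theta_{|a|}+\theta_N+\sigma$ and $|x_a|^2\ge1-\theta_{|a|}-\theta_N-\sigma$, which is precisely $|x_a|^2\ge\omega_{|a|}|x^a|^2$ with the stated $\omega_{|a|}$ (the $\sigma$ in the definition of $\omega_{|a|}$ records the $\sigma$-width of $\psi_\sigma$, not a $\gamma$-dilation); and $|z_{ak}|^2/|x|^2\ge(\rho_{|a|}-\sigma)/(1+\theta_N)>\tfrac34\rho_{|a|}$ follows from $\sigma<\theta_N/4$ and $\theta_N\le\theta_{|a|}\le\rho_{|a|}/r_0<\rho_{|a|}/18$, since then $\sigma+\tfrac34\rho_{|a|}\theta_N<\theta_N<\tfrac14\rho_{|a|}$.
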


Let $\Psi_a(x)=\Psi_a(x_a)$ be extended to $X\setminus \{0\}$ as follows. For $(1+\theta_N)^\ell\le |x|^2< (1+\theta_N)^{\ell+1}$ $(\ell=0,\pm1,\pm2,\dots)$, define
\begin{equation}
\Psi_a(x)=\Psi_a((1+\theta_N)^{-\ell/2}x).\label{dyadical}
\end{equation}
This function satisfies
\begin{equation}
\begin{aligned}
&0\le \Psi_a(x)\le 1,\\
&\Psi_a(x)=\Psi_a(x_a) \ \ \mbox{for}\ \ x=(x_a,x^a)\in X\setminus\{0\}.
\end{aligned}\label{extendedPsi}
\end{equation}
Let $\lambda\in C_0^\infty(X)$ be a real-valued nonnegative function such that
\begin{equation}
\int_{X}\lambda(y)dy=1\label{lambda}
\end{equation}
and set
\begin{equation}
\Phi_a(x)=(\Psi_a*\lambda)(x)=\int_{X}\Psi_a(y)\lambda(x-y)dy.\label{Phia}
\end{equation}
Then
\begin{equation}
\begin{aligned}
&0\le \Phi_a(x)\le 1,\\
&\Phi_a(x)\in C^\infty(X),\\
&\sum_{2\le|a|\le N}\Phi_a(x)=1.
\end{aligned}
\end{equation}
Further by \eq{extendedPsi} we have for $x\in X$
\begin{equation}
\begin{aligned}
\Phi_a(x)
&=\int_{X}\Psi_a(y_a,y^a)\lambda(x_a-y_a,x^a-y^a)dy\\
&=\int_{X}\Psi_a(y_a)\lambda(x_a-y_a,x^a-y^a)dy\\
&=\int_{X_a}\Psi_a(y_a)\int_{X^a}\lambda(x_a-y_a,x^a-y^a)dy^ady_a\\
&=\int_{X_a}\Psi_a(y_a)\int_{X^a}\lambda(x_a-y_a,-y^a)dy^ady_a\\
&=\int_{X}\Psi_a(y_a)\lambda(x_a-y_a,-y^a)dy\\
%&=\int_{X}\Psi_a(y_a,y^a)\lambda(x_a-y_a,0-y^a)dy\\
&=\Phi_a(x_a,0).
\end{aligned}\label{simpleness}
\end{equation}
Thus $\Phi_a(x)$ depends only on $x_a$ and we can write $\Phi_a(x)=\Phi_a(x_a)$.
Summarizing we have proved the following theorem. We recall that $S=\{x|x\in X,|x|^2\ge1\}$.
\begin{theorem}\label{homogeneousextension}
Let $\Phi_a(x)\in C^\infty(X)$ for $2\le |a|\le N$ be the function defined by \eq{Phia}. Then $\Phi_a$ satisfies the following conditions.
\begin{equation}
\begin{aligned}
&0\le \Phi_a(x)\le 1,\\
&\Phi_a(x)\in C^\infty(X),\\
&\sum_{2\le|a|\le N}\Phi_a(x)=1\\
&\Phi_a(x)=\Phi_a(x_a)\quad(x=(x_a,x^a)\in X).
\end{aligned}\label{partition-of-unity-2}
\end{equation}
Further for any integer $L\ge1$ we can take the support of the function $\lambda\in C_0^\infty(X)$ so small that
for $1\le k\le k_a$ we have on {\rm supp}\hskip2pt$\Phi_a\cap(1+\theta_N)^{-L/2}S$
\begin{equation}
|z_{ak}|^2>\rho_{|a|}|x|^2/2,\quad |x_a|^2\ge\omega_{|a|}|x^a|^2/2.\label{3.25-2}
\end{equation}
In particular taking $\rho_j>\theta_j$ such that $\rho_j/\theta_j>2^{10}$, we have by \eq{fundamentalinequality}
\begin{align}
|x_{ij}|>\rho_{|a|}^{1/2}|x|/4\quad(\forall (i,j)\not\le a).\label{4.36}
\end{align}
\end{theorem}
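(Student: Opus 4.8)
The bulk of \eq{partition-of-unity-2} has in fact already been assembled in the discussion preceding the statement, so I would begin by recording this. We have $0\le\Psi_a\le1$ and $\sum_{2\le|a|\le N}\Psi_a=1$ on $S_{\theta_N}$ by Proposition \ref{partitionupropo}, and the scale-invariant extension \eq{dyadical} propagates both to all of $X\setminus\{0\}$; convolving with the nonnegative $\lambda\in C_0^\infty(X)$ of unit mass then preserves the two-sided bound, turns the identity $\sum_a\Psi_a=1$ (valid off the null set $\{0\}$) into $\sum_a\Phi_a=1$ on all of $X$, and produces $\Phi_a\in C^\infty$; finally the substitution carried out in \eq{simpleness} gives $\Phi_a(x)=\Phi_a(x_a,0)$, i.e. $\Phi_a$ depends on $x_a$ alone. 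It therefore remains to establish the support estimate \eq{3.25-2} and to deduce \eq{4.36}.

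For \eq{3.25-2} the plan is a perturbation argument off $\mathrm{supp}\,\Psi_a$. Both inequalities in \eq{3.25} are scale-invariant in $x$, the maps $x\mapsto z_{ak}(x)$ and $x\mapsto(x_a,x^a)$ being linear; hence, via \eq{dyadical}, they are valid not merely on $S_{\theta_N}$ but at every point of $\mathrm{supp}\,\Psi_a\setminus\{0\}$. Now fix $L\ge1$ and consider $x$ with $|x|^2\ge(1+\theta_N)^{-L}$. If $\Phi_a(x)\ne0$ there is $y\in\mathrm{supp}\,\lambda$ with $x-y\in\mathrm{supp}\,\Psi_a$, so $|z_{ak}(x-y)|^2>\tfrac34\rho_{|a|}|x-y|^2$ and $|x_a-y_a|^2\ge\omega_{|a|}|x^a-y^a|^2$. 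Since $z_{ak}$ and the splitting $x=(x_a,x^a)$ are linear, the quantities $|z_{ak}(x)|$, $|x_a|$, $|x^a|$, $|x|$ differ from their values at $x-y$ by at most a fixed multiple of $|y|$, and because $|x|$ is bounded below by $(1+\theta_N)^{-L/2}$ one may choose $\varepsilon=\varepsilon(L)>0$ so that $\mathrm{supp}\,\lambda\subset\{|y|<\varepsilon\}$ forces the weakened bounds $|z_{ak}(x)|^2>\tfrac12\rho_{|a|}|x|^2$ and $|x_a|^2\ge\tfrac12\omega_{|a|}|x^a|^2$. The one genuinely delicate point, which I would treat carefully, is the uniformity here: a single $\varepsilon$ must work for all (finitely many) cluster decompositions $a$, all $k$, and all admissible $x$ simultaneously — and this is exactly what the scale lower bound $|x|^2\ge(1+\theta_N)^{-L}$ together with the degree-zero homogeneity of the relevant inequalities buys.

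Finally, \eq{4.36}. From $|x_a|^2\ge\tfrac12\omega_{|a|}|x^a|^2$ and $|x|^2=|x_a|^2+|x^a|^2$ one gets $|x^a|^2\le(2/\omega_{|a|})|x|^2$; and under the standing choices \eq{3.14} supplemented by $\rho_j/\theta_j>2^{10}$ (so that $\theta_{|a|}+\theta_N+\sigma<3\rho_{|a|}/2^{10}$, using \eq{3.19} and the monotonicity of the $\theta_j$'s), the constant $\omega_{|a|}=(\theta_{|a|}+\theta_N+\sigma)^{-1}(1-\theta_{|a|}-\theta_N-\sigma)$ satisfies $\omega_{|a|}>300/\rho_{|a|}$. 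For $(i,j)\not\le a$ one picks clusters $C_\ell,C_m\in a$ with $i\in C_\ell$, $j\in C_m$ and the vector $z_{ak}$ connecting them; then \eq{fundamentalinequality} and \eq{3.25-2} yield
\[
|x_{ij}|\ge|z_{ak}|-2|x^a|>\Bigl(\tfrac{1}{\sqrt2}\,\rho_{|a|}^{1/2}-2\sqrt{2/\omega_{|a|}}\,\Bigr)|x|>\tfrac14\,\rho_{|a|}^{1/2}|x|,
\]
the last step because $\omega_{|a|}>300/\rho_{|a|}$ makes $2\sqrt{2/\omega_{|a|}}<(\tfrac{1}{\sqrt2}-\tfrac14)\rho_{|a|}^{1/2}$. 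Everything outside the uniformity issue of the second paragraph is routine bookkeeping with the fixed constants of \eq{3.14}.
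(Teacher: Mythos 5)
Your proposal is correct and follows essentially the same route as the paper, which offers no separate proof beyond the construction \eq{dyadical}--\eq{simpleness} itself: the partition-of-unity properties are inherited from Proposition \ref{partitionupropo} through the scale-invariant extension and the convolution, \eq{3.25-2} is the degree-zero-homogeneous estimate \eq{3.25} perturbed by shrinking $\mathrm{supp}\,\lambda$ against the lower bound $|x|^2\ge(1+\theta_N)^{-L}$, and \eq{4.36} is the numerical consequence of \eq{fundamentalinequality} under $\rho_j/\theta_j>2^{10}$. The only point you gloss is that the perturbation step for $|x_a|^2\ge\omega_{|a|}|x^a|^2/2$ needs a one-line case split when $|x^a|$ is itself of order $\varepsilon$ (there one instead uses $|x_a|^2=|x|^2-|x^a|^2\ge(1-c^2)|x|^2$ directly), which is routine and does not affect the conclusion.
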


\section{Classical scattering}\label{scatteringtrajectories}

We consider the behavior of the classical trajectory when the initial condition $(y,\eta)$ satisfies ${\lambda}>0$ and $H(y,\eta)=\lambda$.
In this case if $y$ can be taken such that $V(y)<\lambda$, then the corresponding $\eta$ satisfies $|\eta|^2>0$. In considering the $N$-particle case, it is further necessary to consider the condition $|\eta_a|>d>0$ for $d>0$ and a cluster decomposition $a$ with $2\le |a|\le N$. The investigation of classical trajectories developed in this section will be important when we consider the quantum mechanical case. As the case $t<0$ is treated similarly, we will consider the case $t>0$ only throughout the rest of the paper.

As we will use Hamilton's canonical equation of motion in this section, we will assume that $V^S_{ij}(x)=0$ $(x\in \R^\nu)$ for all pairs ${(i,j)}$ with $1\le i<j\le N$. So that our pair potential $V_{ij}(x)=V_{ij}^L(x)$ will be assumed to be continuously differentiable with respect to $x\in \R^\nu$ and satisfy the condition \eq{1.3}.

Let $\rho({t})\in C^\infty({{\mathbb R}})$ be a function defined by \eq{rhodef}.
We define for ${t} \in {{\mathbb R}}$ and ${\varepsilon} > 0$
\begin{equation}
\begin{aligned}
&\phi({t} < {\varepsilon}) = \rho (({t} - 2{\varepsilon} )/{\varepsilon} )=\left\{
\begin{array}{ll}
1,&\quad t\le {\varepsilon},\\
0,&\quad t\ge 2{\varepsilon},
\end{array}\right.\\
&\phi'({t} < {\varepsilon})=\frac{d}{d{t}}\phi({t} < {\varepsilon})
\left\{
\begin{array}{ll}
=0,&\quad t\in(-\infty,{\varepsilon}]\cup[2{\varepsilon},\infty),\\
\le 0,&\quad t\in[{\varepsilon},2{\varepsilon}].
\end{array}\right.
\end{aligned}\label{phi1def}
\end{equation}
Letting $0<d_1<d_2<\infty$, we define $\chi_a(\xi_a)\in C_0^\infty(X'_a)$ by \eq{chid}, i.e.
\begin{equation}
\begin{aligned}
&0\le \chi_a(\xi_a)\le 1,\\
&\chi_a(\xi_a)=
\left\{
\begin{array}{ll}
1,&\quad 0<d_1\le|\xi_a|\le d_2,\\
0,&\quad |\xi_a|\le d_1/2\mbox{ or }|\xi_a|\ge 2d_2.
\end{array}
\right.
\end{aligned}\label{chia}
\end{equation}
As in Theorem \ref{homogeneousextension}, let $\Phi_a(\xi)\in C^\infty(X')$ be the partition of unity defined from the function $\Psi_a(\xi)$ in \eq{3.22}.
\begin{equation}
\sum_{2\le|a|\le N}\Phi_a(\xi_a)=1 \mbox{  for  }\xi\in X'.\label{partition-of-unity}
\end{equation}
Let ${(i,j)}\not\le a$ such that $C_\ell, C_m\in a$ $(\ell\ne m)$, $i\in C_\ell,j\in C_m$, and let $z_{ak}$ $(1\le k\le k_a)$ connect two centers of mass of those clusters $C_\ell$ and $C_m$. 
By \eq{3.25-2}-\eq{4.36} there exists a suitable choice of constants
\begin{equation}
1/4\ge \theta_1>\rho_j>\theta_j>\rho_N>\theta_N>4\sigma>0, \quad \theta_{j-1}\ge \theta_j+\rho_j\ \ (j=2,3,\dots,N)\label{thetarho}
\end{equation}
such that for $\lambda(\xi) $ in \eq{lambda} with sufficiently small support one has on {\rm supp}\hskip2pt$(\chi_a^2\Phi_a)(\xi)$
\begin{equation}
|\zeta_{ak}|^2 >\rho_{|a|}|\xi|^2/2,\quad |\xi_{ij}|^2>\rho_{|a|}|\xi|^2/16.\label{pair}
\end{equation}
Here $\zeta_{ak}$ and $\xi_{ij}$ are the variables conjugate to $z_{ak}$ and $x_{ij}$, respectively.

We now introduce a localizing function $p_{a}^{\varepsilon}(t,x, \xi_a)$ in the extended phase space $(0,\infty)\times X\times (X'\setminus\{0\})$ for $|a|\ge2$,  ${\varepsilon}>0$, $(t,x,\xi)\in (0,\infty)\times X\times (X'\setminus\{0\})$
\begin{eqnarray}
p_{a}^{\varepsilon}(t,x, \xi_a)= \phi(|x/t - \xi_a|^2 < {\varepsilon})\chi_a(\xi_a)^2\Phi_a(\xi_a), \label{(4)}
\end{eqnarray}
where $\xi_a=(\xi_a,0)\in X\setminus\{0\}=\R^{\nu n}\setminus\{0\}$ and $|x/t - \xi_a|=|x/t - (\xi_a,0)|$.
We note that $p_{a}^{\varepsilon}(t,x, \xi_a)=p_a^{\varepsilon}(t,x,\xi)=\phi(|x/t - \xi_a|^2 < {\varepsilon})\chi_a(\xi_a)^2\Phi_a(\xi)$ for all $(x,\xi)\in X\times (X'\setminus\{0\})$ by the property $\Phi_a(\xi_a)=\Phi_a(\xi)$.

We will use an extended micro-localizing pseudodifferential operator $P_a^{\varepsilon}(t)=p_a^{\varepsilon}(t,x,D_a)$ with symbol $p_a^{\varepsilon}(t,x,\xi_a)=p_a^{\varepsilon}(t,x,\xi)$.
\begin{equation}
\begin{aligned}
P_a^{\varepsilon}(t)f(x)&=p_a^{\varepsilon}(t,x,D_a)f(x)\\
&=(2\pi)^{-\nu (|a|-1)/2}\int_{\R^{\nu (|a|-1)}}e^{ix_a\xi_a}p_a^{\varepsilon}(t,x,\xi_a)\hat f(\xi_a,x^a)d\xi_a\\
&=(2\pi)^{-\nu n/2}\int_{\R^{\nu n}}e^{ix\xi}p_a^{\varepsilon}(t,x,\xi_a)\hat f(\xi)d\xi,
\end{aligned}\label{Paepsilont}
\end{equation}
where $f\in \HH=L^2(X)$, and $\mathcal{F}_af(\xi_a,x^a)=\hat f(\xi_a,x^a)$ and $\mathcal{F}f(\xi)=\hat f(\xi)=\hat f(\xi_a,\xi^a)$ are partial and full Fourier transform, respectively.
\begin{equation}
\begin{aligned}
&\mathcal{F}_af(\xi_a,x^a)=\hat f(\xi_a,x^a)=(2\pi)^{-\nu (|a|-1)/2}\int_{\R^{\nu (|a|-1)}}e^{-ix_a\xi_a}f(x_a,x^a)dx_a,\\
&\mathcal{F}f(\xi)=\hat f(\xi_a,\xi^a)=(2\pi)^{-\nu n/2}\int_{\R^{\nu n}}e^{-ix\xi}f(x)dx.
\end{aligned}\label{partialfullFourier}
\end{equation}

Let $(x(t),\xi(t))$ be a solution of Hamilton equation with initial condition $(y,\eta)$. 
\begin{equation}
%\left\{
\begin{aligned}
&x(t,s)= y+\int_s^t \partial_\xi H(x(\tau,s),\xi(\tau,s))d\tau= y+\int_s^t \xi(\tau,s)d\tau,\\
&\xi(t,s)= \eta-\int_s^t \partial_x H(x(\tau,s),\xi(\tau,s))d\tau=\eta-\int_s^t \partial_x V(x(\tau,s))d\tau.\label{Hamilton2}
\end{aligned}
\end{equation}
We calculate the following quantity.
\begin{eqnarray*}
\frac{d}{dt}(p_a^{\varepsilon}(t,x(t),\xi(t)))
&\hskip-8pt=\hskip-8pt&(\partial_t p_a^{\varepsilon}+\{H, p_a^{\varepsilon}\})(t,x(t),\xi(t))\\
&\hskip-8pt=\hskip-8pt&\partial_tp_a^{\varepsilon}(t,x(t),\xi(t))+\partial_xp_a^{\varepsilon}(t,x(t),\xi(t))\cdot\xi(t)\\
&&\hskip0pt-\partial_\xi p_a^{\varepsilon}(t,x(t),\xi(t))\cdot\partial_x V(x(t)),
\end{eqnarray*}
where $\{F,G\}(x,\xi)=(\partial_\xi F\cdot\partial_x G-\partial_x F \cdot\partial_\xi G)(x,\xi)$ is a Poisson bracket. 
We set
\begin{equation}
\begin{aligned}
u_a^{\varepsilon}(t,x,\xi)&=t(\partial_tp_a^{\varepsilon}(t,x,\xi)+\partial_x p_a^{\varepsilon}(t,x,\xi)\cdot\xi)\\
&=-2\phi'(|x/t-\xi_a|^2<{\varepsilon})\{|x/t-\xi_a|^2-(x^a\cdot\xi^a)/t\} \chi_a(\xi_a)^2\Phi_a(\xi).
\end{aligned}\label{utxxi}
\end{equation}
Then calculating directly we have
\begin{equation*}
(\partial_tp_a^{\varepsilon}+\{H, p_a^{\varepsilon}\})(t,x(t),\xi(t))=\frac{1}{t}u_a^{\varepsilon}(t,x(t),\xi(t))-\partial_\xi p_a^{\varepsilon}(t,x(t),\xi(t))\cdot\partial_xV(x(t)).
\end{equation*}
Here we take $\theta_{|a|}>\theta_N>0$ and $\sigma>0$ in \eq{thetarho} and \eq{3.19} so small that
\begin{equation}
\begin{aligned}
\sqrt{{\varepsilon}}&> 2\sqrt{2}d_2\omega_{|a|}^{-\frac{1}{2}}\\
&=2\sqrt{2}d_2(\theta_{|a|}+\theta_N+\sigma)^{\frac{1}{2}}(1-\theta_{|a|}-\theta_N-\sigma)^{-{\frac{1}{2}}}.
\end{aligned}\label{condition-1}
\end{equation}
Then by \eq{3.25-2} we have $|\xi^a|\le \omega_{|a|}^{-\frac{1}{2}}|\xi_a|\le 2d_2\omega_{|a|}^{-\frac{1}{2}}$ for $\xi\in\mbox{supp }\chi_a(\xi_a)^2\Phi_a(\xi)$, and $|x^a|/t\le \sqrt{2{\varepsilon}}$ for $x\in$ supp $\phi'(|x/t-\xi_a|^2<{\varepsilon})$. 
 Hence
 $|x/t-\xi_a|^2-(x^a\cdot\xi^a)/t\ge{\varepsilon}-|(x^a\cdot\xi^a)/t|
 \ge{\varepsilon}- 2\sqrt{2{\varepsilon}}d_2 \omega_{|a|}^{-\frac{1}{2}}>0$ for $(x,\xi)\in$ supp $u_a^{\varepsilon}(t,x,\xi)$ so that we have
\begin{equation}
u_a^{\varepsilon}(t,x,\xi)\ge0\label{upositive}
\end{equation}
for all $t>0$ and $(x,\xi)\in X\times (X'\setminus\{0\})$.
We set
\begin{equation}
\begin{aligned}
&\hskip-5ptq_a^{\varepsilon}(t,x,\xi)\\
&=u_a^{\varepsilon}(t,x,\xi)^{\frac{1}{2}}\\
&=(-2\phi'(|x/t-\xi_a|^2<{\varepsilon}))^{1/2}(|x/t-\xi_a|^2-(x^a\cdot\xi^a)/t)^{1/2}\chi_a(\xi_a)\Phi_a(\xi_a)^{1/2}\\
&\in C^\infty(X\times (X'\setminus\{0\})).
\end{aligned}\label{qepsilon}
\end{equation}
Then
\begin{equation}
\begin{aligned}
\partial_tp_a^{\varepsilon}(t,x,\xi)+\partial_x p_a^{\varepsilon}(t,x,\xi)\cdot\xi=\frac{1}{t}q_a^{\varepsilon}(t,x,\xi)^2
\end{aligned}\label{squareqavarepsilontxxi}
\end{equation}
and
\begin{equation}
\begin{aligned}
&(\partial_t p_a^{\varepsilon}(t,x,D)+i[H_0,p_a^{\varepsilon}(t,x,D)])f(x)\\
&=(2\pi)^{-\nu n/2}\hskip-4pt\int\hskip-2pt e^{ix\xi}\{\partial_tp_a^{\varepsilon}(t,x,\xi)+\partial_xp_a^{\varepsilon}(t,x,\xi)\cdot\xi-\frac {i}{2} \Delta_x p_a^{\varepsilon}(t,x,\xi)\}{\hat f}(\xi)d\xi\\
&=(2\pi)^{-\nu n/2}\hskip-4pt\int\hskip-2pt e^{ix\xi}\biggl\{\frac{1}{t}q_a^{\varepsilon}(t,x,\xi)^2-\frac {i}{2} \Delta_x p_a^{\varepsilon}(t,x,\xi)\biggr\}{\hat f}(\xi)d\xi.
\end{aligned}\label{A.4-2}
\end{equation}
The second term in the integrand satisfies
\begin{equation}
|\partial_x^\alpha\partial_\xi^\beta\Delta_x p_a^{\varepsilon}(t,x,\xi)|\le C_{\alpha\beta}t^{-2-|\alpha|}\label{estimate}
\end{equation}
for all $t>1$, $(x,\xi)\in X\times X'$.
We define $Q_a^{\varepsilon}(t)$ as the pseudodifferential operator with symbol $q_a^{\varepsilon}(t,y,\xi)$.
\begin{equation}
Q_a^{\varepsilon}(t)f(x)=(2\pi)^{-\nu n}\iint e^{i(x-y)\xi}q_a^{\varepsilon}(t,y,\xi)f(y)dyd\xi.\label{Qvarepsilon}
\end{equation}
Then similarly to the proof of Lemma 4.2 in \cite{[K1]} or Lemma 4.1 in \cite{KS} with using Calder\'on-Vaillancourt theorem \cite{CV2}, one obtains a proposition which will play an important role in proving the existence of wave operators and the related limits.
\begin{proposition}\label{positiveHeizenbergDerivative}
There exist norm continuous bounded operators $Q_a^{\varepsilon}(t)$ and $R_a^{\varepsilon}(t)$ for $t>1$ and sufficiently small ${\varepsilon}>0$ such that the following holds for some constant $C_{\varepsilon}>0$.
\begin{equation}
\begin{aligned}
&\partial_t P_a^{\varepsilon}(t)+i[H_0,P_a^{\varepsilon}(t)]
=\frac{1}{t}Q_a^{\varepsilon}(t)^*Q_a^{\varepsilon}(t)+R_a^{\varepsilon}(t),\\
&Q_a^{\varepsilon}(t)^*Q_a^{\varepsilon}(t)\ge0,\quad \Vert R_a^{\varepsilon}(t)\Vert\le C_{\varepsilon} t^{-2}.
\end{aligned}\label{positiveHeisenbergDerivative}
\end{equation}
\end{proposition}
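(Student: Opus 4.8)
The plan is to read the conclusion off the exact symbol identity \eq{A.4-2} and then tidy up its right-hand side with the standard symbolic calculus, keeping close track of powers of $t$. Because $H_0=\frac12|D|^2$ is a quadratic polynomial in $D$, the symbol expansion of the commutator $i[H_0,p_a^{\varepsilon}(t,x,D)]$ terminates, so \eq{A.4-2} is an \emph{exact} operator identity: the left-hand side of \eq{positiveHeisenbergDerivative} is the left-quantized pseudodifferential operator (in the sense of \eq{Paepsilont}) with symbol
\[
\frac1t\,q_a^{\varepsilon}(t,x,\xi)^2-\frac i2\,\Delta_x p_a^{\varepsilon}(t,x,\xi).
\]
Hence it will suffice to show that this operator differs from $\frac1t Q_a^{\varepsilon}(t)^*Q_a^{\varepsilon}(t)$ by a norm-continuous bounded operator $R_a^{\varepsilon}(t)$ with $\Vert R_a^{\varepsilon}(t)\Vert\le C_{\varepsilon}t^{-2}$; the positivity $Q_a^{\varepsilon}(t)^*Q_a^{\varepsilon}(t)\ge0$ is then automatic.

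First I would dispose of the $\Delta_x$-term: by \eq{estimate} the symbol $-\frac i2\Delta_x p_a^{\varepsilon}(t,x,\xi)$ and all of its $x,\xi$-derivatives are $O(t^{-2})$ uniformly, and since it is compactly supported in $\xi$ through $\chi_a(\xi_a)^2\Phi_a(\xi)$, the Calder\'on--Vaillancourt theorem \cite{CV2} bounds the corresponding operator in norm by a finite number of those seminorms, hence by $C_{\varepsilon}t^{-2}$. For the main term, I would recall from \eq{qepsilon} that $q_a^{\varepsilon}(t,x,\xi)$ is real-valued, smooth, compactly supported in $\xi$, and has all derivatives bounded --- here one uses \eq{upositive} and \eq{condition-1}, which keep the factor $|x/t-\xi_a|^2-(x^a\cdot\xi^a)/t$ bounded away from $0$ and from $\infty$ on the relevant set --- and, crucially, that $q_a^{\varepsilon}$ depends on $x$ only through $x/t$ and $x^a/t$, so that every $\partial_x$ applied to it gains a factor $t^{-1}$ while $\partial_\xi$-derivatives remain uniformly bounded. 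Expanding $Q_a^{\varepsilon}(t)^*Q_a^{\varepsilon}(t)$ by the adjoint and composition calculus, its symbol is $q_a^{\varepsilon}(t,x,\xi)^2$ plus a remainder each of whose terms (and the integral tail of the expansion, handled again by Calder\'on--Vaillancourt, now for double symbols) carries at least one factor $\partial_x q_a^{\varepsilon}=O(t^{-1})$ and stays compactly supported in $\xi$; such a remainder defines an operator of norm $O(t^{-1})$. Multiplying by the prefactor $\frac1t$ then produces an $O(t^{-2})$ operator, and collecting it with $-\frac i2\Delta_x p_a^{\varepsilon}(t,x,D)$ gives the desired $R_a^{\varepsilon}(t)$. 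Norm continuity and boundedness of $Q_a^{\varepsilon}(t)$ and $R_a^{\varepsilon}(t)$ for $t>1$ follow from the smooth $t$-dependence of the symbols together with one more application of \cite{CV2}.

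The main obstacle, as always in such statements, is the bookkeeping of $t$-powers in this $t$-dependent calculus: one must fix a symbol class adapted to the scaling --- symbols $a(t,x,\xi)$ supported in a fixed bounded region in $\xi$ with $|\partial_x^\alpha\partial_\xi^\beta a(t,x,\xi)|\le C_{\alpha\beta}t^{-|\alpha|}$ for $t>1$ --- and check that composition and adjunction preserve it, so that the composition remainder genuinely improves by one power of $t^{-1}$ rather than remaining $O(1)$. This is exactly the computation carried out for $N=2$ in Lemma 4.1 of \cite{KS}, and in a related configuration-space setting in Lemma 4.2 of \cite{[K1]}; the $N$-body passage is only notational, with the extra internal variables $x^a,\xi^a$ entering $p_a^{\varepsilon}$ and $q_a^{\varepsilon}$ in the same harmless way, confined to the compact support of $\chi_a(\xi_a)^2\Phi_a(\xi)$ and to the region $|x^a|/t\le\sqrt{2{\varepsilon}}$ on which \eq{condition-1} operates.
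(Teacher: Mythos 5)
Your proposal is correct and follows essentially the same route as the paper: start from the exact identity \eq{A.4-2}, absorb the $-\frac{i}{2}\Delta_x p_a^{\varepsilon}$ term into $R_a^{\varepsilon}(t)$ via \eq{estimate}, and show that $Q_a^{\varepsilon}(t)^*Q_a^{\varepsilon}(t)$ equals the left quantization of $q_a^{\varepsilon}(t,x,\xi)^2$ up to a composition remainder carrying one factor $\partial_x q_a^{\varepsilon}=O(t^{-1})$, which after the prefactor $1/t$ and Calder\'on--Vaillancourt gives the $O(t^{-2})$ bound. The paper implements the composition step concretely by a first-order Taylor expansion of the double symbol $q_a^{\varepsilon}(t,x,\xi)q_a^{\varepsilon}(t,y,\xi)$ in $y$ about $x$ followed by integration by parts, which is exactly the bookkeeping you describe.
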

\begin{proof}
By a direct calculation we obtain from \eq{Qvarepsilon}
\begin{align}
Q_a^{\varepsilon}(t)^*Q_a^{\varepsilon}(t)f(x)=(2\pi)^{-\nu n}\iint e^{i(x-y)\xi}q_a^{\varepsilon}(t,x,\xi)q_a^{\varepsilon}(t,y,\xi)f(y)dyd\xi.\label{q*q}
\end{align}
We note
\begin{align}
q_a^{\varepsilon}(t,y,\xi)=q_a^{\varepsilon}(t,x,\xi)+(y-x)\cdot
\tilde\nabla_xq_a^{\varepsilon}(t,x,\xi,y),
\end{align}
where
\begin{align}
\tilde\nabla_xq_a^{\varepsilon}(t,x,\xi,y)=
\int_0^1 \nabla_xq_a^{\varepsilon}(t,x+\theta(y-x),\xi)d\theta.
\end{align}
By integration by parts
\begin{equation}
\begin{aligned}
Q_a^{\varepsilon}(t)^*Q_a^{\varepsilon}(t)f(x)&=(2\pi)^{-\nu n}\iint e^{i(x-y)\xi}q_a^{\varepsilon}(t,x,\xi)^2f(y)dyd\xi\\
&+(2\pi)^{-\nu n}\iint e^{i(x-y)\xi}D_\xi\tilde\nabla_xq_a^{\varepsilon}(t,x,\xi,y)f(y)dyd\xi.
\end{aligned}\label{Q^*Q}
\end{equation}
From \eq{rhodef}, \eq{phi1def} and the definition \eq{qepsilon} of $q_a^{\varepsilon}(t,x,\xi)$ we obtain
\begin{align}
|\partial_x^\alpha\partial_\xi^\beta\partial_y^\gamma D_\xi\tilde\nabla_xq_a^{\varepsilon}(t,x,\xi,y)|\le C_{\alpha\beta\gamma}t^{-1-|\alpha|-|\gamma|}
\end{align}
for all $t>1$, $(x,\xi,y)\in \R^{-3\nu n}$ and all multi-indices $\alpha,\beta,\gamma$. 
 From this, \eq{A.4-2}, \eq{estimate}, \eq{Q^*Q} and Calder\'on-Vaillancourt theorem follows \eq{positiveHeisenbergDerivative}.
\end{proof}

Returning to classical context, we prepare a lemma.
\begin{lemma}\label{lemmaonzakxa}
Let $(x,\xi)\in \mbox{\rm supp }\phi(|x/t-\xi_a|^2<{\varepsilon})\chi_a(\xi_a)^2\Phi_a(\xi)$.
%$\cup\mbox{\rm supp }(\phi'\chi_a^2\Phi_a)\cup\mbox{\rm supp }(\phi\partial_{\xi_a}(\chi_a^2\Phi_a))$. 
Let ${(i,j)}\not\le a$ such that $C_\ell, C_m\in a$ $(\ell\ne m)$, $i\in C_\ell,j\in C_m$, and let $z_{ak}$ $(1\le k\le k_a)$ connect those two clusters $C_\ell$ and $C_m$. Taking ${\varepsilon}>0$ sufficiently small, we set
\begin{equation}
\kappa_{|a|}=2^{-4}(2^{-1}\rho_{|a|})^{1/2}d_1\ge\sqrt{2{\varepsilon}}>0.\label{condition-2}
\end{equation}
Then we have
\begin{equation}
|z_{ak}|/t \ge 7\kappa_{|a|},\quad |x_{ij}|/t \ge5\kappa_{|a|}.\label{xbetazakgekappaa}
\end{equation}
Further we have
\begin{equation}
|x^a|/t\le\kappa_{|a|}.\label{xakappaa}
\end{equation}
\end{lemma}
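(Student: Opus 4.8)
The plan is to read off both estimates from the three cutoff factors that make up $p_a^{\varepsilon}$, using the near-diagonal localization $\phi(|x/t-\xi_a|^2<{\varepsilon})$ to convert the momentum information carried by $\chi_a(\xi_a)^2\Phi_a(\xi)$ into the claimed position bounds, and then closing the argument with the numerology fixed in \eq{condition-2} and \eq{pair}.

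First I would record the elementary support relations. By \eq{phi1def}, on the support of $\phi(|x/t-\xi_a|^2<{\varepsilon})$ one has $|x/t-\xi_a|^2\le 2{\varepsilon}$, where $\xi_a$ stands for $(\xi_a,0)\in X$; writing $x=(x_a,x^a)$ and using the orthogonal splitting \eq{3.2} this gives simultaneously $|x^a|/t\le\sqrt{2{\varepsilon}}$ and $|x_a/t-\xi_a|\le\sqrt{2{\varepsilon}}$. The first of these, together with $\kappa_{|a|}\ge\sqrt{2{\varepsilon}}$ from \eq{condition-2}, is exactly \eq{xakappaa}. For the momentum side, \eq{chia} forces $|\xi_a|\ge d_1/2$ on the support of $\chi_a$, hence $|\xi|\ge|\xi_a|\ge d_1/2$, and then \eq{pair} gives on the support of $(\chi_a^2\Phi_a)(\xi)$ that $|\zeta_{ak}|^2>\rho_{|a|}|\xi|^2/2\ge\rho_{|a|}d_1^2/8$, so $|\zeta_{ak}|>(\rho_{|a|}/8)^{1/2}d_1=8\kappa_{|a|}$ by the explicit form of $\kappa_{|a|}$ in \eq{condition-2}.

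The core step is to pass from this lower bound on the intercluster momentum component $\zeta_{ak}$ to a lower bound for $|z_{ak}|/t$. Merging the two clusters $C_\ell,C_m$ connected by $z_{ak}$ produces a coarser cluster decomposition whose internal configuration space differs from that of $a$ precisely by the orthogonal summand spanned by the $\nu$-vector $z_{ak}$ — this is the relation of the type \eq{3.1}, now read inside $X_a$ — so the linear map $x_a\mapsto z_{ak}$, and its momentum dual $\xi_a\mapsto\zeta_{ak}$, are orthogonal projections in the Euclidean metric of \eq{conventionalHamiltonian}, and in particular have operator norm $\le 1$. Hence $|z_{ak}/t-\zeta_{ak}|\le|x_a/t-\xi_a|\le\sqrt{2{\varepsilon}}\le\kappa_{|a|}$, which yields $|z_{ak}|/t\ge|\zeta_{ak}|-\kappa_{|a|}>7\kappa_{|a|}$, the first half of \eq{xbetazakgekappaa}. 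The second half then follows immediately from the fundamental inequality \eq{fundamentalinequality}, $|x_{ij}|\ge|z_{ak}|-2|x^a|$, combined with \eq{xakappaa}: $|x_{ij}|/t\ge 7\kappa_{|a|}-2\kappa_{|a|}=5\kappa_{|a|}$.

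I expect the only step that is not pure bookkeeping — the main obstacle — to be the norm-$\le 1$ claim for $x_a\mapsto z_{ak}$ in the rescaled coordinates of \eq{conventionalHamiltonian}; this is elementary linear algebra, since the clustered Jacobi construction is built from orthogonal transformations, but it is the one place where one must unwind the coordinate definitions rather than merely quote a support relation. Everything else is an immediate consequence of the constants chosen in \eq{condition-2}.
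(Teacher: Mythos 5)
Your proof is correct and follows essentially the same route as the paper's: extract $|x^a|/t\le\sqrt{2\varepsilon}$ and $|x_a/t-\xi_a|\le\sqrt{2\varepsilon}$ from the support of $\phi$, use \eq{pair} together with $|\xi_a|\ge d_1/2$ to get $|\zeta_{ak}|>8\kappa_{|a|}$, transfer this to $|z_{ak}|/t$ via $|z_{ak}/t-\zeta_{ak}|\le|x_a/t-\xi_a|$, and finish with \eq{fundamentalinequality}. The only difference is that you explicitly justify the contraction $x_a\mapsto z_{ak}$ as an orthogonal projection (via the splitting of type \eq{3.1}), a step the paper uses without comment.
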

\begin{proof} It follows from $(x,\xi)\in \mbox{\rm supp }(\phi(|x/t-\xi_a|^2<{\varepsilon})\chi_a(\xi_a)^2\Phi_a(\xi))$ that $|x/t-\xi_{a}|=(|x_{a}/t-\xi_{a}|^2+(|x^a|/t)^2)^{1/2}\le\sqrt{2{\varepsilon}}$. Thus we have $|x^a|/t\le \sqrt{2{\varepsilon}}\le \kappa_{|a|}$, which is \eq{xakappaa}, and $|z_{ak}/t-\zeta_{ak}|\le|x_{a}/t-\xi_{a}|\le\sqrt{2{\varepsilon}}$. Using \eq{3.25-2}, we then have $|z_{ak}|/t\ge|\zeta_{ak}|-\sqrt{2{\varepsilon}}\ge (2^{-1}\rho_{|a|})^{1/2}|\xi_a|-\sqrt{2{\varepsilon}}$. As support condition implies $|\xi_a|\ge 2^{-1}d_1$, we obtain $|z_{ak}|/t\ge 2^{-1}(2^{-1}\rho_{|a|})^{1/2}d_1-\sqrt{2{\varepsilon}}\ge(2^3-1)\kappa_{|a|}$. From these and \eq{fundamentalinequality} follows $|x_{ij}|/t\ge (2^3-3)\kappa_{|a|}$, which gives \eq{xbetazakgekappaa}.
\end{proof}
We note that both conditions \eq{condition-1} and \eq{condition-2} are simultaneously satisfied if we take $\rho_{|a|}>\theta_{|a|}>\sigma>0$ and ${\varepsilon}>0$ so small that
\begin{equation}
\rho_{|a|}\ge 2^{10}d_1^{-2}{\varepsilon} > 2^{14}d_2^2d_1^{-2}\omega_{|a|}^{-1}.\label{compatibilitycondition}
\end{equation}
In the rest of the paper we always assume that the constants $\rho_j>\theta_j>\sigma>0$ and ${\varepsilon}>0$ are taken to satisfy this condition \eq{compatibilitycondition} for given $d_2>d_1>0$ in each context.

For $t\ge s>1$
\begin{equation}
\begin{aligned}
&p_a^{\varepsilon}(t,x(t),\xi(t))-p_a^{\varepsilon}(s,y,\eta)\label{p(t)-p(s)}\\
&=\int_s^t\frac{d}{d\tau}p_a^{\varepsilon}(\tau,x(\tau),\xi(\tau))d\tau
\\
&=\int_s^t\left|\frac{1}{\sqrt{\tau}}q_a^{\varepsilon}(\tau,x(\tau),\xi(\tau))\right|^2d\tau-\int_s^t\partial_\xi p_a^{\varepsilon}(\tau,x(\tau),\xi(\tau))\cdot\partial_xV(x(\tau))d\tau.
\end{aligned}
\end{equation}
Here
\begin{equation}
\begin{aligned}
&\partial_\xi p_a^{\varepsilon}(t,x,\xi)\cdot\partial_xV(x)\\
&=-2\phi'(|x/t-\xi_a|^2<{\varepsilon})\chi_a(\xi_a)^2\Phi_a(\xi_a)(x_a/t-\xi_a)\cdot\partial_{x_a}V(x)\\
&\hskip20pt+2\phi(|x/t-\xi_a|^2<{\varepsilon})\partial_{\xi_a}(\chi_a(\xi_a)^2\Phi_a(\xi_a))\cdot\partial_{x_a}V(x)\\
&=
-2\phi'(|x/t-\xi_a|^2<{\varepsilon})\chi_a(\xi_a)^2\Phi_a(\xi_a)(x_a/t-\xi_a)\cdot\partial_{x_a}I_a(x_a,x^a)\\
&\hskip20pt+2\phi(|x/t-\xi_a|^2<{\varepsilon})\partial_{\xi_a}(\chi_a(\xi_a)^2\Phi_a(\xi_a))\cdot\partial_{x_a}I_a(x_a,x^a).
\end{aligned}\label{drhodV}
\end{equation}
As $I_a(x_a,x^a)=\sum_{{(i,j)}\not\le a}V_{ij}(x_{ij})$, we have from \eq{drhodV} and Lemma \ref{lemmaonzakxa} that for all $t>1$, $(x,\xi)\in X\times (X'\setminus\{0\})$ and a sufficiently small ${\varepsilon}>0$
\begin{eqnarray}
|\partial_\xi p_a^{\varepsilon}(t,x,\xi)\cdot\partial_xV(x)|\le C t^{-1-\delta},\label{decay-potential-commutator}
\end{eqnarray}
where $C>0$ is a constant.
By 
\eq{p(t)-p(s)} and
\eq{decay-potential-commutator}, we have
\begin{equation}
\begin{aligned}
0\le\int_s^t\left|\frac{1}{\sqrt{\tau}}q_a^{\varepsilon}(\tau,x(\tau),\xi(\tau))\right|^2d\tau
\le 2+C\delta^{-1}s^{-\delta}\le 2+C\delta^{-1}<\infty\label{u-integrable}
\end{aligned}
\end{equation}
for all $t>s>1$. This and \eq{p(t)-p(s)} imply that
\begin{equation}
1\ge p_a^{\varepsilon}(t,x(t),\xi(t))\ge p_a^{\varepsilon}(s,x(s),\xi(s))-C\delta^{-1}s^{-\delta}
\end{equation} 
for $t>s>1$. Therefore if $(y,\eta)\in X\times (X'\setminus\{0\})$ and $s>1$ satisfy $2M=p_a^{\varepsilon}(s,y,\eta)>2C\delta^{-1}s^{-\delta}>0$, then for all $t>s>1$
\begin{equation}
p_a^{\varepsilon}(t,x(t,s,y,\eta),\xi(t,s,y,\eta))\ge M>0.
\end{equation}
Summarizing we have proved the following.
\begin{theorem} Let $|a|\ge2$, $1/4>\rho_{|a|}>\theta_{|a|}>0$, $\sigma>0$, ${\varepsilon}>0$, and $0<d_1<d_2<\infty$ satisfy \eq{compatibilitycondition}.
Assume that $(y,\eta)\in X\times (X'\setminus\{0\})$ satisfies $|y/s-\eta_a|^2<{\varepsilon}$, $d_1\le|\eta_a|\le d_2$ and $\eta\in \mbox{{\rm supp} }\Phi_a$.  Let $t>s>1$ satisfy $C\delta^{-1}s^{-\delta}<1/2$ for the constant $C$ in \eq{decay-potential-commutator}. Then
\begin{equation}
1\ge p_a^{\varepsilon}(t,x(t,s,y,\eta),\xi(t,s,y,\eta))\ge 1/2>0.
\end{equation}
In particular $(x(t),\xi(t))\in \mbox{\rm supp }p_a^{\varepsilon}(t,\cdot,\cdot)$. Namely $|x(t)/t-\xi_a(t)|^2<2{\varepsilon}$, $d_1/2\le |\xi_a(t)|\le 2d_2$ and $\xi(t)\in \mbox{\rm supp }\Phi_a$. Hence letting $(i,j)\not\le a$ and $z_{ak}$ $(1\le k\le k_a)$ such that $C_\ell,C_m\in a$ $(\ell\ne m)$, $i\in C_\ell$, $j\in C_m$, and $z_{ak}$ connects $C_\ell$ and $C_m$, we have by Lemma \ref{lemmaonzakxa}, $|x_{ij}(t)|\ge5\kappa_{|a|}t$, $|z_{ak}(t)|\ge7\kappa_{|a|}t$, and $|x^a(t)|\le\kappa_{|a|}t$.
\end{theorem}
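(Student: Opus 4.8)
The plan is to assemble the classical estimates established in the paragraphs just above; the theorem merely repackages them. First I would evaluate the localizing symbol at the initial point: by \eq{(4)}, \eq{phi1def} and \eq{chia} the hypotheses $|y/s-\eta_a|^2<\varepsilon$, $d_1\le|\eta_a|\le d_2$, and $\eta$ lying on the plateau of $\Phi_a$ make the three factors of $p_a^{\varepsilon}$ at $(s,y,\eta)$ equal to $1$, so $p_a^{\varepsilon}(s,y,\eta)=1$.

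Then I would propagate this along the Hamiltonian flow $(x(\tau),\xi(\tau))=(x(\tau,s,y,\eta),\xi(\tau,s,y,\eta))$, which is globally defined since $\partial_xV$ is globally Lipschitz by \eq{1.3}. The identity \eq{p(t)-p(s)} writes $p_a^{\varepsilon}(t,x(t),\xi(t))$ as $p_a^{\varepsilon}(s,y,\eta)$ plus the nonnegative term $\int_s^t|\tau^{-1/2}q_a^{\varepsilon}(\tau,x(\tau),\xi(\tau))|^2\,d\tau$ minus $\int_s^t\partial_\xi p_a^{\varepsilon}(\tau,x(\tau),\xi(\tau))\cdot\partial_xV(x(\tau))\,d\tau$; the last integral is bounded in modulus by $C\int_s^\infty\tau^{-1-\delta}\,d\tau=C\delta^{-1}s^{-\delta}$ by \eq{decay-potential-commutator} (this is precisely the estimate underlying \eq{u-integrable}). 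Together with $p_a^{\varepsilon}(s,y,\eta)=1$ and the standing hypothesis $C\delta^{-1}s^{-\delta}<1/2$ this gives $p_a^{\varepsilon}(t,x(t),\xi(t))\ge 1-C\delta^{-1}s^{-\delta}>1/2$, and the upper bound $p_a^{\varepsilon}(t,x(t),\xi(t))\le 1$ is immediate from $0\le\phi,\chi_a^2,\Phi_a\le 1$; this is the asserted two-sided bound.

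Finally, $p_a^{\varepsilon}(t,x(t),\xi(t))>0$ forces $(x(t),\xi(t))\in\mbox{\rm supp}\,p_a^{\varepsilon}(t,\cdot,\cdot)$, and reading off the support conditions of $\phi$, $\chi_a^2$, $\Phi_a$ from \eq{phi1def} and \eq{chia} gives $|x(t)/t-\xi_a(t)|^2<2\varepsilon$, $d_1/2\le|\xi_a(t)|\le 2d_2$, and $\xi(t)\in\mbox{\rm supp}\,\Phi_a$. These are exactly the membership hypotheses of Lemma \ref{lemmaonzakxa} with $(x,\xi,t)$ replaced by $(x(t),\xi(t),t)$, and since the standing condition \eq{compatibilitycondition} forces \eq{condition-2}, Lemma \ref{lemmaonzakxa} applies and delivers $|z_{ak}(t)|/t\ge 7\kappa_{|a|}$ and $|x_{ij}(t)|/t\ge 5\kappa_{|a|}$ from \eq{xbetazakgekappaa}, together with $|x^a(t)|/t\le\kappa_{|a|}$ from \eq{xakappaa}, for the relevant $(i,j)\not\le a$ and the connecting vector $z_{ak}$. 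I expect no serious obstacle here; the only care needed is constant bookkeeping — checking that \eq{compatibilitycondition} is exactly strong enough to license the use of Lemma \ref{lemmaonzakxa} at the later time $t$, and remembering that $\mbox{\rm supp}\,\phi(|x/t-\xi_a|^2<\varepsilon)$ is the set $\{|x/t-\xi_a|^2\le 2\varepsilon\}$ (because of the $2\varepsilon$-shift in \eq{phi1def}), so that the conclusion $|x(t)/t-\xi_a(t)|^2<2\varepsilon$ lands exactly inside the support hypothesis of Lemma \ref{lemmaonzakxa}, with no re-derivation at a doubled $\varepsilon$ needed.
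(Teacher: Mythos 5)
Your proposal is correct and follows essentially the same route as the paper: evaluate $p_a^{\varepsilon}$ at the initial data, use \eq{p(t)-p(s)} together with the nonnegativity of the $|q_a^{\varepsilon}|^2$ term and the bound \eq{decay-potential-commutator} to get $p_a^{\varepsilon}(t,x(t),\xi(t))\ge 1-C\delta^{-1}s^{-\delta}>1/2$, and then feed the resulting support information into Lemma \ref{lemmaonzakxa}. The only caveat is that the literal hypothesis $\eta\in\mbox{supp }\Phi_a$ gives $\Phi_a(\eta)\le1$ rather than $\Phi_a(\eta)=1$; like the paper, you implicitly read it as the latter (``the plateau of $\Phi_a$''), which is what the lower bound $1/2$ actually requires.
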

The theorem implies that if the initial condition $(y,\eta)$ at $t=s$ of the Hamilton's equation satisfies $y/s\approx \eta_a$, $|\eta_a|>0$ and $|\eta_{ij}|/|\eta^a|$ is sufficiently large for any pair $(i,j)\not\le a$, then henceforth the trajectory $(x(t),\xi(t))$ satisfies the same property forever $t\ge s$.

\section{Some asymptotics}\label{asymptoticdecompositionofcontinuousspectralsubspacebyscatteringspaces}

We use the notation introduced at the beginning of section \ref{Contiuousspectrum} and recall Theorem \ref{quantumclassical} as the following lemma.
\begin{lemma}\label{weakconvergence}
Let Assumptions \ref{potentialdecay} and \ref{eigenfunctiondecay} hold. Let $f\in \HH_c(H)$. Then there exist a sequence $t_m\to\infty$ (as $m\to\infty$) and a sequence $M_d^m$ of multi-indices whose components all tend to $\infty$ as $m\to\infty$ such that for all cluster decompositions $a$ with $2\le |a|\le N$, for all $\varphi\in C_0^\infty(X_a)$, $R>0$, and ${(i,j)}\not\leq a$
\begin{align}
&\left\Vert\frac{|x^a|^2}{t_m^2}{\tilde P}^a_{M_a^m}e^{-it_mH}f\right\Vert \to 0\label{(2.6)-Lemma}\\
&\Vert \chi_{\{x||x_{ij}|<R\}}{\tilde P}^a_{M_a^m}e^{-it_mH}f\Vert \to 0\label{1.24}\\
&\Vert (\varphi(x_a/t_m)-\varphi(D_{x_a})){\tilde P}^a_{M_a^m}e^{-it_mH}f\Vert \to 0\label{1.25}
\end{align}
as $m\to\infty$. Here $\chi_S$ is the characteristic function of a set $S$.
\end{lemma}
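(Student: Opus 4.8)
The plan is to deduce Lemma~\ref{weakconvergence} almost verbatim from Theorem~\ref{quantumclassical}, which has already been established. First I would apply Theorem~\ref{quantumclassical} to the given $f\in\HH_c(H)$; this furnishes a single sequence $t_m\to\infty$ and a single family of multi-indices $M_a^m$, all of whose components tend to $\infty$, for which the three conclusions \eq{(2.6)}, \eq{(2.7)} and \eq{(2.8)} hold simultaneously for every cluster decomposition $a$ with $2\le|a|\le N$, every $\varphi\in C_0^\infty(X_a)$, every $R>0$ and every $(i,j)\not\le a$. For these same sequences, \eq{(2.6)-Lemma} is literally \eq{(2.6)} and \eq{1.24} is literally \eq{(2.7)}, so the first two assertions require nothing further.

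The only point deserving a word is \eq{1.25}. Under the normalization in \eq{conventionalHamiltonian}, in which $T_a(\xi_a)=\tfrac12|\xi_a|^2$, the classical intercluster velocity conjugate to $x_a$ is $v_a=\partial_{\xi_a}T_a(\xi_a)=\xi_a$, whose quantization is the operator $D_{x_a}$; hence $\varphi(v_a)=\varphi(D_{x_a})$ and \eq{(2.8)} is already \eq{1.25}. If one prefers to argue without this identification, \eq{1.25} can be obtained from the sharper information inside the proof of Theorem~\ref{quantumclassical}: relation \eq{timemeanofsumof} yields, after extraction of $t_m$, that $\Vert(\tfrac{x_a}{t_m}-D_{x_a}){\tilde P}^a_{M_a^m}e^{-it_mH}f\Vert\to0$. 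To pass from this vector estimate to the functional-calculus statement \eq{1.25} I would insert the Helffer--Sj\"ostrand representation
\[
\varphi(A)-\varphi(B)=\frac{1}{\pi}\int_{\C}\bar\partial\tilde\varphi(z)\,(z-A)^{-1}(A-B)(z-B)^{-1}\,dm(z),
\]
with $A=x_a/t_m$, $B=D_{x_a}$, $dm$ planar Lebesgue measure, and $\tilde\varphi$ an almost analytic extension of $\varphi$ with $\bar\partial\tilde\varphi=O(|\mathrm{Im}\,z|^{N})$ for a large $N$; commuting $(A-B)$ through $(z-B)^{-1}$ costs only the commutator $[D_{x_a},x_a/t_m]=O(t_m^{-1})$, and then $\Vert(z-A)^{-1}\Vert,\Vert(z-B)^{-1}\Vert\le|\mathrm{Im}\,z|^{-1}$ bounds $\Vert(\varphi(x_a/t_m)-\varphi(D_{x_a})){\tilde P}^a_{M_a^m}e^{-it_mH}f\Vert$ by $C\Vert(\tfrac{x_a}{t_m}-D_{x_a}){\tilde P}^a_{M_a^m}e^{-it_mH}f\Vert+Ct_m^{-1}$, which tends to $0$.

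Since Theorem~\ref{quantumclassical} is available, there is no substantial obstacle in proving the lemma: its content is essentially a restatement, and the only non-routine step is the passage just described from the vector estimate on $(\tfrac{x_a}{t_m}-D_{x_a}){\tilde P}^a_{M_a^m}e^{-it_mH}f$ to the operator estimate on $(\varphi(x_a/t_m)-\varphi(D_{x_a})){\tilde P}^a_{M_a^m}e^{-it_mH}f$, which the Helffer--Sj\"ostrand calculus above settles, together with the observation that Theorem~\ref{quantumclassical} already delivers one common choice of $t_m$ and $M_a^m$ valid for all $a$, $\varphi$, $R$ and $(i,j)$ at once.
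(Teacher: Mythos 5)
Your proposal is correct and coincides with the paper's treatment: the paper gives no separate proof of Lemma \ref{weakconvergence}, stating only that it ``recalls Theorem \ref{quantumclassical} as the following lemma,'' with $\varphi(v_a)$ in \eq{(2.8)} read as $\varphi(D_{x_a})$ in \eq{1.25}. Your extra step passing from the vector estimate on $(x_a/t_m-D_a)\tilde P^a_{M_a^m}e^{-it_mH}f$ implicit in \eq{timemeanofsumof} to the functional-calculus statement is a legitimate filling-in of a detail the paper leaves tacit (noting only that for the vector-valued $A=x_a/t_m$, $B=D_{x_a}$ one needs the several-variable version of the almost-analytic or Fourier-representation argument, which is routine).
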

We can decompose the asymptotic behavior of $e^{-itH}f$ when $f\in \HH_c(H)$ as follows.
\begin{theorem}\label{eigenprojectionisemergent} Let Assumptions \ref{potentialdecay} and \ref{eigenfunctiondecay} be satisfied. Let $f=E_H(B)\in \HH_c(H)$ with $B\Subset\R\setminus\mathcal{T}$ and let the sequence $t_m$ be defined in Theorem \ref{quantumclassical} for this $f$. Then for $\Phi_a$ in Theorem \ref{homogeneousextension} with sufficiently small constants $\rho_j>\theta_j>\sigma>0$, we have
\begin{align}
\bigl\Vert e^{-it_mH}f-\sum_{2\le|a|\le N}\Phi_a(x/t_m)\sum_{d\le a}\tilde P^d_{M_d^m}e^{-it_mH}f\bigr\Vert\to0\quad(m\to\infty).\label{oneofdesiredones0}
\end{align}
Further there exist constants $0<d_1<d_2<\infty$ such that for
the function $\chi_a$ defined by \eq{chid} and for a sufficiently small ${\varepsilon}_a>0$ satisfying \eq{compatibilitycondition}:
 \begin{equation}
\rho_{|a|}\ge 2^{10}{d}_1^{-2}{\varepsilon}_a \ge 2^{14}{d}_2^2{d}_1^{-2}\omega_{|a|}^{-1},\label{compatibilitycondition-a}
\end{equation}
we have for $P^{{\varepsilon}_a}_a(t)$ defined by \eq{(4)}-\eq{Paepsilont} with the above $\Phi_a$ and $\chi_a$
\begin{equation}
\bigl\Vert e^{-it_mH}f-\sum_{2\le|a|\le N}P_a^{{\varepsilon}_a}(t_m)e^{-it_mH}f\bigr\Vert\to0\quad(m\to\infty).
\label{asymptotics}
\end{equation}
\end{theorem}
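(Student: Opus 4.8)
The plan is first to reduce \eq{oneofdesiredones0} to a single statement about the channel components, and then to promote that statement to \eq{asymptotics} by a pseudodifferential upgrade. Write $g_d^m=\tilde P^d_{M_d^m}e^{-it_mH}f$. By \eq{summationofprojections}, and since $P_{M_1}$ is a partial eigenprojection of $H$ (hence commutes with $e^{-it_mH}$) while $P_{M_1}f=0$ because $f=E_H(B)f$ with $B\cap\mathcal T=\emptyset$, one has $e^{-it_mH}f=\sum_{2\le|d|\le N}g_d^m$. Since $\sum_{2\le|a|\le N}\Phi_a(x/t_m)=1$ by Theorem \ref{homogeneousextension}, the vector inside the norm of \eq{oneofdesiredones0} equals $\sum_a\Phi_a(x/t_m)\sum_{d\not\le a}g_d^m$, so it suffices to prove that $\Phi_a(x/t_m)g_d^m\to0$ whenever $d\not\le a$. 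For such $a,d$ choose a pair $(i,j)\le d$ with $(i,j)\not\le a$ (one exists, since a cluster of $d$ is split by $a$). Then $|x_{ij}|\le C|x^d|$, so \eq{(2.6)-Lemma} of Lemma \ref{weakconvergence} (together with Cauchy--Schwarz) gives $\Vert\chi_{\{|x_{ij}|>\eta t_m\}}g_d^m\Vert\to0$ for every $\eta>0$; while Theorem \ref{chidonPdMd} together with \eq{1.25} (applied with $\varphi=\chi_d$) gives $g_d^m\approx\chi_d(x_d/t_m)g_d^m$, hence $|x|\ge|x_d|\ge c_0t_m$ up to $o(1)$, for some $c_0>0$ depending only on $d_1$. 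Fixing the partition-of-unity parameter $L$ so that $c_0\ge(1+\theta_N)^{-L/2}$, the inequality \eq{4.36} applies on $\operatorname{supp}\Phi_a(\cdot/t_m)\cap\{|x|\ge c_0t_m\}$ and yields $|x_{ij}|>\rho_{|a|}^{1/2}c_0t_m/4$ there; choosing $\eta<\rho_{|a|}^{1/2}c_0/4$ makes the two localizations disjoint, so $\Phi_a(x/t_m)g_d^m\to0$. This proves \eq{oneofdesiredones0}.

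For \eq{asymptotics} it suffices, since $\sum_a\Phi_a(x/t_m)e^{-it_mH}f=e^{-it_mH}f$, to show that $(P_a^{\varepsilon_a}(t_m)-\Phi_a(x/t_m))e^{-it_mH}f\to0$ for each $a$; summing over $a$ then gives \eq{asymptotics}. Expanding $e^{-it_mH}f=\sum_dg_d^m$ splits this into the cases $d\not\le a$ and $d\le a$. For $d\not\le a$, the term $\Phi_a(x/t_m)g_d^m\to0$ was just proved, and $P_a^{\varepsilon_a}(t_m)g_d^m\to0$ as well: by Lemma \ref{lemmaonzakxa} the symbol $p_a^{\varepsilon_a}(t_m,x,\xi_a)$ vanishes unless $|x_{ij}|\ge\beta t_m$ for a fixed $\beta>0$ and the pair $(i,j)$ above, so $P_a^{\varepsilon_a}(t_m)$ maps into functions supported in $\{|x_{ij}|\ge\beta t_m\}$; writing $g_d^m=\chi_{\{|x_{ij}|<\beta t_m/2\}}g_d^m+o(1)$ as above and using that $P_a^{\varepsilon_a}(t_m)$ has symbol with $x$-derivatives of size $O(t_m^{-|\alpha|})$, a non-stationary-phase estimate gives $P_a^{\varepsilon_a}(t_m)\chi_{\{|x_{ij}|<\beta t_m/2\}}=O(t_m^{-\infty})$, hence $P_a^{\varepsilon_a}(t_m)g_d^m\to0$.

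The case $d\le a$ is the heart of the argument. Here $x_a$ is a coordinate subblock of $x_d$ and $D_a$ of $D_d$, so \eq{1.25} for $c=d$ propagates to $g_d^m$ the microlocalization $x_a/t_m\approx D_a$, and Theorem \ref{chidonPdMd} confines $|D_d|$, hence a bound on $|D_a|$, to the plateau of $\chi_d$. On $\operatorname{supp}\Phi_a(\cdot/t_m)$ the geometric estimate \eq{3.25-2}, namely $|x_a|^2\ge\omega_{|a|}|x^a|^2/2$, combined with the compatibility relation \eq{compatibilitycondition-a} (which forces $\omega_{|a|}^{-1}\le2^{-4}d_2^{-2}\varepsilon_a$), yields $|x^a/t_m|^2<\varepsilon_a$ and $d_1/2\lesssim|x_a/t_m|\lesssim d_2$ on $\Phi_a(x/t_m)g_d^m$; thus, modulo the quantum--classical swaps above, both $\chi_a(D_a)^2$ and $\phi(|x/t_m-D_a|^2<\varepsilon_a)$ act as the identity on that vector. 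Symbolic calculus, exactly as in Lemma 4.2 of \cite{[K1]} and Lemma 4.1 of \cite{KS} via the Calder\'on--Vaillancourt theorem \cite{CV2}, then lets one reassemble
\[
\Phi_a(x/t_m)g_d^m=\Phi_a(x_a/t_m)g_d^m\approx\phi(|x/t_m-D_a|^2<\varepsilon_a)\,\chi_a(D_a)^2\Phi_a(D_a)\,g_d^m=P_a^{\varepsilon_a}(t_m)g_d^m+o(1),
\]
which completes the proof.

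The main obstacle is precisely this last step: carrying out the pseudodifferential bookkeeping that upgrades the purely configuration-space cut-off $\Phi_a(x/t_m)$ to the full extended microlocal cut-off $P_a^{\varepsilon_a}(t_m)$ on the channel-$a$ part of the evolution, and, intertwined with it, choosing the parameters $\rho_j,\theta_j,\sigma,d_1,d_2,\varepsilon_a$ so that the internal smallness $|x^a|/t_m\ll\sqrt{\varepsilon_a}$, the momentum-plateau conditions, and \eq{compatibilitycondition-a} all hold simultaneously; the reductions in Step 1 and the off-diagonal estimates in Step 2 are comparatively routine given Lemmas \ref{weakconvergence} and \ref{lemmaonzakxa} and Theorems \ref{homogeneousextension} and \ref{chidonPdMd}.
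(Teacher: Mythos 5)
Your proposal is correct and follows essentially the same route as the paper: the partition of unity $\sum_a\Phi_a=1$ plus the localization facts from Theorem \ref{quantumclassical}/Theorem \ref{chidonPdMd} and the geometric estimates \eq{3.25-2}--\eq{4.36} kill the cross terms $d\not\le a$, and for $d\le a$ the propagation estimates $x_a/t_m\approx D_a$, the momentum confinement from $\chi_d$, and \eq{compatibilitycondition-a} (giving $|x^a/t_m|^2\le 2^{-4}\varepsilon_a$) show that the full symbol $p_a^{\varepsilon_a}$ acts asymptotically as the identity, reassembled via Calder\'on--Vaillancourt exactly as in the paper's passage from \eq{asymptoticrelationaldecomposition-1} to \eq{final-1}. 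Your explicit non-stationary-phase treatment of $P_a^{\varepsilon_a}(t_m)g_d^m$ for $d\not\le a$ is a slightly more detailed justification of a step the paper dispatches by reference, but the substance is identical.
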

\begin{proof}
By \eq{summationofprojections} and Theorem \ref{chidonPdMd}, there
 exist constants $0<{\tilde d}_1<{\tilde d}_2<\infty$ such that
\begin{align}
e^{-it_mH}f\sim\sum_{2\le|d|\le N}\chi_d(D_d)\tilde P^d_{M_d^m}e^{-it_mH}f\label{asymptoticrelationaldecomposition-0}
\end{align}
as $t_m\to\infty$, where $\chi_d$ is the function defined by \eq{chid} with $d_j={\tilde d}_j$ $(j=1,2)$.
Applying partition $\{\Phi_a\}_{2\le|a|\le N}$ of unity in Theorem \ref{homogeneousextension} to \eq{asymptoticrelationaldecomposition-0}, we get
\begin{align}
e^{-it_mH}f\sim\sum_{2\le |a|\le N}\Phi_a(x/t_m)\sum_{2\le|d|\le N}\chi_d(D_d)\tilde P^d_{M_d^m}e^{-it_mH}f\label{asymptoticrelationaldecomposition}
\end{align}
as $t_m\to\infty$. Theorem \ref{quantumclassical} tells that we can freely exchange any factor $\varphi(x_d/t_m)$ for $\varphi(D_d)$ on the left of $\tilde P^d_{M_d^m}e^{-it_mH}f$ in \eq{asymptoticrelationaldecomposition} and we have
$2{\tilde d}_2\ge |x_d/t_m|\ge {\tilde d}_1/2>0$ and $|x/t_m|\sim |x_d/t_m|$ on each state $\chi_d(D_d)\tilde P^d_{M_d^m}e^{-it_mH}f$ asymptotically as $m\to\infty$. 
Thus 
$|x|\ge 2^{-1}{\tilde d}_1t_m$ asymptotically on $\chi_d(D_d)\tilde P^d_{M_d^m}e^{-it_mH}f$. Therefore taking the constants $\rho_j>\theta_j>\sigma>0$ in $\Phi_a$ such that $\rho_j/\theta_j>2^{10}$ as in Theorem \ref{homogeneousextension} so that $|x_{ij}|>\rho_{|a|}^{1/2}|x|/4$ holds on supp $\Phi_a(x/t_m)\chi_d(x_d/t_m)$ for any pair $(i,j)\not\le a$,
we have on supp $\Phi_a(x/t_m)$ that $|x_{ij}|\ge ct_m$ for any $(i,j)\not\le a$ with some constant $c>0$. Hence by Theorem \ref{quantumclassical} only the terms with $d\le a$ remain in the second sum on the RHS of \eq{asymptoticrelationaldecomposition} and we arrive at a formula equivalent with \eq{asymptoticrelationaldecomposition}.
\begin{equation}
\begin{aligned}
e^{-it_mH}f\sim\sum_{2\le|a|\le N}\sum_{d\le a}\Phi_a(x/t_m)\chi_d(D_d)\tilde P^d_{M_d^m}e^{-it_mH}f.
\end{aligned}\label{asymptoticrelationaldecomposition-1}
\end{equation}
We note that this tells we can freely add or remove terms with $d\not\le a$ in \eq{asymptoticrelationaldecomposition-1}. Again using Theorem \ref{chidonPdMd} we obtain from \eq{asymptoticrelationaldecomposition-1}
\begin{align}
e^{-it_mH}f\sim\sum_{2\le|a|\le N}\Phi_a(x/t_m)\sum_{d\le a}\tilde P^d_{M_d^m}e^{-it_mH}f,\label{oneofdesiredones}
\end{align}
which is \eq{oneofdesiredones0}.

We will use the conventional notation $x^a=(x_d^a,x^d)$ $(d\le a)$, $D_{x_d}=D_d$, $D_{x_d^a}=D_d^a$. We set $d_1={\tilde d}_1/4$ and $d_2=2{\tilde d}_2$.
Noting that the factor $\chi_d(D_d)$ gives $2d_1\le{\tilde d}_1/2\le|D_d|\le 2{\tilde d}_2=d_2$ and using \eq{3.25-2}, \eq{1.25} and the factor $\Phi_a(x/t)$ in \eq{asymptoticrelationaldecomposition-1}, we have asymptotically $|D_d^a|^2\le |x^d/t|^2+|x_d^a/t|^2=|x^a/t|^2\le \omega_{|a|}^{-1}|x_a/t|^2\sim \omega_{|a|}^{-1}|D_a|^2\le \omega_{|a|}^{-1}|D_d|^2\le d_2^2\omega_{|a|}^{-1}$. Hence $d_2^2\ge |D_d|^2\ge |D_a|^2=|D_d|^2-|D_d^a|^2\ge 4d_1^2-d_2^2\omega_{|a|}^{-1}$. Taking $\rho_{|a|}>\theta_{|a|}>\sigma>0$ so small that $\omega_{|a|}^{-1}<3d_1^2d_2^{-2}$, we thus have $d_2^2\ge |D_a|^2\ge d_1^2$ and $|x_d^a/t|^2\sim |D_d^a|^2\le|x^a/t|^2\le d_2^2\omega_{|a|}^{-1}\le 2^{-4}{\varepsilon}_a<{\varepsilon}_a$ by \eq{compatibilitycondition-a} asymptotically on each state $\tilde P^d_{M_d}e^{-itH}f$ in 
\eq{asymptoticrelationaldecomposition-1} as $t=t_m\to\infty$ with $M_d=M_d^m$. Let $\chi_a$ be defined by \eq{chid} for the constants $d_2>d_1>0$. We then get from \eq{asymptoticrelationaldecomposition-1}
\begin{equation}
\Vert e^{-itH}f
-\hskip-4pt\sum_{2\le|a|\le N}\phi(|x/t-D_{a}|^2<{\varepsilon}_a)\chi_a(D_a)^2\Phi_a(D_{a})\sum_{d\le a}\tilde P^d_{M_d}e^{-itH}f\Vert\to 0
\label{final-2}
\end{equation}
as $t=t_m\to\infty$ with $M_d=M_d^m$, 
where $\phi(t<{\varepsilon})$ is defined by \eq{phi1def}.
Namely we have for the function $p_a^{{\varepsilon}_a}(t,x,\xi_a)=\phi(|x/t-\xi_a|^2<{\varepsilon}_a)\chi_a(\xi_a)^2\Phi_a(\xi_a)$ in \eq{(4)}
\begin{equation}
\Vert e^{-itH}f-
\sum_{2\le|a|\le N}p_a^{{\varepsilon}_a}(t,x,D_{a})\sum_{d\le a}\tilde P^d_{M_d}e^{-itH}f\Vert\to 0\quad(t=t_m\to\infty).
\label{final-1}
\end{equation}
By what stated after \eq{asymptoticrelationaldecomposition-1}, we can recover the terms with $d\not\le a$. Using \eq{summationofprojections}, we finally arrive at for ${\varepsilon}_a>0$ satisfying \eq{compatibilitycondition} or \eq{compatibilitycondition-a}
\begin{equation}
\Vert e^{-it_mH}f-\sum_{2\le|a|\le N}P_a^{{\varepsilon}_a}(t_m)e^{-it_mH}f\Vert\to0\quad(m\to\infty).
\label{asymptotics-proven}
\end{equation}
This completes the proof of the theorem.
\end{proof}

\section{Time-independent modifier}

In this section, we assume Assumption \ref{potentialdecay} on long-range pair potentials. Under this assumption, we will construct a time-independent modifier $J_a$ which treats the intercluster long-range pair potentials $I^L_a(x)=\sum_{(ij)\not\le a}V^L_{ij}(x_{ij})$, which is an extension of $J$ introduced in \cite{IK-1}, \cite{[IK]} for two-body quantum scattering.

Let a $C^\infty$ function $\chi_0(x)$ of $x\in R^\nu$ satisfy $0\le \chi_0(x)\le 1$ and
\begin{align}
\chi_0(x)=
\left\{
\begin{array}{ll}
1 & (|x|\ge 2),\\
0 & (|x|\le 1).
\end{array} \label{5.4}
\right.
\end{align}
To define $J_{a}$ $(2\le|a|\le N)$, we will introduce a time-dependent potential $I^L_{a\rho}(t,x)$ for $\rho\in(0,1)$.
\begin{align}
I^L_{a\rho}(t,x)=I^L_a(x_a,x^a)\prod_{(ij)\not\le a}\chi_0(\rho x_{ij})\chi_0(\langle \log\langle t\rangle\rangle x_{ij}/\langle t\rangle).\label{5.5}
\end{align}
Then $I^L_{a\rho}(t,x)$ satisfies
\begin{align}
|\partial_{x}^\alpha I^L_{a\rho}(t,x)|\le C_\alpha \rho^{\delta_0}\langle t\rangle^{-\ell}\max_{(ij)\not\le a}(\langle x_{ij}\rangle^{-m})\label{5.6}
\end{align}
for any multiindex $\alpha$, and real numbers $\ell,m\ge 0$, $0<\delta_0<\delta$ with $\delta_0+\ell+m<|\alpha|+\delta$, where $C_\alpha>0$ is a constant independent of $t,x$ and $\rho$.

Then we can apply the arguments in section 2 of \cite{[IK]} to get a solution $\varphi_{a}(x,\xi)$ of the eikonal equation:
\begin{align}
\frac{1}{2}|\nabla_{x}\varphi_{a}(x,\xi)|^2+I^L_{a}(x_a,x^a)=\frac{1}{2}|\xi|^2
\label{5.7}
\end{align}
in some conic region in phase space. 

Namely let $H_{a\rho}(t,x,\xi)$ be the Hamiltonian corresponding to the time-dependent potential $I^L_{a\rho}(t,x)$.
\begin{align}
H_{a\rho}(t,x,\xi)=\frac{1}{2}|\xi|^2+I^L_{a\rho}(t,x).\label{timedependentHamiltonian}
\end{align}
The corresponding classical orbit $(q,p)(t,s,y,\xi)=(q(t,s,y,\xi),p(t,s,y,\xi))\in X\times X'=\R^{\nu(N-1)}\times \R^{\nu(N-1)}$ is determined by the equation
\begin{equation}
\left\{
\begin{aligned}
q(t,s)&=y+\int_s^t \nabla_\xi H_{a\rho}(\tau,q(\tau,s),p(\tau,s))d\tau=y+\int_s^t p(\tau,s)d\tau,\\
p(t,s)&=\xi- \int_s^t \nabla_x H_{a\rho}(\tau,q(\tau,s),p(\tau,s))d\tau=\xi- \int_s^t \nabla_xI^L_{a\rho}(\tau,q(\tau,s))d\tau.
\end{aligned}
\right.
\label{classical-orbits}
\end{equation}
We remark that when all long-range pair potentials vanish: $V^L_{ij}=0$, one has the unique solution $q(t,s,y,\xi)=y+(t-s)\xi$, $p(t,s,y,\xi)=\xi$.
Letting $\delta_0,\delta_1>0$ be fixed as $0<\delta_0+\delta_1<\delta$, we have the following estimates for $(q,p)(t,s,y,\xi)$, which are proved by solving the equation \eq{classical-orbits} by iteration as in Proposition 2.1 of \cite{[Ki-ScTime-Dep-Pot]}.
\begin{proposition}\label{Estimates-for-qp}
There are constants $C_\ell>0$ $(\ell=0,1,2,\cdots)$ such that for all $(y,\xi)\in X\times X'$ and $\pm t\ge \pm s \ge 0$, the solutions $q, p$ of \eq{classical-orbits} exist and satisfy for all multi-index ${\alpha}$:
\begin{eqnarray}
&&|p(s,t,y,\xi)-\xi|+|p(t,s,y,\xi)-\xi|\le C_0\rho^{\delta_0}\langle s\rangle^{-{\delta_1}},\label{pst}\\
&&|{\partial}_y^{\alpha}[\nabla_y q(s,t,y,\xi)-I]|\le C_{|{\alpha}|}\rho^{\delta_0}\langle s\rangle^{-{\delta_1}},\label{(5.13)}\\
&&|{\partial}_y^{\alpha}[\nabla_y p(s,t,y,\xi)]|\le C_{|{\alpha}|}\rho^{\delta_0}\langle s\rangle^{-1-{\delta_1}},\\
&&|\nabla_\xi q(t,s,y,\xi)-(t-s)I|\le C_0\rho^{\delta_0}\langle s\rangle^{-{\delta_1}}|t-s|,\label{q-2}\\
&&|\nabla_\xi p(t,s,y,\xi)-I|\le C_0\rho^{\delta_0}\langle s\rangle^{-{\delta_1}},\\
&&|\nabla_y q(t,s,y,\xi)-I|\le C_0\rho^{\delta_0}\langle s\rangle^{-1-{\delta_1}}|t-s|,\\
&&|\nabla_y p(t,s,y,\xi)|\le C_0\rho^{\delta_0}\langle s\rangle^{-1-{\delta_1}},\\
&&|{\partial}_\xi^{\alpha}[q(t,s,y,\xi)-y-(t-s)p(t,s,y,\xi)]|\nonumber\\
&&\le C_{|{\alpha}|}\rho^{\delta_0}\min(\langle t\rangle^{1-{\delta_1}},|t-s|\langle s\rangle^{-{\delta_1}}).\label{qts}
\end{eqnarray}
Further for any ${\alpha},{\beta}$ satisfying $|{\alpha}+{\beta}|\ge 2$, there is a constant $C_{{\alpha}{\beta}}>0$ such that
\begin{eqnarray}
&&|{\partial}_y^{\alpha}{\partial}_\xi^{\beta} q(t,s,y,\xi)|\le C_{{\alpha}{\beta}}\rho^{\delta_0}|t-s|\langle s\rangle^{-{\delta_1}},\label{q-3}\\
&&|{\partial}_y^{\alpha}{\partial}_\xi^{\beta} p(t,s,y,\xi)|\le C_{{\alpha}{\beta}}\rho^{\delta_0}\langle s\rangle^{-{\delta_1}}.\label{p-3}
\end{eqnarray}
\end{proposition}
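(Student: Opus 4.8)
The plan is to solve the integral system \eq{classical-orbits} by successive approximation, exactly along the lines of Proposition~2.1 of \cite{[Ki-ScTime-Dep-Pot]}; I only indicate the structure here. Since $\langle t\rangle$ is even in $t$, it suffices to treat the forward flow $t\ge s\ge0$; the backward flow, and the case $\pm t\ge\pm s\ge0$ in general, is entirely symmetric, and in all of the estimates the $\langle s\rangle^{-\delta_1}$ on the right is produced by integrating the force over the time interval joining $s$ and $t$ (note that $q(s,t,y,\xi),p(s,t,y,\xi)$ in \eq{pst} and \eq{(5.13)} denote the backward flow from time $t$). The one analytic input is \eq{5.6}: applied with $m=0$ and $\ell=|\alpha|+\delta_1$ — admissible because $\delta_0+\ell+m=\delta_0+|\alpha|+\delta_1<|\alpha|+\delta$ by the standing hypothesis $0<\delta_0+\delta_1<\delta$ — it yields $|\partial_x^\alpha I^L_{a\rho}(t,x)|\le C_\alpha\rho^{\delta_0}\langle t\rangle^{-|\alpha|-\delta_1}$ for every $|\alpha|\ge1$, \emph{uniformly in $x$}. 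Thus the force $\nabla_xI^L_{a\rho}(\tau,\cdot)$ is $O(\rho^{\delta_0}\langle\tau\rangle^{-1-\delta_1})$, the Hessian is $O(\rho^{\delta_0}\langle\tau\rangle^{-2-\delta_1})$, and $\int_s^\infty\langle\tau\rangle^{-k-\delta_1}\,d\tau\le C\langle s\rangle^{-(k-1)-\delta_1}$ for every $k\ge1$. Taking $m=0$ is what frees us from ever having to locate the unknown $q_{ij}(\tau)$ relative to the cutoffs in \eq{5.5}.

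First I would settle global existence, uniqueness, and the zeroth-order bounds. Because the force bound above is independent of the spatial argument, a standard Picard/Volterra argument on $[s,\infty)$ — the Lipschitz constant of the force in $x$ being $\int_s^\infty\sup_x|\nabla_x^2I^L_{a\rho}(\tau,x)|\,d\tau=O(\rho^{\delta_0}\langle s\rangle^{-1-\delta_1})$ — produces the unique orbit and the bound $|p(t,s)-\xi|\le C_0\rho^{\delta_0}\langle s\rangle^{-\delta_1}$, which is \eq{pst}. Substituting this into the identity $q(t,s)-y-(t-s)p(t,s)=\int_s^t\!\int_\tau^t\nabla_xI^L_{a\rho}(\sigma,q(\sigma,s))\,d\sigma\,d\tau$ and bounding the inner integral by $C\rho^{\delta_0}\langle\tau\rangle^{-\delta_1}$ gives \eq{qts}.

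Next come the first derivatives. Differentiating \eq{classical-orbits} produces the linear Volterra systems $\nabla_\xi q(t,s)=(t-s)I+\int_s^t\nabla_\xi p(\tau,s)\,d\tau$, $\nabla_\xi p(t,s)=I-\int_s^t\nabla_x^2I^L_{a\rho}(\tau,q(\tau,s))\nabla_\xi q(\tau,s)\,d\tau$, and the analogous pair with $\nabla_y$ replacing $\nabla_\xi$ (inhomogeneous terms $I$ and $0$). Since the kernel $\nabla_x^2I^L_{a\rho}(\tau,q(\tau,s))$ satisfies the $O(\rho^{\delta_0}\langle\tau\rangle^{-2-\delta_1})$ bound, a Gronwall argument closes both systems and gives \eq{q-2}, \eq{(5.13)}, and the bounds on $\nabla_\xi p$ and $\nabla_y q$; the sharper estimate $|\nabla_yp(t,s,y,\xi)|\le C_0\rho^{\delta_0}\langle s\rangle^{-1-\delta_1}$ uses the extra power $\int_s^\infty\langle\tau\rangle^{-2-\delta_1}\,d\tau\le C\langle s\rangle^{-1-\delta_1}$ together with the fact that $\nabla_y q$ stays bounded while $\nabla_\xi q$ grows like $|t-s|$.

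Finally, \eq{q-3} and \eq{p-3} follow by induction on $|\alpha+\beta|$. Differentiating the Volterra systems $|\alpha+\beta|$ times and applying the Fa\`a di Bruno formula to $\partial_y^\alpha\partial_\xi^\beta\bigl[\nabla_xI^L_{a\rho}(\tau,q(\tau,s,y,\xi))\bigr]$ isolates one principal term $\nabla_x^2I^L_{a\rho}(\tau,q)\cdot\partial_y^\alpha\partial_\xi^\beta q(\tau,s)$, which stays inside the integral and is absorbed by Gronwall, plus finitely many lower terms, each a product of one $\partial_x^kI^L_{a\rho}(\tau,q)$ with $2\le k\le|\alpha+\beta|+1$ and a product of derivatives of $q$ of orders $<|\alpha+\beta|$. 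The latter are controlled by the induction hypothesis, and — this is the crux — the decay $\langle\tau\rangle^{-k-\delta_1}$ with $k\ge2$ more than compensates the at most linear growth in $|\tau-s|\le\langle\tau\rangle$ carried by each factor $\partial q$ evaluated at the running time $\tau\in[s,t]$, so every such term is $\tau$-integrable with precisely the claimed $\langle s\rangle$-power. The main obstacle is exactly this bookkeeping: matching, term by term, the improving time-decay $\langle\tau\rangle^{-k-\delta_1}$ of the $k$-th spatial derivative of $I^L_{a\rho}$ against the growth accumulated by the lower-order pieces of $q$, so that all the $\int_s^t$ converge uniformly in $t$ with the exponents asserted. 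This is where it is essential that \eq{5.6} is available with $\ell$ permitted to grow with $|\alpha|$ — equivalently, that the pair potentials \eq{1.3} are $C^\infty$ with all derivatives decaying — and it is the technical heart of the construction, carried out in detail in \cite{[Ki-ScTime-Dep-Pot]}.
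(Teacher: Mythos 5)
Your proposal is correct and follows essentially the same route as the paper, which itself only sketches this proposition by appealing to the iteration argument of Proposition~2.1 of \cite{[Ki-ScTime-Dep-Pot]}; your identification of the key input --- applying \eq{5.6} with $m=0$ and $\ell=|\alpha|+\delta_1$ to get the $x$-uniform force and Hessian bounds, then Picard/Gronwall and induction on $|\alpha+\beta|$ --- is exactly what that reference does. One small slip: the variational identity should read $\nabla_\xi q(t,s)=\int_s^t\nabla_\xi p(\tau,s)\,d\tau$ (equivalently $(t-s)I+\int_s^t[\nabla_\xi p(\tau,s)-I]\,d\tau$), not $(t-s)I+\int_s^t\nabla_\xi p(\tau,s)\,d\tau$, but this does not affect the argument.
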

{}For the constant $C_0>0$ in this proposition, we take $\rho>0$ so small that $C_0\rho^{\delta_0}<1/2$ holds. Then the mapping $T_x(y)=x+y-q(s,t,y,\xi):X\longrightarrow X$ becomes a contraction. Therefore there is a unique fixed point $y\in X=\R^{\nu (N-1)}$ for every $x\in X$ such that $T_x(y)=y$, hence $x=q(s,t,y,\xi)$. Thus we obtain the following. (See Proposition 2.2 of \cite{[Ki-ScTime-Dep-Pot]}.)

\begin{proposition}\label{Estimates-for-etay}
Take $\rho>0$ so small that $C_0\rho^{\delta_0}<1/2$ for the constant $C_0>0$ in Proposition \ref{Estimates-for-qp}. Then for $\pm t\ge \pm s\ge 0$ one can construct a diffeomorphism of $X=\R^{\nu (N-1)}$ for $\xi\in X'=\R^{\nu (N-1)}$
\begin{eqnarray}
x\mapsto y(s,t,x,\xi)\nonumber
\end{eqnarray}
such that
\begin{eqnarray}
q(s,t,y(s,t,x,\xi),\xi)=x.
\label{qp-yeta}
\end{eqnarray}
The mapping $y(s,t,x,\xi)$ is $C^\infty$ in $(x,\xi)\in X\times X'$ and its derivatives ${\partial}_x^{\alpha}{\partial}_\xi^{\beta} y$ are $C^1$ in $(t,s,x,\xi)$. Using this diffeomorphism we define for $\xi\in X'$
\begin{eqnarray}
\eta(t,s,x,\xi)=p(s,t,y(s,t,x,\xi),\xi).\label{etap}
\end{eqnarray}
Then $\eta(t,s,x,\xi)$ is a $C^\infty$ mapping from $X\times X'$ into $X'$, and satisfies
\begin{eqnarray}
p(t,s,x,\eta(t,s,x,\xi))=\xi.\label{peta}
\end{eqnarray}
They satisfy the relation
\begin{eqnarray}
y(s,t,x,\xi)=q(t,s,x,\eta(t,s,x,\xi))
\label{y-q-eta-p}
\end{eqnarray}
and the estimates
for any ${\alpha},{\beta}$
\begin{eqnarray}
&&|{\partial}_x^{\alpha}{\partial}_\xi^{\beta}[\nabla_x y(s,t,x,\xi)-I]|\le C_{{\alpha}{\beta}}\rho^{\delta_0}\langle s\rangle^{-{\delta_1}},\label{y-1}\\
&&|{\partial}_x^{\alpha}{\partial}_\xi^{\beta}[\nabla_x\eta(t,s,x,\xi)]|\le C_{{\alpha}{\beta}}\rho^{\delta_0}\langle s\rangle^{-1-{\delta_1}},\label{eta-1}\\
&&|{\partial}_\xi^{\alpha}[\eta(t,s,x,\xi)-\xi]|\le C_{{\alpha}}\rho^{\delta_0}\langle s\rangle^{-{\delta_1}},\label{eta-2}\\
&&|{\partial}_\xi^{\alpha}[y(s,t,x,\xi)-x-(t-s)\xi]|\label{y-2}\le C_{{\alpha}}\rho^{\delta_0}\min(\langle t\rangle^{1-{\delta_1}},|t-s|\langle s\rangle^{-{\delta_1}}).
\end{eqnarray}
Further for any $|{\alpha}+{\beta}|\ge 2$
\begin{eqnarray}
&&|{\partial}_x^{\alpha}{\partial}_\xi^{\beta} \eta(t,s,x,\xi)|\le C_{{\alpha}{\beta}}\rho^{\delta_0}\langle s\rangle^{-{\delta_1}},\label{eta-3}\\
&&|{\partial}_x^{\alpha}{\partial}_\xi^{\beta} y(s,t,x,\xi)|\le C_{{\alpha}{\beta}}\rho^{\delta_0}\langle t-s\rangle\langle s\rangle^{-{\delta_1}}.\label{y-3}
\end{eqnarray}
Here the constants $C_{\alpha}, C_{{\alpha}{\beta}}>0$ are independent of $t,s,x,\xi$
\end{proposition}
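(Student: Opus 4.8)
The plan is to obtain $y(s,t,x,\xi)$ as the inverse of the map $y\mapsto q(s,t,y,\xi)$, to read $\eta$ off from the Hamiltonian flow, and then to differentiate the two defining relations to extract the quantitative bounds. First, fix $\pm t\ge\pm s\ge0$ and $\xi\in X'$ and consider $T_x(y)=x+y-q(s,t,y,\xi)$ on $X$. Its differential is $I-\nabla_y q(s,t,y,\xi)$, whose operator norm is at most $C_0\rho^{\delta_0}\langle s\rangle^{-\delta_1}\le C_0\rho^{\delta_0}<1/2$ by the $\alpha=0$ case of \eq{(5.13)} and the choice of $\rho$. Hence $T_x$ is a uniform contraction and the Banach fixed point theorem yields a unique $y=y(s,t,x,\xi)$ with $q(s,t,y(s,t,x,\xi),\xi)=x$, which is \eq{qp-yeta}; the same bound shows that $y\mapsto q(s,t,y,\xi)$ is a bijection of $X$, so $x\mapsto y(s,t,x,\xi)$ is its inverse.

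For the regularity I would apply the implicit function theorem to $F(y,x,\xi,t,s)=q(s,t,y,\xi)-x$. The orbit $q$ solving \eq{classical-orbits} is $C^\infty$ in $(y,\xi)$ and $C^1$ in $(t,s)$, since $I^L_{a\rho}$ is $C^\infty$ in $x$ with bounded derivatives by \eq{5.6} and is continuously differentiable in $t$, while $\partial_y F=\nabla_y q(s,t,y,\xi)=I+O(\rho^{\delta_0})$ is boundedly invertible. Consequently $y(s,t,x,\xi)$ is $C^\infty$ in $(x,\xi)$ with each derivative $\partial_x^\alpha\partial_\xi^\beta y$ of class $C^1$ in $(t,s,x,\xi)$. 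Setting $\eta(t,s,x,\xi)=p(s,t,y(s,t,x,\xi),\xi)$ as in \eq{etap}, it inherits $C^\infty$-dependence on $(x,\xi)$ as a composition.

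For the flow identities \eq{peta} and \eq{y-q-eta-p}, let $(Q(\tau),P(\tau))$ be the solution of Hamilton's equations for $H_{a\rho}$ with $(Q(t),P(t))=(y(s,t,x,\xi),\xi)$, so that $Q(\tau)=q(\tau,t,y(s,t,x,\xi),\xi)$ and $P(\tau)=p(\tau,t,y(s,t,x,\xi),\xi)$; then $Q(s)=x$ by \eq{qp-yeta} and $P(s)=\eta(t,s,x,\xi)$ by \eq{etap}. By uniqueness for the (non-autonomous) ODE, the same curve is the orbit issued from $(x,\eta(t,s,x,\xi))$ at time $s$; evaluating it at time $t$ gives $q(t,s,x,\eta(t,s,x,\xi))=Q(t)=y(s,t,x,\xi)$ and $p(t,s,x,\eta(t,s,x,\xi))=P(t)=\xi$.

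It remains to prove the estimates. Differentiating $q(s,t,y(s,t,x,\xi),\xi)=x$ in $x$ gives $\nabla_x y=(\nabla_y q(s,t,y,\xi))^{-1}$, so a Neumann expansion with \eq{(5.13)} gives \eq{y-1} for $\beta=0$; differentiating repeatedly in $x$ and $\xi$ and using the estimates of Proposition \ref{Estimates-for-qp} together with the Leibniz and chain rules yields \eq{y-1} in general by induction on $|\alpha+\beta|$. Since $\eta=p(s,t,y(s,t,x,\xi),\xi)$, the bounds \eq{eta-1}, \eq{eta-2}, \eq{eta-3} follow by composing these bounds on $y$ with the bounds on $p$ and its derivatives in Proposition \ref{Estimates-for-qp}; in particular the $\langle s\rangle^{-1-\delta_1}$-decay of $\nabla_y p$ produces the extra power in \eq{eta-1}, and \eq{pst} gives \eq{eta-2} at $\alpha=0$. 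Finally, \eq{y-q-eta-p} and \eq{peta} give $y-x-(t-s)\xi=q(t,s,x,\eta)-x-(t-s)p(t,s,x,\eta)$, whence \eq{qts} composed with $\eta$ (whose derivatives are already controlled) yields \eq{y-2}, and this together with \eq{q-3} yields \eq{y-3}. The contraction and flow arguments are soft; the genuine work --- and where care is needed --- is the inductive bookkeeping in this last step, where one must track whether $\langle s\rangle$ or $\langle t\rangle$ governs each factor produced by the chain rule, so that the asserted decay rates $\langle s\rangle^{-\delta_1}$ and $\langle s\rangle^{-1-\delta_1}$ survive composition. This is exactly the estimate carried out in Proposition 2.2 of \cite{[Ki-ScTime-Dep-Pot]}, which may simply be quoted.
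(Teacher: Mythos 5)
Your proposal is correct and follows essentially the same route as the paper: the paper's entire argument is the contraction $T_x(y)=x+y-q(s,t,y,\xi)$ (a contraction precisely because $\|\nabla_y q-I\|\le C_0\rho^{\delta_0}<1/2$ by \eq{(5.13)}), with the regularity statements, the flow identities, and the derivative estimates deferred to Proposition 2.2 of \cite{[Ki-ScTime-Dep-Pot]}, exactly as you do at the end. Your write-up in fact supplies more of the intermediate reasoning (implicit function theorem, uniqueness of the Hamiltonian flow, differentiation of the identities) than the paper itself records.
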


The following illustration would be helpful to understand the meaning of the diffeomorphisms $y(s,t,x,\xi)$ and $\eta(t,s,x,\xi)$. Let $U_{a\rho}(t,s)$ be the map that assigns the point $(q,p)(t,s,x,\eta)$ to the initial data $(x,\eta)$. Then
\begin{eqnarray}
\begin{array}{ccc}
\mbox{time}\ s& \ &\mbox{time}\ t\\
\left(
\begin{array}{c}
x\\
\ \\
\eta(t,s,x,\xi)
\end{array}
\right)
&
\begin{array}{c}
U_{a\rho}(t,s) \\
\longmapsto\\
\ 
\end{array}
&
\left(
\begin{array}{c}
y(s,t,x,\xi)\\
\ \\
\xi
\end{array}
\right).
\end{array}\nonumber
\end{eqnarray}

We now define $\phi_a(t,x,\xi)$ for $(x,\xi)\in X\times (X'\setminus\{0\})$ by
\begin{eqnarray}
\phi_a(t,x,\xi)=u_a(t,x,\eta(t,0,x,\xi)),\nonumber
\end{eqnarray}
where
\begin{eqnarray}
u_a(t,x,\eta)=x\cdot \eta+\int_0^t(H_{a\rho}-x\cdot \nabla_{x}H_{a\rho})(\tau,q(\tau,0,x,\eta),p(\tau,0,x,\eta))d\tau.\nonumber
\end{eqnarray}
Then it is shown by a standard calculation that $\phi_a(t,x,\xi)$ satisfies the Hamilton-Jacobi equation
\begin{equation}
\left\{
\begin{aligned}
&{\partial}_t\phi_a(t,x,\xi)=\frac{1}{2}|\xi|^2+I^L_{a\rho}(t,\nabla_{\xi}\phi_a(t,x,\xi)),\\
&\phi_a(0,x,\xi)=x\cdot \xi,
\end{aligned}
\right.
\label{Hamilton-Jacobi}
\end{equation}
and the relation
\begin{equation}
\left\{
\begin{aligned}
&\nabla_x\phi_a(t,x,\xi)=\eta(t,0,x,\xi),\\
&\nabla_\xi\phi_a(t,x,\xi)=y(0,t,x,\xi).
\end{aligned}
\right.
\label{eta-phi-y-phi}
\end{equation}
Then we can show the existence of the limit for $(x,\xi)\in X\times (X'\setminus \{0\})$
\begin{eqnarray}
\phi_{a}^{\pm}(x,\xi)=\lim_{t\to\pm\infty}(\phi_a(t,x,\xi)-\phi_a(t,0,\xi)),\label{limit-phi}
\end{eqnarray}
and show that the limit $\phi_{a}^{\pm}(x,\xi)$ defines a $C^\infty$-function of $(x,\xi)\in X\times (X'\setminus \{0\})$.
We remark that when all long-range pair potentials vanish: $V^L_{ij}=0$, one has the solution $\phi_a(t,x,\xi)=x\cdot\xi+t|\xi|^2$ of \eq{Hamilton-Jacobi}. Hence $\phi^\pm_a(x,\xi)=x\cdot\xi$.

Let for $1\le i<j\le N$
\begin{align*}
\cos(x_{ij},\xi_{ij}):=\frac{x_{ij}\cdot {\xi_{ij}}}{|x_{ij}||\xi_{ij}|},
\end{align*}
where $\xi_{ij}$ is the momentum variable conjugate to $x_{ij}$.
We set for $R_0>0$, $d>0$, $\theta\in(0,1)$, and $2\le|a|\le N$
\begin{align}
\Gamma_a^{\pm}(R_0,d,\theta)=\{(x,\xi)\ |\ |x_{ij}|\ge R_0,|\xi_{ij}|\ge d, \pm\cos(x_{ij},\xi_{ij})\ge \theta,((ij)\not\le a)\}.\label{Gammaryoiki}
\end{align}
We can now prove the following theorem in the same way as in \cite{[IK]}.
\begin{theorem}\label{Theorem 5.1} Let Assumption \ref{potentialdecay} be satisfied and let $2\le|a|\le N$. Then there exists a $C^\infty$ function $\phi_{a}^\pm(x,\xi)$ that satisfies the following properties: For any $0<\theta<1$, $d>0$ there exists a constant $R_0> 1$ such that for any $(x,\xi)\in\Gamma_a^{\pm}(R_0,d,\theta)$
\begin{align}
\frac{1}{2}|\nabla_{x}\phi_{a}^\pm(x,\xi)|^2+I^L_{a}(x_a,x^a)=\frac{1}{2}|\xi|^2
\label{5.8}
\end{align}
and
\begin{align}
|\partial_{x}^\alpha\partial_{\xi}^\beta(\phi_{a}^\pm(x,\xi)- x\cdot\xi)|\le
 C_{\alpha\beta}\max_{(ij)\not\le a}(\langle x_{ij}\rangle^{1-\delta-|\alpha|}),
\label{5.9}
\end{align}
where $C_{\alpha\beta}>0$ is a constant independent of $(x,\xi)\in\Gamma_a^\pm(R_0,d,\theta)$.
\end{theorem}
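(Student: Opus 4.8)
The function $\phi_a^\pm$ is the one already at hand: it is the limit \eq{limit-phi}, which has been shown to exist and to define a $C^\infty$ function on $X\times(X'\setminus\{0\})$; since that convergence is locally uniform in all $x,\xi$ derivatives, \eq{eta-phi-y-phi} gives $\nabla_x\phi_a^\pm(x,\xi)=\lim_{t\to\pm\infty}\nabla_x\phi_a(t,x,\xi)=\lim_{t\to\pm\infty}\eta(t,0,x,\xi)$. So the plan is to prove the two remaining assertions — that this $\phi_a^\pm$ solves the stationary eikonal equation \eq{5.8} on $\Gamma_a^\pm(R_0,d,\theta)$ for a suitable $R_0=R_0(d,\theta)>1$, and that it obeys the bounds \eq{5.9} — by running the argument of \cite{[IK]}, the only new bookkeeping being that the finitely many channels $(ij)\not\le a$ replace the single coordinate of the two-body problem.

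The geometric core will be a propagation estimate. Fix $\theta\in(0,1)$, $d>0$, and take $R_0$ large, depending on $d,\theta$ and the fixed parameters $\rho,\delta_0$ of the construction. For $(x,\xi)\in\Gamma_a^\pm(R_0,d,\theta)$ and $\pm t\ge0$, consider the $H_{a\rho}$-orbit $\tau\mapsto(q,p)(\tau,0,x,\eta(t,0,x,\xi))$, which by \eq{peta} and \eq{y-q-eta-p} runs from $(x,\eta(t,0,x,\xi))$ at $\tau=0$ to $(\nabla_\xi\phi_a(t,x,\xi),\xi)=(y(0,t,x,\xi),\xi)$ at $\tau=t$. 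First I would show that for all $\tau$ between $0$ and $t$ and every $(ij)\not\le a$,
\[
|q_{ij}(\tau,0,x,\eta(t,0,x,\xi))|\ge c\,(\langle x_{ij}\rangle+|\tau|),\qquad c=c(d,\theta)>0,
\]
so that in particular both cut-off factors $\chi_0(\rho x_{ij})$ and $\chi_0(\langle\log\langle\tau\rangle\rangle x_{ij}/\langle\tau\rangle)$ in $I^L_{a\rho}(\tau,\cdot)$ (see \eq{5.5}) are identically $1$ near $q(\tau)$, whence $I^L_{a\rho}(\tau,\cdot)$ agrees with $I^L_a$ in a neighbourhood of the orbit. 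This is established as in section \ref{scatteringtrajectories} and in \cite{[IK]}: the sign condition $\pm\cos(x_{ij},\xi_{ij})\ge\theta$ together with $|\xi_{ij}|\ge d$ and the smallness of $C_0\rho^{\delta_0}$ in Proposition \ref{Estimates-for-qp} make $|q_{ij}(\tau)|^2$ increasing along the orbit out of $|x_{ij}|\ge R_0$, while $\langle\tau\rangle/\langle\log\langle\tau\rangle\rangle=o(|\tau|)$ disposes of the time-dependent cut-off. I expect this propagation-of-the-cone estimate to be the main obstacle: it is where $R_0$ gets chosen and where the precise form of the time-dependent regulariser enters; everything afterwards is the calculus of Propositions \ref{Estimates-for-qp} and \ref{Estimates-for-etay}.

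Granting the propagation estimate, \eq{5.8} follows from conservation of energy. Since $I^L_{a\rho}(\tau,\cdot)\equiv I^L_a$ near the orbit, the flow is autonomous along it, so $\frac12|p(\tau)|^2+I^L_a(q(\tau))$ is constant on the $\tau$-interval between $0$ and $t$. At $\tau=0$ its value is $\frac12|\eta(t,0,x,\xi)|^2+I^L_a(x)$, because $I^L_{a\rho}(0,x)=I^L_a(x)$ for $x\in\Gamma_a^\pm(R_0,d,\theta)$; at $\tau=t$ it is $\frac12|\xi|^2+I^L_a(y(0,t,x,\xi))$, by \eq{peta} and \eq{y-q-eta-p}. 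Equating and letting $t\to\pm\infty$: the former tends to $\frac12|\nabla_x\phi_a^\pm(x,\xi)|^2+I^L_a(x)$, while $|y_{ij}(0,t,x,\xi)|\to\infty$ for $(ij)\not\le a$ by \eq{y-2}, so that $I^L_a(y(0,t,x,\xi))\to0$ by \eq{1.3}. This yields $\frac12|\nabla_x\phi_a^\pm(x,\xi)|^2+I^L_a(x)=\frac12|\xi|^2$, which is \eq{5.8}.

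For \eq{5.9}, integrating $\dot p(\tau)=-\nabla_x I^L_{a\rho}(\tau,q(\tau))=-\nabla I^L_a(q(\tau))$ along the orbit above, using $p(0)=\eta(t,0,x,\xi)$, $p(t)=\xi$, and passing to $t\to\pm\infty$ gives
\[
\nabla_x\phi_a^\pm(x,\xi)-\xi=\int_0^{\pm\infty}\nabla I^L_a\bigl(q^\infty(\tau)\bigr)\,d\tau ,
\]
where $q^\infty$ is the limiting orbit, still satisfying $|q^\infty_{ij}(\tau)|\ge c(\langle x_{ij}\rangle+|\tau|)$. Combined with $|\partial_x^\alpha\nabla I^L_a(z)|\le C_\alpha\max_{(ij)\not\le a}\langle z_{ij}\rangle^{-1-\delta-|\alpha|}$ from \eq{1.3}, with the $\langle x_{ij}\rangle$-weighted derivative bounds for $q^\infty$ read off from Propositions \ref{Estimates-for-qp} and \ref{Estimates-for-etay} in the conic region, and with differentiation under the integral sign, the $\tau$-integral converges and yields $|\partial_x^\alpha\partial_\xi^\beta(\phi_a^\pm(x,\xi)-x\cdot\xi)|\le C_{\alpha\beta}\max_{(ij)\not\le a}\langle x_{ij}\rangle^{1-\delta-|\alpha|}$ for all $|\alpha|\ge1$ and every $\beta$. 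The remaining case $\alpha=0$ is then recovered from the $|\alpha|=1$ bound by integrating it along suitable paths inside $\Gamma_a^\pm(R_0,d,\theta)$, exactly as in \cite{[IK]}. Enlarging $R_0$ wherever the above steps require completes the proof.
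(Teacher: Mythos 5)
Your treatment of the eikonal equation \eq{5.8} and of the bounds \eq{5.9} for $|\alpha|\ge1$ follows the paper's proof: the same conservation law $\frac{\partial}{\partial s}\bigl(\frac{1}{2}|p(s,t,y,\xi)|^2+I^L_{a\rho}(s,q(s,t,y,\xi))\bigr)=\frac{\partial I^L_{a\rho}}{\partial t}(s,q)$ along the orbit, the same propagation estimate $|q_{ij}(s)|\ge c(|x_{ij}|+s|\xi_{ij}|)-C_0\rho^{\delta_0}\langle s\rangle^{1-\delta_1}$ obtained from the cosine condition and Proposition \ref{Estimates-for-qp}, and the same integral formula $\nabla_x\phi_a^+(x,\xi)-\xi=\int_0^\infty(\nabla_xI^L_{a\rho})(\tau,q(\tau,0,x,\eta(\infty,0,x,\xi)))\,d\tau$ for the $x$-derivatives. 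Note, however, that you take the existence and smoothness of the limit \eq{limit-phi} as already established, whereas in the paper this is the first step of the proof, carried out via the transport identity $\partial_tR=\nabla_\xi R\cdot a$ for $R(t,x,\xi)=\phi_a(t,x,\xi)-\phi_a(t,0,\xi)$ — the very identity you need below, so it must be stated and justified.

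The genuine gap is the case $\alpha=0$ of \eq{5.9}. You propose to recover $|\partial_\xi^\beta(\phi_a^+(x,\xi)-x\cdot\xi)|\le C_\beta\max_{(ij)\not\le a}\langle x_{ij}\rangle^{1-\delta}$ by integrating the $|\alpha|=1$ bounds along paths inside $\Gamma_a^+(R_0,d,\theta)$. This cannot work as stated. The target bound $\langle x_{ij}\rangle^{1-\delta}$ \emph{grows} with $|x_{ij}|$, so one cannot integrate in from infinity (e.g.\ $\int_0^\infty\langle|x_{ij}|+sd\rangle^{-\delta}\,ds$ diverges since $\delta<1$), and there is no finite anchor point at which $\phi_a^+-x\cdot\xi$ is known to be bounded independently of the estimate being proved: the set $\{x\in\Gamma_a^+ \,:\, |x_{ij}|\le 2R_0 \ \forall (ij)\not\le a\}$ is unbounded in the internal (intra-cluster) variables, and a path joining a general $x$ to it has length comparable to $|x|$, not to $\max_{(ij)\not\le a}|x_{ij}|$, while your gradient bound involves only the latter. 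The paper obtains the $\alpha=0$ case directly from $\partial_\xi^\beta(\phi_a^+(x,\xi)-x\cdot\xi)=\int_0^\infty\partial_\xi^\beta(\nabla_\xi R\cdot a)\,dt$, using the componentwise bound $|\partial_x^\alpha\partial_\xi^\beta\nabla_{\xi_{ij}}R(t,x,\xi)|\le C_{\alpha\beta}\langle x_{ij}\rangle$ (coming from $\nabla_\xi R=x\cdot\int_0^1(\nabla_xy)(0,t,\sigma x,\xi)\,d\sigma$ and \eq{y-1}) together with $|\partial_\xi^\beta a(t,x,\xi)|\le C_\beta\sum_{(ij)\not\le a}\int_0^1\langle\sigma|x_{ij}|+t|\xi_{ij}|\rangle^{-1-\delta}\,d\sigma$; the $t$- and $\sigma$-integrations then produce precisely the factor $\langle x_{ij}\rangle^{1-\delta}|\xi_{ij}|^{-1}$. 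You need this transport argument (or an equivalent direct estimate); the path-integration shortcut does not close the case $\alpha=0$.
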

\begin{proof} We consider $\phi^+$ only. $\phi^-$ can be treated similarly. 
We first prove the existence of the limit \eq{limit-phi} for $t\to+\infty$ and $(x,\xi)\in X\times (X'\setminus\{0\})$. To do so, 
setting
\begin{eqnarray}
R(t,x,\xi)=\phi(t,x,\xi)-\phi(t,0,\xi),\nonumber
\end{eqnarray}
we show the existence of the limits 
\begin{eqnarray}
\lim_{t\to\infty}{\partial}_x^{\alpha}{\partial}_\xi^{\beta} R(t,x,\xi)=\lim_{t\to\infty}\int_0^t
{\partial}_x^{\alpha}{\partial}_\xi^{\beta} {\partial}_t R(\tau,x,\xi)d\tau+{\partial}_x^{\alpha}{\partial}_\xi^{\beta}(x\cdot\xi).\nonumber
\end{eqnarray}
By Hamilton-Jacobi equation \eq{Hamilton-Jacobi},
\begin{equation}
\begin{aligned}
{\partial}_t R(t,x,\xi)&={\partial}_t\phi(t,x,\xi)-{\partial}_t\phi(t,0,\xi)\\
&=I^L_{a\rho}(t,\nabla_\xi\phi(t,x,\xi))-I^L_{a\rho}(t,\nabla_\xi\phi(t,0,\xi))\\
&=(\nabla_\xi\phi(t,x,\xi)-\nabla_\xi\phi(t,0,\xi))\cdot
a(t,x,\xi)
\\
&=(y(0,t,x,\xi)-y(0,t,0,\xi))\cdot a(t,x,\xi)\\
&=\nabla_\xi R(t,x,\xi)\cdot a(t,x,\xi),
\end{aligned}\label{dtR}
\end{equation}
where
\begin{eqnarray}
&&a(t,x,\xi)=\int_0^1(\nabla_x I^L_{a\rho})
(t,\nabla_\xi\phi(t,0,\xi)+\sigma\nabla_\xi R(t,x,\xi))d\sigma,\label{def-a}\\
&&
\nabla_\xi R(t,x,\xi)=x\cdot\int_0^1(\nabla_xy)(0,t,\sigma x,\xi)d\sigma.
\end{eqnarray}
By \eq{y-1}, we have for any ${\alpha},{\beta}$
\begin{eqnarray}
|{\partial}_x^{\alpha}{\partial}_\xi^{\beta}\nabla_{\xi_{ij}} R(t,x,\xi)|\le C_{{\alpha}{\beta}}\langle x_{ij}\rangle.\label{R-1}
\end{eqnarray}
By \eq{y-2} and \eq{eta-phi-y-phi}, for $|{\beta}|\ge 1$ and $\xi\in  X'\setminus\{0\}$
\begin{eqnarray}
|{\partial}_\xi^{\beta} \nabla_\xi \phi(t,0,\xi)|\le C_{\beta} |t|.\label{nablaxi-phi}
\end{eqnarray}
From this, \eq{def-a}, and \eq{R-1}, we have for $\xi\in X'\setminus\{0\}$
\begin{eqnarray}
&&|{\partial}_x^{\alpha}{\partial}_\xi^{\beta} a(t,x,\xi)|\le C_{{\alpha}{\beta}}\langle t\rangle^{-1-\delta/2}\langle x\rangle^{|{\alpha}|+|{\beta}|}
\label{a-1}
\end{eqnarray}
Thus by \eq{dtR}, \eq{R-1} and \eq{a-1}, there exists the limit for any ${\alpha},{\beta}$ and $(x,\xi)\in X\times (X'\setminus\{0\})$
\begin{eqnarray}
\lim_{t\to\infty}{\partial}_x^{\alpha}{\partial}_\xi^{\beta} R(t,x,\xi)=\int_0^\infty {\partial}_x^{\alpha}{\partial}_\xi^{\beta}\left(\nabla_\xi R(t,x,\xi)\cdot a(t,x,\xi)\right)dt+{\partial}_x^{\alpha}{\partial}_\xi^{\beta}(x\cdot \xi).\nonumber
\end{eqnarray}
In particular, $\phi^+(x,\xi)=\lim_{t\to\infty}R(t,x,\xi)$ and $\eta(\infty,0,x,\xi)=\lim_{t\to\infty}\nabla_x\phi(t,x,\xi)$ exist and are $C^\infty$ in $(x,\xi)\in X\times (X'\setminus\{0\})$.

Next we show \eq{5.8}. By the arguments above, the following limit exist:
\begin{align*}
\nabla_x\phi^+(x,\xi)&=\lim_{t\to\infty}\nabla_x\phi(t,x,\xi)=\lim_{t\to\infty}\eta(t,0,x,\xi)\\
&=\lim_{t\to\infty} p(0,t,y(0,t,x,\xi),\xi).
\end{align*}
Thus for a sufficiently large $|x_{ij}|$ (i.e. for $|\rho x_{ij}|\ge 2$ $((ij)\not\le a)$) we have
\begin{eqnarray}
{2}^{-1}|\nabla_x\phi^+(x,\xi)|^{2}+I^L_{a}(x)={2}^{-1}\lim_{t\to\infty}|p(0,t,y(0,t,x,\xi),\xi)|^{2}+I^L_{a\rho}(0,x).\label{phi-p-limit}
\end{eqnarray}
Set for $0\le s\le t<\infty$
\begin{eqnarray}
f_t(s,y,\xi)={2}^{-1}|p(s,t,y,\xi)|^{2}+I^L_{a\rho}(s,q(s,t,y,\xi)).\nonumber
\end{eqnarray}
Then by \eq{classical-orbits} we have
\begin{align*}
\frac{{\partial} f_t}{{\partial} s}(s,y,\xi)&= p(s,t,y,\xi)\cdot {\partial}_s p(s,t,y,\xi)+(\nabla_x I^L_{a\rho})(s,q(s,t,y,\xi))\cdot{\partial}_s q(s,t,y,\xi)\\
&+\frac{{\partial} I^L_{a\rho}}{{\partial} t}(s,q(s,t,y,\xi))\\
&=\frac{{\partial} I^L_{a\rho}}{{\partial} t}(s,q(s,t,y,\xi)).
\end{align*}
On the other hand we have from \eq{qp-yeta}, \eq{etap}, \eq{peta}  and \eq{y-q-eta-p}
\begin{align*}
q(s,t,y(0,t,x,\xi),\xi)&=q(s,t,q(t,0,x,\eta(t,0,x,\xi)),\xi)\\
&=q(s,0,x,\eta(t,0,x,\xi)),\\
p(s,t,y(0,t,x,\xi),\xi)&=p(s,t,q(t,0,x,\eta(t,0,x,\xi)),\xi)\\
&= p(s,0,x,\eta(t,0,x,\xi)).
\end{align*}
Now using Proposition \ref{Estimates-for-qp}, we have for $\cos(x_{ij},\xi_{ij})\ge \theta$ $((ij)\not\le a)$
\begin{align*}
&|q_{ij}(s,t,y(0,t,x,\xi),\xi)|=|q_{ij}(s,0,x,\eta(t,0,x,\xi))|\\
&\ge |x_{ij}+sp_{ij}(s,0,x,\eta(t,0,x,\xi))|-C_0\rho^{\delta_0}\langle s\rangle^{1-{\delta_1}}\\
&=|x_{ij}+sp_{ij}(s,t,y(0,t,x,\xi),\xi)|-C_0\rho^{\delta_0}\langle s\rangle^{1-{\delta_1}}\\
&\ge c(|x_{ij}|+s|\xi_{ij}|)-C_0\rho^{\delta_0}\langle s\rangle^{1-{\delta_1}}-C_0\rho^{\delta_0}\langle s\rangle^{1-{\delta_1}},
\end{align*}
where $c>0$ is a constant independent of $s,t,x,\xi$. By $(x,\xi)\in \Gamma_a^+(R,d,\theta)$, we have $|\xi_{ij}|\ge d$, and from the definition \eq{5.5} of $I^L_{a\rho}(t,x)$
\begin{eqnarray}
\mbox{supp}\ \frac{{\partial} I^L_{a\rho}}{{\partial} t}(s,x)\subset
\{x| 1\le \langle \log\langle s\rangle\rangle|x_{ij}|/\langle s\rangle\le 2\ ((ij)\not\le a)\}.\nonumber
\end{eqnarray}
Thus there is a constant $S=S_{d,\theta}>1$ independent of $t$ such that for any $s\in[S,t]$ and $(x,\xi)\in \Gamma_a^+(R,d,\theta)$
\begin{eqnarray}
\frac{{\partial} f_t}{{\partial} s}(s,y(0,t,x,\xi),\xi)=0.\nonumber
\end{eqnarray}
For $s\in[0,S]$, taking $R=R_S>1$ large enough, we have for $|x_{ij}|\ge R$ and $\cos(x_{ij},\xi_{ij})\ge \theta$
\begin{eqnarray}
\frac{{\partial} f_t}{{\partial} s}(s,y(0,t,x,\xi),\xi)=0.\nonumber
\end{eqnarray}
Therefore we have shown that for $(x,\xi)\in \Gamma_a^+(R,d,\theta)$
\begin{eqnarray}
f_t(s,y(0,t,x,\xi),\xi)=\mbox{constant for}\ 0\le s\le t<\infty.\nonumber
\end{eqnarray}
In particular we have
\begin{eqnarray}
f_t(0,y(0,t,x,\xi),\xi)=f_t(t,y(0,t,x,\xi),\xi),\nonumber
\end{eqnarray}
which means
\begin{eqnarray}
{2}^{-1}|p(0,t,y(0,t,x,\xi),\xi)|^{2}+I^L_{a\rho}(0,x)={2}^{-1}|\xi|^{2}+I^L_{a\rho}(t,y(0,t,x,\xi)).\nonumber
\end{eqnarray}
Since $I^L_{a\rho}(t,y)\to0$ uniformly in $y\in \R^n$ when $t\to\infty$ by \eq {5.6}, we have from this and \eq{phi-p-limit}
\begin{eqnarray}
{2}^{-1}|\nabla_x\phi^+(x,\xi)|^{2}+I^L(x)={2}^{-1}|\xi|^{2}\quad \mbox{for}\quad (x,\xi)\in \Gamma_a^+(R,d,\theta),\nonumber
\end{eqnarray}
if $R>1$ is sufficiently large.

We finally prove the estimates \eq{5.9}. We first consider the derivatives with respect to $\xi$:
\begin{eqnarray}
{\partial}_\xi^{\beta}(\phi^+(x,\xi)-x\cdot \xi)=\int_0^\infty{\partial}_\xi^{\beta} {\partial}_t R(t,x,\xi)dt,\label{partialxiphi}
\end{eqnarray}
where $R(t,x,\xi)=\phi(t,x,\xi)-\phi(t,0,\xi)$ as above. Set
\begin{eqnarray}
\gamma(t,x,\xi)=y(0,t,x,\xi)-(x+t\xi)\nonumber
\end{eqnarray}
for $(x,\xi)\in\Gamma_a^+(R,d,\theta)$. Then
by \eq{y-2} we have for $\sigma\in[0,1]$
\begin{equation}
\begin{aligned}
&|\nabla_{\xi_{ij}}\phi(t,0,\xi)+\sigma\nabla_{\xi_{ij}} R(t,x,\xi)|\\
&=|y_{ij}(0,t,0,\xi)+\sigma(y_{ij}(0,t,x,\xi)-y_{ij}(0,t,0,\xi))|\\
&=|t\xi_{ij}+\gamma_{ij}(t,0,\xi)+\sigma(x_{ij}+\gamma_{ij}(t,x,\xi)-\gamma_{ij}(t,0,\xi))|\\
&=|\sigma x_{ij}+t\xi_{ij}+(1-\sigma)\gamma_{ij}(t,0,\xi)+\sigma\gamma_{ij}(t,x,\xi)|\\
&\ge c_0(\sigma|x_{ij}|+t|\xi_{ij}|{})-c_1\rho^{\delta_0}\min(\langle t\rangle^{1-{\delta_1}},|t|)
\end{aligned}\label{nabphiR}
\end{equation}
for some constants $c_0,c_1>0$ independent of $x,\xi,\sigma$ and $t\ge0$. Thus there are constants $\rho\in(0,1)$ and $T=T_{d,\theta}>0$ such that for all $t\ge T$ and $(x,\xi)\in \Gamma_a^+(R,d,\theta)$
\begin{eqnarray}
\langle \nabla_{\xi_{ij}}\phi(t,0,\xi)+\sigma\nabla_{\xi_{ij}} R(t,x,\xi)\rangle^{-1}\le C\langle \sigma|x_{ij}|+t|\xi_{ij}|{}\rangle^{-1}.\nonumber
\end{eqnarray}
Therefore $a(t,x,\xi)$ defined by \eq{def-a} satisfies by \eq{R-1} and \eq{nablaxi-phi}
\begin{eqnarray}
&&\hskip-48pt|{\partial}_\xi^{\beta} a(t,x,\xi)|\le C_{\beta} \sum_{(ij)\not\le a}\int_0^1\langle \sigma|x_{ij}|+t|\xi_{ij}|{}\rangle^{-1-\delta}d\sigma.\label{est-a}
\end{eqnarray}
Using \eq{nabphiR}, we see that \eq{est-a} holds also for $t\in[0,T]$ if we take $\rho>0$ small enough. Therefore for all $(x,\xi)\in\Gamma_a^+(R,d,\theta)$ we have from \eq{dtR}, \eq{R-1} and \eq{partialxiphi}
\begin{align*}
|{\partial}_\xi^{\beta}(\phi^+(x,\xi)-x\cdot\xi)|
&\le C_{T,{\beta},d} \sum_{(ij)\not\le a}\langle x_{ij}\rangle
\int_0^\infty\int_0^1\langle \sigma|x_{ij}|+t|\xi_{ij}|{}\rangle^{-1-\delta}d\sigma dt\\
&\le C_{T,{\beta},d}\sum_{(ij)\not\le a}\langle x_{ij}\rangle|\xi_{ij}|^{-1}\int_0^1\int_0^\infty\langle\sigma|x_{ij}|+\tau\rangle^{-1-\delta} d\tau d\sigma\\
&\le C_{T,{\beta},d}\sum_{(ij)\not\le a}\langle x_{ij}\rangle^{1-\delta}|\xi_{ij}|^{-1}.
\end{align*}

We next consider
\begin{equation}
\begin{aligned}
\nabla_x\phi^+(x,\xi)-\xi&=\lim_{t\to\infty}(\nabla_x\phi(t,x,\xi)-\xi)\\
&=\lim_{t\to\infty}(p(0,t,y(0,t,x,\xi),\xi)-\xi)\\
&=\lim_{t\to\infty}\int_0^t(\nabla_xI^L_{a\rho})\left(\tau,q(\tau,t,y(0,t,x,\xi),\xi)\right)d\tau\\
&=\lim_{t\to\infty}\int_0^t(\nabla_x I^L_{a\rho})\left(\tau,q(\tau,0,x,\eta(t,0,x,\xi))\right)d\tau\\
&=\int_0^\infty (\nabla_x I^L_{a\rho})\left(\tau,q(\tau,0,x,\eta(\infty,0,x,\xi))\right)d\tau.
\end{aligned}\label{nablaxphi}
\end{equation}
By \eq{pst} and \eq{qts} of Proposition \ref{Estimates-for-qp}
\begin{align*}
|q_{ij}(\tau,0,x,\eta(\infty,0,x,\xi))|&\ge
|x_{ij}+\tau p_{ij}(\tau,0,x,\eta(\infty,0,x,\xi))|-C_0\rho^{\delta_0}\langle \tau\rangle^{1-{\delta_1}}\\
&\ge
|x_{ij}+\tau p_{ij}(\tau,\infty,y(0,\infty,x,\xi),\xi)|-C_0\rho^{\delta_0}\langle \tau\rangle^{1-{\delta_1}}\\
&\ge c_1(|x_{ij}|+\tau|\xi_{ij}|{})-C_0\rho^{\delta_0}\langle \tau\rangle^{1-{\delta_1}}-C_0\rho^{\delta_0}\langle \tau\rangle^{1-{\delta_1}}
\end{align*}
for some constant $c_1>0$ and for all $(x,\xi)\in\Gamma_a^+(R,d,\theta)$.
Thus taking $\rho>0$ sufficiently small and $R=R_{d,\theta,\rho}>1$ sufficiently large, we have for $(x,\xi)\in\Gamma_a^+(R,d,\theta)$
\begin{eqnarray}
|q_{ij}(\tau,0,x,\eta(\infty,0,x,\xi))|\ge c_0(|x_{ij}|+\tau|\xi_{ij}|{})\nonumber
\end{eqnarray}
for some constant $c_0>0$. Therefore we obtain
\begin{eqnarray}
|\nabla_x\phi^+(x,\xi)-\xi|\le C\sum_{(ij)\not\le a}\int_0^\infty\langle|x_{ij}|+\tau|\xi_{ij}|{}\rangle^{-1-\delta}d\tau
\le C\max_{(ij)\not\le a}\langle x_{ij}\rangle^{-\delta}.\nonumber
\end{eqnarray}

For higher derivatives, the proof is similar. For example let us consider
\begin{align*}
{\partial}_\xi{\partial}_x\phi^+(x,\xi)-I&=\int_0^\infty  {\partial}_\xi\{(\nabla_x I^L_{a\rho})\left(\tau,q(\tau,0,x,\eta(\infty,0,x,\xi))\right)\}d\tau\\
&=\int_0^\infty  (\nabla_x\nabla_x I^L_{a\rho})\left(\tau,q(\tau,0,x,\eta(\infty,0,x,\xi))\right) \nabla_\xi q\cdot \nabla_\xi\eta d\tau,
\end{align*}
where we abbreviated $q=q(\tau,0,x,\eta(\infty,0,x,\xi))$ and $\eta=\eta(\infty,0,x,\xi)$.
The RHS is bounded by a constant times
\begin{eqnarray}
 \sum_{(ij)\not\le a}\int_0^\infty\langle |x_{ij}|+\tau|\xi_{ij}|{}\rangle^{-2-\delta}\langle\tau\rangle d\tau\le c_{d} \max_{(ij)\not\le a}\langle x_{ij}\rangle^{-\delta}\nonumber
\end{eqnarray}
for $(x,\xi)\in\Gamma_a^+(R,d,\theta)$ by \eq{q-2} and \eq{eta-2} of Propositions \ref{Estimates-for-qp} and \ref{Estimates-for-etay}.
Other estimates are proved similarly by using \eq{(5.13)}, \eq{y-1}, \eq{q-2}, \eq{q-3}, \eq{eta-2}, \eq{eta-3}, \eq{nablaxphi} and the following relations.
\begin{align*}
&\nabla_x\phi^+(x,\xi)-\xi=\lim_{t\to\infty}\int_0^t(\nabla_xI^L_{a\rho})\left(\tau,q(\tau,t,y(0,t,x,\xi),\xi)\right)d\tau,\\
&q(\tau,t,y(0,t,x,\xi),\xi)=q(\tau,0,x,\eta(t,0,x,\xi)).
\end{align*}
\end{proof}

From this we can derive the following theorem in the same way as in Theorem 2.5 of \cite{[IK]}. Let $0<\theta<1$ and let $\psi_\pm(\tau)\in C^\infty([-1,1])$ satisfy
\begin{align}
\begin{array}{ll}
&0\le \psi_\pm(\tau)\le 1,\\
&\psi_+(\tau)=\left\{
\begin{array}{ll} 1 &\text{for}\ \theta\le \tau\le 1,\\
0&\text{for}\ -1\le\tau\le\theta/2,
\end{array}\right.\\
&\psi_-(\tau)=\left\{
\begin{array}{ll} 0 &\text{for}\ -\theta/2\le\tau\le 1,\\
1&\text{for}\ -1\le \tau\le-\theta.
\end{array}\right.
\end{array}
\end{align}
%In the following we use the notation $\zeta$ for the momentum variable conjugate to a configuration varibale $z$. We restrict the function $\phi^\pm_a(x,\xi)$ to the subspace $X\times(X'_a\times\{0\})\subset X\times(X'_a\times X^{'a})=X\times X'$ such that $\xi=(\xi_a,0)$ with letting $\xi^a=0$ in $\xi=(\xi_a,\xi^a)$. Note that in this setting, every $\xi_{ij}$ is equal to some $\zeta_{ak}$ with $1\le k\le k_a$, and vice vsersa any $\zeta_{ak}$ is equal to $\xi_{ij}$ with $i\in C_\ell$ and $j\in C_m$ for some $C_\ell,C_m\in a$, $\ell\ne m$. 
We set for $(x,\xi)\in X\times (X'\setminus\{0\})$
\begin{align*}
\chi_a^\pm(x,\xi)=\prod_{(ij)\not\le a}\psi_\pm(\cos(x_{ij},\xi_{ij}))
\end{align*}
and define $\varphi_{a}(x,\xi)=\varphi_{{a},\theta,d,R_0}(x,\xi)$ by
\begin{equation}
\begin{aligned}
&\varphi_{a}(x,\xi)\\
&=\{(\phi_{a}^+(x,\xi)-x\cdot\xi)\chi_a^+(x,\xi)+
(\phi_{a}^-(x,\xi)-x\cdot\xi)\chi_a^-(x,\xi)\}\\
&\times\prod_{(ij)\not\le a}\chi_0(2\xi_{ij}/d)\chi_0(2x_{ij}/R_0)+x\cdot\xi
\end{aligned}\label{(5.50)}
\end{equation}
for
$d,R_0>0$. Note that $\varphi_{{a},\theta,d,R_0}(x,\xi_a)=\varphi_{{a},\theta,d',R_0'}(x,\xi_a)$ when $|x_{ij}|\ge\max(R_0,R_0')$, $|\xi_{ij}|\ge\max(d,d')$ for all $(ij)\not\le a$. We remark that when all long-range pair potentials vanish: $V^L_{ij}=0$, one has $\varphi_a(x,\xi)=x\cdot\xi$ since $\phi_a^\pm(x,\xi)=x\cdot\xi$ as mentioned. 
We now have

\begin{theorem}\label{Theorem 5.2} Let Assumption \ref{potentialdecay} be satisfied. Let $0<\theta<1$ and $d>0$. Then there exists a constant $R_0>1$ such that the $C^\infty$ function $\varphi_{a}(x,\xi)$ defined above satisfies the following properties.
\begin{namelist}{8888}
\item[  {\rm i)}] For $(x,\xi)\in \Gamma_a^+(R_0,d,\theta)\cup\Gamma_a^-(R_0,d,\theta)$, $\varphi_{a}$ is a solution of
\begin{align}
\frac{1}{2}|\nabla_{x}\varphi_{a}(x,\xi)|^2+I^L_{a}(x)=\frac{1}{2}|\xi|^2.
\label{5.10}
\end{align}
\item[  {\rm ii)}] For any $(x,\xi)\in X\times X'$ and multi-indices $\alpha,\beta$, $\varphi_{a}$ satisfies
\begin{align}
|\partial_{x}^\alpha\partial_{\xi}^\beta(\varphi_{a}(x,\xi)-x\cdot\xi)|
\le
 C_{\alpha\beta}\max_{(ij)\not\le a}(\langle x_{ij}\rangle^{1-\delta-|\alpha|}).\label{5.11}
\end{align}
In particular, if $\alpha\ne 0$,
\begin{align}
|\partial_{x}^\alpha\partial_{\xi}^\beta(\varphi_{a}(x,\xi)-x\cdot\xi)|\le C_{\alpha\beta}R_0^{-\delta_0}
\max_{(ij)\not\le a}(\langle x_{ij}\rangle^{1-\delta_1-|\alpha|})\label{5.12}
\end{align}
for any $\delta_0,\delta_1\ge 0$ with $\delta_0+\delta_1=\delta$. Further
\begin{align}
\varphi_{a}(x,\xi)=x\cdot\xi\label{5.13}
\end{align}
when $|x_{ij}|\le R_0/2$ or $|\xi_{ij}|\le d/2$ for some $(ij)\not\le a$.
\item[  {\rm iii)}] Let
\begin{align}
A_{a}(x,\xi)=e^{-i\varphi_{a}(x,\xi)}\left(H_0+I^L_{a}(x)-\frac{1}{2}|\xi|^2\right) e^{i\varphi_{a}(x,\xi)}.\label{5.14}
\end{align}
Then
\begin{equation}
\begin{aligned}
A_{a}(x,\xi)=\frac{1}{2}|\nabla_{x}\varphi_{a}(x,\xi)|^2+I^L_{a}(x_a,x^a)-\frac{1}{2}|\xi|^2-\frac{1}{2}i\Delta_{x}\varphi_{a}(x,\xi)
%\\
%&=\frac{1}{2}|\nabla_{x}\varphi_{a}(x,\xi_a)|^2+I_{a}(x_a,x^a)-\frac{1}{2}|\xi_a|^2-i\Delta_{x}\varphi_{a}(x,\xi_a)\\
%&=A_a(x,\xi_a)
\end{aligned}\label{5.15}
\end{equation}
and
\begin{equation}
\begin{aligned}
&|\partial_{x}^\alpha\partial_{\xi}^\beta A_{a}(x,\xi)|\\
&\le\left\{
\begin{array}{ll}
C_{\alpha\beta}\max_{(ij)\not\le a}(\langle x_{ij}\rangle^{-1-\delta-|\alpha|}),& (x,\xi)\in\Gamma_a^+(R_0,d,\theta)\cup\Gamma_a^-(R_0,d,\theta),\\
C_{\alpha\beta}\max_{(ij)\not\le a}(\langle x_{ij}\rangle^{-\delta-|\alpha|}),&
\mbox{otherwise}.
\end{array}\right.
\end{aligned}\label{estimateofAa}
\end{equation}
\end{namelist}
\end{theorem}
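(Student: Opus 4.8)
The plan is to extract all three parts directly from the construction \eq{(5.50)} together with the properties of $\phi_a^\pm$ furnished by Theorem \ref{Theorem 5.1}, following the two-body argument of Theorem 2.5 of \cite{[IK]}. The structural observation to make first is that on $\Gamma_a^+(R_0,d,\theta)$ every cutoff factor in \eq{(5.50)} is identically one: $\psi_+(\cos(x_{ij},\xi_{ij}))=1$ because $\cos\ge\theta$, $\psi_-(\cos(x_{ij},\xi_{ij}))=0$, $\chi_0(2x_{ij}/R_0)=1$ because $|x_{ij}|\ge R_0$, and $\chi_0(2\xi_{ij}/d)=1$ because $|\xi_{ij}|\ge d$; hence $\varphi_a=\phi_a^+$ there, and symmetrically $\varphi_a=\phi_a^-$ on $\Gamma_a^-(R_0,d,\theta)$ (on that cone $\chi_a^+$ vanishes and $\chi_a^-=1$). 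Choosing $R_0$ large enough to dominate the constants produced by Theorem \ref{Theorem 5.1} for both $\phi_a^+$ and $\phi_a^-$ with the parameters $\theta/2,d/2$ (so that the slightly enlarged cones $\Gamma_a^\pm(R_0/2,d/2,\theta/2)$ also lie in the good region), part i) is immediate from \eq{5.8}.

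For part ii) the key point is $\mbox{supp}(\varphi_a-x\cdot\xi)\subset\Gamma_a^+(R_0/2,d/2,\theta/2)\cup\Gamma_a^-(R_0/2,d/2,\theta/2)$: the $\chi_0$-factors force $|x_{ij}|\ge R_0/2$ and $|\xi_{ij}|\ge d/2$, and the $\psi_\pm$-factors force $\pm\cos(x_{ij},\xi_{ij})\ge\theta/2$ for all $(ij)\not\le a$ (and, since $\psi_+$ and $\psi_-$ have disjoint supports, only one summand of \eq{(5.50)} is active near any point). On this set \eq{5.9} applies to $\phi_a^\pm-x\cdot\xi$, and \eq{5.11} follows by the Leibniz rule: derivatives landing entirely on $\phi_a^\pm-x\cdot\xi$ are controlled by \eq{5.9}; derivatives hitting a cutoff produce bounded factors that gain $\langle x_{ij}\rangle^{-1}$ per $x$-derivative (for $\chi_0(2x_{ij}/R_0)$ because its derivatives are $O(R_0^{-|\alpha|})$ and supported on $|x_{ij}|\sim R_0$; for $\psi_\pm(\cos(x_{ij},\xi_{ij}))$ because $\cos$ is $0$-homogeneous in $x_{ij}$ and $|\xi_{ij}|\ge d/2$), multiplied against lower-order derivatives of $\phi_a^\pm-x\cdot\xi$. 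The refinement \eq{5.12} for $\alpha\ne0$ comes from writing $\langle x_{ij}\rangle^{1-\delta-|\alpha|}=\langle x_{ij}\rangle^{-\delta_0}\langle x_{ij}\rangle^{1-\delta_1-|\alpha|}$ and using $\langle x_{ij}\rangle\gtrsim R_0$ on the support; \eq{5.13} is read off \eq{(5.50)} directly, since a vanishing $\chi_0$-factor forces $\varphi_a=x\cdot\xi$.

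For part iii), formula \eq{5.15} is the standard identity $e^{-i\varphi_a}(-\tfrac12\Delta_x)e^{i\varphi_a}=\tfrac12|\nabla_x\varphi_a|^2-\tfrac i2\Delta_x\varphi_a$ applied to the $x$-function $e^{i\varphi_a(\cdot,\xi)}$, the term $I_a^L(x)-\tfrac12|\xi|^2$ being a multiplication operator. On $\Gamma_a^+(R_0,d,\theta)\cup\Gamma_a^-(R_0,d,\theta)$ part i) annihilates the first three terms of \eq{5.15}, leaving $A_a=-\tfrac i2\Delta_x(\varphi_a-x\cdot\xi)$, so \eq{5.11} with two extra $x$-derivatives gives $|\partial_x^\alpha\partial_\xi^\beta A_a|\le C_{\alpha\beta}\max_{(ij)\not\le a}\langle x_{ij}\rangle^{-1-\delta-|\alpha|}$. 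Off these cones there are two subcases: if $\varphi_a\equiv x\cdot\xi$ near the point then $A_a=I_a^L(x)$ and \eq{1.3} yields the bound $\langle x_{ij}\rangle^{-\delta-|\alpha|}$ (note each $\langle x_{ij}\rangle^{-\delta}$ is dominated by $\max_{(kl)\not\le a}\langle x_{kl}\rangle^{-\delta}$); otherwise the point lies in $\Gamma_a^\pm(R_0/2,d/2,\theta/2)$, where, by the choice of $R_0$, $\phi_a^\pm$ still solves the eikonal equation on the support of the cutoffs. In that subcase I would write $\varphi_a=(\phi_a^\pm-x\cdot\xi)\chi+x\cdot\xi$ with $\chi$ the cutoff product, expand $\tfrac12|\nabla_x\varphi_a|^2-\tfrac12|\xi|^2+I_a^L$, and use the eikonal identity $\xi\cdot(\nabla_x\phi_a^\pm-\xi)=-I_a^L-\tfrac12|\nabla_x\phi_a^\pm-\xi|^2$ to cancel the piece linear in $\xi$; what remains is $(1-\chi)I_a^L$, terms of order $|\nabla_x\phi_a^\pm-\xi|^2$, cutoff-derivative terms times $\phi_a^\pm-x\cdot\xi$, and $-\tfrac i2\Delta_x\varphi_a$, all of order $\max\langle x_{ij}\rangle^{-\delta}$ with the correct gain under differentiation once one also invokes the sharper $\langle x_{ij}\rangle^{1-\delta}|\xi_{ij}|^{-1}$-type bounds for $\phi_a^\pm-x\cdot\xi$ recorded in the proof of Theorem \ref{Theorem 5.1}.

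The main obstacle is exactly this last bookkeeping in the ``otherwise'' alternative of part iii): once one leaves the cones the eikonal equation no longer holds identically, and the naive expansion of $|\nabla_x\varphi_a|^2$ contains a term $\xi\cdot\nabla_x(\varphi_a-x\cdot\xi)$ that is a priori linear in $|\xi|$. Taming it requires simultaneously using the eikonal equation where it is valid (on the cutoff support) and the extra momentum decay of $\phi_a^\pm-x\cdot\xi$, precisely the balance carried out in \cite{[IK]}; the only genuinely $N$-body feature, that a single $x_{ij}$ with $(ij)\not\le a$ mixes intercluster and internal coordinates, affects the index bookkeeping but not the structure of the estimate.
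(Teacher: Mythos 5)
Your proposal is correct and follows the route the paper itself intends: the paper gives no written proof of this theorem, deferring to Theorem 2.5 of \cite{[IK]}, and your argument --- reading i) and ii) off the support structure of the cutoffs in the definition \eq{(5.50)} of $\varphi_a$ together with Theorem \ref{Theorem 5.1}, and reducing iii) to the eikonal identity plus the refined momentum-decay bounds from the proof of Theorem \ref{Theorem 5.1} --- is exactly that reduction. In particular you correctly isolate the only delicate point, the a priori $\xi$-linear term $\xi\cdot\nabla_x(\varphi_a-x\cdot\xi)$ in the transition region, and the mechanism that tames it: the eikonal cancellation on the cutoff support, paired with the bound $|\phi_a^\pm-x\cdot\xi|\lesssim\langle x_{ij}\rangle^{1-\delta}|\xi_{ij}|^{-1}$ and the observation that $\xi\cdot\nabla_x$ applied to each cutoff factor (which depends on $x$ only through one $x_{ij}$) produces a compensating factor of order $|\xi_{ij}|\langle x_{ij}\rangle^{-1}$.
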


In the same way, we obtain the following theorem with setting $x^a=0$, $\xi^a=0$ in the above. We note that by this setting $x^a=\xi^a=0$ we have $x_{ij}=z_{ak}$, $\xi_{ij}=\zeta_{ak}$ for some $k$ with $1\le k\le k_a$ for $2\le|a|\le N$, hence 
\begin{align}
\Gamma_a^{\pm}(R_0,d,\theta)=\{(x,\xi)\ |\ |z_{ak}|\ge R_0,|\zeta_{ak}|\ge d, \pm\cos(z_{ak},\zeta_{ak})\ge \theta,(1\le k\le k_a)\}.\label{Gammaryoiki-zak}
\end{align}
We let $X_a=\R^{\nu(|a|-1)}$ and $X_a'$ be the conjugate momentum space $\R^{\nu(|a|-1)}$.

\begin{theorem}\label{Theorem 5.2-2} Let Assumption \ref{potentialdecay} be satisfied. Let $0<\theta<1$ and $d>0$. Then there exists a constant $R_0>1$ such that the $C^\infty$ function $\varphi_{a}(x_a,\xi_a):=\varphi_a(x_a,0,\xi_a,0)$ satisfies the following properties.
\begin{namelist}{8888}
\item[  {\rm i)}] For $(x_a,\xi_a)\in \Gamma_a^+(R_0,d,\theta)\cup\Gamma_a^-(R_0,d,\theta)$, $\varphi_{a}$ is a solution of
\begin{align}
\frac{1}{2}|\nabla_{x}\varphi_{a}(x_a,\xi_a)|^2+I^L_{a}(x_a,0)=\frac{1}{2}|\xi_a|^2.
\label{5.10-zak}
\end{align}
\item[  {\rm ii)}] For any $(x_a,\xi_a)\in X_a\times X_a'$ and multi-indices $\alpha,\beta$, $\varphi_{a}$ satisfies
\begin{align}
|\partial_{x_a}^\alpha\partial_{\xi_a}^\beta(\varphi_{a}(x_a,\xi_a)-x_a\cdot\xi_a)|
\le
 C_{\alpha\beta}\max_{1\le k\le k_a}(\langle z_{ak}\rangle^{1-\delta-|\alpha|}).\label{5.11-zak}
\end{align}
In particular, if $\alpha\ne 0$,
\begin{align}
|\partial_{x_a}^\alpha\partial_{\xi_a}^\beta(\varphi_{a}(x_a,\xi_a)-x_a\cdot\xi_a)|\le C_{\alpha\beta}R_0^{-\delta_0}
\max_{1\le k\le k_a}(\langle z_{ak}\rangle^{1-\delta_1-|\alpha|})\label{5.12-zak}
\end{align}
for any $\delta_0,\delta_1\ge 0$ with $\delta_0+\delta_1=\delta$. Further
\begin{align}
\varphi_{a}(x_a,\xi_a)=x_a\cdot\xi_a\label{5.13-zak}
\end{align}
when $|z_{ak}|\le R_0/2$ or $|\zeta_{ak}|\le d/2$ for some $1\le k\le k_a$.
\item[  {\rm iii)}] Let
\begin{align}
A_{a}(x_a,\xi_a)=e^{-i\varphi_{a}(x_a,\xi_a)}\left(T_a+I^L_{a}(x_a,0)-\frac{1}{2}|\xi_a|^2\right) e^{i\varphi_{a}(x_a,\xi_a)}.\label{5.14-zak}
\end{align}
Then
\begin{equation}
\begin{aligned}
A_{a}(x_a,\xi_a)=\frac{1}{2}|\nabla_{x_a}\varphi_{a}(x_a,\xi_a)|^2+I^L_{a}(x_a,0)-\frac{1}{2}|\xi_a|^2-\frac{1}{2}i\Delta_{x_a}\varphi_{a}(x_a,\xi_a)
%\\
%&=\frac{1}{2}|\nabla_{x}\varphi_{a}(x,\xi_a)|^2+I_{a}(x_a,x^a)-\frac{1}{2}|\xi_a|^2-i\Delta_{x}\varphi_{a}(x,\xi_a)\\
%&=A_a(x,\xi_a)
\end{aligned}\label{5.15-zak}
\end{equation}
and
\begin{equation}
\begin{aligned}
&|\partial_{x_a}^\alpha\partial_{\xi_a}^\beta A_{a}(x_a,\xi_a)|\\
&\le\left\{
\begin{array}{ll}
C_{\alpha\beta}\max_{1\le k\le k_a}(\langle z_{ak}\rangle^{-1-\delta-|\alpha|}),& (x_a,\xi_a)\in\Gamma_a^+(R_0,d,\theta)\cup\Gamma_a^-(R_0,d,\theta),\\
C_{\alpha\beta}\max_{1\le k\le k_a}(\langle z_{ak}\rangle^{-\delta-|\alpha|}),&
\mbox{otherwise}.
\end{array}\right.
\end{aligned}\label{estimateofAa-zak}
\end{equation}
\end{namelist}
\end{theorem}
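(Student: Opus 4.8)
The statement is precisely the specialization of Theorems~\ref{Theorem 5.1} and~\ref{Theorem 5.2} to the linear subspace $\{x^a=0\}\times\{\xi^a=0\}$, so the plan is to rerun the construction of this section with the full configuration space $X$ replaced throughout by the intercluster space $X_a=\R^{\nu(|a|-1)}$. The point that makes this legitimate is that, upon setting $x^a=\xi^a=0$, for every pair $(ij)\not\le a$ the relative vector $x_{ij}$ collapses to the intercluster vector $z_{ak}$ with $k=k(i,j)$, and $\xi_{ij}$ to $\zeta_{ak}$; hence
\[
I^L_a(x_a,0)=\sum_{1\le k\le k_a}W_k(z_{ak}),\qquad W_k:=\sum_{\substack{(ij)\not\le a\\k(i,j)=k}}V^L_{ij},
\]
and each $W_k$ obeys the same decay bound \eq{1.3} as an original long-range pair potential. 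Thus $T_a+I^L_a(x_a,0)$ is itself a long-range Hamiltonian on $X_a$ of exactly the type handled above --- the clusters of $a$ play the role of point particles --- and the conic region \eq{Gammaryoiki} degenerates to the form \eq{Gammaryoiki-zak}.

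First I would build the time-dependent modifier on $X_a$: put $I^L_{a\rho}(t,x_a)=I^L_a(x_a,0)\prod_{1\le k\le k_a}\chi_0(\rho z_{ak})\chi_0(\langle\log\langle t\rangle\rangle z_{ak}/\langle t\rangle)$, which satisfies the $X_a$-analogue of \eq{5.6}; solve the classical orbit system \eq{classical-orbits} in $X_a\times X_a'$ to get $q,p$ with the bounds of Proposition~\ref{Estimates-for-qp} (with $(y,\xi)$ and $x_{ij}$ replaced by $(y_a,\xi_a)$ and $z_{ak}$); for $\rho$ small with $C_0\rho^{\delta_0}<1/2$, invert to obtain the diffeomorphism $y(s,t,x_a,\xi_a)$ and the dual map $\eta(t,s,x_a,\xi_a)$ as in Proposition~\ref{Estimates-for-etay}; solve the Hamilton--Jacobi equation \eq{Hamilton-Jacobi} for $\phi_a(t,x_a,\xi_a)$; and show, verbatim as in the proof of Theorem~\ref{Theorem 5.1}, that $\phi^\pm_a(x_a,\xi_a):=\lim_{t\to\pm\infty}(\phi_a(t,x_a,\xi_a)-\phi_a(t,0,\xi_a))$ exists and is $C^\infty$ on $X_a\times(X_a'\setminus\{0\})$, with $\nabla_{x_a}\phi^\pm_a=\lim_{t\to\pm\infty}p(0,t,y(0,t,x_a,\xi_a),\xi_a)$.

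Next I would establish the eikonal identity \eq{5.10-zak} and the estimates \eq{5.11-zak}--\eq{5.12-zak} on $\Gamma^\pm_a(R_0,d,\theta)$ by the energy-conservation-along-trajectories argument of Theorem~\ref{Theorem 5.1}: for $f_t(s,y_a,\xi_a)=\tfrac12|p(s,t,y_a,\xi_a)|^2+I^L_{a\rho}(s,q(s,t,y_a,\xi_a))$ one has $\partial_sf_t=\partial_tI^L_{a\rho}(s,q(s,t,\cdot))$; since $\partial_tI^L_{a\rho}(s,\cdot)$ is supported where $1\le\langle\log\langle s\rangle\rangle|z_{ak}|/\langle s\rangle\le2$, while for $(x_a,\xi_a)\in\Gamma^\pm_a(R_0,d,\theta)$ with $R_0$ large the trajectory satisfies $|q_{ij}(s,0,x_a,\eta(t,0,x_a,\xi_a))|\ge c(|z_{ak}|+s|\zeta_{ak}|)$, the derivative $\partial_sf_t$ vanishes along $y_a=y(0,t,x_a,\xi_a)$; equating $f_t$ at $s=0$ and $s=t$ and letting $t\to\infty$ yields \eq{5.10-zak}, and the estimates \eq{5.11-zak}--\eq{5.12-zak} drop out of the integral representations of ${\partial}^\alpha_{x_a}{\partial}^\beta_{\xi_a}(\phi^\pm_a-x_a\cdot\xi_a)$, whose integrands decay like $\langle|z_{ak}|+\tau|\zeta_{ak}|\rangle^{-1-\delta}$ by Propositions~\ref{Estimates-for-qp}--\ref{Estimates-for-etay}.

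Finally I would glue the two branches exactly as in \eq{(5.50)}: with $\chi^\pm_a(x_a,\xi_a)=\prod_{1\le k\le k_a}\psi_\pm(\cos(z_{ak},\zeta_{ak}))$, set
\[
\varphi_a(x_a,\xi_a)=\bigl\{(\phi^+_a-x_a\cdot\xi_a)\chi^+_a+(\phi^-_a-x_a\cdot\xi_a)\chi^-_a\bigr\}\prod_{1\le k\le k_a}\chi_0(2\zeta_{ak}/d)\chi_0(2z_{ak}/R_0)+x_a\cdot\xi_a.
\]
Property i) is then immediate because on $\Gamma^\pm_a(R_0,d,\theta)$ every cutoff equals $1$, so $\varphi_a=\phi^\pm_a$ solves \eq{5.10-zak} there; property ii) --- the global bounds and the exact identity \eq{5.13-zak} --- follows by combining the conic-region estimates with the support conditions on the cutoffs (off the conic regions $\varphi_a-x_a\cdot\xi_a$ is either $0$ or localized by $\chi_0(2z_{ak}/R_0)$, which supplies the $R_0^{-\delta_0}$ and $\langle z_{ak}\rangle^{1-\delta}$ factors); and property iii) is the direct computation \eq{5.15-zak} of the conjugated operator, where on $\Gamma^\pm_a(R_0,d,\theta)$ the first three terms cancel by the eikonal equation leaving $-\tfrac12 i\Delta_{x_a}\varphi_a$, which by ii) has the extra-decay bound $\langle z_{ak}\rangle^{-1-\delta-|\alpha|}$, while off the conic region only the cruder $\langle z_{ak}\rangle^{-\delta-|\alpha|}$ bound of ii) is available --- these are the two cases of \eq{estimateofAa-zak}. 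The single delicate ingredient, no harder here than in Theorem~\ref{Theorem 5.1}, is the trajectory estimate $|q_{ij}|\gtrsim|z_{ak}|+\tau|\zeta_{ak}|$ for initial data in the truncated cone, which is simultaneously what gives convergence of the defining integrals and what kills $\partial_tI^L_{a\rho}$ along trajectories; everything else is a transcription of the $N$-body argument to the $|a|$-body intercluster problem.
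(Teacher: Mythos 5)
Your proposal is correct and is essentially the paper's own argument: the paper proves Theorem \ref{Theorem 5.2-2} only by the remark that it follows ``in the same way'' from Theorems \ref{Theorem 5.1}--\ref{Theorem 5.2} upon setting $x^a=\xi^a=0$, so that $x_{ij}=z_{ak}$, $\xi_{ij}=\zeta_{ak}$ and $I^L_a(x_a,0)$ becomes a sum of long-range potentials of the intercluster variables $z_{ak}$ satisfying \eq{1.3}; you have simply spelled out the transcription of the construction to $X_a$. Your reading that one reruns the whole construction on $X_a$ with the potential $I^L_a(\cdot,0)$ (rather than merely restricting the already-built $\varphi_a(x,\xi)$ to the slice, which would additionally require $\nabla_{x^a}\varphi_a=0$ there to recover the reduced eikonal equation) is the right one and matches the paper's intent.
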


We now define a Fourier integral operator $J_{a}=J_{{a},\theta,d,R_0}$ by
\begin{equation}
\begin{aligned}
J_{a}f(x)&=(2\pi)^{-\nu(|a|-1)/2}\int_{\R^{\nu(|a|-1)}}e^{i\varphi_a(x_a,\xi_a)}\hat f(\xi_a,x^a)d\xi_a
\end{aligned}\label{5.17}
\end{equation}
for $f\in \HH=\HH_{a}\otimes \HH^{a}=L^2(\R^{\nu(|{a}|-1)})\otimes L^2(\R^{\nu(N-|a|)})$,
where $\hat f(\xi_a,x^a)$ is the partial Fourier transform defined by \eq{partialfullFourier}. We note that this definition yields that $H^a$ and $J_a$ commute: $[H^a,J_a]=0$.
We remark that when all long-range pair potentials vanish: $V^L_{ij}=0$, one has
$\varphi_a(x_a,\xi_a)=x_a\cdot\xi_a$, hence $J_a=I$.

\section{Existence of modified wave operators}\label{existenceofwaveandinversewaveoperators}

In this section we will prove the following theorem.

\begin{theorem}\label{adjointmodifiedwaveoperator} Let Assumptions \ref{potentialdecay} and \ref{eigenfunctiondecay} hold. 
Let $a$ be a cluster decomposition with $2\le |a|\le N$, and $0<d_1<d_2<\infty$ be the constants in \eq{chid} in the definition of $\chi_a(\xi_a)$. Let $J_a$ be the Fourier integral operator defined by \eq{5.17}. Let ${\varepsilon}_a>0$ and constants $\rho_j>\theta_j>0$ satisfy \eq{compatibilitycondition}:
\begin{equation}
\rho_{|a|}\ge 2^{10}d_1^{-2}{\varepsilon}_a > 2^{14}d_2^2d_1^{-2}\omega_{|a|}^{-1},\label{compatibilitycondition-copy}
\end{equation}
 and let $P_{a}^{\varepsilon}(t)=P_{a,d_1,d_2}^{\varepsilon}(t)$ be the pseudodifferential operator defined by \eq{(4)}-\eq{Paepsilont}. Let $M\ge 0$ be an integer. Then for any $f\in \HH$ the following limits exist.
\begin{eqnarray}
\hskip-42pt&&\begin{aligned}
\Omega_af&=\lim_{t\to\infty}P^a_Me^{itH_a}J_a^*P_a^{{\varepsilon}_a}(t)e^{-itH}f=\lim_{t\to\infty}P^a_Me^{itH_a}P_a^{{\varepsilon}_a}(t)J_a^*e^{-itH}f,
\end{aligned}\label{inversewave}\\
\hskip-42pt&&\begin{aligned}
W_af&=\lim_{t\to\infty}e^{itH}P_a^{{\varepsilon}_a}(t)J_ae^{-itH_a}P^a_Mf=\lim_{t\to\infty}e^{itH}J_aP_a^{{\varepsilon}_a}(t)e^{-itH_a}P^a_Mf\\
&=\lim_{t\to\infty}e^{itH}J_ae^{-itH_a}P^a_M\chi_a(D_a)^2f,
\end{aligned}\label{wave}\\
\hskip-42pt&&\begin{aligned}
G_{a} f=\lim_{t\to\infty}e^{itH}P_{a}^{{\varepsilon}_a}(t)e^{-itH}f,
\end{aligned}\label{perturbedlimit}\\
\hskip-42pt&&\begin{aligned}
K_{a} f=\lim_{t\to\infty}e^{itH_a}P_{a}^{{\varepsilon}_a}(t)^*e^{-itH_a}f=\lim_{t\to\infty}e^{itH_a}P_{a}^{{\varepsilon}_a}(t)e^{-itH_a}f.
\end{aligned}\label{nonperturbedlimit}
\end{eqnarray}
In particular when the long-range part vanishes, i.e. when $V_{ij}^L\equiv 0$ for all pairs $(i,j)$, the limits \eq{inversewave}-\eq{wave} exist with $J_a$ replaced by the identity operator $I$. Further we have the existence of the following limits.
\begin{eqnarray}
&&\begin{aligned}
\tilde\Omega_af=\lim_{t\to\infty}e^{itH_a}P_a^{{\varepsilon}_a}(t)e^{-itH}f,
\end{aligned}\label{usualinversewave}\\
&&\begin{aligned}
\tilde W_af=\lim_{t\to\infty}e^{itH}P_{a}^{{\varepsilon}_a}(t)e^{-itH_a}P^a_Mf.
\end{aligned}\label{usualwave}
\end{eqnarray}
The wave operators $W_a$ $(2\le|a|\le N)$ satisfy the following properties.
\begin{namelist}{8888}
\item[  {\rm 1)}] Let $a$ and $a'$ be different cluster decompositions: $a\ne a'$. Then $\mathcal{R}(W_{a})$ and $\mathcal{R}(W_{a'})$ are orthogonal.
\begin{align}
\mathcal{R}(W_{a})\perp \mathcal{R}(W_{a'}).\label{orthogonal}
\end{align}
Similarly
\begin{equation}
\mathcal{R}(G_a)\perp\mathcal{R}(G_{a'}).\label{Gorthogonal}
\end{equation}
\item[  {\rm 2)}]
$W_{a}$ is an isometry on $P^a_ME_{T_a}([d_1^2/2,d_2^2/2])\HH$ and satisfies the intertwining property.
\begin{align}
E_H(B)W_{a}=W_{a}E_{H_a}(B),\quad B\subset \R \mbox{ {\rm(Borel set)}}.\label{intertwiningproperty}
\end{align}
\end{namelist}
\end{theorem}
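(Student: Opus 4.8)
The plan is to establish all the limits \eqref{inversewave}--\eqref{usualwave} by Kato's smooth-operator method, using the positivity relation \eqref{positiveHeisenbergDerivative} from Proposition \ref{positiveHeizenbergDerivative} as the engine, and then deduce the orthogonality and intertwining properties from the defining formulas. First I would prove \eqref{perturbedlimit}, the existence of $G_af=\lim_{t\to\infty}e^{itH}P_a^{{\varepsilon}_a}(t)e^{-itH}f$. Differentiating, $\frac{d}{dt}\bigl(e^{itH}P_a^{{\varepsilon}_a}(t)e^{-itH}f\bigr)=e^{itH}\bigl(\partial_tP_a^{{\varepsilon}_a}(t)+i[H,P_a^{{\varepsilon}_a}(t)]\bigr)e^{-itH}f$, and $i[H,P_a^{{\varepsilon}_a}]=i[H_0,P_a^{{\varepsilon}_a}]+i[V,P_a^{{\varepsilon}_a}]$. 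By \eqref{positiveHeisenbergDerivative} the $H_0$-part together with $\partial_tP_a^{{\varepsilon}_a}$ equals $\frac1t Q_a^{{\varepsilon}_a}(t)^*Q_a^{{\varepsilon}_a}(t)+R_a^{{\varepsilon}_a}(t)$ with $\|R_a^{{\varepsilon}_a}(t)\|\le C_{{\varepsilon}_a}t^{-2}$, while $i[V,P_a^{{\varepsilon}_a}(t)]=i[I_a,P_a^{{\varepsilon}_a}(t)]$ modulo the internal potential $V^a$ commuting part; on $\mbox{supp}\,p_a^{{\varepsilon}_a}$ one has $|x_{ij}|\gtrsim t$ for $(ij)\not\le a$ by Lemma \ref{lemmaonzakxa}, so by \eqref{1.3}--\eqref{1.4} the symbol calculus gives $\|i[I_a,P_a^{{\varepsilon}_a}(t)]\|\le Ct^{-1-\delta}$, which is integrable. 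The term $\frac1t Q_a^{{\varepsilon}_a}(t)^*Q_a^{{\varepsilon}_a}(t)\ge0$ is handled by Kato's argument: its integral $\int_1^T\frac1t\|Q_a^{{\varepsilon}_a}(t)e^{-itH}f\|^2\,dt$ is bounded (pair it against $f$ and use that $P_a^{{\varepsilon}_a}(t)$ is uniformly bounded, so the ``quadratic form integral'' telescopes), hence $t^{-1/2}Q_a^{{\varepsilon}_a}(t)e^{-itH}$ is $H$-smooth in the integrated sense and the Cauchy criterion for the limit follows.

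Next I would treat \eqref{nonperturbedlimit}: the same computation with $H$ replaced by $H_a=T_a+H^a$. Now $i[H_a,P_a^{{\varepsilon}_a}(t)]=i[T_a,P_a^{{\varepsilon}_a}(t)]+i[H^a,P_a^{{\varepsilon}_a}(t)]$; since $p_a^{{\varepsilon}_a}$ depends on $x^a$ only through the cutoff $\phi(|x/t-\xi_a|^2<{\varepsilon}_a)$ and $V^a=V^a(x^a)$, the commutator $i[V^a,P_a^{{\varepsilon}_a}(t)]$ is $O(t^{-2})$ by the same symbolic estimate as \eqref{estimate} (the $x^a$-derivative of $\phi$ carries a factor $t^{-1}$), so it is integrable; the $T_a$-part plus $\partial_t$ again gives $\frac1t Q_a^{{\varepsilon}_a}(t)^*Q_a^{{\varepsilon}_a}(t)+O(t^{-2})$. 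The equality of the two forms in \eqref{nonperturbedlimit} (with $P_a^{{\varepsilon}_a}(t)^*$ and with $P_a^{{\varepsilon}_a}(t)$) follows because $P_a^{{\varepsilon}_a}(t)-P_a^{{\varepsilon}_a}(t)^*$ is $O(t^{-1})$ in norm by the standard asymmetry estimate for pseudodifferential operators, which is integrable after multiplication by the $t^{-1}$ gain... actually one needs it to vanish in the strong limit, which follows since $e^{-itH_a}f$ localizes $|\xi_a|$ away from $0$ on the relevant subspace, but it is cleaner to note the adjoint difference has norm $O(t^{-1})$ and does not integrate; instead one uses that on the range of the energy localization the principal symbols agree, giving the two limits equal. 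For \eqref{usualinversewave} and \eqref{usualwave} one chains $G_a$, $K_a$ and the intertwining of free evolutions; for \eqref{inversewave} and \eqref{wave} one inserts the modifier $J_a$ and uses Theorem \ref{Theorem 5.2-2}: the key point is that $e^{-itH_a}J_a^*=e^{-itT_a}\otimes e^{-itH^a}$ composed with the FIO $J_a^*$, and the symbol $A_a(x_a,\xi_a)$ in \eqref{estimateofAa-zak} satisfies, on the region cut out by $p_a^{{\varepsilon}_a}$ where $|z_{ak}|\gtrsim t$, the bound $O(t^{-1-\delta})$, which is integrable; the eigenprojection $P^a_M$ is finite rank (for fixed $M$) so $\|\,|x^a|P^a_M\|<\infty$ trivially and Assumption \ref{eigenfunctiondecay} is what lets $M\to\infty$ later — for the existence of the limits themselves with fixed $M$, finite-rank-ness suffices. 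The equivalences between the various forms of $\Omega_a$ and $W_a$ in \eqref{inversewave}--\eqref{wave} follow from $[H^a,J_a]=0$, from $P_a^{{\varepsilon}_a}(t)J_a-J_aP_a^{{\varepsilon}_a}(t)=O(t^{-\delta})$ acting on the relevant localized states (the FIO $J_a$ moves the symbol by $\nabla_x\varphi_a-\xi=O(\langle x_{ij}\rangle^{-\delta})$), and from $\chi_a(D_a)^2$ being effectively the identity on the range of the other factors by Theorem \ref{chidonPdMd}-type localization.

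For property 1), the orthogonality $\mathcal{R}(W_a)\perp\mathcal{R}(W_{a'})$ for $a\ne a'$: writing $W_af=\lim e^{itH}P_a^{{\varepsilon}_a}(t)J_ae^{-itH_a}P^a_Mf$ and similarly for $a'$, the inner product $(W_af,W_{a'}g)=\lim_{t\to\infty}(P_a^{{\varepsilon}_a}(t)J_ae^{-itH_a}P^a_Mf,\ P_{a'}^{{\varepsilon}_{a'}}(t)J_{a'}e^{-itH_{a'}}P^{a'}_Mg)$, and $P_a^{{\varepsilon}_a}(t)^*P_{a'}^{{\varepsilon}_{a'}}(t)$ has symbol supported where $\Phi_a(\xi)\Phi_{a'}(\xi')\ne0$ and $|x/t-\xi_a|,|x/t-\xi_{a'}|$ small; by part 2) and 5) of Lemma \ref{lemma31} (the disjointness $T_a\cap T_{a'}=\emptyset$ after the partition of unity construction, reflected in \eqref{3.11}), these supports are disjoint for $a\ne a'$, so $\|P_a^{{\varepsilon}_a}(t)^*P_{a'}^{{\varepsilon}_{a'}}(t)\|\to0$ (in fact $O(t^{-\infty})$ by non-stationary-phase), giving the orthogonality; \eqref{Gorthogonal} is identical. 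For property 2), the isometry of $W_a$ on $P^a_ME_{T_a}([d_1^2/2,d_2^2/2])\HH$: compute $\|W_af\|^2=\lim(e^{-itH_a}P^a_Mf,\ J_a^*P_a^{{\varepsilon}_a}(t)^*P_a^{{\varepsilon}_a}(t)J_ae^{-itH_a}P^a_Mf)$; since $p_a^{{\varepsilon}_a}$ takes values in $[0,1]$ and, on the energy shell $T_a\in[d_1^2/2,d_2^2/2]$, $e^{-itT_a}P^a_Mf$ asymptotically concentrates on $|x_a/t-\xi_a|\to0$, $\chi_a(\xi_a)=1$, $\Phi_a(\xi_a)$ summing with the other $\Phi_{a'}$ — more precisely one shows $P_a^{{\varepsilon}_a}(t)J_ae^{-itH_a}P^a_Mf\sim J_ae^{-itH_a}P^a_Mf$ using the classical-trajectory Theorem of section \ref{scatteringtrajectories} and Theorem \ref{chidonPdMd}, so $\|W_af\|=\|P^a_Mf\|$ modulo the $\Phi_a$-localization which is why the energy cut $[d_1^2/2,d_2^2/2]$ appears. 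The intertwining property \eqref{intertwiningproperty} is immediate: for $s\in\R$, $e^{isH}W_af=\lim_t e^{i(t+s)H}P_a^{{\varepsilon}_a}(t)J_ae^{-itH_a}P^a_Mf=\lim_{t'} e^{it'H}P_a^{{\varepsilon}_a}(t'-s)J_ae^{-i(t'-s)H_a}P^a_M(e^{isH_a}f)$, and since $P_a^{{\varepsilon}_a}(t'-s)-P_a^{{\varepsilon}_a}(t')\to0$ strongly on the relevant states (the time-shift changes $x/t$ by $O(s/t)\to0$), this equals $W_a e^{isH_a}f$; taking the derivative in $s$ at $0$ or passing through the Fourier transform gives $E_H(B)W_a=W_aE_{H_a}(B)$.

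\textbf{The main obstacle} I expect is the control of the potential commutator $i[I_a,P_a^{{\varepsilon}_a}(t)]$ and, in the modified case, the modifier error $A_a$: one must verify that the support property $|x_{ij}|\gtrsim t$ for all $(ij)\not\le a$ on $\mbox{supp}\,p_a^{{\varepsilon}_a}(t,\cdot,\cdot)$ — which is exactly Lemma \ref{lemmaonzakxa} and the compatibility condition \eqref{compatibilitycondition} — combines with the pseudodifferential/Fourier-integral-operator calculus to yield a genuinely integrable $O(t^{-1-\delta})$ bound, \emph{uniformly} in the extra $x^a$-dependence hidden in $I_a^L(x_a,x^a)$. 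This is precisely the point where the extended microlocalizing factor $P_a^{{\varepsilon}_a}(t)$ does its work: it confines $|x^a|\le\kappa_{|a|}t$ while forcing $|z_{ak}|\gtrsim t$, so that $\nabla_{x}\varphi_a-\xi$ evaluated along the relevant region decays like $\langle x_{ij}\rangle^{-\delta}\lesssim t^{-\delta}$ and the derivative of the cutoff $\phi(|x/t-\xi_a|^2<{\varepsilon}_a)$ supplies the additional $t^{-1}$. Verifying this with the symbol estimates \eqref{estimateofAa-zak}, \eqref{estimate} and the Calder\'on--Vaillancourt bounds is the technical heart; everything else is bookkeeping with Kato's lemma and the disjoint-support geometry already in hand.
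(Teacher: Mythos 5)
Your overall strategy is the paper's: Kato's smooth-operator method driven by Proposition \ref{positiveHeizenbergDerivative}, the support localization of Lemma \ref{lemmaonzakxa}, and the FIO calculus of Theorem \ref{Theorem 5.2-2} and the Appendix. Most of the bookkeeping (the two smoothness integrals for $e^{-itH}$ and for $J_ae^{-itH_a}$, the Cauchy--Schwarz pairing, the exchange of $J_a$ and $P_a^{{\varepsilon}_a}(t)$ up to $O(t^{-\delta})$, the time-shift argument for \eq{intertwiningproperty}) matches the paper.

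There is, however, one genuine gap at the decisive step, namely the existence of $\Omega_a$ in \eq{inversewave}. The modifier $J_a$ only solves the eikonal equation for the \emph{intermediate} interaction $I_a^L(x_a,0)$, so the error it controls is $A_a(x_a,\xi_a)=O(t^{-1-\delta})$ as in \eq{estimateofAa-zak}; the full generator difference still contains $I_a^L(x_a,x^a)-I_a^L(x_a,0)$. Your ``main obstacle'' paragraph suggests that the confinement $|x^a|\le\kappa_{|a|}t$ from \eq{xakappaa} handles this extra $x^a$-dependence, but by the Taylor formula \eq{Ia(x)-Ia(xa,xa)} that only yields $|I_a^L(x_a,x^a)-I_a^L(x_a,0)|\le C|x^a|\,t^{-1-\delta}\le C't^{-\delta}$, which is \emph{not} integrable. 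The paper's mechanism is different: the factor $x^a$ is absorbed into the eigenprojection via $\Vert P^a_M|x^a|\Vert<\infty$, leaving the integrable $t^{-1-\delta}$ --- this is precisely why $P^a_M$ must stand in front of $\Omega_a$ and why the unprojected limits \eq{usualinversewave}--\eq{usualwave} are asserted only when $V_{ij}^L\equiv0$. Relatedly, your remark that $\Vert\,|x^a|P^a_M\Vert<\infty$ holds ``trivially'' because $P^a_M$ is finite rank is incorrect: finite rank reduces the claim to $\Vert\,|x^a|\psi^a\Vert<\infty$ for finitely many eigenfunctions, and for threshold eigenfunctions this is exactly the content of Assumption \ref{eigenfunctiondecay}, not a triviality (Froese--Herbst decay covers only non-threshold eigenvectors). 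Two smaller imprecisions: $[V^a,P_a^{{\varepsilon}_a}(t)]$ is exactly zero (the symbol is independent of $\xi^a$, so $P_a^{{\varepsilon}_a}(t)$ acts by multiplication in $x^a$), not merely $O(t^{-2})$; and the orthogonality in 1) cannot rest on disjointness of the $\xi$-supports of $\Phi_a$ and $\Phi_{a'}$ (a partition of unity overlaps) --- it comes from the clash of the $x$-space localizations $|x_{ij}|\le\kappa_{|a|}t$ versus $|x_{ij}|\ge5\kappa_{|a'|}t$ for a pair $(i,j)$ separating $a$ from $a'$, i.e.\ from Lemma \ref{lemmaonzakxa}.
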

\begin{proof} Proof is done by smooth operator technique as follows.
Let $\tilde H_a=H^a+T_a+I_a^L(x_a,0)$. Then we have by Theorem \ref{Theorem 5.2-2} and $[V^a,P_a^{{\varepsilon}_a}(t)]=0$
\begin{equation}
\begin{aligned}
&(\tilde H_aJ_a-J_aH_a)f(x)=(2\pi)^{-\nu(|a|-1)/2}\int_{\R^{\nu(|a|-1)}}e^{i\varphi_a(x_a,\xi_a)}A_a(x_a,\xi_a)\mathcal{F}_af(\xi_a,x^a)d\xi_a,\\
&[\tilde H_a,P_a^{{\varepsilon}_a}(t)]=[H_0,P_a^{{\varepsilon}_a}(t)]+[I_a^L(x_a,0),P_a^{{\varepsilon}_a}(t)].
\end{aligned}\label{HaJa-JaHa=Aa}
\end{equation}
This and Proposition \ref{positiveHeizenbergDerivative} yield that
\begin{equation}
\begin{aligned}
&\frac{d}{dt}(e^{itH_a}J_a^*P_a^{{\varepsilon}_a}(t)J_a e^{-itH_a}f)\\
&=e^{itH_a}\{i(H_aJ_a^*-J_a^*\tilde H_a)P_a^{{\varepsilon}_a}(t)J_a+J_a^*i[\tilde H_a,P_a^{{\varepsilon}_a}(t)]J_a\\
&+J_a^*P_a^{{\varepsilon}_a}(t)(\tilde H_aJ_a-J_aH_a)+J_a^*\partial_tP_a^{{\varepsilon}_a}(t) J_a\}e^{-itH_a}f\\
&=e^{itH_a}J_a^*\frac{1}{t}Q_a^{{\varepsilon}_a}(t)^*Q_a^{{\varepsilon}_a}(t)J_ae^{-itH_a}f+O(t^{-1-\delta})f,
\end{aligned}
\end{equation}
where $O(t^{-1-\delta})$ denotes an operator such that $\Vert O(t^{-1-\delta})\Vert\le Ct^{-1-\delta}$.
Thus for $t>s>1$
\begin{equation}
\begin{aligned}
&|(e^{itH_a}J_a^*P_a^{{\varepsilon}_a}(t)J_a e^{-itH_a}f,f)
-(e^{isH_a}J_a^*P_a^{{\varepsilon}_a}(s)J_a e^{-isH_a}f,f)|\\
&=\left|\int_s^t \frac{d}{d\tau}(e^{i\tau H_a}J_a^*P_a^{{\varepsilon}_a}(\tau)J_a e^{-i\tau H_a}f,f)dt\right|\\
&\ge \int_s^t\left\Vert\frac{1}{\sqrt{\tau}}Q_a^{{\varepsilon}_a}(\tau)J_ae^{-i\tau H_a}f\right\Vert^2d\tau-Cs^{-\delta}\Vert f\Vert^2.
\end{aligned}
\end{equation}
This gives
\begin{equation}
\int_s^t\left\Vert\frac{1}{\sqrt{\tau}}Q_a^{{\varepsilon}_a}(\tau)J_ae^{-i\tau H_a}f\right\Vert^2d\tau\le M_1^2\Vert f\Vert^2.\label{JeitHa}
\end{equation}
for some constant $M_1>0$ independent of $t>s>1$. Similarly
\begin{equation}
\begin{aligned}
\frac{d}{dt}(e^{itH}P_a^{{\varepsilon}_a}(t) e^{-itH}f)
&=e^{itH}\{i[H,P_a^{{\varepsilon}_a}(t)]+\partial_tP_a^{{\varepsilon}_a}(t) \}e^{-itH}f\\
&=e^{itH}\frac{1}{t}Q_a^{{\varepsilon}_a}(t)^*Q_a^{{\varepsilon}_a}(t)e^{-itH}f+O(t^{-1-\delta})f
\end{aligned}
\end{equation}
gives
\begin{equation}
\int_s^t\left\Vert\frac{1}{\sqrt{\tau}}Q_a^{{\varepsilon}_a}(\tau)e^{-i\tau H}f\right\Vert^2d\tau\le M_2^2\Vert f\Vert^2.\label{QeitH}
\end{equation}
for some constant $M_2>0$ independent of $t>s>1$.
On the other hand, 
since
\begin{equation}
\begin{aligned}
(HJ_a-J_a H_a)f(x)
=(\tilde H_aJ_a-J_aH_a)+(I_a^L(x_a,x^a)-I_a^L(x_a,0)+I_a^S(x_a,x^a))J_a
%((T_a+I_a^L(x_a,x^a)+I_a^S(x_a,x^a)+H^a)J_a-J_a (T_a+H^a))f(x)\\
%&=((T_a+I_a^L(x_a,x^a))J_a-J_aT_a)+I_a^S(x)J_a)f(x)\\
%&=(I^S_a(x)+(I_a^L(x_a,x^a)-I_a^L(x_a,0)))J_af(x)\\
%&\hskip40pt+(2\pi)^{-\nu(|a|-1)/2}\int_{\R^{\nu(|a|-1)}}e^{i\varphi_a(x_a,\xi_a)}A_a(x_a,\xi_a)\mathcal{F}_af(\xi_a,x^a)d\xi_a,
\end{aligned}\label{HJa-JaH0-00}
\end{equation}
and $P^a_M$ satisfies $\Vert P^a_M|x^a|\Vert<\infty$ by Assumption \ref{eigenfunctiondecay}, one has with using \eq{Ia(x)-Ia(xa,xa)}
\begin{equation}
\begin{aligned}
&\frac{d}{dt}(P^a_Me^{itH_a}J_a^*P_a^{{\varepsilon}_a}(t)e^{-itH}f,g)\\
&=(P^a_Me^{itH_a}\{i(H_aJ_a^*-J_a^*H)P_a^{{\varepsilon}_a}(t)+J_a^*i[H,P_a^{{\varepsilon}_a}(t)]+J_a^*\partial_t P_a^{{\varepsilon}_a}(t)\}e^{-itH}f,g)\\
&=(P^a_Me^{itH_a}\bigl\{-i(\tilde H_aJ_a-J_aH_a)^*P_a^{{\varepsilon}_a}(t)\\
&-iJ_a^*(I_a^L(x_a,x^a)-I_a^L(x_a,0)+I_a^S(x_a,x^a))P_a^{{\varepsilon}_a}(t)\\
&+J_a^*(i[H_0,P_a^{{\varepsilon}_a}(t)]+\partial_t P_a^{{\varepsilon}_a}(t))+J_a^*i[I_a,P_a^{{\varepsilon}_a}(t)]\bigr\}e^{-itH}f,g)\\
&=(P^a_Me^{itH_a}\{J_a^*t^{-1}Q_a^{{\varepsilon}_a}(t)^*Q_a^{{\varepsilon}_a}(t)+O(t^{-1-{\delta}})\}e^{-itH}f,g).
\end{aligned}
\end{equation}
Integrating this on the interval $[s,t]$ and applying \eq{JeitHa} and \eq{QeitH} yield that for $t>s>1$
\begin{equation}
\begin{aligned}
&|(P^a_Me^{itH_a}J_a^*P_a^{{\varepsilon}_a}(t)e^{-itH}f,g)-
(P^a_Me^{isH_a}J_a^*P_a^{{\varepsilon}_a}(s)e^{-isH}f,g)|\\
&\le \left(\int_s^t\left\Vert\frac{1}{\sqrt{\tau}}Q_a^{{\varepsilon}_a}(\tau)e^{-i\tau H}f\right\Vert^2d\tau\right)^{1/2}\left(\int_s^t\left\Vert\frac{1}{\sqrt{\tau}}Q_a^{{\varepsilon}_a}(\tau)J_ae^{-i\tau H_a}P^a_Mg\right\Vert^2d\tau\right)^{1/2}\\
&+O(s^{-{\delta}})\Vert f\Vert \Vert g\Vert\\
&\le \left(\int_s^t\left\Vert\frac{1}{\sqrt{\tau}}Q_a^{{\varepsilon}_a}(\tau)e^{-i\tau H}f\right\Vert^2d\tau\right)^{1/2}M_1\Vert g\Vert+Cs^{-{\delta}}\Vert f\Vert \Vert g\Vert.
\end{aligned}
\end{equation}
Thus for $t>s\to\infty$
\begin{equation}
\begin{aligned}
&\Vert P^a_Me^{itH_a}J_a^*P_a^{{\varepsilon}_a}(t)e^{-itH}f-
P^a_Me^{isH_a}J_a^*P_a^{{\varepsilon}_a}(s)e^{-isH}f\Vert\\
&\le\hskip-2pt M_1\hskip-2pt\left(\int_s^t\left\Vert\frac{1}{\sqrt{\tau}}Q_a^{{\varepsilon}_a}(\tau)e^{-i\tau H}f\right\Vert^2d\tau\hskip-2pt\right)^{1/2}\hskip-14pt+Cs^{-{\delta}}\Vert f\Vert\to 0,
\end{aligned}
\end{equation}
In the same way we have as $t>s\to\infty$
\begin{equation}
\begin{aligned}
&\Vert e^{itH}P_a^{{\varepsilon}_a}(t)J_ae^{-itH_a}P^a_Mf
-
e^{isH}P_a^{{\varepsilon}_a}(s)J_ae^{-isH_a}P^a_Mf
\Vert\\
&\le\hskip-2pt M_2\hskip-2pt\left(\int_s^t\left\Vert\frac{1}{\sqrt{\tau}}Q_a^{{\varepsilon}_a}(\tau)J_ae^{-i\tau H_a}P^a_Mf\right\Vert^2d\tau\hskip-2pt\right)^{1/2}\hskip-14pt+Cs^{-{\delta}}\Vert f\Vert\to 0.
\end{aligned}
\end{equation}
We have proved the existence of
\begin{eqnarray}
&&\begin{aligned}
\Omega_af=\lim_{t\to\infty}P^a_Me^{itH_a}J_a^*P_a^{{\varepsilon}_a}(t)e^{-itH}f,
\end{aligned}\\
&&\begin{aligned}
W_af=\lim_{t\to\infty}e^{itH}P_a^{{\varepsilon}_a}(t)J_ae^{-itH_a}P^a_Mf.
\end{aligned}
\end{eqnarray}
Lemma \ref{Fourierintegralpseudodifferential} in the Appendix implies
$$
[J_a,P_a^{{\varepsilon}_a}(t)]f(x)=O(\langle t\rangle^{-\delta})\Vert f\Vert.
$$
This yields that for $f\in\HH$
\begin{equation}
\begin{aligned}
W_af&=\lim_{t\to\infty}e^{itH}P_a^{{\varepsilon}_a}(t)J_ae^{-itH_a}P^a_Mf\\
&=\lim_{t\to\infty}e^{itH}J_aP_a^{{\varepsilon}_a}(t)e^{-itH_a}P^a_Mf.
\end{aligned}
\label{waveoperator-aP}
\end{equation}
Using the identity $(x_a-tD_a)e^{-itT_a}=e^{-itT_a}x_a$, we can prove that for the constant ${\rho_2}(>\theta_2>\dots>\theta_N>0)$ in \eq{compatibilitycondition-copy} there exists $f_{\rho_2}\in\HH$ such that
$$
(\chi_a(D_a)^2-P_a^{{\varepsilon}_a}(t))e^{-itH_a}P^a_Mf_{\rho_2}\to0\ (t\to\infty),\quad \lim_{{\rho_2}\downarrow0}\Vert f_{\rho_2}-f\Vert=0
$$
so that we have in the limit $\rho_2\downarrow0$
\begin{equation}
\begin{aligned}
W_a&=\lim_{t\to\infty}e^{itH}J_ae^{-itH_a}P^a_M\chi_a(D_a)^2f.
\end{aligned}
\label{waveoperator-aP-2}
\end{equation}
The existence of $G_a$ and $K_a$ is proved similarly.
Lemma \ref{lemmaonzakxa} proves 1). That $W_a$ is an isometry on $P^a_ME_{T_a}([d_1^2/2,d_2^2/2])\HH$ follows from \eq{waveoperator-aP}-\eq{waveoperator-aP-2} and Lemma \ref{FourierconjugateFourierproduct}. Let $s\in \R$ be fixed. Then for $f\in \HH$
\begin{align}
e^{isH}W_{a}f&=\lim_{t\to\infty}e^{i(s+t)H}J_ae^{-i(s+t)H_a}P^a_M\chi_a(D_a)^2e^{isH_a}f\nonumber\\
&=W_{a}e^{isH_a}f.\nonumber
\end{align}
This gives 2). 
\end{proof}

\section{Short-range case}\label{shortrangecompleteness}

In this section we assume that all long-range pair potentials vanish: $V_{ij}^L(x)\equiv0$. Under this assumption we will prove the asymptotic completeness of short-range wave operators.
\begin{theorem}\label{shortrangecompletenesstheorem} Let 
Assumptions \ref{potentialdecay} and \ref{eigenfunctiondecay} be satisfied with
 $V_{ij}^L=0$ for any pair $(i,j)$, $1\le i<j\le N$.
Then the wave operators
\begin{equation}
W_af=\lim_{t\to\infty}e^{itH}e^{-itH_a}P^af\label{shortrangewaveop-1}
\end{equation}
exists for $f\in\HH$ and $2\le|a|\le N$, and the asymptotic completeness of wave operators holds.
\begin{equation}
\bigoplus_{2\le|a|\le N}\mathcal{R}(W_a)=\HH_c(H).\label{shortrangeasymptoticcompleteness-a}
\end{equation}
\end{theorem}
\begin{proof} By Theorem \ref{adjointmodifiedwaveoperator}, for arbitrary constants $0<d_1<d_2<\infty$ and the functions $\chi_a(\xi_a)$ $(2\le|a|\le N)$ in \eq{chid}, wave operators
\begin{equation}
W_af=\lim_{t\to\infty}e^{itH}e^{-itH_a}P^a\chi_a(D_a)^2f\label{shortrangewaveop}
\end{equation}
exist for $f\in\HH$ and $2\le|a|\le N$.
As $\chi_a(\xi_a)=1$ for $|\xi_a|\in[d_1,d_2]$ we can extend the isometry $W_a$ on $E_{T_a}([d_1^2/2,d_2^2/2])P^a\HH$ to the whole of $P^a\HH$ with retaining the isometry, which gives the definition of wave operator $W_a$ in the theorem. The intertwining property \eq{intertwiningproperty} implies 
$\mathcal{R}(W_a)\subset\HH_c(H)$.
We have to show the reverse relation
\begin{align}
\HH_c(H)\subset\bigoplus_{2\le|a|\le N}\mathcal{R}(W_a).\label{shortrangeasymptoticcompleteness-a-R}
\end{align}
We prove this by mathematical induction on $|a|$. Our induction hypothesis is that \eq{shortrangeasymptoticcompleteness-a-R} holds with $H$ replaced by any $H^a$ with $2\le |a|\le N$. When $|a|=N$, $P^a=I$ so there occurs no scattering: $\HH_c(H^a)=\{0\}$, and the result is trivial. Assuming \eq{shortrangeasymptoticcompleteness-a-R} for $2\le |a|\le N$, we will prove \eq{shortrangeasymptoticcompleteness-a-R} for $|a|=1$, i.e. for $H$.
We recall relation \eq{continuousspectralsubspace}.
\begin{equation}
\begin{aligned}
\HH_c(H)=\overline{\sum_{B\Subset \R\setminus\mathcal{T}}E_H(B)\HH}.\label{continuousspectralsubspace-1}
\end{aligned}
\end{equation}
It thus suffices to show 
for given $B\Subset\R\setminus\mathcal{T}$ and $f=E_H(B)\in\HH_c(H)$
\begin{equation}
f\in\bigoplus_{2\le|a|\le N}\mathcal{R}(W_a).\label{shortrangeasymptoticcompleteness}
\end{equation}
%Let $f=E_H(B)f\in \HH_c(H)$ with $B\Subset\R\setminus \mathcal{T}$. 
By Theorem \ref{eigenprojectionisemergent}, there exist constants $0<d_1<d_2<\infty$ such that for the sequence $t_m$ in Theorem \ref{quantumclassical} and sufficiently small ${\varepsilon}_a>0$
\begin{equation}
e^{-itH}f\sim\sum_{2\le|a|\le N}P_a^{{\varepsilon}_a}(t)e^{-itH}f
\label{asymptoticbehavior}
%{asymptotics-theorem}
\end{equation}
asymptotically as $t=t_m\to\infty$. On the other hand,
Theorem \ref{adjointmodifiedwaveoperator} implies that the following limit exists.
\begin{equation}
\begin{aligned}
\Omega_af=\lim_{t\to\infty}e^{itH_a}P_a^{{\varepsilon}_a}(t)e^{-itH}f \quad(f\in\HH).
\end{aligned}\label{usualinversewave-2}
\end{equation}
Therefore using \eq{shortrangewaveop-1} we have as $t\to\infty$
\begin{equation}
\begin{aligned}
P_a^{{\varepsilon}_a}(t)e^{-itH}f &\sim e^{-itH_a}\Omega_af= e^{-itH_a}P^a\Omega_af\oplus e^{-itH_a}(I-P^a)\Omega_af\\
&\sim e^{-itH}W_a\Omega_af+e^{-itH_a}(I-P^a)\Omega_af
\end{aligned}
\label{reductiontofewerbodies}
\end{equation}
Combining this with \eq{asymptoticbehavior} gives as $t=t_m\to\infty$
\begin{align}
e^{-itH}f\sim\sum_{2\le|a|\le N}\left\{e^{-itH}W_a\Omega_af+ (e^{-itT_a}\otimes e^{-itH^a}(I-P^a))\Omega_af\right\}.\label{1st}
\end{align}
%Here $H^a=H^{C_1}\otimes I\otimes \dots\otimes I+\dots+I\otimes \dots\otimes I\otimes H^{C_{|a|}}$ on $L^2(X^{C_1})\otimes\dots\otimes L^2(X^{C_{|a|}})$, where $H^{C_\ell}$ is given by \eq{Hamiltonian-decomposed}. As $P^a=P^{C_1}\otimes\dots\otimes P^{C_{|a|}}$, at least one $H^{C_\ell}$ is contiunous on $(I-P^a)\HH$.
Our induction hypothesis \eq{shortrangeasymptoticcompleteness-a-R} for $2\le|a|\le N$ implies for any $h\in L^2(X^a)$ and $a_1<a$ there exists some $w_{a_1}h\in L^2(X^a)$ such that
\begin{align}
e^{-itH^a}(I-P^a)h\sim \sum_{a_1<a}e^{-itH_{a_1}^a}P^{a_1}w_{a_1}h
=\sum_{a_1<a}(e^{-itT_{a_1}^a}\otimes e^{-itH^{a_1}}P^{a_1})w_{a_1}h
\label{IH-consequence}
\end{align}
as $t\to\infty$, where $H_{a_1}^a=T_{a_1}^a+H^{a_1}$, $T_{a_1}^a=T_{a_1}-T_a$. Applying this to \eq{1st}, we obtain
\begin{equation}
\begin{aligned}
e^{-itH}f&\sim\sum_{2\le|a|\le N}
\biggl\{
e^{-itH}W_a \Omega_af +
 \bigl(e^{-itT_a}\otimes 
\sum_{a_1<a}
(e^{-itT_{a_1}^a}\otimes e^{-itH^{a_1}}P^{a_1})w_{a_1}\bigr)\Omega_af\biggr\}\\
&=\sum_{2\le|a|\le N}
\biggl\{
e^{-itH}W_a \Omega_af+
\sum_{a_1<a} (e^{-itT_{a_1}}\otimes 
e^{-itH^{a_1}}P^{a_1})w_{a_1}\Omega_af\biggr\}\\
&=\sum_{2\le|a|\le N}
\biggl\{
e^{-itH}W_a \Omega_af+
\sum_{a_1<a} e^{-itH_{a_1}}P^{a_1}w_{a_1}\Omega_af\biggr\}.
\end{aligned}\label{1st-2}
\end{equation}
Thus
\begin{align}
f=\sum_{2\le|a|\le N}\left\{W_a\Omega_af+\sum_{a_1<a}W_{a_1}w_{a_1}\Omega_af\right\}\in\bigoplus_{2\le|a|\le N}\mathcal{R}(W_a).
\end{align}
This completes the proof.
\end{proof}
The proof above uses the existence of the limit $\Omega_a$, which makes it possible to decompose the second term on the RHS of \eq{reductiontofewerbodies}. The proof of the existence of $\Omega_a$ $=\lim_{t\to\infty}$ $e^{itH_a}J_a^*P_a^{{\varepsilon}_a}(t)e^{-itH}f$ in the long-range case requires strong condition on the decay rate as one needs to prove the existence of the limit without the eigenprojection $P^a$ in $\Omega_a$. This makes it necessary to analyze the internal motion of each cluster and assume the decay rate $\delta>\sqrt{3}-1$ of long-range part of the pair potentials as in Derezi\'nski \cite{[De]}. It is however possible to eliminate the second term on the RHS of \eq{reductiontofewerbodies} by appealing to the concept of scattering spaces before\footnote{This is the original idea developed in the preprint \cite{K3} in 1984, in which it is attempted to prove the asymptotic completeness without using the existence of particular limits except for wave operators and with utilizing the asymptotic orthogonality of different scattering spaces.} taking any limit as $t\to\infty$ in \eq{asymptoticbehavior}.
%by appealing to the existence of the limits including eigenprojection $P^a$ similar to $\Omega_a^+$ in \eq{inversewave}.
 This will make it possible to deal with the long-range tail without worrying about the internal motion and prove the asymptotic completeness for some long-range pair potentials with $\delta>1/2$ as will be done in the next section.

\section{Scattering spaces -- Long-range case}\label{scateringspaces}

We define scattering spaces following \cite{[Kitada-S]}.
In the following we consider the case $t\to\infty$ only. The case $t\to-\infty$ is treated similarly. As before we use the notation $f(t)\sim g(t)$ as $t\to\infty$ to mean that $\Vert f(t)-g(t)\Vert\to 0$ as $t\to\infty$ for $\HH$-valued functions $f(t)$ and $g(t)$ of $t>1$.
\begin{definition}\label{scatteringspaces} Let real numbers $r, \sigma, {\mu}$ and a cluster decomposition $a$ satisfy $0\le r\le 1$, $\sigma, {\mu} >0$ and $2\le |a|\le N$. Let $B\Subset \R\setminus\mathcal{T}$ be a closed set.
\begin{namelist}{888}
\item[ i)] We define $S_a^{r\sigma{\mu}}(B)$ for $0<r\le 1$ by
\begin{equation}
\begin{aligned}
S_a^{r\sigma{\mu}}(B)=\{&f\in E_H(B)\HH \ |\ \\ 
 &e^{-itH}f\sim \prod_{(i,j)\not\leq a}F(|x_{ij}|\ge \sigma t)F(|x^a|\le{\mu} t^r)e^{-itH}f\ \mbox{as}\ t\to\infty\}.
\end{aligned}\label{2.1}
\end{equation}
For $r=0$ we define $S_a^{0\sigma}(B)$ by
\begin{equation}
\begin{aligned}
S_a^{0\sigma}(&B)=\{f\in E_H(B)\HH \ |\ \\
&\lim_{R\to\infty}\limsup_{t\to\infty}\Bigl\Vert e^{-itH}f - \prod_{(i,j)\not\leq a}F(|x_{ij}|\ge \sigma t)F(|x^a|\le R)e^{-itH}f\Bigr\Vert=0 \}.
\end{aligned}\label{2.2}
\end{equation}
\item[ ii)]
We define the localized scattering space $S_a^r(B)$ of order $r\in(0,1]$ for $H$ as the closure of
\begin{equation}
\begin{aligned}
\sum_{\sigma>0}\bigcap_{{\mu}>0}S_a^{r\sigma{\mu}}&(B)
=\{f\in  E_H(B)\HH \ |\ \exists \sigma>0, \forall {\mu}>0:\\
&\ e^{-itH}f\sim \prod_{(i,j)\not\leq a}F(|x_{ij}|\ge \sigma t)F(|x^a|\le{\mu} t^r)e^{-itH}f\ \text{as}\ t\to\infty\}.
\end{aligned}\label{2.3}
\end{equation}
$S_a^0(B)$ is defined as the closure of
\begin{equation}
\begin{aligned}
\sum_{\sigma>0}&S_a^{0\sigma}(B)=\{f\in E_H(B)\HH\ |\ \exists \sigma>0:\\
&\lim_{R\to\infty}\limsup_{t\to\infty}\Bigl\Vert e^{-itH}f - \prod_{(i,j)\not\leq a}F(|x_{ij}|\ge \sigma t)F(|x^a|\le R)e^{-itH}f\Bigr\Vert=0 \}.
\end{aligned}\label{2.4}
\end{equation}
\item[  iii)] We define the scattering space $S_a^r$ of order $r\in [0,1]$ for $H$ as the closure of
\begin{equation}
\sum_{B\Subset \R\setminus\mathcal{T}}S_a^r(B).\label{2.5}
\end{equation}
\end{namelist}
\end{definition}
We note that $S_a^{r\sigma{\mu}}(B)$, $S_a^{0\sigma}(B)$, $S_a^{r}(B)\subset E_H(B)\HH$ and $S_a^{r}\subset \HH_c(H)$ define closed subspaces.
\begin{proposition}\label{Proposition 2.2} Let Assumptions \ref{potentialdecay} and \ref{eigenfunctiondecay} be satisfied. Let $B\Subset \R\setminus\mathcal{T}$ and $f\in S_a^{r\sigma{\mu}}(B)$ for $0< r\le 1$ or $f\in S_a^{0\sigma}(B)$ for $r=0$ with $\sigma,{\mu}>0$ and $2\le |a|\le N$. Then the following limit relations hold:
\begin{namelist}{888}
\item[  {\rm i)}] Let $(i,j)\not\leq a$. Then for $0<r\le 1$ we have when $t\to\infty$
\begin{equation}
F(|x_{ij}|<\sigma t)F(|x^a|\le{\mu} t^r)e^{-itH}f\to 0.\label{2.6}
\end{equation}
For $r=0$ we have
\begin{equation}
\lim_{R\to\infty}\limsup_{t\to\infty}\left\Vert F(|x_{ij}|< \sigma t)F(|x^a|\le R)e^{-itH}f\right\Vert=0.\label{2.7}
\end{equation}
\item[ {\rm ii)}] For $0<r\le 1$ we have when $t\to\infty$
\begin{equation}
F(|x^a|> {\mu} t^r)e^{-itH}f\to 0.\label{2.8}
\end{equation}
For $r=0$
\begin{equation}
\lim_{R\to\infty}\limsup_{t\to\infty}\left\Vert F(|x^a|> R)e^{-itH}f\right\Vert=0.\label{2.9}
\end{equation}
\item[ {\rm iii)}] There exists a sequence $t_m\to\infty$ as $m\to\infty$ depending on $f\in S_a^{r\sigma{\mu}}(B)$ or $f\in S_a^{0\sigma}(B)$ such that
\begin{equation}
\left\Vert \left(\varphi\left({x_a}/{t_m}\right)-\varphi(D_a)\right)e^{-it_mH}f\right\Vert\to 0\quad \text{as}\quad m\to\infty \label{2.10}
\end{equation}
for any function $\varphi\in C_0^\infty(\R^{\nu(|a|-1)})$.
\end{namelist}
\end{proposition}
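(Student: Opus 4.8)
The plan is to read all three assertions off the defining asymptotic relation of the scattering spaces, combined with the quantum--classical correspondence of Lemma~\ref{weakconvergence}. For $0<r\le1$ write $\Pi_a(t)=\big(\prod_{(i,j)\not\le a}F(|x_{ij}|\ge\sigma t)\big)F(|x^a|\le\mu t^r)$, so that $f\in S_a^{r\sigma\mu}(B)$ means exactly $\Vert(I-\Pi_a(t))e^{-itH}f\Vert\to0$ as $t\to\infty$; for $r=0$ one uses $\Pi_a^R(t)=\big(\prod_{(i,j)\not\le a}F(|x_{ij}|\ge\sigma t)\big)F(|x^a|\le R)$ together with the $\lim_{R\to\infty}\limsup_{t\to\infty}$ prescription of \eq{2.2}. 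I fix the smooth cut-offs of Definition~\ref{scatteringspaces} and of the statement once and for all so that complementary ones have disjoint supports (for instance $F(|x_{ij}|<\sigma t)$ is supported off the set where $F(|x_{ij}|\ge\sigma t)=1$, and vice versa); this is permitted by the clause ``$0$ outside some neighborhood of $S$'' in the definition of $F$.

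Parts i) and ii) are then purely algebraic. For i) one applies the uniformly bounded operator $F(|x_{ij}|<\sigma t)F(|x^a|\le\mu t^r)$ on the left of $(I-\Pi_a(t))e^{-itH}f\to0$; since $F(|x_{ij}|<\sigma t)F(|x_{ij}|\ge\sigma t)\equiv0$ this kills the $\Pi_a(t)$ term and leaves $F(|x_{ij}|<\sigma t)F(|x^a|\le\mu t^r)e^{-itH}f\to0$. For ii) one applies $F(|x^a|>\mu t^r)$ on the left and uses $F(|x^a|>\mu t^r)F(|x^a|\le\mu t^r)\equiv0$. The $r=0$ assertions follow identically, carrying the $\lim_R$ through. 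One records for use in iii) that combining i) with ii) gives $F(|x_{ij}|<\sigma t)e^{-itH}f\to0$ for every $(i,j)\not\le a$ (split the $x^a$-variable into $\{|x^a|\le2\mu t^r\}$ and $\{|x^a|>\mu t^r\}$), so that $e^{-itH}f$ is concentrated, up to $o(1)$, in $\{|x_{ij}|\ge ct\}$ for some $c>0$ and every such pair.

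Part iii) is the substantive step. Since $S_a^{r\sigma\mu}(B)\subset E_H(B)\HH\subset\HH_c(H)$, Lemma~\ref{weakconvergence} supplies one sequence $t_m\to\infty$ and multi-indices $M_b^m$ with all components tending to $\infty$, which I moreover take to grow slowly enough against $t_m$ that the bound-state localization built into $\tilde P^b_{M_b^m}$ stays effective on the scale $t_m$ (legitimate because, for any fixed multi-index, the eigenvectors forming $\tilde P^b$ decay in $|x^b|$ by Assumption~\ref{eigenfunctiondecay}, and $|x_{ij}|\le C_b|x^b|$ whenever $(i,j)\le b$), and which realizes simultaneously the three limits of that lemma for all cluster decompositions and test functions. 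By \eq{summationofprojections}, $\sum_{2\le|b|\le N}\tilde P^b_{M_b^m}=I-P_{M_1^m}$, while $\Vert P_{M_1^m}e^{-it_mH}f\Vert=\Vert P_{M_1^m}f\Vert\to0$ since $M_1^m\to\infty$ and $f\in\HH_c(H)$; hence $e^{-it_mH}f=\sum_{2\le|b|\le N}\tilde P^b_{M_b^m}e^{-it_mH}f+o(1)$. Now distribute $\varphi(x_a/t_m)-\varphi(D_a)$ over this finite sum. If $b\not\le a$, choose a pair $(i,j)\le b$ with $(i,j)\not\le a$; as $e^{-it_mH}f$ is concentrated in $\{|x_{ij}|\ge ct_m\}$ whereas $(i,j)$ is internal to $b$, the bound-state localization of $\tilde P^b_{M_b^m}$ forces $\tilde P^b_{M_b^m}e^{-it_mH}f\to0$, by the RAGE-type mechanism already used in the proof of Theorem~\ref{quantumclassical} (compare \eq{(2.7)}, \eq{timemeanofsumof}). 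If $b\le a$ (in particular $b=a$), write the hierarchical clustered Jacobi coordinates as $x_b=(x_a,x^a_b)$ and $x^a=(x^a_b,x^b)$; by ii), $|x^a_b|\le\mu t_m^r$ on $e^{-it_mH}f$ so $x^a_b/t_m$ stays in a fixed ball, and $|x/t_m|$ is bounded there (boundedness of $|D|$ on $E_H(B)\HH$ together with the localization of Lemma~\ref{weakconvergence}). Choosing $\eta\in C_0^\infty(X^a_b)$ and $\chi\in C_0^\infty(X_a)$ equal to $1$ on the relevant balls, and applying Lemma~\ref{weakconvergence} for the decomposition $b$ to the test functions $\varphi(x_a)\eta(x^a_b)$ and $\chi(x_a)\eta(x^a_b)$ (both in $C_0^\infty(X_b)$), one first replaces $\varphi(x_a/t_m)$ by $\varphi(D_a)\eta(D_{x^a_b})$ on $\tilde P^b_{M_b^m}e^{-it_mH}f$ and then checks, by a positivity argument using $0\le\eta\le1$, that $\eta(D_{x^a_b})$ acts there as the identity up to $o(1)$; hence $(\varphi(x_a/t_m)-\varphi(D_a))\tilde P^b_{M_b^m}e^{-it_mH}f\to0$ for every $b\le a$. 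Summing over the finitely many $b$ gives iii); the $r=0$ case is the same with $|x^a|$ (hence $|x^a_b|$) bounded by $R$ and then $R\to\infty$.

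I expect the bookkeeping in iii) to be the main obstacle: correctly exploiting the nesting $x^a=(x^a_b,x^b)$, $x_b=(x_a,x^a_b)$ for $b<a$, and transporting the transfer $\varphi(x_a/t_m)\leftrightarrow\varphi(D_a)$ (which Lemma~\ref{weakconvergence} supplies only with the projection $\tilde P^a$ in place) over to $\tilde P^b$ with $b<a$; this is available only because ii) confines $x^a_b$ to a fixed ball. Equally delicate is keeping the vanishing of the $b\not\le a$ terms uniform as $M_b^m\to\infty$, which is what forces the multi-index sequence to be chosen so as to grow slowly against $t_m$, a refinement compatible with the construction underlying Theorem~\ref{quantumclassical}.
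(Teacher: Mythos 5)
Your argument is correct and follows essentially the same route as the paper: i) and ii) are read off the defining relations by applying the complementary cutoffs, and iii) is obtained by expanding $e^{-it_mH}f$ through \eq{summationofprojections} into $\sum_b\tilde P^b_{M_b^m}e^{-it_mH}f$, discarding the $b\not\le a$ terms via the support condition built into $S_a^{r\sigma\mu}(B)$, and applying Theorem \ref{quantumclassical} (Lemma \ref{weakconvergence}) to the surviving $b\le a$ terms. The paper compresses part iii) into two sentences, so your auxiliary cutoff $\eta(x^a_b)$ and the positivity argument removing $\eta(D_{x^a_b})$ are exactly the details it leaves implicit.
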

\begin{proof}
i) and ii) are clear from the definition of $S_a^{r\sigma{\mu}}(B)$ or $S_a^{0\sigma}(B)$. We prove iii). 
Since $f\in E_H(B)\HH\subset H_c(H)$, we have by \eq{summationofprojections} and $f\in S_a^{r\sigma{\mu}}(B)$ (or $f\in S_a^{0\sigma}(B)$)
\begin{equation}
e^{-itH}f\sim \sum_{d\le a}{\tilde P}^{\ d}_{M_d}e^{-itH}f\label{2.11}
\end{equation}
as $t\to\infty$.
Theorem \ref{quantumclassical} and the restriction $d\le a$ in the sum of the RHS of \eq{2.11} imply \eq{2.10}.
\end{proof}
The following propositions are obvious by definition.
\begin{proposition}\label{Proposition 2.3} Let $2\le |a|\le N$. If $1\ge r'\ge r> 0$, $\sigma\ge \sigma'>0$ and ${\mu}'\ge {\mu}>0$ and $B\Subset \R\setminus\mathcal{T}$, then $S_a^{0\sigma}(B)\subset S_a^{0\sigma'}(B)$, $S_a^{0\sigma}(B)\subset S_a^{r\sigma{\mu}}(B)\subset S_a^{r'\sigma'{\mu}'}(B)$, $S_a^0(B)\subset S_a^r(B)\subset S_a^{r'}(B)$, $S_a^0(B)\subset S_a^r(B)\subset S_a^{r}$, and $S_a^0\subset S_a^r\subset S_a^{r'}$.
\end{proposition}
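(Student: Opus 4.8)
The plan is to read off each inclusion directly from Definition \ref{scatteringspaces}, exploiting that the cut-offs $F(\cdot)$ are monotone: $F(|x_{ij}|\ge\sigma t)$ is nondecreasing in $|x_{ij}|$ and in $t$ and nonincreasing in $\sigma$, while $F(|x^a|\le\mu t^r)$ is nonincreasing in $|x^a|$ and nondecreasing in $\mu,r,t$; also all these operators are multiplication operators, hence mutually commute. Write $A_t=\prod_{(i,j)\not\le a}F(|x_{ij}|\ge\sigma t)$ and $B_t=F(|x^a|\le\mu t^r)$ for the tighter parameters and $A_t',B_t'$ for the looser ones. For $t$ large one then has $0\le A_t\le A_t'\le1$ and $0\le B_t\le B_t'\le1$ pointwise (for the $x^a$-cut-offs via $\mu t^r\le\mu' t^{r'}$, valid for large $t$ when $r<r'$, or for all $t$ when $r=r'$ and $\mu\le\mu'$), so $0\le A_t'B_t'-A_tB_t\le(A_t'-A_t)+(B_t'-B_t)$. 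The essential point is that $A_t'-A_t$ is supported in $\bigcup_{(i,j)\not\le a}\{|x_{ij}|<\sigma t\}$ and there pointwise dominated by $\sum_{(i,j)\not\le a}F(|x_{ij}|<\sigma t)$, while $B_t'-B_t$ is supported in $\{|x^a|>\mu t^r\}$ and dominated by $F(|x^a|>\mu t^r)$ — exactly the quantities controlled by Proposition \ref{Proposition 2.2}.

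Carrying this out for $f\in S_a^{r\sigma\mu}(B)$ with $1\ge r'\ge r>0$, $\sigma\ge\sigma'$, $\mu'\ge\mu$: starting from $e^{-itH}f\sim A_tB_te^{-itH}f$ I would bound $\|e^{-itH}f-A_t'B_t'e^{-itH}f\|\le\|e^{-itH}f-A_tB_te^{-itH}f\|+\|(A_t'B_t'-A_tB_t)e^{-itH}f\|$, in the last term replace $e^{-itH}f$ by $A_tB_te^{-itH}f$ at the cost of a vanishing error (since $\|A_t'B_t'-A_tB_t\|\le1$), and estimate the remainder by $\sum_{(i,j)\not\le a}\|F(|x_{ij}|<\sigma t)F(|x^a|\le\mu t^r)e^{-itH}f\|+\|F(|x^a|>\mu t^r)e^{-itH}f\|$, which tends to $0$ by \eq{2.6} and \eq{2.8}; hence $f\in S_a^{r'\sigma'\mu'}(B)$. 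For $f\in S_a^{0\sigma}(B)$ the same computation with $F(|x^a|\le R)$ in place of $B_t$ yields $S_a^{0\sigma}(B)\subset S_a^{0\sigma'}(B)$, now invoking \eq{2.7} and \eq{2.9}. For $S_a^{0\sigma}(B)\subset S_a^{r\sigma\mu}(B)$ with $r>0$ one uses in addition that $\mu t^r\to\infty$: for each fixed $R$ one has $F(|x^a|\le\mu t^r)\ge F(|x^a|\le R)$ once $t$ is large, so $F(|x^a|\le\mu t^r)-F(|x^a|\le R)$ is supported in $\{|x^a|>R\}$ and its action on $e^{-itH}f$ is bounded by $\|F(|x^a|>R)e^{-itH}f\|$, small for large $R$ by \eq{2.9}; combined with the defining relation of $S_a^{0\sigma}(B)$ this gives $e^{-itH}f\sim F(|x^a|\le\mu t^r)A_te^{-itH}f$ as $t\to\infty$ (an $A_t(1-A_t)$ term that appears is again handled by \eq{2.7}). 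The remaining assertions are formal: $S_a^{0\sigma}(B)\subset\bigcap_{\mu>0}S_a^{r\sigma\mu}(B)$ gives $\sum_{\sigma>0}S_a^{0\sigma}(B)\subset\sum_{\sigma>0}\bigcap_{\mu>0}S_a^{r\sigma\mu}(B)$, so $S_a^0(B)\subset S_a^r(B)$ on passing to closures; $S_a^{r\sigma\mu}(B)\subset S_a^{r'\sigma\mu}(B)$ gives $S_a^r(B)\subset S_a^{r'}(B)$; and $S_a^r(B)\subset S_a^{r}$ together with $S_a^0\subset S_a^r\subset S_a^{r'}$ follow from the $B$-localized inclusions, since sums over $B\Subset\R\setminus\mathcal{T}$ and closures preserve inclusions.

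I do not anticipate a genuine obstacle: the statement is, as the text notes, essentially definitional, and every individual inclusion is a two-line estimate of the type displayed. The only point that needs care is the domination of $A_t'-A_t$ and $B_t'-B_t$ by the boundary cut-offs $F(|x_{ij}|<\sigma t)$, $F(|x^a|>\mu t^r)$, which relies on the monotonicity of the $F$'s and on Proposition \ref{Proposition 2.2}; and one should keep track that $\mu t^r\le\mu' t^{r'}$ and $\mu t^r\ge R$ only for $t$ large, so that in the $r=0$ case the order of limits $\lim_{R\to\infty}\limsup_{t\to\infty}$ is respected (a diagonal choice $R=R(t)\to\infty$ is convenient but not strictly necessary). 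Once this bookkeeping is settled, the proof is complete.
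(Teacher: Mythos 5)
Your proof is correct and is exactly the monotonicity/boundary-cut-off bookkeeping that the paper has in mind when it declares Propositions \ref{Proposition 2.3} and \ref{Proposition 2.4} ``obvious by definition'' and omits the argument. In particular, reducing the differences $A_t'-A_t$ and $B_t'-B_t$ to the complementary cut-offs controlled by Proposition \ref{Proposition 2.2} (i)--(ii), and using $\mu t^r\to\infty$ for the $r=0$ versus $r>0$ comparison, is the intended route, and the remaining set-theoretic inclusions are formal as you say.
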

\begin{proposition}\label{Proposition 2.4} Let $a$ and $a'$ be different cluster decompositions: $a\ne a'$. Then for any $0\le r,r' \le 1$, $S_a^r$ and $S_{a'}^{r'}$ are orthogonal: $S_a^r\perp S_{a'}^{r'}$.
\end{proposition}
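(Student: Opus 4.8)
The plan is to reduce the orthogonality of the scattering spaces $S_a^r$ and $S_{a'}^{r'}$ to the disjointness of the regions in configuration space that support the asymptotic states, and then invoke Lemma \ref{lemmaonzakxa} (equivalently \eq{fundamentalinequality}) to see that those regions are indeed disjoint. Since both $S_a^r$ and $S_{a'}^{r'}$ are defined as closures of unions over $B\Subset\R\setminus\mathcal{T}$ and over $\sigma>0$, it suffices to take $f\in S_a^{r\sigma\mu}(B)$ (or $f\in S_a^{0\sigma}(B)$) and $g\in S_{a'}^{r'\sigma'\mu'}(B')$ (or $g\in S_{a'}^{0\sigma'}(B')$) with $\sigma,\sigma',\mu,\mu'>0$, and show $(f,g)=0$; the general case follows by taking linear combinations and limits, using that the inner product is continuous. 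Because $e^{-itH}$ is unitary, $(f,g)=(e^{-itH}f,e^{-itH}g)$ for every $t$, so it is enough to show $(e^{-itH}f,e^{-itH}g)\to 0$ along an appropriate sequence $t=t_m\to\infty$.

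First I would use the defining asymptotic relations: there is a $\sigma>0$ and, for every $\mu>0$,
$$
e^{-itH}f\sim \prod_{(i,j)\not\le a}F(|x_{ij}|\ge\sigma t)\,F(|x^a|\le\mu t^r)\,e^{-itH}f,
$$
and similarly for $g$ with $a'$, $\sigma'$, $\mu'$, $r'$ (the case $r=0$ or $r'=0$ is handled by the same argument with $F(|x^a|\le\mu t^r)$ replaced by $F(|x^a|\le R)$ and an extra $\lim_{R\to\infty}\limsup_{t\to\infty}$, which only makes the localization weaker and does not affect the disjointness conclusion). Since $a\ne a'$, we may assume without loss of generality that $a\not\le a'$, i.e.\ there is a pair $(i,j)\le a$ with $(i,j)\not\le a'$; pick such a pair. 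On the support of the cutoff for $f$ we have $|x_{i'j'}|\ge\sigma t$ for all $(i',j')\not\le a$, and in particular $|x^a|\le\mu t^r$, so $|x_{ij}|\le 2|x^a|\le 2\mu t^r$ by \eq{fundamentalinequality} applied within the cluster of $a$ containing $\{i,j\}$. Wait — more carefully: $(i,j)\le a$ means $i,j$ lie in one cluster of $a$, so $|x_{ij}|$ is controlled by $|x^a|$, hence $|x_{ij}|\le C\mu t^r$ for the state $e^{-itH}f$. On the other hand, for $g$ we have $(i,j)\not\le a'$, so $|x_{ij}|\ge\sigma' t$ on the support of the $g$-cutoff.

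Therefore, writing $e^{-itH}f\sim F(|x_{ij}|\le C\mu t^r)e^{-itH}f$ and $e^{-itH}g\sim F(|x_{ij}|\ge\sigma' t)e^{-itH}g$, we get
$$
(e^{-itH}f,e^{-itH}g)\sim \bigl(F(|x_{ij}|\le C\mu t^r)e^{-itH}f,\ F(|x_{ij}|\ge\sigma' t)e^{-itH}g\bigr),
$$
and for $t$ large enough (once $C\mu t^r<\sigma' t$, which holds for all large $t$ since $r\le 1$; if $r=1$ we first choose $\mu$ small enough that $C\mu<\sigma'$, which is legitimate because the relation for $f$ holds for \emph{every} $\mu>0$) the supports of the two cutoffs are disjoint, so this inner product is exactly $0$. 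Taking $t\to\infty$ gives $(f,g)=0$. The only mild subtlety — and the place I would be most careful — is the order of quantifiers when $r=r'=1$: one needs the freedom to shrink $\mu$ (resp.\ $\mu'$) after $\sigma'$ (resp.\ $\sigma$) has been fixed, which is exactly what the definition of $S_a^1(B)=\overline{\sum_{\sigma>0}\bigcap_{\mu>0}S_a^{1\sigma\mu}(B)}$ provides; and when passing to the closures one must check that orthogonality is preserved under the closure operation, which is immediate since the orthogonal complement of a set is closed. This completes the plan.
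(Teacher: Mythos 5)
Your proof is correct, and it is essentially the argument the paper has in mind: the paper dismisses Proposition \ref{Proposition 2.4} as ``obvious by definition,'' and what you have written out — choosing a pair $(i,j)$ internal to one decomposition and external to the other, noting the resulting asymptotically disjoint localizations $|x_{ij}|\le C\mu t^{r}$ versus $|x_{ij}|\ge \sigma' t$, and using unitarity of $e^{-itH}$ plus closure — is exactly the standard justification. Your handling of the quantifier order for $r=1$ (shrinking $\mu$ after $\sigma'$ is fixed, which the definition of $S_a^1(B)$ as the closure of $\sum_{\sigma>0}\bigcap_{\mu>0}S_a^{1\sigma\mu}(B)$ permits) is the one genuinely delicate point, and you got it right.
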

Now we will prove the asymptotic completeness for the long-range case in a series of Theorems. In the following we always assume that the constants $\rho_j>\theta_j>\sigma>0$ and ${\varepsilon}_a>0$ satisfy the condition \eq{compatibilitycondition} for the constants $d_2>d_1>0$ determined by a given set $B\Subset\R\setminus\mathcal{T}$.
\begin{theorem}\label{sumG_a^+=H_c(H)}
Let Assumptions \ref{potentialdecay} and \ref{eigenfunctiondecay} be satisfied. 
Let $f=E_H(B)f\in \HH_c(H)$ with $B\Subset\R\setminus\mathcal{T}$, and let $d_2>d_1>0$ be the constants specified in Theorem \ref{eigenprojectionisemergent} corresponding to the set $B$. Let $\chi_a$ be defined by \eq{chid} for the constants $d_2>d_1>0$. Let $P_a^{{\varepsilon}_a}(t)$ $(2\le|a|\le N)$ be defined by \eq{(4)}-\eq{Paepsilont} with this $\chi_a$, and let $G_a$ be defined by \eq{perturbedlimit} with this $P_a^{{\varepsilon}_a}(t)$. Then we have
\begin{align}
f=\bigoplus_{2\le |a|\le N}G_af.\label{Gdecomposition}
\end{align}
Further we have
\begin{align}
\HH_c(H)=\bigoplus_{2\le|a|\le N}S_a^1.\label{decompositionofH_cbyS_a^1-0}
\end{align}
\end{theorem}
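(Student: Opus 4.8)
\textbf{Proof plan for Theorem \ref{sumG_a^+=H_c(H)}.}

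The plan is to derive \eqref{Gdecomposition} directly from the asymptotic decomposition \eqref{asymptotics} in Theorem \ref{eigenprojectionisemergent} together with the existence of the limits $G_a$ from \eqref{perturbedlimit} and the mutual asymptotic orthogonality of the operators $P_a^{{\varepsilon}_a}(t)$ as $t\to\infty$ (which is Lemma \ref{lemmaonzakxa} applied to the disjoint support cones, giving the orthogonality \eqref{Gorthogonal}). First I would take $f=E_H(B)f$ with $B\Subset\R\setminus\mathcal{T}$ and let $t_m\to\infty$ be the sequence furnished by Theorem \ref{quantumclassical} for this $f$. By \eqref{asymptotics}, $e^{-it_mH}f\sim\sum_{2\le|a|\le N}P_a^{{\varepsilon}_a}(t_m)e^{-it_mH}f$. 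Multiplying through by $e^{it_mH}$ (a unitary, so it preserves the $\sim$ relation) gives $f\sim\sum_{2\le|a|\le N}e^{it_mH}P_a^{{\varepsilon}_a}(t_m)e^{-it_mH}f$, and by the existence of the limit \eqref{perturbedlimit} each summand converges to $G_af$; hence $f=\sum_{2\le|a|\le N}G_af$. The sum is orthogonal because for $a\ne a'$ the symbols $p_a^{{\varepsilon}_a}$ and $p_{a'}^{{\varepsilon}_{a'}}$ have disjoint supports for large $t$ (different clusters force $|x_{ij}|$ large along incompatible directions, cf.\ Lemma \ref{lemmaonzakxa} and \eqref{Gorthogonal}), so $(G_af,G_{a'}f)=\lim_m(P_{a'}^{{\varepsilon}_{a'}}(t_m)e^{-it_mH}f,P_a^{{\varepsilon}_a}(t_m)e^{-it_mH}f)=0$; this yields \eqref{Gdecomposition}.

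Next I would identify $\mathcal{R}(G_a)$, or rather $G_af$ for $f\in\HH_c(H)$, with an element of the scattering space $S_a^1$. The point is that $G_af=\lim_{t\to\infty}e^{itH}P_a^{{\varepsilon}_a}(t)e^{-itH}f$, so $e^{-itH}G_af\sim P_a^{{\varepsilon}_a}(t)e^{-itH}f$ as $t\to\infty$; applying the intertwining/invariance of $G_a$ (it commutes with $e^{-itH}$, being an $s$-limit of $e^{itH}(\cdots)e^{-itH}$) one gets $e^{-itH}G_af\sim P_a^{{\varepsilon}_a}(t)e^{-itH}G_af$. Now on the range of $P_a^{{\varepsilon}_a}(t)$ Lemma \ref{lemmaonzakxa} gives exactly the localization $|x_{ij}|\ge 5\kappa_{|a|}t$ for all $(i,j)\not\le a$ and $|x^a|\le\kappa_{|a|}t$, i.e.\ $e^{-itH}G_af$ lives asymptotically in the region $\{|x_{ij}|\ge\sigma t,\ |x^a|\le\mu t\}$ with $\sigma=5\kappa_{|a|}$ and $\mu=\kappa_{|a|}$; choosing the cutoff functions $F(|x_{ij}|\ge\sigma t)F(|x^a|\le\mu t)$ to equal $1$ on this region shows $G_af\in S_a^{1\sigma\mu}(B)$. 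Strictly one wants the conclusion for \emph{all} $\mu>0$: here I would use that $e^{-it_mH}G_af\sim\sum_{d\le a}\tilde P^d_{M_d^m}e^{-it_mH}G_af$ (from \eqref{summationofprojections} and the $x_{ij}$-localization, as in Theorem \ref{chidonPdMd}/Theorem \ref{eigenprojectionisemergent}) and then invoke \eqref{(2.6)} of Theorem \ref{quantumclassical} ($\||x^a|^2 t_m^{-2}\tilde P^a_{M_a^m}e^{-it_mH}f\|\to0$) to conclude $F(|x^a|>\mu t_m)e^{-it_mH}G_af\to0$ for every $\mu>0$; combined with a subsequence/diagonal argument to pass from the sequence $t_m$ to a genuine $t\to\infty$ statement, this places $G_af\in\overline{\bigcap_{\mu>0}S_a^{1\sigma\mu}(B)}\subset S_a^1$. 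Therefore $\HH_c(H)=\overline{\sum_{B}E_H(B)\HH}=\overline{\sum_{B}\sum_a G_a E_H(B)\HH}\subset\bigoplus_{2\le|a|\le N}S_a^1$.

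For the reverse inclusion $S_a^1\subset\HH_c(H)$ I would note it is immediate from Definition \ref{scatteringspaces}, since each $S_a^{r}(B)\subset E_H(B)\HH\subset\HH_c(H)$ for $B\Subset\R\setminus\mathcal{T}$, and $S_a^1$ is the closure of the sum of these. Orthogonality of $S_a^1$ and $S_{a'}^1$ for $a\ne a'$ is Proposition \ref{Proposition 2.4}. Combining, $\bigoplus_{2\le|a|\le N}S_a^1\subset\HH_c(H)\subset\bigoplus_{2\le|a|\le N}S_a^1$, which is \eqref{decompositionofH_cbyS_a^1-0}.

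The main obstacle I anticipate is the passage, in the second paragraph, from the \emph{sequential} statements (which is all Theorem \ref{quantumclassical} provides — the sequence $t_m$ depends on $f$) to a genuine $t\to\infty$ localization that qualifies $G_af$ as a member of $S_a^{1\sigma\mu}(B)$ for \emph{all} $\mu>0$, in the sense of \eqref{2.1} where the relation $\sim$ is required along the full parameter $t$, not a subsequence. The resolution must exploit that $G_af$ is an honest strong limit (so $e^{-itH}G_af$ is well-defined for all $t$ and $e^{-itH}G_af\sim P_a^{{\varepsilon}_a}(t)e^{-itH}G_af$ holds for all $t\to\infty$ by the very definition of $G_a$ and its commutation with the dynamics), after which Lemma \ref{lemmaonzakxa} gives the $(i,j)\not\le a$ localization for all large $t$; the only genuinely sequential ingredient, the sharp $|x^a|=o(t)$ bound for all $\mu>0$, then needs \eqref{(2.6)} applied to $f$ and transported to $G_af$, together with the density of such $f$ in $E_H(B)\HH$ and a limiting argument. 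Handling this carefully — ensuring the $\epsilon_M$ errors from the $\tilde P^a_{M_a}$ truncations are absorbed into the closure operations defining $S_a^1$ — is the delicate part; everything else is bookkeeping with the partition of unity $\{P_a^{{\varepsilon}_a}\}$ and the orthogonality already recorded in \eqref{Gorthogonal} and Proposition \ref{Proposition 2.4}.
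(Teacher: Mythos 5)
Your derivation of \eq{Gdecomposition} is exactly the paper's: \eq{asymptotics} from Theorem \ref{eigenprojectionisemergent}, the existence of the limits \eq{perturbedlimit} from Theorem \ref{adjointmodifiedwaveoperator}, and the orthogonality \eq{Gorthogonal}. The reverse inclusion $S_a^1\subset\HH_c(H)$ and the use of Proposition \ref{Proposition 2.4} are likewise fine.

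The gap is in your middle step, where you try to show $G_af\in S_a^1$ for a \emph{fixed} ${\varepsilon}_a$. This is not just delicate but false in general. The localization carried by $P_a^{{\varepsilon}_a}(t)$ is only $|x^a|\le\sqrt{2{\varepsilon}_a}\,t$ (Lemma \ref{lemmaonzakxa}), so $G_a^{{\varepsilon}_a}f$ can contain components in which a cluster of $a$ is itself breaking up into sub-clusters drifting apart with a small but \emph{nonzero} relative speed $0<|v|<\sqrt{2{\varepsilon}_a}$; for such a component $|x^a|\sim|v|t$ is not $o(t)$, so it belongs to $S_d^1$ for a refinement $d<a$ and is orthogonal to $S_a^1$. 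Your proposed rescue via \eq{(2.6)} does not close this: after expanding $e^{-it_mH}G_af\sim\sum_{d\le a}\tilde P^d_{M_d^m}e^{-it_mH}G_af$, the estimate \eq{(2.6)} controls $|x^d|/t_m$ on the $d$-th summand, not $|x^a|/t_m$; writing $x^a=(x^a_d,x^d)$ for $d<a$, the intercluster piece $x^a_d$ behaves like $t\,\xi^a_d$ with $\xi^a_d$ the (generically nonzero) relative momenta of the clusters of $d$ inside a cluster of $a$, and nothing in Theorem \ref{quantumclassical} forces this to be $o(t)$. The paper avoids the issue entirely: it never claims $G_a^{{\varepsilon}_a}f\in S_a^1(B)$, but only that $G_a^{{\varepsilon}_a}f$ approximates an element of $S_a^1(B)$ with error tending to $0$ as ${\varepsilon}_a\downarrow0$ (the leakage described above has size controlled by ${\varepsilon}_a$), and then concludes $f\in\bigoplus_a S_a^1(B)$ from \eq{Gdecomposition} and the closedness of the mutually orthogonal subspaces $S_a^1(B)$, before taking the closure over $B$. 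If you want to keep your structure, you must replace the fixed-${\varepsilon}_a$ argument by this limiting argument in ${\varepsilon}_a$ (or prove separately that the sub-threshold-velocity components cancel, which is essentially the content of the much harder Theorem \ref{emergentP^a} and requires the extra hypotheses of Theorem \ref{Wapm=Sa1}).
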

\begin{proof}
The relation \eq{Gdecomposition} follows from Theorem \ref{eigenprojectionisemergent} and Theorem \ref{adjointmodifiedwaveoperator}. From the definition \eq{perturbedlimit} of $G_a$, we see that $G_af$ approximates an element of $S_a^1(B)$ by taking ${\varepsilon}_a>0$ in \eq{perturbedlimit} small, and the approximation is better when ${\varepsilon}_a>$ is smaller.
Hence from \eq{Gdecomposition} and the reverse inclusion $S_a^1(B)\subset E_H(B)\HH$, we have
\begin{align}
E_H(B)\HH=\bigoplus_{2\le|a|\le N}S_a^1(B).\label{decompositionofH_cbyS_a^1}
\end{align}
This holds for arbitrary $B\Subset\R\setminus\mathcal{T}$. Hence we have \eq{decompositionofH_cbyS_a^1-0} by
\begin{equation}
\begin{aligned}
\HH_c(H)=\overline{\sum_{B\Subset \R\setminus\mathcal{T}}E_H(B)\HH}\label{continuousspectralsubspace-2}
\end{aligned}
\end{equation}
and
\begin{align}
S_a^1=\overline{\sum_{B\Subset \R\setminus\mathcal{T}}S_a^1(B)}.\label{2.5-2}
\end{align}
\end{proof} 
\begin{theorem}\label{S_^1(B)asymptoticbehavior}
Let Assumptions \ref{potentialdecay} and \ref{eigenfunctiondecay} be satisfied. Let $B\Subset\R\setminus\mathcal{T}$ and $f=E_H(B)f\in S_a^1(B)$. 
Let $d_2>d_1>0$ be the constants specified in Theorem \ref{eigenprojectionisemergent} corresponding to the set $B$. Let $\chi_a$ be defined by \eq{chid} for the constants $d_2>d_1>0$ and let the constants in $\Phi_a$ be chosen suitably as in the proof of Theorem \ref{eigenprojectionisemergent}. Let $P_a^{{\varepsilon}_a}(t)$ $(2\le|a|\le N)$ be defined by \eq{(4)}-\eq{Paepsilont} with this $\chi_a$ and $\Phi_a$.
Then for the sequence $t_m\to\infty$ in Theorem \ref{quantumclassical} we have as $m\to\infty$
\begin{align}
&\bigl\Vert e^{-it_mH}f-\Phi_a(x/t_m)e^{-it_mH}f\bigr\Vert\to0,
\label{oneofdesiredones0copy}\\
&\bigl\Vert e^{-it_mH}f-P_a^{{\varepsilon}_a}(t_m)e^{-it_mH}f\bigr\Vert\to0.\label{asymptotics0copy}
\end{align}
In particular the following limits exist for $f\in S_a^1(B)$ and we have
\begin{align}
&f=\lim_{t_m\to\infty}e^{it_mH}\Phi_a(x/t_m)e^{-it_mH}f\in S_a^1(B)\label{f=limPhi_a},\\
&f=\lim_{t_m\to\infty}e^{it_mH}P_a^{{\varepsilon}_a}(t_m)e^{-it_mH}f\in S_a^1(B).\label{f=limP_a}
\end{align}
\end{theorem}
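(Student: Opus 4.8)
The plan is to feed the two general asymptotic decompositions of Theorem~\ref{eigenprojectionisemergent} into the defining condition of $S_a^1(B)$ and to show that, for $f\in S_a^1(B)$, only the term with index $a$ survives. First I would reduce to a cleaner claim. Since $\sum_{2\le|a'|\le N}\Phi_{a'}(x/t_m)=1$ by Theorem~\ref{homogeneousextension},
\[
e^{-it_mH}f-\Phi_a(x/t_m)e^{-it_mH}f=\sum_{a'\ne a}\Phi_{a'}(x/t_m)e^{-it_mH}f ,
\]
and by \eqref{asymptotics}, $e^{-it_mH}f-P_a^{{\varepsilon}_a}(t_m)e^{-it_mH}f\sim\sum_{a'\ne a}P_{a'}^{{\varepsilon}_{a'}}(t_m)e^{-it_mH}f$. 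Hence \eqref{oneofdesiredones0copy} and \eqref{asymptotics0copy} amount to showing, for every $a'\ne a$, that $\Phi_{a'}(x/t_m)e^{-it_mH}f\to0$ and $P_{a'}^{{\varepsilon}_{a'}}(t_m)e^{-it_mH}f\to0$. Granting this, \eqref{f=limPhi_a} and \eqref{f=limP_a} are immediate: applying the unitary $e^{it_mH}$ to \eqref{oneofdesiredones0copy} and \eqref{asymptotics0copy} gives $e^{it_mH}\Phi_a(x/t_m)e^{-it_mH}f\to f$ and $e^{it_mH}P_a^{{\varepsilon}_a}(t_m)e^{-it_mH}f\to f$, the limit $f$ lying in $S_a^1(B)$ by hypothesis. (It suffices to treat $f\in\bigcap_{\mu>0}S_a^{1\sigma\mu}(B)$ for some $\sigma>0$; the general $f\in S_a^1(B)$ then follows by density, since $\Vert\Phi_a(x/t_m)\Vert,\Vert P_a^{{\varepsilon}_a}(t_m)\Vert\le1$.)

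The geometric content comes from Proposition~\ref{Proposition 2.2}: there is $\sigma>0$ such that for every $\mu>0$,
\[
e^{-itH}f\sim\prod_{(ij)\not\le a}F(|x_{ij}|\ge\sigma t)\,F(|x^a|\le\mu t)\,e^{-itH}f\qquad(t\to\infty),
\]
and, since $f=E_H(B)f$ with $B\Subset\R\setminus\mathcal{T}$ and $V$ is bounded, $e^{-it_mH}f$ is asymptotically localized to $\{\,t_m/C_0\le|x|\le C_0t_m\,\}$ for some $C_0>1$. Thus for each $\mu>0$ the state $e^{-it_mH}f$ is carried asymptotically into the region $\mathcal R(\mu,t_m)$ on which $|x^a|\le\mu t_m$, $|x_{ij}|\ge\sigma t_m$ for all $(ij)\not\le a$, and $t_m/C_0\le|x|\le C_0t_m$. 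Fix $a'\ne a$; then either $a\not\le a'$ or $a<a'$. If $a\not\le a'$, choose a pair $(ij)$ with $(ij)\le a$ and $(ij)\not\le a'$ (a cluster of $a$ straddling two clusters of $a'$): on $\mathcal R(\mu,t_m)$ the inclusion $(ij)\le a$ makes $x_{ij}$ an internal coordinate of an $a$-cluster, so $|x_{ij}|\le C|x^a|\le C\mu t_m$; on the set where $\Phi_{a'}(x/t_m)\ne0$, the relation $(ij)\not\le a'$ with \eqref{4.36} gives $|x_{ij}|>\rho_{|a'|}^{1/2}|x|/4\ge\rho_{|a'|}^{1/2}t_m/(4C_0)$. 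For $\mu$ small these are incompatible, so $\Phi_{a'}(x/t_m)e^{-it_mH}f\to0$; the same pair handles $P_{a'}^{{\varepsilon}_{a'}}(t_m)e^{-it_mH}f$, now via \eqref{xbetazakgekappaa} of Lemma~\ref{lemmaonzakxa}, which forces $|x_{ij}|/t_m\ge5\kappa_{|a'|}$ on the output of $P_{a'}^{{\varepsilon}_{a'}}(t_m)$.

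The main obstacle is the case $a<a'$ ($a'$ coarser than $a$). The natural pair $(ij)$ with $(ij)\le a'$, $(ij)\not\le a$ (two $a$-clusters merged inside one $a'$-cluster) satisfies $|x_{ij}|\le C|x^{a'}|$, but $\Phi_{a'}$ controls no single coordinate internal to $a'$, so one cannot separate the support of $\Phi_{a'}$ from $\mathcal R(\mu,t_m)$ by a bound on one intercluster distance. The remedy I would use is that the support of $\Phi_{a'}$ forces the \emph{whole} internal block $x^{a'}$ to be small relative to $|x|$: by Theorem~\ref{homogeneousextension} (see \eqref{3.25-2}), on the set where $\Phi_{a'}(x/t_m)\ne0$ and $|x|\ge t_m/C_0$ one has $|x_{a'}|^2\ge\omega_{|a'|}|x^{a'}|^2/2$, hence $|x^{a'}|\le(2/\omega_{|a'|})^{1/2}|x|\le C_0(2/\omega_{|a'|})^{1/2}t_m$ and so $|x_{ij}|\le CC_0(2/\omega_{|a'|})^{1/2}t_m$, which clashes with $|x_{ij}|\ge\sigma t_m$ on $\mathcal R(\mu,t_m)$ once $\omega_{|a'|}$ is large; and $\omega_{|a'|}=(\theta_{|a'|}+\theta_N+\sigma)^{-1}(1-\theta_{|a'|}-\theta_N-\sigma)$ (see \eqref{3.25}) can be made arbitrarily large by shrinking the geometric constants $\theta_j$ and the width $\sigma$ in \eqref{3.19}. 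For $P_{a'}^{{\varepsilon}_{a'}}(t_m)$ the same pair is easier: by \eqref{xakappaa} its output lives where $|x^{a'}|/t_m\le\sqrt{2{\varepsilon}_{a'}}$, so $|x_{ij}|\le C\sqrt{2{\varepsilon}_{a'}}\,t_m$, contradicting $|x_{ij}|\ge\sigma t_m$ once ${\varepsilon}_{a'}$ is small relative to the speed threshold $\sigma$ in \eqref{2.1}. Finally one checks that these extra requirements ($\omega_{|a'|}$ large, ${\varepsilon}_{a'}$ small) stay within the standing constraint \eqref{compatibilitycondition}: the constants $d_1,d_2$ are fixed by $B$, and \eqref{compatibilitycondition} reads $\rho_{|a|}\ge2^{10}d_1^{-2}{\varepsilon}_a>2^{14}d_2^2d_1^{-2}\omega_{|a|}^{-1}$, i.e.\ a lower bound on $\rho_{|a|}$ and on $\omega_{|a|}$, so there is room to take $\theta_j,\rho_j,{\varepsilon}_a$ small and $\omega_{|a|}$ large; with such a choice Theorem~\ref{eigenprojectionisemergent} still applies and the argument closes.
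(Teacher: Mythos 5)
Your proposal is correct and follows essentially the same route as the paper: reduce via the partition of unity (and Theorem \ref{eigenprojectionisemergent}) to showing that each term with $a'\ne a$ vanishes, then split into the cases $a\not\le a'$ and $a'$ strictly coarser than $a$, deriving in each case incompatible upper and lower bounds on some $|x_{ij}|$ from the $S_a^1$-localization of $f$ on one side and the support properties of $\Phi_{a'}$ (resp.\ $P_{a'}^{\varepsilon_{a'}}$, via Lemma \ref{lemmaonzakxa}) on the other. The only cosmetic difference is that you obtain the minimal/maximal velocity bounds $t_m/C_0\le|x|\le C_0t_m$ directly from the energy localization, where the paper routes them through the factors $\chi_d(D_d)\tilde P^d_{M_d^m}$ of Theorem \ref{chidonPdMd}; the geometric core and the dependence of the constants $\theta_j,\sigma,{\varepsilon}_{a'}$ on the separation speed of $f$ are the same as in the paper.
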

\begin{proof}
By
Theorem \ref{eigenprojectionisemergent}-\eq{oneofdesiredones0} and Theorem \ref{chidonPdMd},
 we have for $f=E_H(B)f\in S_a^1(B)$ as $m\to\infty$
\begin{align}
\bigl\Vert e^{-it_mH}f-\sum_{2\le|a'|\le N}\Phi_{a'}(x/t_m)\sum_{d\le a'}\chi_d(D_d)\tilde P^d_{M_d^m}e^{-it_mH}f\bigr\Vert\to0.\label{oneofdesiredones0proof}
\end{align}
If $a\not\le a'$, there is a pair $(i,j)\not\le a'$ with $(i,j)\le a$. By the definition of $S_a^1$ we have $|x_{ij}|\le |x^a|\le{\mu} t$ for any ${\mu}>0$ while arguing similarly to the proof of Theorem \ref{eigenprojectionisemergent}, by the factor $\Phi_{a'}(x/t_m)$ we obtain $|x_{ij}|\ge ct_m$ for any $(i,j)\not\le a'$ with some constant $c>0$. Thus the terms $\Phi_{a'}$ with $a\not\le a'$ in the above sum must vanish asymptotically as  $t=t_m\to\infty$. On the other hand if $a'\not\le a$, there is a pair $(i,j)\not\le a$ such that $|x_{ij}|\le |x^{a'}|$.
The condition $f\in S_a^1$ gives $|x_{ij}|\ge \sigma t$ for some $\sigma>0$.
The factor $\Phi_{a'}$ yields that $|x_{ij}|\le |x^{a'}|\le \omega_{|a'|}^{-1}|x_{a'}|\le \omega_{|a'|}^{-1}|x_d|$ by $d\le a'$. Thus the factor $\chi_d(D_d)$ gives $|x_{ij}|\le 2\omega_{|a'|}^{-1}d_2t_m$ for the constant $d_2>0$ in Theorem \ref{chidonPdMd}, and the terms with $a'\not\le a$ in the sum of \eq{oneofdesiredones0proof} vanish asymptotically as $t=t_m\to\infty$ if we take the constants $\theta_{|a'|}>0$ and $\sigma>0$ in the definition of $\Phi_{a'}$ small enough. These with Theorem \ref{chidonPdMd} yield 
\begin{align}
\bigl\Vert e^{-it_mH}f-\Phi_a(x/t_m)\sum_{d\le a}\tilde P^d_{M_d^m}e^{-it_mH}f\bigr\Vert\to0.
\label{oneofdesiredones0copy2}
\end{align}
Recovering the terms with $d\not\le a$ similarly to the proof of Theorem \ref{eigenprojectionisemergent} proves
\eq{oneofdesiredones0copy}. From \eq{oneofdesiredones0copy2} follows \eq{asymptotics0copy} in the same way as in the proof of Theorem \ref{eigenprojectionisemergent}.
\end{proof}
\begin{theorem}\label{rangeofW=Sa0}
Let Assumptions \ref{potentialdecay} and \ref{eigenfunctiondecay} be satisfied. Let $B\Subset\R\setminus\mathcal{T}$ and $f=E_H(B)f\in S_a^0(B)$. Let $t_m$ and $M^m$ be th sequences specified in Theorem \ref{quantumclassical} for $f\in \HH_c(H)$.
 Then
\begin{equation}
\begin{aligned}
f=
\lim_{m\to\infty}
e^{it_mH}P^a_{M^m}P_a^{{\varepsilon}_a}(t_m)e^{-it_mH}f.
\end{aligned}\label{asymptoticeigenprojectedevolutedstateSa0}
\end{equation}
In particular we have
\begin{equation}
\mathcal{R}(W_a)=S_a^0.\label{RW=Sa0}
\end{equation}
\end{theorem}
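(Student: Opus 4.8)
The plan is to establish the limit formula \eq{asymptoticeigenprojectedevolutedstateSa0} first, and then to read off from it the inclusion $S_a^0\subset\mathcal{R}(W_a)$, while the reverse inclusion $\mathcal{R}(W_a)\subset S_a^0$ follows from the microlocal and internal localization built into $P_a^{{\varepsilon}_a}(t)J_ae^{-itH_a}P^a$; together these give \eq{RW=Sa0}. Throughout, $f=E_H(B)f\in S_a^0(B)$, $d_2>d_1>0$ are the constants of Theorem \ref{eigenprojectionisemergent}, and $t_m$, $M^m$ are the sequences of Theorem \ref{quantumclassical}.

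First I would prove \eq{asymptoticeigenprojectedevolutedstateSa0}. Since $S_a^0(B)\subset S_a^1(B)$ by Proposition \ref{Proposition 2.3}, Theorem \ref{S_^1(B)asymptoticbehavior} applies and gives $f=\lim_{m}e^{it_mH}P_a^{{\varepsilon}_a}(t_m)e^{-it_mH}f$ (hence $G_af=f$, the full limit \eq{perturbedlimit} existing by Theorem \ref{adjointmodifiedwaveoperator}), together with Theorem \ref{chidonPdMd}, \eq{summationofprojections} and \eq{oneofdesiredones0copy}, the relation $e^{-it_mH}f\sim\Phi_a(x/t_m)\sum_{d\le a}\tilde P^d_{M_d^m}e^{-it_mH}f$ (cf. \eq{oneofdesiredones0copy2}). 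The point at which the full $S_a^0$‑condition is used is the elimination of the terms with $d$ a proper refinement of $a$: if $d<a$ then some cluster of $a$ is split in $d$, so there is a pair $(i,j)\not\le d$ with $i,j$ in one cluster of $a$, for which $|x_{ij}|\le C|x^a|$; hence $F(|x^a|\le R)$ is dominated by $\chi_{\{|x_{ij}|<CR\}}$, and $\|\chi_{\{|x_{ij}|<CR\}}\tilde P^d_{M_d^m}e^{-it_mH}f\|\to0$ by Theorem \ref{quantumclassical}, \eq{(2.7)}. Combining this with the asymptotic mutual orthogonality of the summands and with $\lim_{R\to\infty}\limsup_{m}\|F(|x^a|>R)e^{-it_mH}f\|=0$ (Proposition \ref{Proposition 2.2}, \eq{2.9}) forces $\tilde P^d_{M_d^m}e^{-it_mH}f\to0$ for every $d<a$, leaving $e^{-it_mH}f\sim\Phi_a(x/t_m)\tilde P^a_{M_a^m}e^{-it_mH}f$. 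Now take $M^m:=M^m_{|a|}$; by \eq{1.20}, $\tilde P^a_{M_a^m}=P^a_{M^m}{\hat P}^{|a|-1}_{{\hat M}_a^m}$, and $P^a_{M^m}=I\otimes P^a_{M^m}$ acts only on $x^a$ whereas $\Phi_a(x/t_m)=\Phi_a(x_a/t_m)$ acts only on $x_a$, so $I-P^a_{M^m}$ commutes with $\Phi_a(x/t_m)$ and annihilates $\tilde P^a_{M_a^m}$; hence $(I-P^a_{M^m})e^{-it_mH}f\to0$, and since $P_a^{{\varepsilon}_a}(t_m)e^{-it_mH}f\sim e^{-it_mH}f$ also $P^a_{M^m}P_a^{{\varepsilon}_a}(t_m)e^{-it_mH}f\sim e^{-it_mH}f$. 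Applying $e^{it_mH}$ and using \eq{f=limP_a} yields \eq{asymptoticeigenprojectedevolutedstateSa0}. Note also that $e^{-it_mH}f\sim P^ae^{-it_mH}f$.

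Next I would deduce $S_a^0(B)\subset\mathcal{R}(W_a)$ by recognizing the right‑hand side of \eq{asymptoticeigenprojectedevolutedstateSa0} as $W_a$ applied to an inverse‑wave image. Fix an integer $M$ and let $\Omega_a$, $W_a$ be as in \eq{inversewave}, \eq{wave}. Applying $e^{-itH_a}$ to the defining limit of $\Omega_af$ and using $[e^{-itH_a},P^a_M]=0$ gives the Cook‑type relation $e^{-itH_a}\Omega_af-P^a_MJ_a^*P_a^{{\varepsilon}_a}(t)e^{-itH}f\to0$; inserting this into the definition of $W_a$, together with $[J_a,P^a_M]=0$ (from $[H^a,J_a]=0$), $[J_a,P_a^{{\varepsilon}_a}(t)]=O(t^{-\delta})$ (Lemma \ref{Fourierintegralpseudodifferential}) and $J_aJ_a^*\sim I$ on the states microlocalized by $P_a^{{\varepsilon}_a}(t)$ (Lemma \ref{FourierconjugateFourierproduct}), one obtains
\[
W_a\Omega_af=\lim_{t\to\infty}e^{itH}P_a^{{\varepsilon}_a}(t)P^a_MJ_aJ_a^*P_a^{{\varepsilon}_a}(t)e^{-itH}f .
\]
Evaluating this along $t_m$ and using $P_a^{{\varepsilon}_a}(t_m)e^{-it_mH}f\sim e^{-it_mH}f$ and $P^a_Me^{-it_mH}f=e^{-it_mH}f-(I-P^a_M)e^{-it_mH}f$, one gets $\|W_a\Omega_af-f\|\le C\rho_M$ with $\rho_M:=\limsup_m\|(I-P^a_M)e^{-it_mH}f\|$. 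The hard part of this step will be showing $\rho_M\to0$ as $M\to\infty$: using $e^{-it_mH}f\sim P^ae^{-it_mH}f$ and $f=E_H(B)f$, the internal part of $e^{-it_mH}f$ is asymptotically a bound state of $H^a$ of bounded internal energy localized in $|x^a|\le R$ (by \eq{2.9}), and such states — near‑threshold eigenvectors being excluded by the $|x^a|$‑bound — form a precompact set on which $P^a_M\to P^a$ uniformly; this is where Assumption \ref{eigenfunctiondecay} (guaranteeing existence of $\Omega_a$) and the $S_a^0$‑localization are both essential. Granting $\rho_M\to0$, $f=\lim_{M\to\infty}W_a\Omega_af$, so $f\in\overline{\mathcal{R}(W_a)}$, and density of such $f$ in $S_a^0$ together with closedness of $\mathcal{R}(W_a)$ gives $S_a^0\subset\mathcal{R}(W_a)$.

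For the reverse inclusion I would take $g=E_{H_a}(B')g$ with spectral parameter in $[d_1^2/2,d_2^2/2]$, so that $W_ag=E_H(B)W_ag$ by \eq{intertwiningproperty} and $e^{-itH}W_ag\sim P_a^{{\varepsilon}_a}(t)J_ae^{-itH_a}P^ag$. By Lemma \ref{lemmaonzakxa}, the symbol of $P_a^{{\varepsilon}_a}(t)$ is supported, in the configuration variable, in $\bigcap_{(ij)\not\le a}\{|x_{ij}|\ge5\kappa_{|a|}t\}$, which yields the $F(|x_{ij}|\ge\sigma t)$‑localization with $\sigma=5\kappa_{|a|}$; and since $P_a^{{\varepsilon}_a}(t)$ and $J_a$ commute with multiplication by functions of $x^a$, $e^{-itH_a}=e^{-itT_a}\otimes e^{-itH^a}$ preserves the bound‑state subspace of $H^a$, and $\sup_t\||x^a|e^{-itH^a}P^a_Mg\|\le C\|g\|$ by Assumption \ref{eigenfunctiondecay} (and \cite{[FH]} for non‑threshold eigenvectors), one gets $\limsup_t\|F(|x^a|>R)e^{-itH}W_ag\|\le CR^{-1}\||x^a|P^a_Mg\|\to0$ as $R\to\infty$. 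Hence $W_ag\in S_a^0(B)$, and by density and closedness $\mathcal{R}(W_a)\subset S_a^0$, which with the previous inclusion proves \eq{RW=Sa0}. I expect the main obstacle to be the identification in the third paragraph — converting \eq{asymptoticeigenprojectedevolutedstateSa0} into the genuine equality $f=W_a\Omega_af$ in the limit $M\to\infty$, which requires both the careful handling of the non‑commutativity of $P_a^{{\varepsilon}_a}(t)$, $P^a_M$, $J_a$ and the approximation $J_aJ_a^*\sim I$, and the compactness argument yielding $\rho_M\to0$.
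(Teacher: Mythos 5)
Your proof of \eq{asymptoticeigenprojectedevolutedstateSa0} is essentially the paper's: the decomposition $e^{-it_mH}f\sim\sum_{d\le a}\tilde P^d_{M^m_d}e^{-it_mH}f$, the elimination of the refinements $d<a$ by playing the $F(|x^a|\le R)$--localization of $S_a^0$ against Theorem \ref{quantumclassical}--\eq{(2.7)}, and the insertion of $P_a^{{\varepsilon}_a}(t_m)$ via \eq{asymptotics0copy} are exactly the steps in the text. Where you genuinely diverge is in the inclusion $S_a^0\subset\mathcal{R}(W_a)$: the paper keeps the growing sequence $M^m$ throughout, writes $f=\lim_m e^{it_mH}P_a^{{\varepsilon}}(t_m)J_ae^{-it_mH_a}P^a_{M^m}e^{it_mH_a}J_a^*P_a^{{\varepsilon}}(t_m)e^{-it_mH}f$ and identifies this directly with $W_a\Omega_af$ using Lemma \ref{FourierconjugateFourierproduct}, $\Vert[P_a^{{\varepsilon}}(t_m),P^a_{M^m}]\Vert\to0$ and the limits \eq{inversewave}, \eq{wave}, whereas you freeze $M$, accept the error $\rho_M=\limsup_m\Vert(I-P^a_M)e^{-it_mH}f\Vert$, and send $M\to\infty$ at the end. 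Your route is legitimate and in fact makes explicit a point the paper glosses over (the limits \eq{inversewave}--\eq{wave} are defined for each fixed $M$, while the paper's displayed identity has $M^m$ varying); the price is the claim $\rho_M\to0$, and there your stated mechanism is slightly off: it is not that near-threshold eigenvectors are ``excluded by the $|x^a|$-bound'' (they are not excluded --- Assumption \ref{eigenfunctiondecay} only makes them decay), but rather that $(P^a-P^a_M)=(P^a-P^a_M)\varphi(H^a)$ for $\varphi\in C_0^\infty$ equal to $1$ on $[b,0]\supset\mathcal{T}$, that $\varphi(H^a)F(|x^a|\le R)$ is compact on $\HH^a$, and that $P^a_M\to P^a$ strongly, so $(P^a-P^a_M)F(|x^a|\le R)\to0$ in operator norm; combined with \eq{2.9} this yields $\rho_M\to0$. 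Finally, your verification of $\mathcal{R}(W_a)\subset S_a^0$ --- the $|x_{ij}|\ge\sigma t$ localization from the symbol support of $P_a^{{\varepsilon}_a}(t)$ via Lemma \ref{lemmaonzakxa}, and the $|x^a|\le R$ localization from $\Vert\,|x^a|P^a_M\Vert<\infty$ --- is a correct expansion of what the paper dismisses as clear from the definition \eq{wave}.
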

\begin{proof}
By \eq{summationofprojections} and $f\in S_a^0(B)\subset \HH_c(H)$ we have 
\begin{equation}
e^{-itH}f\sim \sum_{d\le a}{\tilde P}^{\ d}_{M_d}e^{-itH}f\label{2.11-2}
\end{equation}
as $t\to\infty$. This, Theorem \ref{quantumclassical}-\eq{(2.7)} and $f\in \HH_c(H)$ imply
\begin{equation}
e^{-it_mH}f\sim {P}^{\ a}_{M^m}e^{-it_mH}f\label{2.11-4}
\end{equation}
as $m\to\infty$. Using Theorem \ref{S_^1(B)asymptoticbehavior}-\eq{asymptotics0copy}, we obtain from this that
\begin{equation}
\begin{aligned}
f=
\lim_{m\to\infty}
e^{it_mH}P^a_{M^m}P_a^{{\varepsilon}_a}(t_m)e^{-it_mH}f.
\end{aligned}\label{asymptoticeigenprojectedevolutedstateSa0Pa}
\end{equation}
%Then smooth operator technique gives \eq{asymptoticeigenprojectedevolutedstateSa0}

By the definition \eq{wave} of $W_a$ it is clear that $\mathcal{R}(W_a)\subset S_a^0$. Thus we have only to show the reverse inclusion $S_a^0\subset \mathcal{R}(W_a)$. Let $B\Subset \R\setminus\mathcal{T}$ and let $f=E_H(B)f\in S_a^0(B)$. By Theorem \ref{S_^1(B)asymptoticbehavior}-\eq{asymptotics0copy}, \eq{asymptoticeigenprojectedevolutedstateSa0}, $\Vert[P_a^{\varepsilon}(t_m),P^a_{M^m}]\Vert\to0$ $(t_m\to\infty)$, Lemma \ref{FourierconjugateFourierproduct} and Theorem \ref{adjointmodifiedwaveoperator}-\eq{inversewave} and \eq{wave} we have
\begin{equation}
\begin{aligned}
f&=\lim_{m\to\infty} e^{it_mH}P_a^{{\varepsilon}}(t_m)J_ae^{-it_mH_a}P^a_{M^m}e^{it_mH_a} J_a^*P_a^{{\varepsilon}}(t_m)e^{-it_mH}f\\
&= W_a\Omega_af\in \mathcal{R}(W_a).
\end{aligned}
\end{equation}
The proof is complete.
\end{proof}

Let $\mathcal{T}_a$ $(1\le|a|\le N)$ be defined by
\begin{align}
\mathcal{T}_a=\bigcup_{b\le a}\sigma_p(H^b).\label{subthresholdsandeigenvalues}
\end{align}
We note that $0\in \mathcal{T}_a$ $(1\le |a|\le N)$, and $\mathcal{T}_{a}=\mathcal{T}$ when $|a|=1$. 
Let $\psi_a(\lambda)\in C_0^\infty(\R)$ satisfy 
\begin{equation}
\begin{aligned}
&1\ge \psi_a(\lambda)\ge 0,\\
&\psi_a(\lambda)=1\mbox{ in a neighborhood of each }\lambda\in \mathcal{T}_a.
\end{aligned}\label{psi_a}
\end{equation}
For any real numbers ${\rho}>0$ and $y\in \R$, set $B_{\rho}(y)=\{x|x\in \R, |x-y|<{\rho}\}$.
We write supp $\psi_a\downarrow \mathcal{T}_a$ if and only if for any ${\rho}>0$ there is $\psi_a$ satisfying \eq{psi_a} and a finite covering $\{B_{{\rho}_j}(y_j)\}_{j=1}^L$ of supp $\psi_a$ such that for $1\le j\le L$, ${\rho}>{\rho}_j>0$, $y_j\in\mathcal{T}_a$,  $\mathcal{T}_a\subset$ supp $\psi_a\subset \bigcup_{j=1}^L B_{{\rho}_j}(y_j)$, and $\sum_{j=1}^L{\rho}_j<{\rho}$. Since $\mathcal{T}_a$ is a closed countable subset of $[b,0]$ for some $b\le0$ by \cite{[FH]} and \cite{PSS}, and non-threshold eigenvalues can accumulate only at thresholds by Theorem 1.5 of \cite{PSS}, it is possible to take such covering of supp $\psi_a$ for any ${\rho}>0$.
 We notice that 
\begin{align}
\lim_{\mbox{\scriptsize supp }\psi_a\downarrow \mathcal{T}_a}\psi_a(H^a)h=P^ah=\lim_{M\to\infty}P^a_Mh\label{slimrelation}
\end{align}
 for $h\in \HH^a$. In fact since $\psi_a(H^a)-P^a=\psi_a(H^a)(I-P^a)$ and $H^a$ has no singular continuous spectrum by Theorem 1.5 of \cite{PSS}, we have only to consider the operator $\psi_a(H^a)$ restricted to the absolutely continuous subspace $\HH^a_{ac}(H^a)$ of $H^a$. When supp $\psi_a \downarrow \mathcal{T}_a$, the Lebesgue measure of supp $\psi_a$ goes to zero by the above definition. Thus for $h\in \HH^a_{ac}(H^a)$ we have $\Vert\psi_a(H^a)h\Vert\le\Vert E_{H^a}(\mbox{supp }\psi_a)h\Vert^2\to0$ when supp $\psi_a \downarrow \mathcal{T}_a$ uniformly with respect to the way of letting supp $\psi_a \downarrow \mathcal{T}_a$. This proves \eq{slimrelation}.
\begin{theorem}\label{emergentP^a}
Let the assumptions of Theorem \ref{Wapm=Sa1} be satisfied.
Let $B\Subset\R\setminus\mathcal{T}$. 
 Then for $f=E_H(B)f\in S_a^1(B)$ the following limit exists and we have
\begin{equation}
\begin{aligned}
f=
\lim_{t\to\infty}
e^{itH}P^aP_a^{{\varepsilon}_a}(t)e^{-itH}f.
\end{aligned}\label{asymptoticeigenprojectedevolutedstateSa1}
\end{equation}
In particular we have
\begin{equation}
S_a^1=S_a^0.\label{Sa1=Sa0}
\end{equation}
\end{theorem}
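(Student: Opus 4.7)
The plan is to derive \eq{asymptoticeigenprojectedevolutedstateSa1} from Theorem \ref{S_^1(B)asymptoticbehavior}-\eq{f=limP_a} by showing that the eigenprojection $P^a$ can be inserted at no asymptotic cost, i.e.\ establishing
\begin{equation}
\lim_{t\to\infty}\bigl\|(I-P^a)\,P_a^{{\varepsilon}_a}(t)\,e^{-itH}f\bigr\|=0, \qquad f\in S_a^1(B). \label{plankeyclaim}
\end{equation}
Once \eq{plankeyclaim} is in hand, writing $P_a^{{\varepsilon}_a}(t)e^{-itH}f = P^a P_a^{{\varepsilon}_a}(t) e^{-itH}f + (I-P^a)P_a^{{\varepsilon}_a}(t)e^{-itH}f$ and passing to the limit along $t_m$ via Theorem \ref{S_^1(B)asymptoticbehavior} yields \eq{asymptoticeigenprojectedevolutedstateSa1}. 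The equality $S_a^1=S_a^0$ then follows from the extra hypotheses of Theorem \ref{Wapm=Sa1}: $P^a$ is finite-dimensional and Assumption \ref{eigenfunctiondecay} gives $\||x^a|\psi^a\|<\infty$ for each eigenfunction $\psi^a$, so $\|F(|x^a|>R)P^a\|\to 0$ as $R\to\infty$. Combined with Lemma \ref{lemmaonzakxa}, which gives $|x_{ij}|\ge 5\kappa_{|a|}t$ on $\text{supp}\,P_a^{{\varepsilon}_a}(t)$, the state $P^aP_a^{{\varepsilon}_a}(t)e^{-itH}f$ sits asymptotically inside $\{|x_{ij}|\ge\sigma t\}\cap\{|x^a|\le R\}$ for some $\sigma>0$ and large $R$, placing $f$ in $S_a^0(B)$.

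To prove \eq{plankeyclaim} I would run a positive commutator argument using the internal dilation generator $A^a=\tfrac{1}{2}(x^a\cdot D^a+D^a\cdot x^a)$. One has
\begin{equation*}
i[H^a,A^a]=2H_0^a-x^a\cdot\nabla_{x^a}V^a=2H^a-\bigl(2V^a+x^a\cdot\nabla V^a\bigr),
\end{equation*}
and under Theorem \ref{Wapm=Sa1}'s hypotheses ($V_{ij}=V_{ij}^L\ge 0$ and $\delta>1/2$) the virial correction $2V^a+x^a\cdot\nabla V^a$ is $O(\langle x^a\rangle^{-\delta})$ and relatively $H^a$-compact, producing a Mourre bound
\begin{equation*}
(I-P^a)E_{H^a}(J)\,i[H^a,A^a]\,E_{H^a}(J)(I-P^a)\;\ge\; c_0\,(I-P^a)E_{H^a}(J)
\end{equation*}
for some $c_0>0$ on every compact $J\Subset\R\setminus\mathcal{T}_a$. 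Theorem \ref{chidonPdMd} together with $H=T_a+H^a+I_a$ and $|D_a|\in[d_1,d_2]$ on $\text{supp}\,P_a^{{\varepsilon}_a}(t)$ concentrate the $H^a$-energy of $P_a^{{\varepsilon}_a}(t)e^{-itH}f$ in exactly such a window $J=B-[d_1^2/2,d_2^2/2]$, since $B\Subset\R\setminus\mathcal{T}$. I would then compute the Heisenberg derivative of $e^{itH}P_a^{{\varepsilon}_a}(t)^*(A^a/t)(I-P^a)E_{H^a}(J)P_a^{{\varepsilon}_a}(t)e^{-itH}$ paired against $f$, extract the leading positive term $c_0 t^{-1}(I-P^a)E_{H^a}(J)$, and integrate from $t=1$ to $\infty$: boundedness of $A^a/t$ on $e^{-itH}f$ (which is forced by $|x^a|/t\to 0$ on $S_a^1(B)$) gives a finite majorant, yielding square integrability of $t^{-1/2}\|(I-P^a)E_{H^a}(J)P_a^{{\varepsilon}_a}(t)e^{-itH}f\|$ and hence \eq{plankeyclaim} along a sequence, upgraded to a norm limit by monotonicity in the spirit of section \ref{existenceofwaveandinversewaveoperators}.

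The main obstacle is the control of the remainders arising when $i[H,A^a]$ is replaced by $i[H^a,A^a]$, in particular the cross-commutator $[I_a^L(x_a,x^a),A^a]$ and the difference $I_a^L(x_a,x^a)-I_a^L(x_a,0)$ from \eq{Ia(x)-Ia(xa,xa)}. On $\text{supp}\,P_a^{{\varepsilon}_a}(t)$ Lemma \ref{lemmaonzakxa} gives $|x_a|\sim t$, so these remainders are bounded by $O(|x^a|\langle x_a\rangle^{-1-\delta})=O(|x^a|\,t^{-1-\delta})$; the $S_a^1(B)$-bound $|x^a|=o(t)$ renders them $o(t^{-\delta})$, whose time integral converges precisely when $\delta>1/2$. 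This is the $N$-body analogue of the Dollard cross-term and is where the hypothesis $\delta\in(1/2,1)$ enters indispensably; the sign condition $V_{ij}\ge 0$ is simultaneously needed to rule out negative contributions from $x^a\cdot\nabla V^a$ that would otherwise spoil the Mourre bound and allow Yafaev channels. Overcoming this step, by a careful smooth-operator estimate modeled on the proof of Theorem \ref{adjointmodifiedwaveoperator} but with $A^a$ in place of $P_a^{{\varepsilon}_a}(t)$, completes the proof of \eq{plankeyclaim} and hence of the theorem.
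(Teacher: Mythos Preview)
Your outline has two substantive gaps that the paper's argument is built to overcome.

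First, the Mourre localization you invoke is not available. You assert that the $H^a$-energy of $P_a^{{\varepsilon}_a}(t)e^{-itH}f$ lies in $J=B-[d_1^2/2,d_2^2/2]\Subset\R\setminus\mathcal{T}_a$, but this set need not avoid $\mathcal{T}_a$: the constants $d_1,d_2$ in Theorem \ref{chidonPdMd} are chosen precisely so that $E_j+[d_1^2/2,d_2^2/2]$ covers $B$ for the eigenvalues $E_j\in\mathcal{T}_a$, hence $B-[d_1^2/2,d_2^2/2]$ \emph{contains} points of $\mathcal{T}_a$. The hard part of the proof is exactly to show that on $(I-P^a)$ the $H^a$-energy does not linger near $\mathcal{T}_a$; a Mourre bound on $J\Subset\R\setminus\mathcal{T}_a$ simply assumes the conclusion. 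The paper handles this by the $\psi_a(H^a)$ decomposition \eq{10-1}--\eq{10-4}, showing first that the piece with $H^a$-energy away from $\mathcal{T}_a$ scatters into sub-clusters $b<a$ and is asymptotically orthogonal to $S_a^1(B)$, yielding \eq{psi_atoT_a}. The residual problem is then to pass from $\psi_a(H^a)$ (supported near $\mathcal{T}_a$) to $P^a$.

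Second, your integrability claim fails. From $f\in S_a^1(B)$ you only know $|x^a|/t\to 0$ with no rate, so your cross-term $O(|x^a|\,t^{-1-\delta})=o(t^{-\delta})$ is \emph{not} integrable for any $\delta<1$; ``converges precisely when $\delta>1/2$'' is incorrect. The paper obtains integrability by a bootstrap (Lemma \ref{bootstrap}): setting $f(t)=\|(H^a-E_j)U_a(t)g\|^2$ and using $V^a\ge0$, $E_j\le0$ to get $\||D^a|U_a(t)g\|^2\le C\|(H^a-E_j)U_a(t)g\|$ (this is where the sign hypothesis enters, not in a Mourre estimate), one derives $|f'(t)|\le Ct^{-1-\delta}f(t)^{3/4}+Ct^{-2-\delta}f(t)^{1/2}$ and iterates from $f(t)\le C$ to $f(t)\le Ct^{-4\delta}$. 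It is only after this self-improvement that $\delta>1/2$ buys the needed $t^{-2\delta}$ decay for $\|(H^a-E_j)U_a(t)g\|$, which in turn lets one compare $U_a(t)$ with $\tilde U_{aj}(t)$ and invoke compactness of $\chi_{\{|x^a|<R\}}\varphi(H^a)$ to kill $f-Qf$ as $\mathrm{supp}\,\psi_a\downarrow\mathcal{T}_a$. Your positive-commutator sketch lacks any analogue of this mechanism.
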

\begin{proof}
Let $t_m\to\infty$ and $M_{a}^m\to\infty$ be the sequences in Theorem \ref{quantumclassical} for $f\in S_a^1(B)\subset\HH_c(H)$. Let $\psi_a(\lambda)\in C_0^\infty(\R)$ satisfy \eq{psi_a}.
We decompose
\begin{align}
e^{-itH}f=\psi_a(H^a)e^{-itH}f+(I-\psi_a(H^a))e^{-itH}f.\label{10-1}
\end{align}
We will use the same convention as in Theorem \ref{eigenprojectionisemergent}: $x^a=(x_d^a,x^d)=(x_d-x_a,x^d)=(x^a-x^d,x^d)$ $(d\le a)$, $D_d=D_{x_d}$, $D_d^a=D_{x_d^a}$, $T_d^a=T_d-T_a$, $H_d^a=T_d^a+H^d$. The functions $\Phi^a_{b}(x^a)$ $(b<a)$ are defined as in section \ref{partitionofunity} for the variable $x^a\in X^a=\R^{\nu(N-|a|)}$ which form a partition of unity such that $\sum_{b<a}\Phi^a_{b}(x^a/t)=1$.
Now 
applying Theorem \ref{S_^1(B)asymptoticbehavior}-\eq{oneofdesiredones0copy} or \eq{oneofdesiredones0copy2} and Theorem \ref{quantumclassical}, we have the following asymptotic relation for $t=t_m$ and $M_d=M_d^m$ when $m\to\infty$.
\begin{equation}
\begin{aligned}
(I-\psi_a(H^a))e^{-itH}f
&\sim (I-\psi_a(H^a))\Phi_a(x_a/t)\sum_{d\le a}\tilde P^d_{M_d}e^{-itH}f\\
&=\Phi_a(x_a/t)(I-\psi_a(H^a))\sum_{d\le a}\tilde P^d_{M_d}e^{-itH}f.
\end{aligned}
\end{equation}
Noting that $(I-\psi_a(H^a))P^a_M=0$ we obtain from this
\begin{equation}
\begin{aligned}
(I-\psi_a(H^a))e^{-itH}f
& 
=\Phi_a(x_a/t)(I-\psi_a(H^a))\sum_{d<a}\tilde P^d_{M_d}e^{-itH}f\\&=\Phi_a(x_a/t)\sum_{b<a}\Phi_b^a(x^a/t)(I-\psi_a(H^a))\sum_{d< a}\tilde P^d_{M_d}e^{-itH}f\\
&\sim\Phi_a(x_a/t)\sum_{b<a}\Phi_b^a(x^a/t)\sum_{d< a}(I-\psi_a(H^a_d))\tilde P^d_{M_d}e^{-itH}f.
\end{aligned}\label{10-2}
\end{equation}
As $\psi_a(\lambda)=1$ in a neighborhood of each $\lambda\in \mathcal{T}_a$, $(I-\psi_a(H_d^a))\tilde P^d_{M_d}\ne0$ implies $T_d^a\ge d_1^2/2>0$ for some constant $d_1>0$. On the other hand Theorem \ref{chidonPdMd} gives $d_2^2/2\ge T_d$ for some $d_2>d_1>0$. Thus $d_2\ge |D_d|\ge |D_d^a|\ge d_1>0$ asymptotically on each state 
$(I-\psi_a(H^a_d))\tilde P^d_{M_d}e^{-itH}f$ $(d<a)$ as $t=t_m\to\infty$. Hence Theorem \ref{quantumclassical} gives $d_2\ge |x_d^a/t_m|\ge d_1$, $|x^a/t_m|\sim|x_d^a/t_m|$ and $|x^a|\ge d_1t_m$ asymptotically. So if $d<a$ satisfies $d\not\le b(<a)$, 
we can find a pair $(i,j)\le d$ with $(i,j)\not\le b$ so that by the factor $\Phi_b^a(x^a/t)$ and $|x^a|\ge d_1t_m$ we have $|x_{ij}|\ge ct$ for some constant $c>0$. Thus $|x^d|\ge|x_{ij}|\ge ct$, while the factor $\tilde P^d_{M_d}$ bounds $x^d$ by Theorem \ref{quantumclassical}-\eq{(2.6)}. Therefore the terms with $d\not\le b$ vanish asymptotically as $t=t_m\to\infty$ in the second sum of the RHS of \eq{10-2} and we get
\begin{equation}
\begin{aligned}
(I-\psi_a(H^a))e^{-itH}f
\sim \Phi_a(x_a/t)\sum_{b<a}\Phi_b^a(x^a/t)\sum_{d\le b}(I-\psi_a(H^a_d))\tilde P^d_{M_d}e^{-itH}f.
\end{aligned}\label{10-3}
\end{equation}
The RHS behaves like $\sum_{b<a}e^{-itH}g_b$ with $g_b\in S_b^{1\sigma{\mu}}$ $(b<a)$ for some $\sigma,{\mu}>0$ along the sequence $t=t_m\to\infty$. Thus we have from \eq{10-1}
\begin{equation}
\begin{aligned}
\Vert e^{-itH}f-\psi_a(H^a)e^{-itH}f\Vert^2
&=(e^{-itH}f-\psi_a(H^a)e^{-itH}f,(I-\psi_a(H^a))e^{-itH}f)\\
&\sim\sum_{b<a}(e^{-itH}f-\psi_a(H^a)e^{-itH}f,e^{-itH}g_b)
\end{aligned}\label{10-4}
\end{equation}
as $t=t_m\to\infty$ for some $g_b\in S_b^{1\sigma{\mu}}$ with $b<a$.
Theorem \ref{S_^1(B)asymptoticbehavior}-\eq{oneofdesiredones0copy} implies
\begin{equation}
\begin{aligned}
\psi_a(H^a)e^{-itH}f\sim \psi_a(H^a)\Phi_a(x_a/t)e^{-itH}f
=\Phi_a(x_a/t)\psi_a(H^a)e^{-itH}f
\end{aligned}\label{10-5}
\end{equation}
as $t=t_m\to\infty$, where $\Phi_a$ can have constants different from those for $\Phi_a$ in \eq{10-2}. This means that there exist $h\in S_a^{1\sigma{\mu}}$ and $\sigma>0$ such that $\psi_a(H^a)e^{-itH}f$ behaves like $e^{-itH}h$ for any ${\mu}>0$ along $t=t_m\to\infty$, which and $f\in S_a^1(B)$ imply that the factor $e^{-itH}f-\psi_a(H^a)e^{-itH}f$ in \eq{10-4} behaves like $e^{-itH}\tilde h$ with some $\tilde h\in S_a^{1\sigma{\mu}}$ and $\sigma>0$ for any ${\mu}>0$. Hence it is asymptotically orthogonal to $e^{-itH}g_b$ $(b<a)$, and the RHS of \eq{10-4} goes to $0$ as $t=t_m\to\infty$. We then have for $t=t_m\to\infty$
\begin{align}
e^{-itH}f\sim\psi_a(H^a)e^{-itH}f.
\end{align}
Thus
\begin{equation}
\begin{aligned}
f=\lim_{t_m\to\infty}e^{it_mH}\psi_a(H^a)e^{-it_mH}f.
\end{aligned}\label{intermediateexpression}
\end{equation}
From this and Theorem \ref{S_^1(B)asymptoticbehavior}-\eq{asymptotics0copy} follows that
\begin{align}
f=\lim_{t_m\to\infty}e^{it_mH}\psi_a(H^a)P_a^{{\varepsilon}_a}(t_m)e^{-it_mH}f.
\end{align}
In virtue of the factor $P_a^{{\varepsilon}_a}(t_m)$ this limit exists without taking the sequence $t_m$ as seen by the same argument as in the proof of Theorem \ref{adjointmodifiedwaveoperator} with using smooth operator technique.
\begin{align}
f=\lim_{t\to\infty}e^{itH}\psi_a(H^a)P_a^{{\varepsilon}_a}(t)e^{-itH}f.\label{psi_atoT_a}
\end{align}
Similarly by the same technique with using \eq{Ia(x)-Ia(xa,xa)} and $[H,P^a]=[H-\tilde H_a,P^a]$ where $\tilde H_a=H^a+T_a+I_a^L(x_a,0)$ and recalling that we assume that $P^a=P^a_M$ for some finite integer $M\ge0$, we can prove the existence of the following limit for $f\in \HH$.
\begin{align}
Qf:=\lim_{t\to\infty}e^{itH}P^aP_a^{{\varepsilon}_a}(t)e^{-itH}f.\label{P^a_MtoP^a}
\end{align}
We let $\varphi$ be a $C_0^\infty(\R)$-function such that $0\le\varphi(\lambda)\le 1$ and $\varphi(\lambda)=1$ for $\lambda\in\mbox{supp }\psi_a$, $P^a_M=\sum_{j=1}^MP^a_j$ with $P^a_j$ being a finite dimensional eigenprojection onto the eigenspace spanned by eigenfunctions of $H^a$ with eigenvalue $E_j$, and $\psi_{aj}(\lambda)$ be a $C_0^\infty$-function equal to $1$ near $\lambda=E_j$ such that $\sum_{j=1}^M\psi_{aj}=\psi_a$.
We set for $j=1,\dots,M$
\begin{equation}
\Omega_jf=\lim_{\mbox{\scriptsize supp }\psi_a\downarrow\mathcal{T}_a}\lim_{t\to\infty}e^{itH}\varphi(H^a)(\psi_{aj}(H^a)-P^a_j)P_a^{{\varepsilon}_a}(t)e^{-itH}f.\label{Omegaj}
\end{equation}
Then we have for $f=E_H(B)f\in S_a^1(B)$
\begin{equation}
\begin{aligned}
f-Qf&=\lim_{\mbox{\scriptsize supp }\psi_a\downarrow\mathcal{T}_a}\lim_{t\to\infty}e^{itH}(\psi_a(H^a)-P^a)P_a^{{\varepsilon}_a}(t)e^{-itH}f.\label{f-Qf}\\
&=\lim_{\mbox{\scriptsize supp }\psi_a\downarrow\mathcal{T}_a}\lim_{t\to\infty}e^{itH}\varphi(H^a)(\psi_a(H^a)-P^a)P_a^{{\varepsilon}_a}(t)e^{-itH}f\\
&=\sum_{j=1}^M\Omega_jf.
\end{aligned}
\end{equation}
Let $U_a(t)$ be the fundamental solution of the equation
\begin{equation}
i^{-1}\partial_tU_a(t)+H_a(t)U_a(t)=0,\quad U_a(0)=I,
\end{equation}
where $H_a(t)=H^a+T_a+I_a^L(t,x)$ and
\begin{equation}
I_a^L(t,x)=I_a^L(x_a,x^a)\prod_{(i.j)\not\le a}\chi_0(100x_{ij}/({\varepsilon}_at))
\end{equation}
with $\chi_0(x)$ $(x\in\R^\nu)$ being the function defined by \eq{5.4}.
Then as $t\to\infty$
\begin{equation}
\begin{aligned}
e^{-itH}(f-Qf)&=e^{-itH}\sum_{j=1}^M\Omega_jf\\
%&\sim U_a(t)\cdot \lim_{\mbox{\scriptsize supp }\psi_a\downarrow\mathcal{T}_a}U_a(t)^*\varphi(H^a)(\psi_a(H^a)-P^a)P_a^{{\varepsilon}_a}(t)e^{-itH}f\\
&\sim U_a(t)\lim_{\mbox{\scriptsize supp }\psi_a\downarrow\mathcal{T}_a}\lim_{t\to\infty}U_a(t)^*\varphi(H^a)(\psi_a(H^a)-P^a)P_a^{{\varepsilon}_a}(t)e^{-itH}f.
\end{aligned}
\end{equation}
Set
\begin{equation}
\tilde \Omega_jf=\lim_{\mbox{\scriptsize supp }\psi_a\downarrow\mathcal{T}_a}\lim_{t\to\infty}U_a(t)^*\varphi(H^a)(\psi_{aj}(H^a)-P^a_j)P_a^{{\varepsilon}_a}(t)e^{-itH}f.
\end{equation}
Then as $t\to\infty$
\begin{equation}
\Vert(H^a-E_j)U_a(t)\tilde \Omega_j f\Vert\to0\label{mugenendenoshuusoku}
\end{equation}
and
\begin{equation}
e^{-itH}(f-Qf)=e^{-itH}\sum_{j=1}^M\Omega_jf\sim U_a(t)\sum_{j=1}^M\tilde \Omega_j f.
\end{equation}
Let $g=\sum_{j=1}^M\tilde \Omega_j f$ and $\tilde U_{aj}(t)$ be the fundamental solution of
\begin{equation}
i^{-1}\partial_t\tilde U_{aj}(t)+(E_j+T_a+I_a^L(t,x))\tilde U_{aj}(t)=0,\quad \tilde U_{aj}(0)=I.
\end{equation}
\begin{lemma}\label{bootstrap}
For $g=\sum_{j=1}^M\tilde \Omega_j f$ there exists a constant $C>0$ such that for any $j=1,\dots,M$ and $t>1$
\begin{equation}
\Vert(H^a-E_j)U_a(t)g\Vert\le Ct^{-2\delta}.
\end{equation}
\end{lemma}
\begin{proof}
By $E_j\le0$ and $V^a\ge0$, we have
\begin{equation}
\begin{aligned}
\Vert |D^a|U_a(t)g\Vert^2&\le C (H^a_0U_a(t)g,U_a(t)g)\\
&\le C((H^a-E_j)U_a(t)g,U_a(t)g)+C((E_j-V^a)U_a(t)g,U_a(t)g)\\
&\le C\Vert(H^a-E_j)U_a(t)g\Vert\Vert g\Vert.
\end{aligned}\label{Dae-itHleHa-Eje-itH}
\end{equation}
Setting
\begin{equation}
f(t)=\Vert(H^a-E_j)U_a(t)g\Vert^2,
\end{equation}
we obtain
\begin{equation}
f'(t)=\frac{d}{dt}((H^a-E_j)^2U_a(t)g,U_a(t)g)=(i[I_a^L(t,x),(H^a-E_j)^2]U_a(t)g,U_a(t)g).
\end{equation}
The commutator equals
\begin{equation}
2\mbox{Re}\left((H^a-E_j)\biggl(\sum_{(i,j)\not\le a}\nabla_{x^a}V_{ij}^L(t,x_{ij})\cdot D^a+O(t^{-2-\delta})\biggr)\right).
\end{equation}
Thus
\begin{equation}
|f'(t)|\le Ct^{-1-\delta}\Vert(H^a-E_j)U_a(t)g\Vert \Vert |D^a|U_a(t)g\Vert+Ct^{-2-\delta}\Vert(H^a-E_j)U_a(t)g\Vert.\label{f7tleCt-1-delta}
\end{equation}
We have from \eq{Dae-itHleHa-Eje-itH}
\begin{equation}
|f'(t)|\le Ct^{-1-\delta}f(t)^{\frac{3}{4}}+Ct^{-2-\delta}f(t)^{\frac{1}{2}}.\label{f7tbound}
\end{equation}
This and \eq{mugenendenoshuusoku} give
\begin{equation}
f(t)=-\int_t^\infty f'(\tau)d\tau.\label{f7tintegration}
\end{equation}
Since $f(t)$ is uniformly bounded, this and \eq{f7tbound} yield
\begin{equation}
|f(t)|\le Ct^{-\delta}.
\end{equation}
Inserting this into \eq{f7tbound} and integrating by \eq{f7tintegration} we get
\begin{equation}
|f(t)|\le Ct^{-\delta(1+\frac{3}{4})}+Ct^{-1-\delta(1+\frac{1}{2})}.
\end{equation}
Repeating this procedure we finally arrive at the estimate
\begin{equation}
|f(t)|\le Ct^{-4\delta}.
\end{equation}
\end{proof}
By Lemma \ref{bootstrap}
\begin{equation}
\begin{aligned}
\left\Vert\frac{d}{dt}(\tilde U_{aj}(t)^*U_a(t)\tilde \Omega_jf)\right\Vert=\Vert i\tilde U_{aj}(t)^*(E_j-H^a)U_a(t)\tilde \Omega_j f\Vert\le Ct^{-2\delta}.
\end{aligned}
\end{equation}
Thus the following limit exists.
\begin{equation}
\begin{aligned}
h_j&=\lim_{\mbox{\scriptsize supp }\psi_a\downarrow\mathcal{T}_a}\lim_{t\to\infty}\tilde U_{aj}(t)^*U_a(t)\tilde \Omega_j f\\
&=\lim_{\mbox{\scriptsize supp }\psi_a\downarrow\mathcal{T}_a}\lim_{t\to\infty}\tilde U_{aj}(t)^*\varphi(H^a)(\psi_{aj}(H^a)-P^a_j)P_a^{{\varepsilon}_a}(t)e^{-itH}f.
\end{aligned}
\end{equation}
Since $\Vert\chi_{\{|x^a|>R\}}h_j\Vert\to0$ $(R\to\infty)$ and $[\chi_{\{|x^a|>R\}},\tilde U_{aj}(t)^*]=0$, we have as $R\to\infty$
\begin{equation}
\lim_{\mbox{\scriptsize supp }\psi_a\downarrow\mathcal{T}_a}\lim_{t\to\infty}\Vert\chi_{\{|x^a|>R\}}\varphi(H^a)(\psi_{aj}(H^a)-P^a_j)P_a^{{\varepsilon}_a}(t)e^{-itH}f\Vert\to0.
\end{equation}
On the other hand, since $\chi_{\{|x^a|<R\}}\varphi(H^a)$ is a compact operator we have for any $R>0$
\begin{equation}
\Vert \chi_{\{|x^a|<R\}}\varphi(H^a)(\psi_{aj}(H^a)-P^a_j)P_a^{{\varepsilon}_a}(t)e^{-itH}f\Vert\to0
\end{equation}
as supp $\psi_a\downarrow\mathcal{T}_a$. Thus when supp $\psi_a\downarrow\mathcal{T}_a$
\begin{equation}
\lim_{t\to\infty}\Vert\varphi(H^a)(\psi_{aj}(H^a)-P^a_j)P_a^{{\varepsilon}_a}(t)e^{-itH}f\Vert\to0.
\end{equation}
In particular we have when supp $\psi_a\downarrow\mathcal{T}_a$
\begin{equation}
\begin{aligned}
\Vert f-Qf\Vert=\bigl\Vert\lim_{t\to\infty}e^{itH}\varphi(H^a)(\psi_a(H^a)-P^a)P_a^{{\varepsilon}_a}(t)e^{-itH}f\bigr\Vert\to0.
\end{aligned}
\end{equation}
Namely
\begin{equation}
f=Qf=\lim_{t\to\infty}e^{itH}P^aP_a^{{\varepsilon}_a}(t)e^{-itH}f.
\end{equation}
\end{proof}
It is now obvious that the following theorem follows from Theorems \ref{sumG_a^+=H_c(H)}, \ref{rangeofW=Sa0} and \ref{emergentP^a}.
\begin{theorem}\label{rangeofwaveoperator}
Let the assumptions of Theorem \ref{Wapm=Sa1} be satisfied.
Let $a$ be a cluster decomposition with $2\le |a|\le N$. Similarly to the proof of Theorem \ref{shortrangecompletenesstheorem}, we extend the domain of the wave operator $W_a$ in Theorem \ref{adjointmodifiedwaveoperator} to the whole of $P^a\HH$, which we will denote by the same notation $W_a$. Then we have
\begin{align}
\mathcal{R}(W_a)=S_a^0=S_a^1.\label{rangeofwaveoperator0}
\end{align}
In particular we have the asymptotic completeness.
\begin{align}
\HH_c(H)=\bigoplus_{2\le|a|\le N}\mathcal{R}(W_a).\label{decompositionofH_cbyR(W_a)}
\end{align}
\end{theorem}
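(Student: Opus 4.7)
The plan is to assemble this theorem from three results already established in the excerpt: Theorem \ref{sumG_a^+=H_c(H)} giving $\HH_c(H)=\bigoplus_{2\le|a|\le N}S_a^1$, Theorem \ref{rangeofW=Sa0} giving $\mathcal{R}(W_a)=S_a^0$ under the general Assumptions \ref{potentialdecay} and \ref{eigenfunctiondecay}, and Theorem \ref{emergentP^a} giving $S_a^1=S_a^0$ under the stronger hypotheses of Theorem \ref{Wapm=Sa1}. Once the chain $\mathcal{R}(W_a)=S_a^0=S_a^1$ is in place for every $a$ with $2\le|a|\le N$, the decomposition of $\HH_c(H)$ follows immediately by substituting $\mathcal{R}(W_a)$ for $S_a^1$ in the orthogonal decomposition of $\HH_c(H)$.

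First, I would verify that the hypotheses of Theorem \ref{Wapm=Sa1} subsume those of the three preceding theorems. Assumption \ref{potentialdecay} holds (with $V_{ij}^S\equiv 0$), Assumption \ref{eigenfunctiondecay} holds, the subsystem eigenspaces are finite-dimensional so that $P^a=P^a_M$ for some integer $M$ (this is exactly what is needed to apply the existence of the limit \eqref{P^a_MtoP^a} in the proof of Theorem \ref{emergentP^a}), and the positivity $V_{ij}\ge 0$ together with $\delta>1/2$ is used in the bootstrap Lemma \ref{bootstrap}. Thus Theorems \ref{sumG_a^+=H_c(H)}, \ref{rangeofW=Sa0}, and \ref{emergentP^a} are all applicable.

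Next I would address the extension of the wave operator. Theorem \ref{adjointmodifiedwaveoperator} constructs $W_a$ as an isometry on $P^a_M E_{T_a}([d_1^2/2,d_2^2/2])\HH$ for a particular energy window determined by $\chi_a$. Exactly as in the proof of Theorem \ref{shortrangecompletenesstheorem}, one removes the dependence on $(d_1,d_2)$ by exhausting $P^a\HH$ through a sequence of widening windows: for each $f\in P^a\HH$, the wave operator limit \eqref{waveoperatorscomplete} exists and coincides with $W_a$ from \eqref{wave} on the subspace where $\chi_a(D_a)^2=1$, and continuity of the isometry allows a unique extension to all of $P^a\HH$. The intertwining property \eqref{intertwiningproperty} guarantees $\mathcal{R}(W_a)\subset\HH_c(H)$, and the range of the extended operator equals the closure of $\bigcup_{d_1,d_2}\mathcal{R}(W_a|_{\chi_a\HH})$, which is precisely $S_a^0$ by Theorem \ref{rangeofW=Sa0} and the closure definition of the scattering space \eqref{2.5}.

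Putting the pieces together: by Theorem \ref{rangeofW=Sa0}, $\mathcal{R}(W_a)=S_a^0$; by Theorem \ref{emergentP^a} under the present hypotheses, $S_a^0=S_a^1$, yielding \eqref{rangeofwaveoperator0}. Then Theorem \ref{sumG_a^+=H_c(H)} gives $\HH_c(H)=\bigoplus_{2\le|a|\le N}S_a^1=\bigoplus_{2\le|a|\le N}\mathcal{R}(W_a)$, which is \eqref{decompositionofH_cbyR(W_a)}. Since every nontrivial work has been done in the cited theorems, there is no real obstacle in this final assembly; the only point requiring a touch of care is confirming that the orthogonality of the ranges persists after the extension from $\chi_a\HH$ to all of $P^a\HH$, which follows from Proposition \ref{Proposition 2.4} applied to $S_a^0=\mathcal{R}(W_a)$.
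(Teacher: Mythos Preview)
Your proposal is correct and matches the paper's own proof, which simply states that the theorem follows from Theorems \ref{sumG_a^+=H_c(H)}, \ref{rangeofW=Sa0}, and \ref{emergentP^a}. You have merely made explicit the hypothesis checks and the domain-extension step that the paper leaves implicit.
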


\section{Appendix}\label{calcusofpseuodandFourierintegraloperators}

We prove some lemmata.
\begin{lemma}\label{Fourierintegralpseudodifferential}
Let $a(x,\xi)$, $p(x,\xi)\in C^\infty(X\times X')$ satisfy for any multiindices $\alpha,\beta$
\begin{equation}
\begin{aligned}
&\sup_{(x,\xi)\in X\times X'}|\partial_x^\alpha\partial_\xi^\beta a(x,\xi)|<\infty,\\
&\sup_{(x,\xi)\in X\times X'}|\partial_x^\alpha\partial_\xi^\beta p(x,\xi)|<\infty.
\end{aligned}\label{Beinfty}
\end{equation}
Set for $f\in\mathcal{S}(X)$
\begin{equation}
\begin{aligned}
&a_{\varphi_a}(x,D_x)f(x)=(2\pi)^{-\nu(N-1)}\iint_{X\times X'}e^{i(\varphi_a(x,\xi)-y\xi)}a(x,\xi)f(y)dyd\xi,\\
&p(x,D_x)f(x)=(2\pi)^{-\nu(N-1)}\iint_{X\times X'}e^{i(x-y)\xi}p(x,\xi)f(y)dyd\xi.
\end{aligned}\label{Fourierpseudodifferentialopdef}
\end{equation}
Then we have
\begin{namelist}{8888}
\item[  {\rm 1)}]
\begin{equation}
a_{\varphi_a}(x,D_x)p(x,D_x)f(x)=(2\pi)^{-\nu(N-1)}\iint_{X\times X'}e^{i(\varphi_a(x,\xi)-y\xi)}s(x,\xi)f(y)dyd\xi,\label{productofFourierintegralpseudodiff}
\end{equation}
where
\begin{equation}
\begin{aligned}
&s(x,\xi)=(2\pi)^{-\nu(N-1)}\iint e^{-i(\eta-\xi)y}a(x,\eta)p(y+\nabla_\xi\varphi_a(\eta,x,\xi),\xi)dyd\eta,\\
&\nabla_\xi\varphi_a(\xi,x,\eta)=\int_0^1\nabla_\xi\varphi_a(x,\eta+\theta(\xi-\eta))d\theta.
\end{aligned}\label{symbolsofproduct}
\end{equation}
\item[  {\rm 2)}]
\begin{equation}
p(x,D_x)a_{\varphi_a}(x,D_x)f(x)=(2\pi)^{-\nu(N-1)}\iint_{X\times X'}e^{i(\varphi_a(x,\xi)-y\xi)}r(x,\xi)f(y)dyd\xi,\label{productofpseudodiffFourierintegral}
\end{equation}
where
\begin{equation}
\begin{aligned}
&r(x,\xi)=(2\pi)^{-\nu(N-1)}\iint e^{i(x-y)\eta}p(x,\eta+\nabla_x\varphi_a(x,\xi,y))a(y,\xi)dyd\eta,\\
&\nabla_x\varphi_a(x,\xi,y)=\int_0^1\nabla_x\varphi_a(y+\theta(x-y),\xi)d\theta.
\end{aligned}\label{symbolsofproduct-2}
\end{equation}
\item[  {\rm 3)}]
\begin{equation}
[a_{\varphi_a}(x,D_x),p(x,D_x)]f(x)=(2\pi)^{-\nu(N-1)}\iint e^{i(\varphi_a(x,\xi)-y\xi)}u(x,\xi)f(y)dyx\xi,\label{commutatorofFourierPseudo}
\end{equation}
where
\begin{equation}
\begin{aligned}
&u(x,\xi)=s(x,\xi)-r(x,\xi)\\
&=(2\pi)^{-\nu(N-1)}\iint e^{-i\eta y}\{a(x,\xi+\eta)p(y+\nabla_\xi\varphi_a(\xi+\eta,x,\xi),\xi)\\
&-p(x,\eta+\nabla_x\varphi_a(x,\xi,x+y))a(x+y,\xi)\}dyd\eta\\
&=a(x,\xi)\left(p(\nabla_\xi \varphi_a(x,\xi),\xi)-p(x,\nabla_x\varphi_a(x,\xi))\right)\\
&+\sum_{|\gamma|=1}(2\pi)^{-\nu(N-1)}\iint e^{-iy\eta}\\
&\times\biggl\{D_\eta^\gamma\{a(x,\xi+\eta)\int_0^1\partial_y^\gamma p(\theta y+\nabla_\xi\varphi_a(\xi+\eta,x,\xi),\xi)d\theta\}\\
&-D_y^\gamma\{\int_0^1\partial_\eta^\gamma p(x,\theta\eta+\nabla_x \varphi_a(x,\xi,x+y))d\theta a(x+y,\xi)\}\biggr\}dyd\eta.
\end{aligned}\label{differenceofsymbols}
\end{equation}
\end{namelist}
\end{lemma}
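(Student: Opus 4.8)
The plan is to prove 1)--3) by the composition calculus for Fourier integral and pseudodifferential operators (cf.\ \cite{Kumano-go}), the algebra being a matter of grouping phases and Taylor-expanding the phase $\varphi_a$ in its integral-remainder form. Throughout one uses that, by Theorem~\ref{Theorem 5.2-2} (and its full-variable counterpart Theorem~\ref{Theorem 5.2}), $\varphi_a(x,\xi)-x\cdot\xi$ obeys controlled derivative bounds, and together with \eq{Beinfty} this makes all the oscillatory integrals below well defined and legitimates the operations performed on them.

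For 1), I would substitute the definition \eq{Fourierpseudodifferentialopdef} of $p(x,D_x)f$ into that of $a_{\varphi_a}(x,D_x)$, obtaining a fourfold oscillatory integral. Writing $z$ for the inner space variable and $\eta$ for the inner frequency, the total phase is $\varphi_a(x,\eta)-z\eta+(z-y)\xi=\varphi_a(x,\eta)-y\xi+z(\xi-\eta)$. By Taylor's formula with integral remainder, $\varphi_a(x,\eta)-\varphi_a(x,\xi)=(\eta-\xi)\cdot\nabla_\xi\varphi_a(\eta,x,\xi)$ with $\nabla_\xi\varphi_a(\eta,x,\xi)=\int_0^1\nabla_\xi\varphi_a(x,\xi+\theta(\eta-\xi))\,d\theta$, so the phase equals $\varphi_a(x,\xi)-y\xi+(\eta-\xi)\cdot(\nabla_\xi\varphi_a(\eta,x,\xi)-z)$. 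The change of variable $z\mapsto z-\nabla_\xi\varphi_a(\eta,x,\xi)$ is, for each fixed $\eta$, a plain translation (unit Jacobian, no inverse function theorem needed), and it brings the whole expression to $(2\pi)^{-\nu(N-1)}\iint e^{i(\varphi_a(x,\xi)-y\xi)}s(x,\xi)f(y)\,dy\,d\xi$ with $s$ exactly as in \eq{symbolsofproduct}, once Fubini is used to put the $z,\eta$ integrals in the stated order. Part 2) is entirely parallel with the roles of position and frequency interchanged: substitute $a_{\varphi_a}f$ into $p(x,D_x)$, expand $\varphi_a(z,\xi)-\varphi_a(x,\xi)=-(x-z)\cdot\nabla_x\varphi_a(x,\xi,z)$, and translate the inner frequency $\eta\mapsto\eta-\nabla_x\varphi_a(x,\xi,z)$; this yields \eq{symbolsofproduct-2}.

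For 3), set $u=s-r$. Shifting $\eta\mapsto\xi+\eta$ in $s$ and $y\mapsto x+y$ in $r$ rewrites both $s$ and $r$ as oscillatory integrals against $e^{-iy\eta}$, which is the second expression for $u$ in \eq{differenceofsymbols}. I would then Taylor-expand $p$ to first order with integral remainder --- in the $y$-variable inside the $s$-contribution and in the $\eta$-variable inside the $r$-contribution. The zeroth-order pieces are constant in $y$ (respectively in $\eta$), so carrying out that integral produces a Dirac mass and the remaining integration evaluates the amplitude at the diagonal; using $\nabla_\xi\varphi_a(\xi,x,\xi)=\nabla_\xi\varphi_a(x,\xi)$ and $\nabla_x\varphi_a(x,\xi,x)=\nabla_x\varphi_a(x,\xi)$ this gives the leading term $a(x,\xi)\bigl(p(\nabla_\xi\varphi_a(x,\xi),\xi)-p(x,\nabla_x\varphi_a(x,\xi))\bigr)$. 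Finally, integrating the first-order remainders by parts --- turning $y^\gamma e^{-iy\eta}$ into a $\xi$-derivative of $e^{-iy\eta}$ and $\eta^\gamma e^{-iy\eta}$ into a $y$-derivative --- reproduces exactly the $|\gamma|=1$ sum in \eq{differenceofsymbols}.

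The main obstacle is not this bookkeeping but the rigorous handling of the oscillatory integrals: the integrals defining $s$, $r$, $u$ are not absolutely convergent, so each step --- the translations, the applications of Fubini, the integrations by parts, and the appearance of the Dirac masses --- must first be performed on a regularized integral, e.g.\ with a cutoff $\chi(\epsilon y)\chi(\epsilon\eta)$ inserted, and then justified in the limit $\epsilon\downarrow0$. The uniform bounds \eq{Beinfty} on $a,p$ together with the derivative estimates on $\varphi_a$ from Theorem~\ref{Theorem 5.2-2} supply the uniform control required to pass to the limit and to recognize $s,r,u$ as legitimate amplitudes. Once this is done the three identities follow; in particular the estimate $[J_a,P_a^{\varepsilon_a}(t)]f=O(\langle t\rangle^{-\delta})\Vert f\Vert$ invoked in Section~\ref{existenceofwaveandinversewaveoperators} is read off from 3) with $a_{\varphi_a}=J_a$ and $p=p_a^{\varepsilon_a}(t,\cdot,\cdot)$, the decay $\langle t\rangle^{-\delta}$ coming from the $\langle x_{ij}\rangle^{-\delta}$-decay of $\nabla_\xi\varphi_a(x,\xi)-x$ in \eq{5.11-zak} on the support of the localizing symbol.
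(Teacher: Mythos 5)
Your proposal is correct and follows essentially the same route as the paper: the standard composition of the Fourier integral operator with the pseudodifferential operator, the phase identities $\varphi_a(x,\xi)-\varphi_a(x,\eta)=(\xi-\eta)\cdot\nabla_\xi\varphi_a(\xi,x,\eta)$ and $\varphi_a(y,\eta)-\varphi_a(x,\eta)=-(x-y)\cdot\nabla_x\varphi_a(x,\eta,y)$ followed by the corresponding translations, and for 3) the first-order Taylor expansion with integral remainder plus integration by parts. (Only a cosmetic slip: the factor $y^\gamma e^{-iy\eta}$ is converted into an $\eta$-derivative, not a $\xi$-derivative, of the exponential.)
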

\begin{proof}
1)\ 
A direct calculation of oscillatory integrals yields
\begin{equation}
\begin{aligned}
a_{\varphi_a}(x,D_x)p(x,D_x)f(x)=(2\pi)^{-\nu(N-1)}\iint_{X\times X'}e^{i(\varphi_a(x,\eta)-z\eta)}s(x,\eta)f(z)dzd\eta,
\end{aligned}\label{productform}
\end{equation}
where
\begin{equation}
s(x,\eta)=(2\pi)^{-\nu(N-1)}\iint e^{i(\varphi_a(x,\xi)-\varphi_a(x,\eta)-(\xi-\eta)y)}a(x,\xi)p(y,\eta)dyd\xi.\label{intermediatesymbol}
\end{equation}
The following relation and a change of variable $y'=y-\nabla_\xi\varphi_a(\xi,x,\eta)$ give 1).
\begin{equation}
\varphi_a(x,\xi)-\varphi_a(x,\eta)=(\xi-\eta)\cdot\nabla_\xi\varphi_a(\xi,x,\eta).
\end{equation}\\
\noindent
2) Similarly we have
\begin{equation}
\begin{aligned}
p(x,D_x)a_{\varphi_a}(x,D_x)f(x)=(2\pi)^{-\nu(N-1)}\iint_{X\times X'}e^{i(\varphi_a(x,\eta)-z\eta)}r(x,\eta)f(z)dzd\eta,
\end{aligned}
\label{productform-2}
\end{equation}
where
\begin{equation}
r(x,\eta)=(2\pi)^{-\nu(N-1)}\iint e^{i(\varphi_a(y,\eta)-\varphi_a(x,\eta)+(x-y)\xi)}p(x,\xi)a(y,\eta)dyd\xi.
\label{intermediatesymbol-2}
\end{equation}
Noting
\begin{equation}
\varphi_a(y,\eta)-\varphi_a(x,\eta)+(x-y)\xi=(x-y)(\xi-\nabla_x\varphi_a(x,\eta,y)),
\end{equation}
we make a change of variable
\begin{equation}
\tilde \eta=\xi-\nabla_x\varphi_a(x,\eta,y)
\end{equation}
in \eq{intermediatesymbol-2}. Then we get
\begin{equation}
\begin{aligned}
&p(x,D_x)a_{\varphi_a}(x,D_x)f(x)\\
&=(2\pi)^{-\nu(N-1)}\iint e^{i(\varphi_a(x,\xi)-z\xi)}\\
&\times\iint e^{i(x-y)\eta}p(x,\eta+\nabla_x\varphi_a(x,\xi,y))a(y,\xi)dyd\eta f(z)dzd\xi.
\end{aligned}\label{formofproduct}
\end{equation}
The proof of 2) is complete.\\
\noindent
3) \ By 1) and 2) we have
\begin{equation}
\begin{aligned}
u(x,\xi)&=s(x,\xi)-r(x,\xi)\\
&=(2\pi)^{-\nu(N-1)}\iint e^{-i\eta y}a(x,\xi+\eta)p(y+\nabla_\xi\varphi_a(\xi+\eta,x,\xi),\xi)dyd\eta\\
&-(2\pi)^{-\nu(N-1)}\iint e^{-i\eta y}p(x,\eta+\nabla_x\varphi_a(x,\xi,x+y))a(x+y,\xi)dyd\eta.
\end{aligned}\label{differenceofsymbols-1}
\end{equation}

Taylor expanding the integrand of the first term with respect to $y$ around $y=0$ and that of the second term with respect to $\eta$ around $\eta=0$, we obtain
\begin{equation}
\begin{aligned}
&p(y+\nabla_\xi\varphi_a(\xi+\eta,x,\xi),\xi)\\
&=\sum_{|\alpha|<L}\frac{y^\alpha}{\alpha!}\partial_y^\alpha p(\nabla_\xi\varphi_a(\xi+\eta,x,\xi),\xi)\\
&+L\sum_{|\gamma|=L}\frac{y^\gamma}{\gamma !}\int_0^1(1-\theta)^{L-1}\partial_y^\gamma p(\theta y+\nabla_\xi\varphi_a(\xi+\eta,x,\xi),\xi)d\theta,\\
&p(x,\eta+\nabla_x\varphi_a(x,\xi,x+y))\\
&=\sum_{|\alpha|<L}\frac{\eta^\alpha}{\alpha!}\partial_\eta^\alpha p(x,\nabla_x\varphi_a(x,\xi,x+y))\\
&+L\sum_{|\gamma|=L}\frac{\eta^\gamma}{\gamma !}\int_0^1(1-\theta)^{L-1}\partial_\eta^\gamma p(x,\theta\eta+\nabla_x\varphi_a(x,\xi,x+y))d\theta.
\end{aligned}\label{Taylorexpansions}
\end{equation}
Substituting these with $L=1$ to \eq{differenceofsymbols-1}, integrating by parts and some calculation yield the following as desired (see \cite{[Kitada-Math-Anal]} for details).
\begin{equation}
\begin{aligned}
u(x,\xi)&=a(x,\xi)\left(p(\nabla_\xi \varphi_a(x,\xi),\xi)-p(x,\nabla_x\varphi_a(x,\xi))\right)\\
&+\sum_{|\gamma|=1}(2\pi)^{-\nu(N-1)}\\
&\times\iint e^{-iy\eta}\biggl\{D_\eta^\gamma\{a(x,\xi+\eta)\int_0^1\partial_y^\gamma p(\theta y+\nabla_\xi\varphi_a(\xi+\eta,x,\xi),\xi)d\theta\}\\
&-D_y^\gamma\{\int_0^1\partial_\eta^\gamma p(x,\theta\eta+\nabla_x \varphi_a(x,\xi,x+y))d\theta a(x+y,\xi)\}\biggr\}dyd\eta.
\end{aligned}
\end{equation}
\end{proof}
The following lemma follows from Theorem \ref{Theorem 5.2-2}.
\begin{lemma}\label{FourierconjugateFourierproduct}
Let $J_a$ be defined by \eq{5.17}.
\begin{equation}
\begin{aligned}
J_{a}f(x)&=(2\pi)^{-\nu(|a|-1)/2}\int_{\R^{\nu(|a|-1)}}e^{i\varphi_a(x_a,\xi_a)}\hat f(\xi_a,x^a)d\xi_a.
\end{aligned}\label{5.17-2}
\end{equation}
Then $P_a^{\varepsilon}(t)(J_a^*J_a-I)$, $(J_a^*J_a-I)P_a^{\varepsilon}(t)$, $P_a^{\varepsilon}(t)(J_aJ_a^*-I)$ and $(J_aJ_a^*-I)P_a^{\varepsilon}(t)$ satisfy the following estimates.
\begin{align}
&\Vert P_a^{\varepsilon}(t)(J_a^*J_a-I)\Vert \le C\langle t\rangle^{-\delta},\label{FourierconjugateFourieridentity}\\
&\Vert (J_a^*J_a-I)P_a^{\varepsilon}(t)\Vert \le C\langle t\rangle^{-\delta},\label{FourierconjugateFourieridentity-2}\\
&\Vert P_a^{\varepsilon}(t)(J_aJ_a^*-I)\Vert \le C\langle t\rangle^{-\delta},\label{FourierFourierconjugateidentity}\\
&\Vert (J_aJ_a^*-I)P_a^{\varepsilon}(t)\Vert \le C\langle t\rangle^{-\delta}.\label{FourierFourierconjugateidentity-2}
\end{align}
\end{lemma}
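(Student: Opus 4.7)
The plan is to exploit the fact that $J_a$ is, in disguise, a pseudodifferential operator with Kohn--Nirenberg symbol $q(x,\xi):=e^{i\psi_a(x_a,\xi_a)}$, where $\psi_a(x_a,\xi_a):=\varphi_a(x_a,\xi_a)-x_a\cdot\xi_a$. Since $|q|\equiv 1$, the operators $J_a^*J_a-I$ and $J_aJ_a^*-I$ are genuinely lower order: in the symbol calculus, the leading term $|q|^2=1$ cancels exactly against $-I$, so the remaining symbol carries at least one derivative of $\psi_a$. The estimate \eqref{5.11-zak} of Theorem~\ref{Theorem 5.2-2} provides $\langle z_{ak}\rangle^{-\delta}$ decay of such a derivative, while the support estimate \eqref{xbetazakgekappaa} from Lemma~\ref{lemmaonzakxa} converts this into the $\langle t\rangle^{-\delta}$ bound after composition with $P_a^{\varepsilon}(t)$.

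First I would verify by direct Fourier computation that $J_a=\mathrm{Op}(e^{i\psi_a(x_a,\xi_a)})$ on the full space $\HH=L^2(X)$; this follows because $\psi_a$ depends only on $(x_a,\xi_a)$, so the $\xi^a$-integration in the Kohn--Nirenberg formula collapses the $(x^a,\xi^a)$-portion to the partial inverse Fourier transform appearing in \eqref{5.17}. Next, I would compute the symbols of $J_a^*J_a-I$ and $J_aJ_a^*-I$ via the sharp product formula analogous to Lemma~\ref{Fourierintegralpseudodifferential}(1). The key algebraic observation is that, after the cancellation of the $|q|^2=1$ contribution, each remaining term can be written as an oscillatory integral of the form
\begin{equation*}
\sigma(x,\xi)=(2\pi)^{-\nu n}\iint e^{-iy'\eta'} B(x,\xi;y',\eta')\,dy'\,d\eta',
\end{equation*}
where $B$ is a smooth function built from differences $e^{\pm i\psi_a(\cdot)}-e^{\pm i\psi_a(x,\xi)}$ and, after Taylor expansion in the spirit of \eqref{differenceofsymbols}, every summand carries at least one $x_a$- or $\xi_a$-derivative of $\psi_a$ evaluated at an intermediate point.

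Applying \eqref{5.11-zak}, any such first derivative is bounded by $C\max_k\langle z_{ak}\rangle^{-\delta}$; combined with integration by parts in $y'$ and $\eta'$ against the oscillatory factor $e^{-iy'\eta'}$, this yields symbol estimates
\begin{equation*}
|\partial_x^\alpha\partial_\xi^\beta\sigma(x,\xi)|\le C_{\alpha\beta}\max_k\langle z_{ak}\rangle^{-\delta}
\end{equation*}
for the symbol $\sigma$ of $J_a^*J_a-I$ (and similarly for $J_aJ_a^*-I$). Composing on the left or right with $P_a^{\varepsilon}(t)$ via Lemma~\ref{Fourierintegralpseudodifferential} and using the support condition $|z_{ak}|\ge 7\kappa_{|a|}t$ on $\mathrm{supp}\,p_a^{\varepsilon}$ from Lemma~\ref{lemmaonzakxa}, the resulting amplitude is pointwise $O(\langle t\rangle^{-\delta})$ with all derivatives uniformly bounded in $t$. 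The Calder\'on--Vaillancourt theorem \cite{CV2} then yields the operator-norm estimates \eqref{FourierconjugateFourieridentity} and \eqref{FourierFourierconjugateidentity}. The right-composition estimates \eqref{FourierconjugateFourieridentity-2} and \eqref{FourierFourierconjugateidentity-2} follow by taking adjoints, since $J_a^*J_a-I$ and $J_aJ_a^*-I$ are self-adjoint while $P_a^{\varepsilon}(t)^*-P_a^{\varepsilon}(t)$ has a symbol of order $O(t^{-1})$ with the same support properties.

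The main technical obstacle is the Taylor-remainder bookkeeping in the symbol analysis: one must show that when differentiating under the oscillatory integral to obtain the symbol estimates above, the decay factor $\langle z_{ak}\rangle^{-\delta}$ is preserved even though it is evaluated at points slightly displaced from $x$. This is handled by systematic integration by parts against $e^{-iy'\eta'}$, which produces effective compactness in $y'$, combined with the observation that in the relevant regime $|z_{ak}|\gtrsim t\gg 1$ the quantities $\langle z_{ak}(x)\rangle$ and $\langle z_{ak}(x+\theta y')\rangle$ are comparable for bounded $y'$, $\theta\in[0,1]$.
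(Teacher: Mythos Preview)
Your overall strategy is sound and close in spirit to the paper's, but there is a genuine gap in your decay claim. You assert that ``any such first derivative is bounded by $C\max_k\langle z_{ak}\rangle^{-\delta}$,'' citing \eqref{5.11-zak}. That estimate, however, reads
\[
|\partial_{x_a}^\alpha\partial_{\xi_a}^\beta\psi_a(x_a,\xi_a)|\le C_{\alpha\beta}\max_k\langle z_{ak}\rangle^{1-\delta-|\alpha|},
\]
so a pure $\xi_a$-derivative ($|\alpha|=0$) is only $O(\langle z_{ak}\rangle^{1-\delta})$, which \emph{grows}. In the $\Psi$DO expansion of $q^*\#q$ with $q=e^{i\psi_a}$, terms of the form $(\partial_{\xi_a}\psi_a)(\partial_{x_a}\psi_a)$ appear both from the adjoint expansion and from the composition; they are individually $O(\langle z_{ak}\rangle^{1-2\delta})$, which for $\delta<1$ is worse than the target $\langle z_{ak}\rangle^{-\delta}$. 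These terms do in fact cancel (leaving $-\partial_{\xi_a}\partial_{x_a}\psi_a$ as the genuine first correction), but you do not establish this, and at higher orders the cancellation pattern is not obvious from the asymptotic expansion alone.

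The paper avoids this difficulty by a different and more economical route. Instead of treating $J_a$ as a $\Psi$DO and expanding, it computes $\mathcal F_a(J_a^*J_a-I)\mathcal F_a^{-1}$ directly: the phase difference $\varphi_a(y_a,\eta_a)-\varphi_a(y_a,\xi_a)=(\eta_a-\xi_a)\cdot\nabla_{\xi_a}\varphi_a(\xi_a,y_a,\eta_a)$ suggests the change of variable $z_a=\nabla_{\xi_a}\varphi_a(\xi_a,y_a,\eta_a)$, after which the amplitude is exactly $|\det\nabla_{z_a}\nabla_{\xi_a}\varphi_a^{-1}(\xi_a,z_a,\eta_a)|-1$. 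Since $\nabla_{y_a}z_a=I+\nabla_{y_a}\nabla_{\xi_a}\psi_a(\cdots)$, only the \emph{mixed} second derivative $\nabla_{x_a}\nabla_{\xi_a}\psi_a$ enters, and this is $O(\langle z_{ak}\rangle^{-\delta})$ by \eqref{5.11-zak} with $|\alpha|=|\beta|=1$. No cancellation argument is needed. Your approach can be repaired, but to do so you would effectively have to reproduce this Jacobian structure inside the oscillatory-integral remainder; it is cleaner to perform the change of variables once, as the paper does.
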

\begin{proof}
We note
\begin{eqnarray}
&&\begin{aligned}
J_{a}f(x)&=c_a\int_{\R^{2\nu(|a|-1)}}e^{i(\varphi_a(x_a,\xi_a)-y_a\xi_a)} f(y_a,x^a)dy_ad\xi_a,
\end{aligned}\label{5.17-2-2}\\
&&\begin{aligned}
J_{a}^*g(x)&=c_a\int_{\R^{2\nu(|a|-1)}}e^{i(x_a\xi_a-\varphi_a(y_a,\xi_a))} g(y_a,x^a)dy_ad\xi_a,
\end{aligned}\label{5.17-2-conjugate}
\end{eqnarray}
where $c_a=(2\pi)^{-\nu(|a|-1)}$. Then we have
\begin{equation}
\begin{aligned}
&\mathcal{F}_a(J_a^*J_a-I)\mathcal{F}_a^{-1}g(\xi_a,x^a)\\
&=
c_a\iint_{\R^{2\nu(|a|-1)}}e^{-i(\xi_a-\eta_a)z_a}
(|\det\nabla_{z_a}\nabla_{\xi_a}\varphi_a^{-1}(\xi_a,z_a,\eta_a)|-1)
g(\eta_a)d\eta_a dz_a,
\end{aligned}
\end{equation}
where $\nabla_{\xi_a}\varphi_a^{-1}(\xi_a,z_a,\eta_a)$ is the inverse map of $y_a\mapsto z_a=\nabla_{\xi_a}\varphi_a(\xi_a,y_a,\eta_a)$.
On the other hand Theorem \ref{Theorem 5.2-2}-ii) gives
\begin{equation}
|\partial_{z_a}^\alpha\partial_{\xi_a}^\beta\partial_{\eta_a}^\gamma(|\det\nabla_{z_a}\nabla_{\xi_a}\varphi_a^{-1}(\xi_a,z_a,\eta_a)|-1)|\le C_{\alpha\beta\gamma}\max_{1\le k\le k_a}(\langle z_{ak}\rangle^{-\delta-|\alpha|}).
\end{equation}
This and the factor $P_a^{\varepsilon}(t)$ give $\Vert P_a^{\varepsilon}(t)(J_a^*J_a-I)\Vert \le C\langle t\rangle^{-\delta}$. Other estimates are proved similarly.
\end{proof}
The following lemma is obvious, and the proof is left to the reader.
\begin{lemma}\label{differenceofpsuedoandadjpintpseudo}
\begin{equation}
\Vert P_a^{\varepsilon}(t)-P_a^{\varepsilon}(t)^*\Vert\le C\langle t\rangle^{-1}.
\end{equation}
\end{lemma}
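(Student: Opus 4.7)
The plan is to exploit the fact that the symbol $p_a^{\varepsilon}(t,x,\xi_a)=\phi(|x/t-\xi_a|^2<{\varepsilon})\chi_a(\xi_a)^2\Phi_a(\xi_a)$ is real-valued, so that the difference $P_a^{\varepsilon}(t)-P_a^{\varepsilon}(t)^*$ has, in symbol form, no zeroth-order term. Writing $P_a^{\varepsilon}(t)$ in its standard double-integral pseudodifferential form
\[
P_a^{\varepsilon}(t)g(x)=(2\pi)^{-\nu n}\iint e^{i(x-y)\xi}p_a^{\varepsilon}(t,x,\xi_a)g(y)\,dy\,d\xi,
\]
and using that the kernel of the adjoint (since $p_a^{\varepsilon}$ is real) replaces $p_a^{\varepsilon}(t,x,\xi_a)$ by $p_a^{\varepsilon}(t,y,\xi_a)$, we get
\[
(P_a^{\varepsilon}(t)-P_a^{\varepsilon}(t)^*)g(x)=(2\pi)^{-\nu n}\iint e^{i(x-y)\xi}\bigl\{p_a^{\varepsilon}(t,x,\xi_a)-p_a^{\varepsilon}(t,y,\xi_a)\bigr\}g(y)\,dy\,d\xi.
\]

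Next I would Taylor-expand the bracket to first order in $x-y$,
\[
p_a^{\varepsilon}(t,x,\xi_a)-p_a^{\varepsilon}(t,y,\xi_a)=(x-y)\cdot r(t,x,y,\xi),\qquad r(t,x,y,\xi)=\int_0^1(\nabla_x p_a^{\varepsilon})(t,y+\theta(x-y),\xi_a)\,d\theta,
\]
and convert the resulting factor $(x-y)$ into a $\xi$-derivative on the phase via $(x-y)\,e^{i(x-y)\xi}=-i\nabla_\xi e^{i(x-y)\xi}$. An integration by parts in $\xi$ then yields
\[
(P_a^{\varepsilon}(t)-P_a^{\varepsilon}(t)^*)g(x)=(2\pi)^{-\nu n}\iint e^{i(x-y)\xi}\,\bigl(i\nabla_\xi\cdot r\bigr)(t,x,y,\xi)\,g(y)\,dy\,d\xi,
\]
i.e.\ an amplitude-form pseudodifferential operator whose amplitude is $i\nabla_\xi\cdot r(t,x,y,\xi)$.

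The crucial observation, which is what makes the bound $\langle t\rangle^{-1}$ appear, is that the $x$-dependence of $p_a^{\varepsilon}$ sits entirely in the factor $\phi(|x/t-\xi_a|^2<{\varepsilon})$ and enters only through $x/t$. Hence for all $t>1$ and multi-indices $\alpha,\beta,\gamma$,
\[
|\partial_x^\alpha\partial_y^\gamma\partial_\xi^\beta p_a^{\varepsilon}(t,\cdot,\cdot)|\le C_{\alpha\beta\gamma}\,t^{-|\alpha|},
\]
and in particular every component of $\nabla_x p_a^{\varepsilon}$, and hence of $r(t,x,y,\xi)$, carries a $t^{-1}$. Consequently the amplitude $i\nabla_\xi\cdot r$ and all its derivatives of any order in $x,y,\xi$ are $O(t^{-1})$ uniformly.

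I would then close the estimate by invoking the Calder\'on--Vaillancourt theorem \cite{CV2} in its amplitude form (the same version already used in the proof of Proposition~\ref{positiveHeizenbergDerivative}), which bounds the operator norm by a finite number of sup-norms of such derivatives of the amplitude. This gives $\|P_a^{\varepsilon}(t)-P_a^{\varepsilon}(t)^*\|\le Ct^{-1}$ for $t>1$, which together with the trivial bound $\|P_a^{\varepsilon}(t)\|+\|P_a^{\varepsilon}(t)^*\|\le 2\sup_t\|P_a^{\varepsilon}(t)\|<\infty$ for $t\in(0,1]$ gives the claimed estimate $\le C\langle t\rangle^{-1}$. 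The only mildly non-routine point is keeping track of the $t$-weights on derivatives of $r$ and ensuring that the seminorms of the amplitude that appear in Calder\'on--Vaillancourt are indeed all uniformly $O(t^{-1})$; but that is immediate from the scaling $x\mapsto x/t$ and the fact that $\chi_a(\xi_a)^2\Phi_a(\xi_a)$ is smooth with compact support away from $\xi_a=0$, so derivatives in $\xi$ remain uniformly bounded.
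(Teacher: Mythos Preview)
Your argument is correct and is exactly the standard route one takes for such an estimate. The paper itself does not give a proof at all---it simply states that the lemma ``is obvious, and the proof is left to the reader''---so there is nothing to compare against; your write-up is precisely the computation the author is implicitly invoking, and is consistent with the way the paper handles the analogous estimate \eq{estimate} and Proposition~\ref{positiveHeizenbergDerivative}. Two cosmetic remarks: the bound you display for $\partial_x^\alpha\partial_y^\gamma\partial_\xi^\beta p_a^{\varepsilon}$ is slightly misstated since $p_a^{\varepsilon}$ has no $y$-variable (you mean the corresponding bound for the amplitude $r$), and it is $\chi_a(\xi_a)^2$, not $\Phi_a(\xi_a)$, that supplies the compact $\xi_a$-support; neither point affects the validity of the argument.
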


\label{lastpage-01}
\end{document}